\newcommand{\bp}{{\boldsymbol p}}
\newcommand{\bx}{{\boldsymbol x}}
\newcommand{\by}{{\boldsymbol y}}
\newcommand{\be}{{\boldsymbol e}}
\newcommand{\argmax}{\mathop{\rm argmax}}
\newcommand{\argmin}{\mathop{\rm argmin}}
\newcommand{\bc}{{\boldsymbol c}}
\newcommand{\bu}{{\boldsymbol u}}
\newcommand{\bg}{{\boldsymbol g}}
\newcommand{\bv}{{\boldsymbol v}}
\newcommand{\bI}{{\boldsymbol I}}
\newcommand{\bW}{{\boldsymbol W}}
\newcommand{\bw}{{\boldsymbol w}}
\newcommand\reviseA[1]{{\color{black}#1}}
\newtheorem{theorem}{Theorem}[section]
\newcommand{\ba}{{\boldsymbol a}}
\newcommand{\cA}{{\mathcal A}}
\theoremstyle{definition}
\title{Stable Separation and Super-Resolution of Mixture Models}
\author{ Yuanxin Li and Yuejie Chi \\
Department of Electrical and Computer Engineering \\
The Ohio State University, Columbus, Ohio, 43210\\
Email: \{li.3822, chi.97\}@osu.edu\thanks{This paper has been presented in part at 2015 International Symposium on Information Theory (ISIT) and 2015 International Conference on Sampling Theory and Applications (SampTA).}}
\begin{document}
\theoremstyle{plain}\newtheorem{lemma}{\textbf{Lemma}}\newtheorem{corollary}{\textbf{Corollary}}\newtheorem{assumption}{\textbf{Assumption}}\newtheorem{prop}{\textbf{Proposition}}
\newtheorem{example}{\textbf{Example}}
\theoremstyle{remark}\newtheorem{remark}{\textbf{Remark}}

\maketitle

\begin{abstract}
We consider simultaneously identifying the membership and locations of point sources that are convolved with different band-limited point spread functions, from the observation of their superpositions. This problem arises in three-dimensional super-resolution single-molecule imaging, neural spike sorting, multi-user channel identification, among other applications. We propose a novel algorithm, based on convex programming, and establish its near-optimal performance guarantee for exact recovery in the noise-free setting by exploiting the spectral sparsity of the point source models as well as the incoherence between point spread functions. Furthermore, robustness of the recovery algorithm in the presence of bounded noise is also established. Numerical examples are provided to demonstrate the effectiveness of the proposed approach. 
\end{abstract}

\textbf{Keywords:}
super-resolution, parameter estimation, atomic norm minimization, mixture models

\section{Introduction}
In many emerging applications in applied science and engineering, the acquired signal at the sensor can be regarded as a noisy superposition of returns from multiple {\em modalities}, where the return from each modality is a band-limited observation of a point source signal captured through a low-pass point spread function, governed by either the underlying physical field or the system design. Mathematically, we consider the following {\em parametric mixture model} of the acquired signal, $y(t)$, given as
\begin{equation}\label{demixing}
y(t) = \sum_{i=1}^I x_i( t )* g_i( t) + w (t)  = \sum_{i=1}^I \left( \sum_{k=1}^{K_i} a_{ik}g_i(t- \tau_{ik}) \right) + w(t), 
\end{equation}
where $*$ denotes the convolution operator, $w(t)$ is an additive noise, and $I$ is the total number of modalities. Moreover,  
\begin{equation*}
x_i(t) = \sum_{k=1}^{K_i}a_{ik}\delta(t-\tau_{ik})
\end{equation*}
is the point source signal observed from the $i$th modality, and $g_i(t)$ is the corresponding point spread function. For the $i$th modality, let ${\tau}_{ik}\in [0,1)$ and $a_{ik}\in\mathbb{C}$ be the location and the amplitude of the $k$th point source, $1\leq k\leq K_i$, respectively, where the locations of point sources $\tau_{ik}$'s are continuous-valued and can lie anywhere in the parameter space, at nature's will. The point source model can be used to model a variety of physical phenomena occurring in a wide range of practical problems, such as the activation pattern of fluorescence in single-molecule imaging \cite{rust2006sub}, sparse channel impulse response in multi-path fading environments, the locations of pollution plants in urban areas, firing times of neurons, and many more. 

Our goal is to {\em stably invert} for the field parameters, i.e.  the parameters of the point source models, of each modality from the acquired signal reflecting the ensemble behavior of all modalities, even in the presence of noise. This allows us to {\em separate} the contributions of each modality to the acquired signal. Moreover, typically we are interested in super resolution, i.e. resolving the parameters at a resolution much higher than the native resolution of the acquired signal, determined by the Rayleigh limit, or in other words, the reciprocal of the bandwidth of the point spread functions.

\subsection{Motivating Applications}
The mixture model \eqref{demixing} is motivated by the modeling and analysis of many practical problems, such as three-dimensional super-resolution single-molecule imaging \cite{huang2008three,huang20153d}, spike sorting in neural recording
\cite{ekanadham2014unified,knudson2014inferring}, multi-user multi-path channel identification \cite{chi2013compressive,applebaum2012asynchronous}, and blind calibration of time-interleaved analog-to-digital converters \cite{lu2010multichannel,li2015blind}. We describe several example applications below.  %highlighting the difference in the generating mechanisms of the $g_i(t)$'s. 

\vspace{0.05in}
\noindent\textbf{Three-dimensional super-resolution single-molecule imaging:} By employing photoswitchable fluorescent molecules, the imaging process of single-molecule microscopies (Stochastic Optical Reconstruction Microscopy (STORM) \cite{rust2006sub} or Photo Activated Localization Microscopy (PALM) \cite{betzig2006imaging}) is divided into many frames, where in each frame, a {\em sparse} number of fluorophores (point sources) are randomly activated, localized at a resolution below the diffraction limit, and deactivated. The final image is thus obtained by superimposing the localization outcomes of all the frames. This principle can be extended to reconstruct a 3-D object from 2-D image frames, for example, by introducing a cylindrical lens to modulate the ellipticity of the point spread function based on the depth of the fluorescent object in 3-D STORM \cite{huang2008three}. Therefore, the acquired image in each frame can be regarded as a {\em superposition} of returns from multiple depth layers, where the return from each layer corresponds to the convolution outcome of the fluorophores in that depth layer with the depth-dependent point spread function, as modeled in \eqref{demixing}. The goal is thus to recover the locations and depth membership of each point source given the image frame. 

\vspace{0.05in}
\noindent\textbf{Spike sorting for neural recording:} Neurons in the brain communicate by firing action potentials, i.e. spikes, and it is possible to capture their communications through the use of a microelectrode, which records simultaneous activities of multiple neurons within a local neighborhood. Spike sorting \cite{lewicki1998review}, thus, refers to the grouping of spikes according to each neuron, from the recording of the microelectrode. Interestingly, it is possible to model the spike fired by each neuron with a {\em characteristic} shape \cite{gerstein1964simultaneous}. The neural recording can thus be modeled as a {\em superposition} of returns from multiple neurons, as in \eqref{demixing}, where the return from each neuron corresponds to the convolution of its characteristic spike shape with the sequence of its firing times. A similar problem also arises in DNA sequencing, please refer to \cite{li2000parametric}.

\vspace{0.05in}
\noindent\textbf{Multi-path identification in random-access channels:} In multi-user multiple access model \cite{applebaum2012asynchronous}, each active user transmits a signature waveform modulated via a signature sequence, which can be designed to optimize performance and the base station receives a {\em superposition} of returns from active users, as in \eqref{demixing}, where the received signal from each active user corresponds to the convolution of its signature waveform with the unknown {\em sparse} multi-path channel from the user to the base station. The goal is to identify the set of active users, as well as their channel states, from the received signal at the base station.

\subsection{Related Work and Our Contributions}
There is an extensive research literature \cite{stoica1997introduction} on inverting \eqref{demixing} when there is only a single modality with $I=1$, where conventional approaches for parameter estimation such as matched filtering, MUSIC \cite{Schmidt1986MUSIC}, matrix pencil \cite{HuaSarkar1990}, to more recent approaches based on the trigonometric polynomial frame \cite{mhaskar2000detection} or total variation minimization \cite{candes2014towards}, can be applied. However, these approaches can not be applied directly when multiple modalities exist in the observed signal, due to the {\em mutual interference}. To the best of the authors' knowledge, methods for inverting \eqref{demixing} with multiple modalities have been extremely limited. Sparse recovery algorithms have been proposed to estimate the mixture model in \cite{romberg2009multiple,chi2013compressive,applebaum2012asynchronous} with a discretized set of delays, but the performance may degenerate when the actual delays do not belong to the discrete grid \cite{Chi2011sensitivity}. Even when all the point sources indeed lie on the grid, existing work suggests that the sample complexity, or the bandwidth of the acquire signal, may have to grow logarithmically with the size of the grid, which is undesirable. More recently, \cite{ekanadham2014unified,knudson2014inferring} have proposed heuristic sparse recovery algorithms to estimate the continuous-valued delays in the mixture model for spike sorting, however no performance guarantees are available. Finally, an algebraic approach has been proposed in \cite{lu2010multichannel}, but it is sensitive to noise due to the nature of the employed root-finding procedure and does not extend well to a large number of modalities due to the prohibitive sample complexity.

In this paper, we study the problem of super-resolving the mixture model \eqref{demixing} when there are two modalities, i.e. $I=2$. The methodology in this paper can be extended straightforwardly to the analysis of the case $I>2$ and is left for future work. We start by recognizing that in the Fourier domain, the observed signal can be regarded as a linear combination of two spectrally-sparse signals, each composed of a small number of distinct complex sinusoids. The atomic norm \cite{Chandrasekaran2010convex, chandrasekaran2012convex} of spectrally-sparse signals is developed and proposed as an efficient convex optimization framework to motivate parsimonious structures \cite{candes2014towards, Chandrasekaran2010convex, chandrasekaran2012convex,tang2014CSoffgrid, fernandez2016super} recently, which can be computed efficiently via semidefinite programming. We then separate and recover the two signals by motivating their spectral structures using atomic norm minimization, in addition to satisfying the observation constraints. The proposed algorithm, denoted by {\em AtomicDemix}, is reminiscent of the algorithms for sparse error correction \cite{nguyen2013exact}, robust principal component analysis \cite{CanLiMaWri09}, demixing of sines and spikes \cite{chen2014robust,fernandez2016super}, and source separation  \cite{mccoy2014convexity}, where one aims to separate two low-dimensional signals with incoherent structures via convex optimization.

The separation and identification of the two point source signals, using the proposed {\em AtomicDemix} algorithm, is made possible with two additional natural conditions. The first condition is that the point source signal of each modality satisfies a mild separation condition, such that the locations of the point sources are separated by at least four times the Rayleigh limit; this is the same separation condition required by Cand\`es and Fernandez-Granda \cite{candes2014towards} even with $I=1$ when applying total variation minimization for super-resolution. The second condition is that the point spread functions of different modalities have to be sufficiently incoherent, which is supplied in our theoretical analysis by assuming they are {\em randomly} generated from a uniform distribution on the complex unit circle. Define $K_{\max}=\max \left\{K_1,K_2\right\}$. Our main results are summarized as below:
\begin{itemize}
\item For the noise-free case, we demonstrate that, provided that the coefficients of the point sources have symmetric random signs, that is to say the signs of the coefficients of the point sources are randomly generated from a symmetric distribution on the complex unit circle, as soon as the number of measurements $M$, or equivalently, the bandwidth of the point spread functions, is on the order $M/\log M = O( K_{\max}\log (K_1+K_2) )$, AtomicDemix exactly recovers the point source model of each modality with high probability. Since at least an order of $O(K_1+K_2)$ measurements is necessary, our sample complexity is near-optimal up to logarithmic factors. When the coefficients of the point sources have arbitrary signs, we establish a similar performance guarantee with a higher sample complexity, on the order of $M=O(K_{\max}^2\log (K_1+K_2))$.
\item For the noisy case, when the coefficients of the point sources have arbitrary signs, under same conditions that guarantee exact recovery in the noise-free case, we establish that AtomicDemix is stable in the presence of possibly adversarial bounded noise.
\item The point sources of each modality can be localized from the dual solution of the proposed algorithms, without estimating or knowing the model order a priori. Numerical examples are provided to corroborate the theoretical analysis, with comparisons against the standard Cram\'er-Rao Bound (CRB) for parameter estimation.
\end{itemize}

\subsection{Organization and Notations}
The rest of this paper is organized as follows. We specify the problem formulation and main results in Section~\ref{sec::problem_formulation}. Numerical experiments are provided to corroborate the theoretical analysis in Section~\ref{sec::numerical_results}. Section~\ref{sec::proof_theorem_noise_free} and Section~\ref{sec::proof_theorem_noisy} provide detailed proof procedures of our main results for the noise-free case and the noisy case, respectively. Finally, the paper is concluded in Section~\ref{sec::conclusion} with discussions on extensions and future work. 

Throughout the paper, $\left(\cdot\right)^{T}$ and $\left(\cdot\right)^H$ denote the transpose and Hermitian transpose, respectively, and $\bar{\left(\cdot\right)}$ denotes the (element-wise) conjugate of a complex scalar or vector. \reviseA{For a function $f(\tau)$ with variable $\tau$, we denote its first-order derivative and second-order derivative by $f'(\tau)$ and $f''(\tau)$, respectively. We also use $f^{(l)}(\tau)$ to represent its $l$th-order derivative.}
\reviseA{The quantity $\sqrt{-1}$ is denoted by $j$.} Besides, we use $C$ with different superscripts and subscripts to represent constants, whose values may change from line to line.

\section{Problem Formulation and Main Results} \label{sec::problem_formulation}

\subsection{Observation Model}
Due to hardware and physical limits, the resolution of the sensor suite is limited by the {\em diffraction limit} or {\em Rayleigh limit}, which heuristically is often referred to as half the width of the mainlobe of $g_{i}({t})$'s. Alternatively, in the frequency domain,  we say $g_{i}({t})$'s are band-limited with {\em cut-off frequency} $2M$. Denote the discrete-time Fourier transform of $g_{i}(t)$ as
\begin{equation}
g_{i,n} = \int_{-\infty}^{\infty} g_i(t)e^{-j2\pi nt}dt,
\end{equation}
then $g_{i,n} = 0$ whenever $n\notin\Omega_M=\left\{-2M,\dots,0,\dots,2M\right\}$. Taking the discrete-time Fourier transform of \eqref{demixing}, the measurements can be represented as, in the Fourier domain,
\begin{equation}\label{demixing_freq}
y_{n} = \sum_{i=1}^{I}g_{i,n}\cdot\left(\sum_{k=1}^{K_{i}}a_{ik}e^{-j2\pi n\tau_{ik}} \right) + w_{n}, \quad n\in\Omega_M,
\end{equation}
where the noise $w_n$ is
\begin{equation*}
w_n = \int_{-\infty}^{\infty} w(t)e^{-j2\pi nt} dt, \quad n\in\Omega_M.
\end{equation*}

When $I=2$, the measurements \eqref{demixing_freq} in the Fourier domain can be equivalently formulated as
\begin{equation}\label{mixing_freq}
y_n =  g_{1,n}  \cdot\left( \sum_{k=1}^{K_1} a_{1k} e^{-j2\pi n\tau_{1k} } \right) +  g_{2,n}  \cdot\left( \sum_{k=1}^{K_2} a_{2k} e^{-j2\pi n\tau_{2k} }\right) + w_{n}, \quad  n\in\Omega_M.
\end{equation}
The measurements $y_{n}$'s in \eqref{mixing_freq} can be considered as a linear combination of two spectrally-sparse signals, with $g_{i,n}$'s determining the combination coefficients. In vector form, we have
\begin{equation}\label{vector_two}
\by = \bg_1 \odot \bx_1^{\star} +\bg_2 \odot \bx_2^{\star} + \boldsymbol{w}, 
\end{equation}
where $\by=\left[y_{-2M},\ldots,y_{0},\ldots,y_{2M}\right]^{T}$, $\boldsymbol{w} = [w_{-2M},\ldots, w_0, \ldots, w_{2M} ]^T$, $\bg_i = [g_{i,-2M}, \ldots, g_{i,0},\ldots, g_{i,2M}]^T$ for $i=1,2$, and $\odot$ denotes the Hadamard element-wise product operator. Furthermore, let $\bx_{1}^{\star}\in\mathbb{C}^{4M+1}$ and $\bx_{2}^{\star}\in\mathbb{C}^{4M+1}$ denote two spectrally-sparse signals, each composed of a small number of distinct complex harmonics, represented as 
\begin{equation} 
\bx_{1}^{\star}=\sum_{k=1}^{K_{1}}a_{1k}\bc\left(\tau_{1k}\right), \quad \mathrm{and} \quad \bx_{2}^{\star}=\sum_{k=1}^{K_{2}}a_{2k}\bc\left(\tau_{2k}\right),
\end{equation}
where $K_{1}$ is the spectral sparsity of $\bx_{1}^{\star}$ and $K_{2}$ is the spectral sparsity of $\bx_{2}^{\star}$. The atom $\bc\left(\tau\right)$ is defined as 
\begin{equation*}
\bc(\tau) = \left[e^{-j2\pi(-2M)\tau},\ldots, 1,\ldots, e^{-j2\pi(2M)\tau}\right]^T,
\end{equation*}
which corresponds to a point source at the location $\tau\in[0,1)$. Further denote the location set of point sources in $\bx_{1}^{\star}$ and $\bx_{2}^{\star}$ by $\Upsilon_{1}=\left\{\tau_{11},\dots,\tau_{1K_{1}}\right\}$ and $\Upsilon_{2}=\left\{\tau_{21},\dots,\tau_{2K_{2}}\right\}$, respectively. The goal is thus to recover $\Upsilon_1$ and $\Upsilon_2$, and their corresponding amplitudes, from the observation \eqref{vector_two}.

Intuitively, it is impossible to separate the two modalities if $\bg_1$ and $\bg_2$ are highly coherent. In this paper, we assume the entries of the point spread functions $g_{i,n}$'s are i.i.d. generated from a uniform distribution on the complex unit circle. This randomness assumption is reasonable when $g_{i,n}$'s can be designed, such as the spreading sequences in multi-user communications, and provides the incoherence between different modalities that is necessary for separation. Multiplying both sides of \eqref{mixing_freq} with $\bar{g}_{1,n}$, and with slight abuse of notation, \eqref{vector_two} can be rewritten as 
\begin{equation}\label{equ_noisy}
\by =   \bx_1^{\star} + \boldsymbol{g} \odot \bx_2^{\star} + \boldsymbol{w}\in\mathbb{C}^{4M+1},
\end{equation}
where $\bg=\left[g_{-2M},\dots,g_{0},\dots,g_{2M}\right]^{T}\in\mathbb{C}^{4M+1}$ with $g_{n}=g_{2,n}\bar{g}_{1,n}$ uniformly drawn from the unit complex circle. In the noisy case, we consider the scenario where $\boldsymbol{w}$ is bounded as $\left\Vert\boldsymbol{w}\right\Vert_{2}^{2}\le\sigma_{w}^{2}$. 

\subsection{AtomicDemix -- A Convex Programming for Demixing}\label{sec::main_results}

Define the atomic norm \cite{Chandrasekaran2010convex,chandrasekaran2012convex,tang2014CSoffgrid} of $\boldsymbol{x}\in\mathbb{C}^{N}$ with respect to the atoms $\bc(\tau)$ as 
\begin{equation*}
\left\Vert\bx\right\Vert_{\mathcal{A}}=\inf_{a_{k}\in\mathbb{C},\tau_{k}\in[0,1)}\left\{\sum_{k}\left\vert a_{k}\right\vert \Big| \bx=\sum_{k} a_{k} \bc\left(\tau_{k}\right) \right\},
\end{equation*}
which can be regarded as the tightest convex relaxation of counting the smallest number of atoms $\boldsymbol{c}(\tau)$ that is needed to represent a signal $\boldsymbol{x}$. Therefore, we seek to recover the signals $\bx_1$ and $\bx_2$ by motivating their spectral sparsity via minimizing the sum of their atomic norms, with respect to the observation constraint in the noise-free case where $\boldsymbol{w}=0$: 
\begin{equation}\label{convex_demixing_rewritten}
\{\hat{\bx}_1, \hat{\bx}_2\} = \argmin_{\bx_1,\bx_2} \| \bx_1\|_{\cA} +  \|\bx_2\|_{\cA}, \quad \mathrm{s.t.} \quad \by= \bx_1 + \boldsymbol{g} \odot \bx_2.
\end{equation}
In the noisy case, we propose a regularized atomic norm minimization algorithm as
\begin{equation}\label{algorithm_noisy_model}
\left\{\hat{\bx}_{1},\hat{\bx}_{2}\right\}=\argmin_{\bx_{1},\bx_{2}}\frac{1}{2}\left\Vert\by-\bx_{1}-\bg\odot\bx_{2}\right\Vert_{2}^{2}+\lambda_{w}\left(\left\Vert\bx_{1}\right\Vert_{\mathcal{A}}+\left\Vert\bx_{2}\right\Vert_{\mathcal{A}}\right),
\end{equation}
where $\lambda_{w}$ is the regularization parameter to balance the data fitting term and the structural promoting term, to be determined later. The above algorithms are referred to as {\em AtomicDemix}. Interestingly, the atomic norm $\|\bx_i\|_{\cA}$ can be equivalently characterized via semidefinite programming \cite{tang2014CSoffgrid}, therefore the proposed algorithms can be solved efficiently using off-the-shelf solvers.
%\begin{align}\label{convex_demixing_sdp}
%&\min_{\bx_1,\bx_2, \bu_{1}, \bu_{2}, t_{1}, t_{2}} \frac{\trace\left(\toep\left(\bu_{1}\right)\right) +  \trace\left(\toep\left(\bu_{2}\right)\right)}{4M+1}  + (t_1+t_{2}),\nonumber\\
%&\quad \quad \ \   \mbox{s.t.} \quad\ \  \begin{bmatrix}\toep\left(\bu_{1}\right)&\bx_{1}\\ \bx_{1}^{*}&t_{1}\end{bmatrix}\succeq \boldsymbol{0},\; \begin{bmatrix}\toep\left(\bu_{2}\right)&\bx_{2}\\ \bx_{2}^{*}&t_{2}\end{bmatrix}\succeq \boldsymbol{0},\nonumber\\
%&\quad \quad \ \  \quad\quad\quad \   \by= \bx_1 + \boldsymbol{g} \odot \bx_2,
%\end{align}
%where $\trace\left(\cdot\right)$ denotes the trace of a matrix, and $\toep\left(\bu\right)$ denotes a Toeplitz %Hermitian matrix with $\bu$ as its first column.

\subsection{Performance Guarantee in the Noise-free Case}
Recall $K_{\max} = \max \left\{K_1,K_2\right\}$. Define the separation of the point source signal of the $i$th modality as
\begin{equation}\label{separation} 
\Delta_i =  \min_{k\neq t} \left\vert\tau_{ik} - \tau_{it}\right\vert,
\end{equation}
which is understood as the wrapped-around distance on $[0,1)$, and the minimum separation of the point source signals of all modalities as $\Delta =\min_i \Delta_i$. We have the following performance guarantee for the noise-free algorithm \eqref{convex_demixing_rewritten}, whose proof is provided in Section~\ref{sec::proof_theorem_noise_free}. 

\begin{theorem}[Noise-free Case]\label{theorem_main}
Assume that $g_{n}=e^{j2\pi\phi_n}$'s are i.i.d. randomly generated from a uniform distribution on the complex unit circle with $\phi_n\sim\mathcal{U}[0,1]$, and that the minimum separation satisfies $\Delta \ge 1/M$. \reviseA{Let $\eta\in(0,1)$}, then there exists a numerical constant $C$ such that
\begin{equation}\label{sample_deterministic_sign}
M\ge C \max\left\{\log^{2}{\left(\frac{M\left( {K_{1}}+ {K_{2}}\right) }{\eta}\right)}, K_{\max}\log{\left(\frac{M\left(K_{1}+K_{2}\right)}{\eta}\right)}, K_{\max}^{2}\log{\left(\frac{K_{1}+K_{2}}{\eta}\right)} \right\}
\end{equation}
is sufficient to guarantee that $\boldsymbol{x}_{1}^{\star}$ and $\boldsymbol{x}_{2}^{\star}$ are the unique solutions of \eqref{convex_demixing_rewritten} with probability at least $1-\eta$. 

Moreover, if the signs of the coefficients $a_{ik}$'s are i.i.d. generated from a symmetric distribution on the complex unit circle, there exists a numerical constant $C$ such that
\begin{equation}\label{sample_random_sign}
M\ge C\max{\left\{\log^{2}{\left(\frac{M\left( {K_{1}}+ {K_{2}}\right) }{\eta}\right)},K_{\max}\log{\left(\frac{K_{1}+K_{2}}{\eta}\right)}\log{\left(\frac{M\left( {K_{1}}+ {K_{2}}\right) }{\eta}\right)}\right\}}
\end{equation}
is sufficient to guarantee that $\boldsymbol{x}_{1}^{\star}$ and $\boldsymbol{x}_{2}^{\star}$ are the unique solutions of \eqref{convex_demixing_rewritten} with probability at least $1-\eta$.
\end{theorem}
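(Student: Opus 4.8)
The plan is to prove the theorem via the dual-certificate method for atomic norm minimization. First I would record the standard optimality condition: forming the Lagrangian of \eqref{convex_demixing_rewritten} with a dual vector $\boldsymbol{q}\in\mathbb{C}^{4M+1}$ and minimizing over $\bx_1,\bx_2$, one finds that $(\bx_1^{\star},\bx_2^{\star})$ is \emph{the} unique optimizer provided (i) the atoms $\{\bc(\tau_{1k})\}_{k}\cup\{\bg\odot\bc(\tau_{2k})\}_{k}$ are linearly independent, and (ii) there exists $\boldsymbol{q}$ whose two dual polynomials $Q_1(\tau)=\langle\boldsymbol{q},\bc(\tau)\rangle$ and $Q_2(\tau)=\langle\bar\bg\odot\boldsymbol{q},\bc(\tau)\rangle$ satisfy $Q_i(\tau_{ik})=a_{ik}/|a_{ik}|$ on $\Upsilon_i$ and $|Q_i(\tau)|<1$ for all $\tau\in[0,1)\setminus\Upsilon_i$, $i=1,2$. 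Condition (i) follows from $\Delta\ge 1/M$ together with the near-orthogonality of well-separated Fourier atoms, the cross-modality part being decoupled with high probability by the randomness of $\bg$, so the whole proof reduces to constructing such a $\boldsymbol{q}$ on an event of probability at least $1-\eta$.

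For the construction I would use the Cand\`es--Fernandez-Granda squared Fej\'er kernel $\bar{\bK}$ (whose interpolation and localization properties hold under $\Delta\ge 1/M$) and take the ansatz $\boldsymbol{q}=\bK\bv_1+\bg\odot(\bK\bv_2)$, where $\bK$ denotes the Fourier-domain kernel operator restricted to the translates $\{\tau_{ik}\}$ and their derivatives, and $\bv_1,\bv_2$ are coefficient vectors. Since $|g_n|=1$, this makes $Q_1$ a deterministic combination of $\bar{\bK}(\tau-\tau_{1k}),\bar{\bK}'(\tau-\tau_{1k})$ \emph{plus} a $\bg$-modulated combination of the modality-2 kernels, and symmetrically for $Q_2$. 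Imposing the $2(K_1+K_2)$ conditions $Q_i(\tau_{ik})=u_{ik}:=a_{ik}/|a_{ik}|$, $Q_i'(\tau_{ik})=0$ gives a linear system $(\bD_0+\bE)\bv=\bu$, where $\bD_0$ is block-diagonal --- the two decoupled single-modality CFG systems, invertible with $\|\bD_0^{-1}\|=O(1)$ by the standard kernel estimates --- and $\bE$ holds the cross-modality blocks, each entry a sum $\sum_n g_n h_n e^{j2\pi n(\cdot)}$ of independent zero-mean bounded terms. A scalar Bernstein bound plus a union bound over the $\Theta(K_1K_2)$ entries shows $\|\bE\|$ is below a fixed constant once $M\gtrsim K_{\max}^2\log((K_1+K_2)/\eta)$; hence $\bD:=\bD_0+\bE$ is invertible, $\bv=\bD^{-1}\bu$ is well-defined, and $\|\bv\|=O(\sqrt{K_1+K_2})$.

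Next I would check $|Q_i(\tau)|<1$ off the support by the familiar near/far decomposition, writing $Q_i=P_i+R_i$ where $P_i$ is the genuine CFG dual polynomial for $\Upsilon_i$ and $R_i$ collects the $\bg$-modulated kernel terms together with the correction coming from $\bv-\bv_0$. On a neighborhood of each $\tau_{ik}$, CFG gives $P_i(\tau_{ik})=u_{ik}$, $P_i'(\tau_{ik})=0$, and $P_i''$ bounded away from $0$, so that $|P_i|^2$ is strictly concave there; I would then show $\sup_\tau|R_i^{(l)}(\tau)|=o(1)$ for $l=0,1,2$, which preserves the concavity and hence $|Q_i|<1$ near the support. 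Away from the support, CFG bounds $|P_i(\tau)|$ by a constant strictly below $1$, and again $\sup_\tau|R_i(\tau)|=o(1)$ finishes. The uniform-in-$\tau$ bounds on $R_i$ and its derivatives follow from Bernstein's inequality for the random polynomial at a point, the Bernstein--Szeg\H{o} inequality bounding each derivative by $O(M)\sup|R_i|$ to pass to a $\mathrm{poly}(M)$-net, and a union bound --- this is the source of the $\log^2(M(K_1+K_2)/\eta)$ term (from the net, after discretization) and, together with $\|\bv\|^2\asymp K_{\max}$ and a $K_{\max}$ from summing the kernel translates, of the $K_{\max}\log$ term in \eqref{sample_deterministic_sign}. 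For the symmetric-sign refinement \eqref{sample_random_sign}, I would further condition on $\bg$ and exploit independence of the signs $u_{ik}$: the sums $\sum_k u_{ik}(\text{kernel translate})$ then concentrate by a Hoeffding/Khintchine bound rather than a triangle inequality, which removes the $\sqrt{K_{\max}}$ losses and, in particular, the need for the $K_{\max}^2\log$ bound on $\|\bE\|$ (one instead bounds $\bE\bv$ and the relevant random polynomials directly).

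The main obstacle is the simultaneous, quantitatively sharp control of (a) the invertibility and inverse-norm of the coupled interpolation matrix $\bD$ and (b) the uniform-in-$\tau$ smallness of the random cross polynomials $R_1,R_2$ and their first two derivatives, with constants tight enough that the near/far argument yields strict inequality $<1$ and produces exactly the three-way maximum in \eqref{sample_deterministic_sign}. In particular, the regime where $\Upsilon_1$ and $\Upsilon_2$ (nearly) coincide must be handled purely through the randomness of $\bg$, so the concentration inequalities cannot be lossy; and separating which contributions scale as $\log^2$, $K_{\max}\log$, or $K_{\max}^2\log$ --- and where exactly the extra independence of the coefficient signs buys back a factor $\sqrt{K_{\max}}$ --- is the delicate bookkeeping at the heart of the argument.
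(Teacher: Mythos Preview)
Your overall strategy matches the paper's: a dual certificate built from the squared Fej\'er kernel, a coupled $2(K_1+K_2)\times 2(K_1+K_2)$ interpolation system with a deterministic block-diagonal part plus random off-diagonal cross-modality blocks, a near/far decomposition, and discretization together with Bernstein's polynomial inequality to pass to uniform-in-$\tau$ bounds. For the deterministic-sign bound \eqref{sample_deterministic_sign} your sketch is essentially correct.

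There is, however, a gap in your route to the random-sign bound \eqref{sample_random_sign}. You propose to control the cross block $\bE$ entrywise via scalar Bernstein and a union bound over its $\Theta(K_1K_2)$ entries, which yields $\|\bE\|\le\|\bE\|_F\lesssim\sqrt{K_1K_2\log((K_1+K_2)/\eta)/M}$ and hence $M\gtrsim K_{\max}^2\log((K_1+K_2)/\eta)$ for $\|\bE\|$ to fall below a fixed constant. You then claim that with random signs one can dispense with this $K_{\max}^2$ requirement by ``bounding $\bE\bv$ directly.'' But $\bv=(\bD_0+\bE)^{-1}\bu$ is only well defined, and only has the $O(1)$ norm control needed throughout the rest of the argument, \emph{after} you have established $\|\bD_0^{-1}\bE\|<1$; without that, the coefficient vector $\bv$ --- and with it the entire dual polynomial --- is uncontrolled. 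The invertibility step itself must therefore already succeed at $M\asymp K_{\max}\log((K_1+K_2)/\eta)$, which your entrywise route cannot deliver. The paper closes this gap by applying the \emph{matrix} Bernstein inequality to the rank-one decomposition $\bE=\sum_n (s_n g_n/M)\,\be_1(n)\be_2(n)^H$ of independent zero-mean summands, obtaining $\|\bE\|\le\delta$ already when $M\gtrsim \delta^{-2}K_{\max}\log((K_1+K_2)/\eta)$. Taking $\delta=1/4$ gives invertibility and $\|(\bD_0+\bE)^{-1}\|=O(1)$ at the $K_{\max}$ scale; Hoeffding on the random signs then handles the residual polynomials, with only the discretization step contributing the second logarithm in \eqref{sample_random_sign}. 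In the deterministic-sign case the same matrix-Bernstein bound is used with $\delta\asymp\epsilon/\sqrt{K_{\max}}$ (to absorb $\|\bu_i\|_2=\sqrt{K_i}$ in the Cauchy--Schwarz step), and that --- not an entrywise bound on $\bE$ --- is the actual source of the $K_{\max}^2$ term in \eqref{sample_deterministic_sign}.
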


Theorem~\ref{theorem_main} provides two sample complexities depending on whether the signs of the coefficients $a_{ik}$'s are random. Given random signs of $a_{ik}$'s, Theorem~\ref{theorem_main} indicates that as soon as the number of measurements $M$ is on the order $M/\log M = O(K_{\max}\log (K_1+K_2) )$, AtomicDemix exactly recovers the point source models with high probability. This suggests that the performance of AtomicDemix is near-optimal in terms of the sample complexity as at least $O(K_1+K_2)$ measurements are necessary to identify the unknown parameters. Without requiring random signs of $a_{ik}$'s, the sample complexity is slightly higher, roughly dominated by the last term on the order of $M=O(K_{\max}^2\log(K_1+K_2))$.

\begin{remark}
The separation condition $\Delta\ge 1/M$ is a sufficient condition in Theorem~\ref{theorem_main} to guarantee accurate signal demixing, which is the same as the one required by Cand\`es and Fernandez-Granda in \cite{candes2014towards} even with $I=1$. Our results suggest that the separation condition to achieve super resolution in mixture models is no stronger than that required even in the single modality case, provided the point spread functions are incoherent enough. It is implied in \cite{candes2014towards,moitra2014threshold} that a reasonable separation is also necessary to guarantee stable super-resolution. Interestingly, no separation between point sources from {\em different} modalities is required, as long as their point spread functions are incoherent enough.
\end{remark}

\begin{remark}
Theorem~\ref{theorem_main} assumes $g_{n}$'s are i.i.d. from a uniform distribution on the complex unit circle, which may be relaxed as long as $g_{n}$'s are independently drawn from a distribution satisfying $\mathbb{E}\left[\bar{g}_{n}\right]=\mathbb{E}\left[ \bar{g}_{n}^{-1} \right]=0$ and $C_{1}\le\left\vert g_{n}\right\vert\le C_{2}$ for some constants $0\leq C_1\leq C_2$. Both $\mathrm{sign}\left(a_{1k}\right)$ and $\mathrm{sign}\left(a_{2k}\right)$ are assumed randomly generated, which are reasonable in many applications.
\end{remark}

\begin{remark}
Theorem~\ref{theorem_main} can also be extended into multi-dimensional point source models, following similar techniques in \cite{chi2014compressive}, where the same order of measurements shall be sufficient to localize the point sources under similar mild separation conditions. We leave this extension to interested readers.
\end{remark}

\subsection{Performance Guarantee in the Noisy Case}

In the presence of bounded noise, AtomicDemix in \eqref{algorithm_noisy_model} still stably recovers the point source signals, as established in the following theorem, whose proof is provided in Section~\ref{sec::proof_theorem_noisy}.
\begin{theorem}[Noisy Case]\label{theorem_noisy}
Let $\lambda_{w} = C_{w} \sigma_{w}\sqrt{4M+1}$, for some constant $C_{w} > 1$ large enough. Assume that $g_{n}=e^{j2\pi\phi_n}$'s are i.i.d. randomly generated from a uniform distribution on the complex unit circle with $\phi_n\sim\mathcal{U}[0,1]$, and that the minimum separation satisfies $\Delta \ge 1/M$. \reviseA{Let $\eta\in(0,1)$}, then as long as \eqref{sample_deterministic_sign} holds for some constant $C$, the solution to \eqref{algorithm_noisy_model} satisfies
\begin{equation} \label{inversion_bound}
\frac{1}{\sqrt{4M+1}}\left(\left\Vert\hat{\bx}_{1}-\bx_{1}^{\star}\right\Vert_{2} + \left\Vert\hat{\bx}_{2}-\bx_{2}^{\star}\right\Vert_{2}\right)\leq C_1 \sigma_{w}\sqrt{K_{\max}^3\log{M}} ,
\end{equation}
and 
\begin{equation} \label{denoising_bound}
\frac{1}{\sqrt{4M+1}} \left\Vert( \hat{\bx}_{1}+ \bg \odot \hat{\bx}_{2})- (\bx_{1}^{\star} + \bg \odot \bx_{2}^{\star}) \right\Vert_{2} \leq C_2 \sigma_w \left(\frac{ K_{\max}^3\log{M}}{M}\right)^{1/4} ,
\end{equation}
with probability at least $1-\eta- C_3(M^3\log M)^{-1/2}$, where $C_1$, $C_2$ and $C_3$ are some constants.
\end{theorem}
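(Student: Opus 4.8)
The plan is to run the noisy analysis on top of the noise-free machinery of Section~\ref{sec::proof_theorem_noise_free}, reusing the dual polynomial built there as an exact dual certificate for the regularized program. Write $\bd_i=\hat{\bx}_i-\bx_i^\star$ for $i=1,2$ and $\bd=\bd_1+\bg\odot\bd_2$. Since $(\bx_1^\star,\bx_2^\star)$ is feasible for \eqref{algorithm_noisy_model} and $\by-\hat{\bx}_1-\bg\odot\hat{\bx}_2=\bw-\bd$, comparing objective values gives the basic inequality
\begin{equation*}
\tfrac{1}{2}\|\bd\|_2^2+\lambda_w\big(\|\hat{\bx}_1\|_{\cA}+\|\hat{\bx}_2\|_{\cA}-\|\bx_1^\star\|_{\cA}-\|\bx_2^\star\|_{\cA}\big)\le\mathrm{Re}\,\langle\bw,\bd\rangle .
\end{equation*}
From Section~\ref{sec::proof_theorem_noise_free} we have, with probability at least $1-\eta$, a vector $\bu$ whose associated trigonometric polynomials $\langle\bu,\bc(\tau)\rangle$ and $\langle\bar{\bg}\odot\bu,\bc(\tau)\rangle$ interpolate $\mathrm{sign}(a_{1k})$ on $\Upsilon_1$ and $\mathrm{sign}(a_{2k})$ on $\Upsilon_2$, are bounded by one in modulus, and decay quadratically away from the supports. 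Dual feasibility of $\bu$ and $\bar{\bg}\odot\bu$ gives $\|\hat{\bx}_1\|_{\cA}+\|\hat{\bx}_2\|_{\cA}-\|\bx_1^\star\|_{\cA}-\|\bx_2^\star\|_{\cA}\ge\mathrm{Re}\,\langle\bu,\bd\rangle$, and the quadratic decay upgrades this with an extra nonnegative term measuring the portion of $\hat{\bx}_1,\hat{\bx}_2$ placed away from $\Upsilon_1\cup\Upsilon_2$.

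Substituting this lower bound into the basic inequality and using $\mathrm{Re}\,\langle\bw,\bd\rangle\le\sigma_w\|\bd\|_2$ and $|\mathrm{Re}\,\langle\bu,\bd\rangle|\le\|\bu\|_2\|\bd\|_2$ yields $\|\bd\|_2\lesssim\sigma_w+\lambda_w\|\bu\|_2$ (this is just the firm nonexpansiveness of the proximal map $\by\mapsto\hat{\bx}_1+\bg\odot\hat{\bx}_2$ evaluated at $\by$ and at $\bx_1^\star+\bg\odot\bx_2^\star+\lambda_w\bu$), together with a companion bound on the away-from-support mass. The only extra ingredient needed for the denoising bound \eqref{denoising_bound} is an estimate of the form $\|\bu\|_2\lesssim(K_{\max}^3\log M/M)^{1/4}$, obtained by revisiting the (golfing-type) construction of $\bu$; combined with $\lambda_w=C_w\sigma_w\sqrt{4M+1}$ this is exactly \eqref{denoising_bound}.

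For the inversion bound \eqref{inversion_bound} one must split $\bd=\bd_1+\bg\odot\bd_2$ back into its two components. Decompose $\bd_i=\bd_i^{\mathrm{in}}+\bd_i^{\mathrm{out}}$, where $\bd_i^{\mathrm{in}}$ collects the part of $\bd_i$ within an $O(1/M)$ neighborhood of $\Upsilon_i$ and $\bd_i^{\mathrm{out}}$ is the remainder; the companion bound above controls $\|\bd_i^{\mathrm{out}}\|_{\cA}$, hence $\|\bd_i^{\mathrm{out}}\|_2$, while $\bd_i^{\mathrm{in}}$ lies essentially in the $O(K_{\max})$-dimensional space $V_i=\mathrm{span}\{\bc(\tau),\bc'(\tau):\tau\in\Upsilon_i\}$ (the derivative atoms absorb the off-grid displacement, using $\Delta\ge1/M$). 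The key step is a high-probability lower bound on the random map $(\bp_1,\bp_2)\mapsto\bp_1+\bg\odot\bp_2$ over $(\bp_1,\bp_2)\in V_1\times V_2$ — equivalently, near-orthogonality of $V_1$ and $\bg\odot V_2$ — proved by concentration together with a net over these low-dimensional subspaces, which also fixes the $C_3(M^3\log M)^{-1/2}$ contribution to the failure probability. Chaining this conditioning estimate with the control on $\|\bd\|_2$ from the denoising step and the bounds on $\bd_1^{\mathrm{out}},\bd_2^{\mathrm{out}}$ produces $\|\bd_1\|_2+\|\bd_2\|_2\lesssim\sigma_w\sqrt{4M+1}\sqrt{K_{\max}^3\log M}$, i.e.\ \eqref{inversion_bound}, after a union bound with the certificate event of probability $\eta$.

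The main obstacle is the deconvolution in the third step: the lower bound on $(\bp_1,\bp_2)\mapsto\bp_1+\bg\odot\bp_2$ has to be uniform over the low-dimensional subspaces attached to the unknown off-grid supports, it must be reconciled with the fact that the error is atomic-norm- rather than sparsity-controlled, and it must be pushed through without losing more than the stated powers of $K_{\max}$ and single factor of $\log M$; extracting $\|\bu\|_2$ with the correct dependence from the randomized certificate construction in the second step is the other delicate point.
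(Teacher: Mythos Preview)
Your outline diverges from the paper's argument in both the mechanism and the bookkeeping, and the inversion step has a real gap.

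\textbf{Denoising bound.} The paper does \emph{not} go through $\|\bu\|_2$ or a proximal/firm nonexpansiveness identity. Instead it works with the representing measures $\nu_i$ of $\be_i=\hat{\bx}_i-\bx_i^\star$, introduces the moment quantities $I_{i,0},I_{i,1},I_{i,2}$ and the far-support total variation, and shows (Proposition~\ref{bound_error}) that $\|\be_1+\bg\odot\be_2\|_2^2\le 2\lambda_w\sum_i\big(\|P_{\Upsilon^i_{\mathrm{far}}}(\nu_i)\|_{TV}+\sum_{j}I_{i,j}\big)$. These measure-theoretic quantities are then bounded (Propositions~\ref{bound_zero_first_moment} and \ref{bound_left_together}) by $C\lambda_w\sqrt{K_{\max}^3\log M/M}$, via the H\"older-type identity
\[
\Big|\int P\,\nu_1+\int Q\,\nu_2\Big|=|\langle\bp,\be_1+\bg\odot\be_2\rangle|\le \|P\|_{1}\,\|\be_1+\bg\odot\be_2\|_{\cA}^\star,
\]
together with $\|P\|_1\lesssim\sqrt{K_{\max}^3\log M/M}$. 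The latter $L^1$ bound is where the random kernels enter: one needs $\sup_\tau|K_g(\tau)|\lesssim\sqrt{\log M/M}$ and $\sup_\tau|K_g'(\tau)|\lesssim\sqrt{M\log M}$ (Lemma~\ref{bound_kg_kgderi}), and \emph{that} is the source of the extra $C_3(M^3\log M)^{-1/2}$ failure probability, not a net over $V_1\times V_2$ as you suggest. (There is no golfing construction anywhere; the certificate is the explicit squared-Fej\'er interpolant of Section~\ref{sec::proof_theorem_noise_free}.)

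\textbf{Inversion bound.} Here the paper avoids your ``deconvolution'' step entirely. Applying Lemma~\ref{bound_trig} with $X(\tau)=\langle\be_i,\bc(\tau)\rangle$ gives, for each $i$ \emph{separately},
\[
\|\be_i\|_2^2=\int_0^1\langle\be_i,\bc(\tau)\rangle\,\nu_i(d\tau)\le\|\be_i\|_{\cA}^\star\Big(\|P_{\Upsilon^i_{\mathrm{far}}}(\nu_i)\|_{TV}+\textstyle\sum_j I_{i,j}\Big)\le\sqrt{4M+1}\,\|\be_i\|_2\cdot(\ldots),
\]
so $\|\be_i\|_2\le\sqrt{4M+1}\cdot(\ldots)$ with no need to invert $(\bp_1,\bp_2)\mapsto\bp_1+\bg\odot\bp_2$. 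Your proposed conditioning argument over $V_1\times V_2$ is therefore unnecessary; more importantly, the claim that $\bd_i^{\mathrm{in}}$ ``lies essentially in $V_i=\mathrm{span}\{\bc(\tau),\bc'(\tau):\tau\in\Upsilon_i\}$'' is not justified. The near-support piece of $\nu_i$ is a measure supported on an interval, and the residual after projecting onto $V_i$ is precisely the second-moment term $I_{i,2}$, which is not a priori small; the paper devotes Proposition~\ref{bound_left_together} (using yet another pair of interpolating polynomials $P_1,Q_1$ from Lemma~\ref{constructed_poly_p1q1}) to bounding $\sum_i I_{i,2}+\sum_i\|P_{\Upsilon^i_{\mathrm{far}}}(\nu_i)\|_{TV}$. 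Without that control your conditioning step is applied to the wrong object, and the subsequent chain cannot recover the stated $K_{\max}^{3/2}\sqrt{\log M}$ dependence.

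In short: the crucial ideas you are missing are (i) bounding each $\|\be_i\|_2$ directly through its own representing measure rather than by splitting $\bd$, and (ii) controlling the moment/far-TV quantities by pairing the constructed polynomials with $\be_1+\bg\odot\be_2$ through the $L^1$--dual-atomic H\"older inequality, with the $\log M$ and the extra failure probability both originating from uniform bounds on the random kernel $K_g$ and its derivative.
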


Theorem~\ref{theorem_noisy} does not make any assumptions on the signs of the coefficients of point sources. It guarantees the stability for inversion in the presence of bounded noise, even when the noise is adversarially generated. When $\sigma_{w}=0$, Theorem~\ref{theorem_noisy} degenerates to the noise-free case, providing a performance guarantee of AtomicDemix in accordance with Theorem~\ref{theorem_main} when the point sources have deterministic coefficients. The first bound \eqref{inversion_bound} concerns signal reconstruction, which guarantees that one can stably separate $\hat{\bx}_1$ and $\hat{\bx}_2$ even in the presence of noise. The second bound \eqref{denoising_bound} concerns denoising, which guarantees that AtomicDemix can output a denoised signal $\hat{\by} = \hat{\bx}_{1}+ \bg \odot \hat{\bx}_{2}$ proportional to the noise level.

\subsection{Localization via Dual Polynomials}
With the demixing results $\hat{\bx}_{1}$ and $\hat{\bx}_{2}$, the source locations $\tau_{ik}$'s of each signal can be estimated accurately by MUSIC \cite{Schmidt1986MUSIC}, ESPRIT \cite{roy1989esprit}, the Prony's method \cite{prony1795essai} or other linear prediction methods. More interestingly, the source locations can be identified directly from the dual solutions of \eqref{convex_demixing_rewritten} and \eqref{algorithm_noisy_model}. The coefficients $\ba_{1}$ and $\ba_{2}$ can then be estimated by least-squares using the estimates of $\tau_{ik}$'s.

We first characterize the dual problem of \eqref{convex_demixing_rewritten} and \eqref{algorithm_noisy_model}. Define the inner product of two vectors as $\langle\boldsymbol{p}, \boldsymbol{x}\rangle=\boldsymbol{x}^H\boldsymbol{p}$ and the real-valued inner product as $\langle\boldsymbol{p}, \boldsymbol{x}\rangle_{\mathbb{R}}=\mathrm{Re}\left(\boldsymbol{x}^H\boldsymbol{p}\right)$, where $\mathrm{Re}(\cdot)$ takes the real part of a complex scaler. The dual norm of $\left\Vert\cdot\right\Vert_{\mathcal{A}}$ can be represented as 
\begin{equation*}
\left\Vert\boldsymbol{p}\right\Vert_{\mathcal{A}}^{\star}=\sup_{\left\Vert\boldsymbol{x}\right\Vert_{\mathcal{A}}\le 1}\ \langle\boldsymbol{p},\boldsymbol{x}\rangle_{\mathbb{R}}= \sup_{\tau\in[0,1)} \left\vert \langle \boldsymbol{p},  \boldsymbol{c}\left(\tau\right) \rangle \right\vert = \sup_{\tau\in[0,1)}\left\vert\sum_{n=-2M}^{2M}p_{n}e^{j2\pi n\tau}\right\vert,
\end{equation*}
where $\bp = \left[ p_{-2M}, \dots, p_{0}, \dots, p_{2M} \right]^{T}$. Then the dual problem of \eqref{convex_demixing_rewritten} can be written as
\begin{equation}\label{convex_demixing_dual}
\hat{\boldsymbol{p}}=\argmax_{\boldsymbol{p}}\;\langle\boldsymbol{p},\boldsymbol{y}\rangle_{\mathbb{R}}, \quad \mathrm{s.t.}\quad \left\Vert\boldsymbol{p}\right\Vert_{\mathcal{A}}^{\star}\le 1,\ \left\Vert\bar{\boldsymbol{g}}\odot\boldsymbol{p}\right\Vert_{\mathcal{A}}^{\star}\le 1,
\end{equation}
whose derivations can be found in Appendix~\ref{proof_dual}. Similarly, by standard Lagrangian calculation the dual problem of \eqref{algorithm_noisy_model} can be obtained as
\begin{equation}\label{convex_demixing_noise_dual}
\hat{\boldsymbol{p}}=\argmax_{\boldsymbol{p}}\ \frac{1}{2}\left(\left\Vert\by\right\Vert_{2}^{2}-\left\Vert\by-\lambda_{w} \boldsymbol{p}\right\Vert_{2}^{2}\right),\quad \mathrm{s.t.}\quad \left\Vert\boldsymbol{p}\right\Vert_{\mathcal{A}}^{\star}\le 1,\ \left\Vert\bar{\boldsymbol{g}}\odot\boldsymbol{p}\right\Vert_{\mathcal{A}}^{\star}\le 1.
\end{equation}

Based on the definition of the dual norm, define the dual polynomials $\hat{P}\left(\tau\right)$ and $\hat{Q}\left(\tau\right)$ generated from the dual solutions of \eqref{convex_demixing_dual} or \eqref{convex_demixing_noise_dual} as
\begin{equation*}
\hat{P}\left(\tau\right)=\sum_{n=-2M}^{2M} \hat{p}_{n}e^{j2\pi n\tau} , \quad  \hat{Q}\left(\tau\right)=\sum_{n=-2M}^{2M}\hat{p}_{n}\bar{g}_{n}e^{j2\pi n\tau}.
\end{equation*}
Then the source locations can be identified as 
\begin{equation*}
\hat{\Upsilon}_{1}=\left\{ \tau\in[0,1): \ \left\vert\hat{P}\left(\tau\right)\right\vert=1\right\}, \quad \mathrm{and} \quad
\hat{\Upsilon}_{2}=\left\{\tau\in [0,1):  \ \left\vert\hat{Q}\left(\tau\right)\right\vert=1\right\}.
\end{equation*}

For the noise-free case, it is straightforward to show that $\Upsilon_{1}\subseteq\hat{\Upsilon}_{1}$ and $\Upsilon_{2}\subseteq\hat{\Upsilon}_{2}$ whenever the optimal primal solution is $\left\{\bx_1^\star, \bx_2^\star\right\}$ in Appendix~\ref{proof_fre_recovery}. Note however in general both $\hat{\Upsilon}_{1}$ and $\hat{\Upsilon}_{2}$ may contain spurious source locations. Interested readers can refer to relevant discussions in \cite[Proposition 2.5]{tang2014CSoffgrid} on when the dual polynomials return exact source locations, which also apply to our proposed algorithms with little modifications.

\section{Numerical Examples}\label{sec::numerical_results}
We carry out a series of numerical simulations to validate the performance of AtomicDemix in both noise-free and noisy cases under different parameter settings.

\subsection{Phase Transitions in the Noise-free Case}
We first examine the phase transition as a function of $(K_1, K_2)$ for a fixed $M$. We vary the spectral sparsity levels of the two modalities as $K_1$ and $K_2$. For each pair of $(K_1, K_2)$, we first randomly generate a pair of point sources $\Upsilon_{1}$ and $\Upsilon_{2}$ that satisfy a separation condition $\Delta \ge 1/\left(2M\right)$, with the coefficients of the point sources i.i.d. drawn from the complex standard Gaussian distribution. For each Monte Carlo trial, we then randomly generate the point spread functions $g_n$'s in the Fourier domain with i.i.d. entries drawn uniformly from the complex unit circle, and perform AtomicDemix by solving \eqref{convex_demixing_rewritten} using CVX \cite{grant2008cvx}. The algorithm is considered successful when the normalized estimate error satisfies $\sum_{i=1}^{2}\left\Vert\hat{\bx}_{i}-\bx_{i}^{\star}\right\Vert_{2}/\left\Vert\bx_{i}^{\star}\right\Vert_{2}\le 10^{-4}$. 
\begin{figure}[ht]
\begin{center}
\begin{tabular}{cc} 
\includegraphics[width=0.47\textwidth]{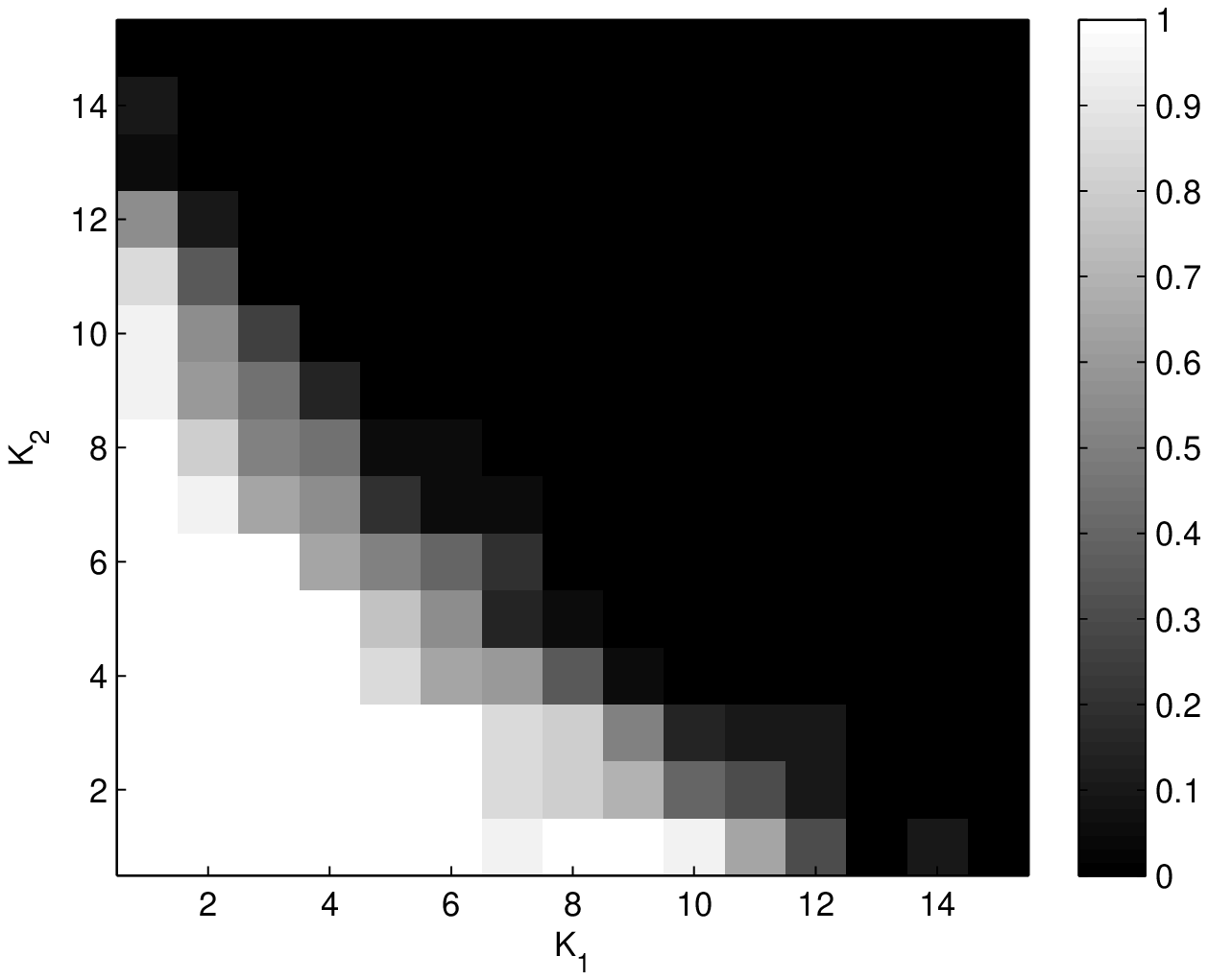} & \includegraphics[width=0.47\textwidth]{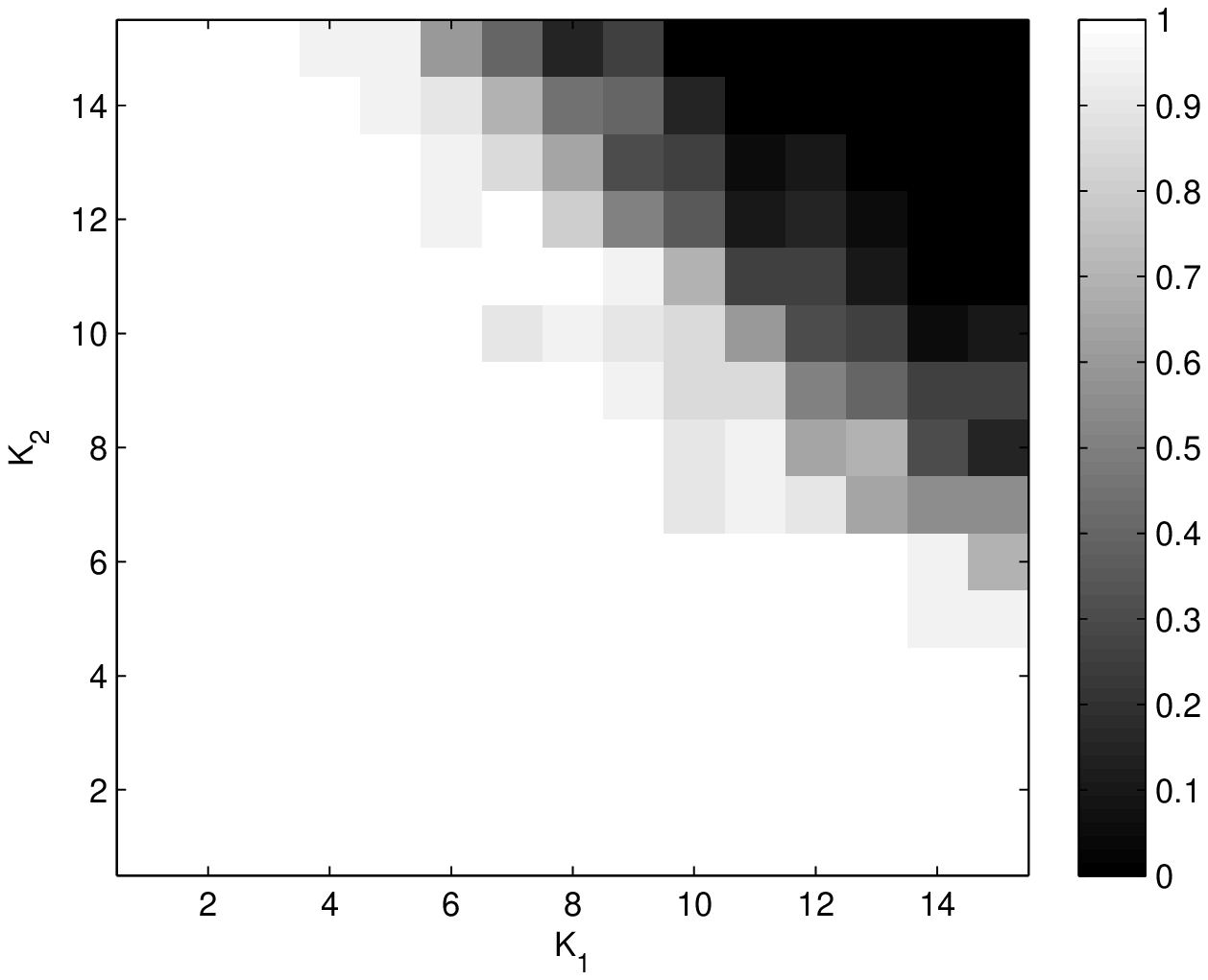} \\
(a) $M = 8$ & (b) $M=16$
\end{tabular}
\end{center}
\caption{Successful rates of AtomicDemix as a function of $(K_1, K_2)$ when (a) $M=8$ and (b) $M = 16$.}\label{fig:phase_tran}
\end{figure}

Fig.~\ref{fig:phase_tran} shows the success rates of AtomicDemix over $20$ Monte Carlo trials for each cell, when $M= 8$ in (a) and $M=16$ in (b), respectively. Fig.~\ref{fig:FixK1equK2_Mchange} (a) shows the success rates of AtomicDemix with respect to $M$ for different values of $K_1 = K_2$, and Fig.~\ref{fig:FixK1equK2_Mchange} (b) shows the success rates of AtomicDemix with respect to $K_1=K_2$ for different values of $M$.
\begin{figure}[htp]
\begin{center}
\begin{tabular}{cc}
\includegraphics[width=0.45\textwidth]{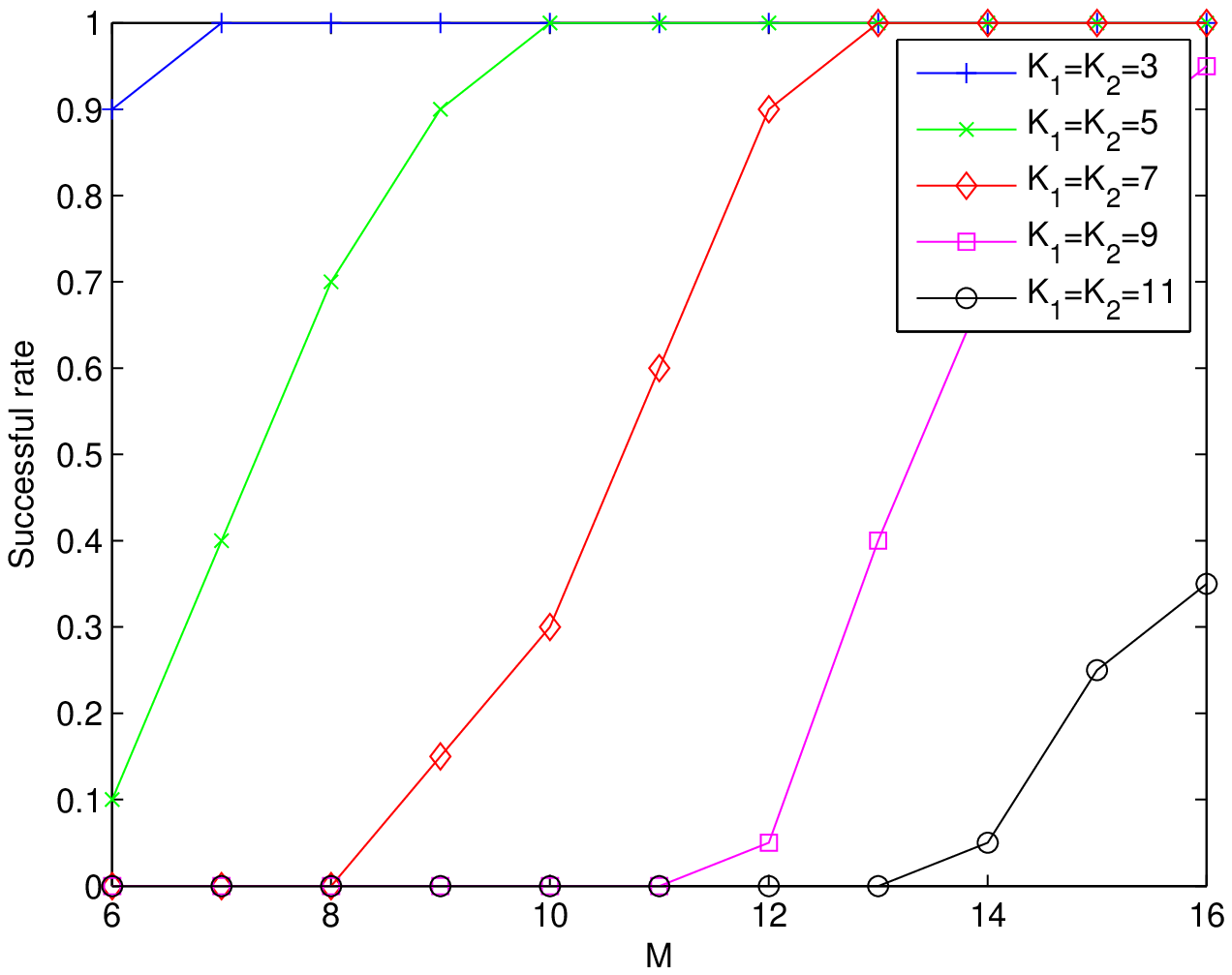} & \includegraphics[width=0.45\textwidth]{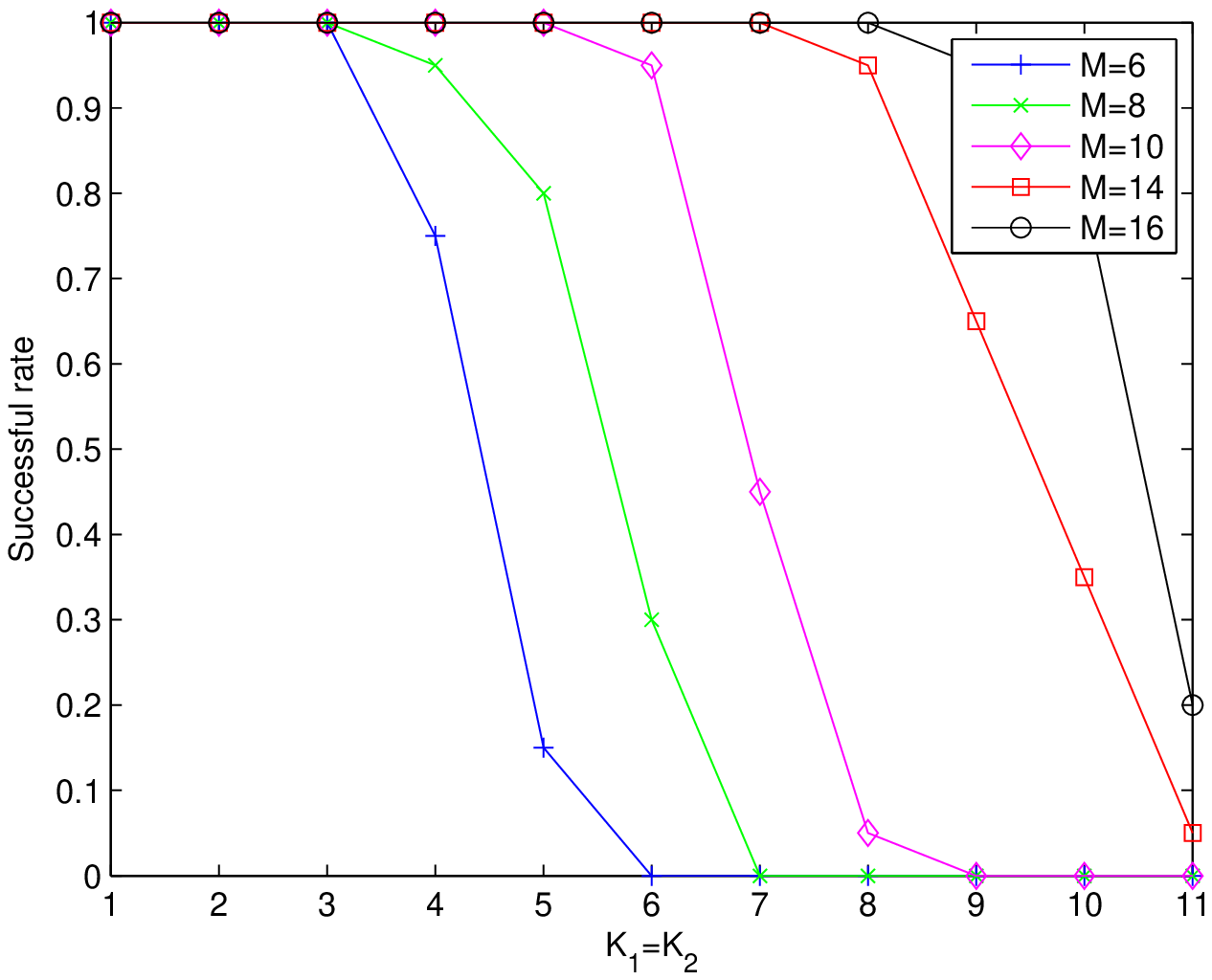}\\
(a) & (b) 
\end{tabular}
\end{center}
\caption{Success rates of AtomicDemix in the noise-free case (a) with respect to $M$ for various $K_1=K_2$ and (b) with respect to $K_1=K_2$ for various $M$. }
\label{fig:FixK1equK2_Mchange}
\end{figure}

\subsection{Point Source Recovery from Dual Polynomials}
As described earlier, the locations of the point sources can be recovered from the dual solutions of the proposed algorithm. Fix $M=16$, $K_{1}=4$ and $K_{2}=3$. We randomly generate a pair of point sources that satisfy a separation condition $\Delta \ge 1/\left(2M\right)$, with the coefficients of the point sources i.i.d. drawn from the complex standard Gaussian distribution. In the noise-free case, the amplitudes of the dual polynomials $\hat{P}\left(\tau\right)$ and $\hat{Q}\left(\tau\right)$ constructed from the solution of \eqref{convex_demixing_dual} are shown in Fig.~\ref{fig:fre_recov_dual} (a), superimposed on the ground truth, indicating the accurate recovery of the point sources.
\begin{figure}[ht]
\begin{center}
\begin{tabular}{c}
\includegraphics[height=1.6in,width=0.5\textwidth]{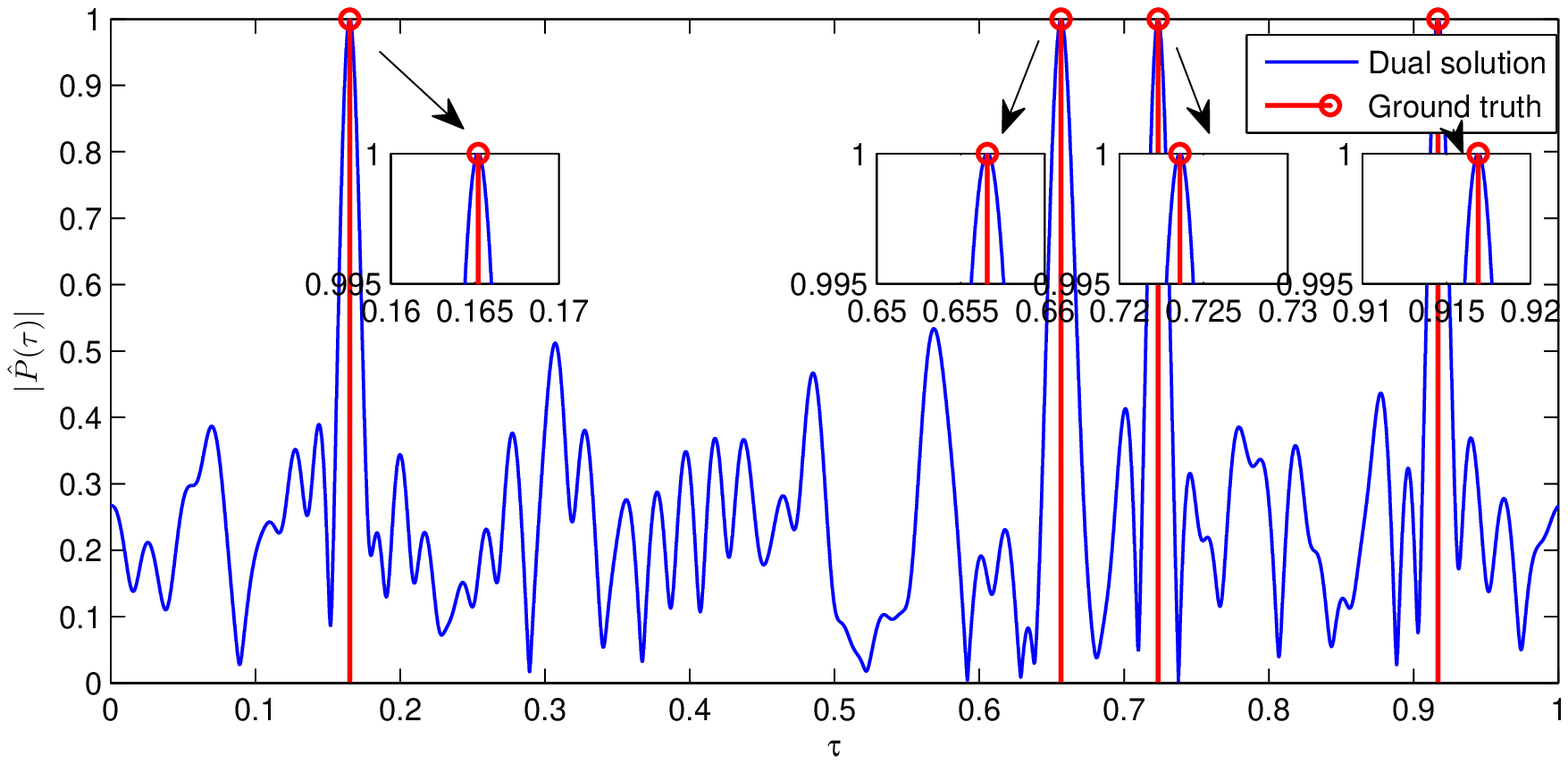} \includegraphics[height=1.6in,width=0.5\textwidth]{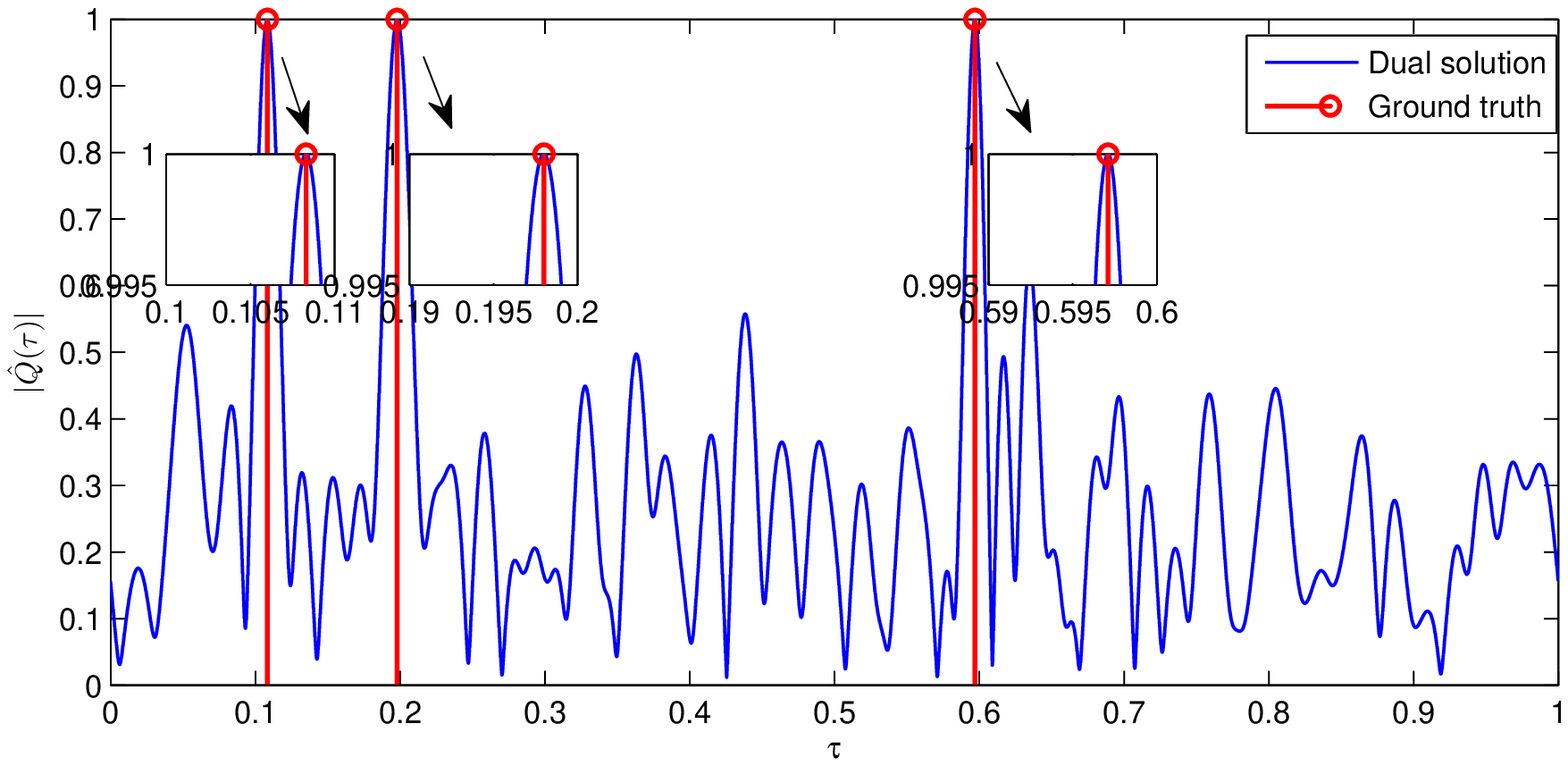} \\
(a) $\hat{P}\left(\tau\right)$ and  $\hat{Q}\left(\tau\right)$, noise-free  \\
\includegraphics[width=0.5\textwidth]{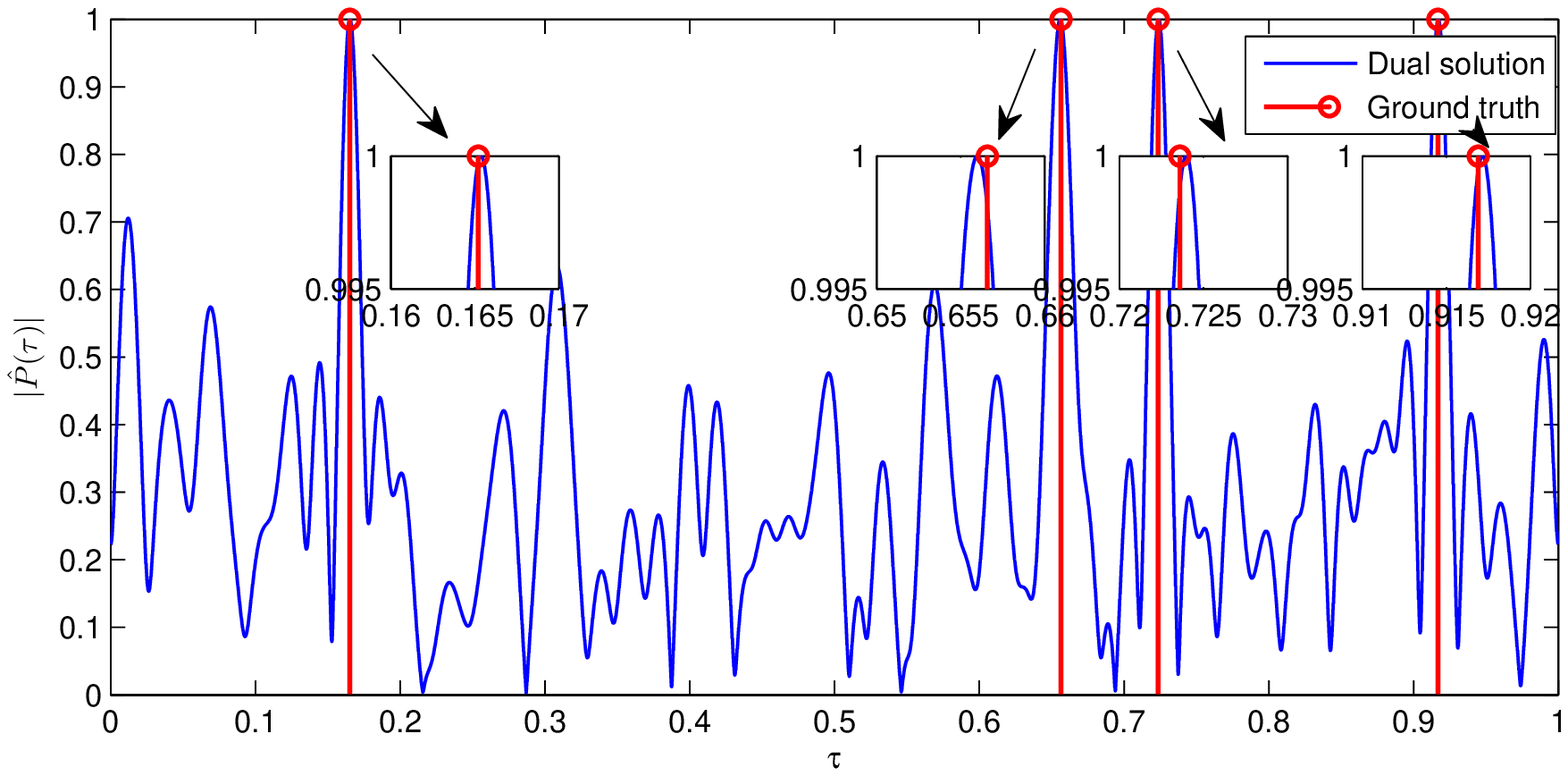} \includegraphics[width=0.5\textwidth]{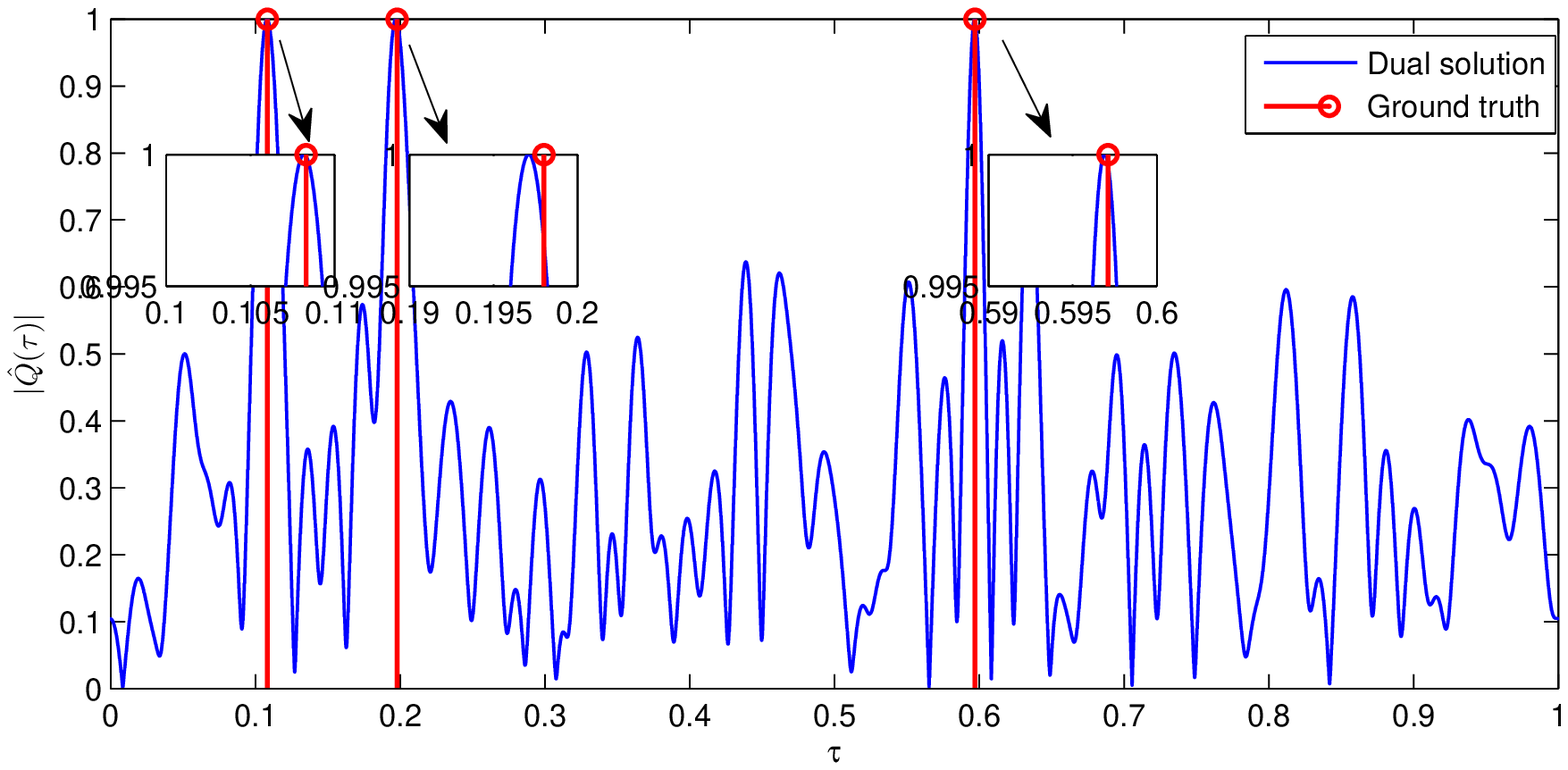} \\
(b) $\hat{P}\left(\tau\right)$ and  $\hat{Q}\left(\tau\right)$, SNR = 16dB  \\
\includegraphics[width=0.5\textwidth]{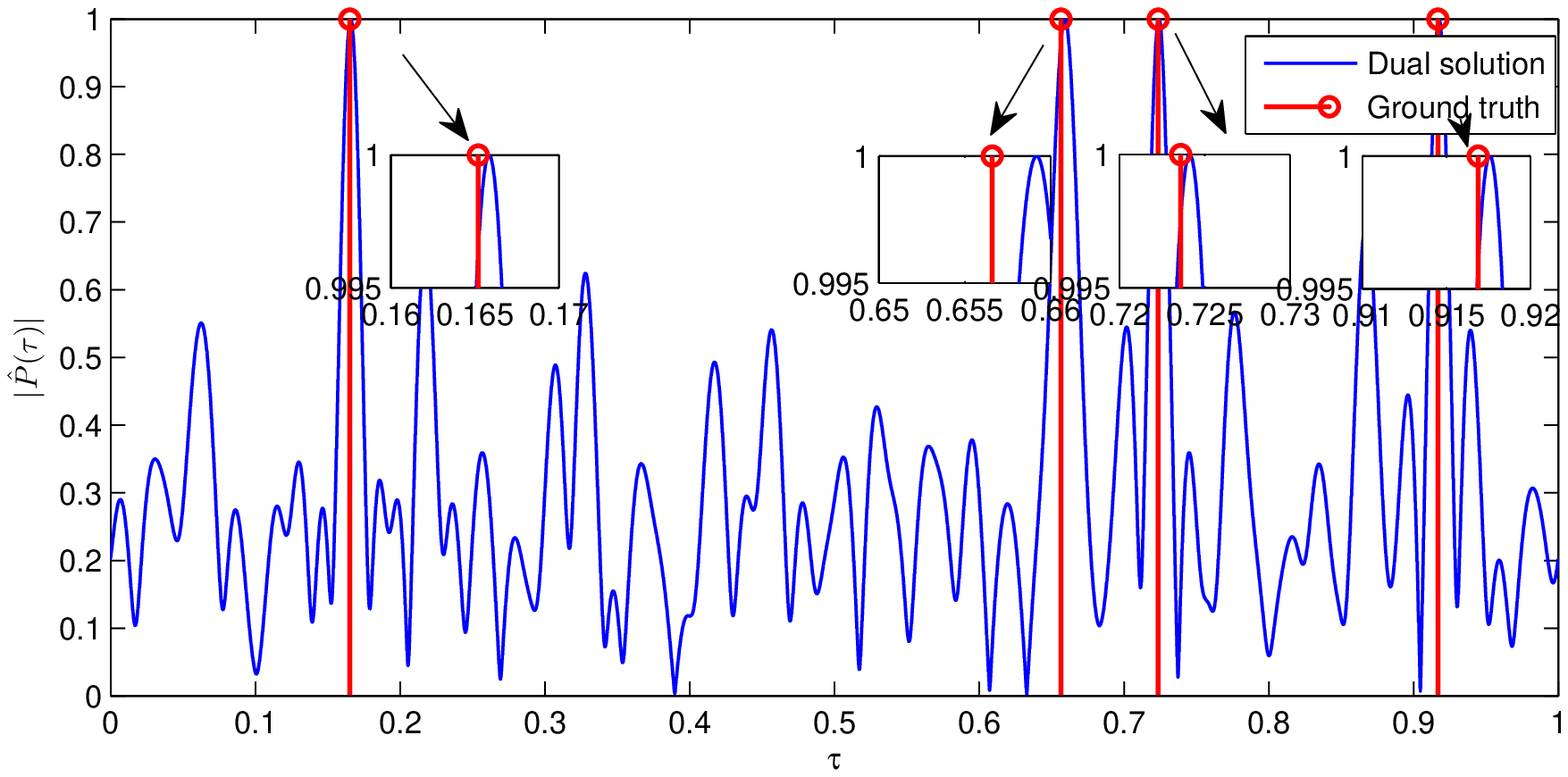}  \includegraphics[width=0.5\textwidth]{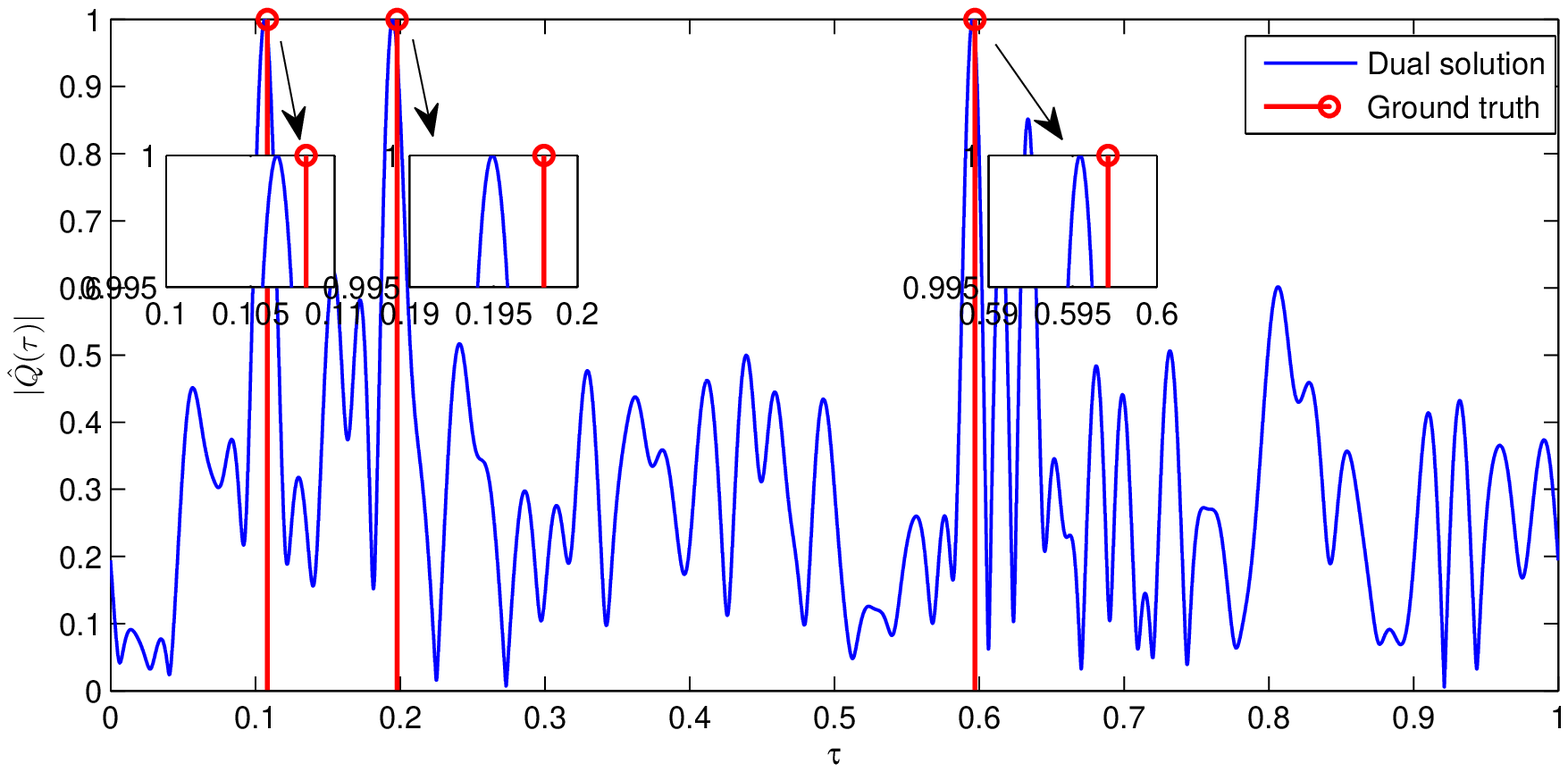} \\
(c)  $\hat{P}\left(\tau\right)$ and $\hat{Q}\left(\tau\right)$, SNR = 5dB 
\end{tabular}
\end{center}
\caption{Point source localization from dual polynomials (a) in absence of noise, (b)  SNR =16dB, and (c) SNR = 5dB, for $M=16$, $K_{1}=4$ and $K_{2}=3$.}\label{fig:fre_recov_dual}
\end{figure}

We then consider the noisy case when the noise is composed of i.i.d. complex Gaussian entries $\mathcal{CN}(0,\sigma^2)$, and set $\lambda_{w} = \sigma  \sqrt{(1+\frac{1}{\log{(4M+1)}}) (4M+1) (\log{\alpha}+\sqrt{2\log{\alpha}}+2+\sqrt{\frac{\pi}{2}})}$, where $\alpha=8\pi (4M+1)\log{(4M+1)}$ based on the discussions in \cite{bhaskar2012atomic, li2016off} or $\lambda_{w} = \sigma \sqrt{4M+1}\sqrt{1.2\log\left(8\pi \left(4M+1\right)\log\left(4M+1\right)\right)}$ for simplicity of use. The amplitudes of the dual polynomials $\hat{P}\left(\tau\right)$ and $\hat{Q}\left(\tau\right)$ are shown in Fig.~\ref{fig:fre_recov_dual} (b) and (c) for SNR = 16 dB and SNR = 5dB, respectively, where the Signal-to-Noise Ratio (SNR) is defined as $\mathrm{SNR}=10\log_{10}{\left(\frac{\left\Vert\bx_{1}^{\star}+\bg\odot\bx_{2}^{\star}\right\Vert_{2}^{2}/\left(4M+1\right)}{\sigma^{2}}\right)}$ dB. It is clear that the source locations can be estimated stably from the dual solutions, and the performance degenerates gracefully with the increase of the noise level.

\subsection{Comparisons with CRB for Point Source Localization}
We further examine the performance of \eqref{algorithm_noisy_model} on estimating the locations of the point sources from noisy measurements by comparing it against the CRB. Specifically, consider the special case with a single point source for each modality, by letting $K_{1}=K_{2}=1$. Denote the point source location in $\bx_{1}^{\star}$ and $\bx_{2}^{\star}$ by $\tau_{1}$ and $\tau_{2}$, respectively. We assume the corresponding amplitude of each point source is known and unity when computing the CRB for estimating $\tau_1$ and $\tau_2$, which can be found as the diagonal entries of the inverse of the following Fisher information matrix:
\begin{equation*}
\boldsymbol{J}\left(\tau_{1},\tau_{2}\right)= \frac{8\pi^{2}}{\sigma^{2}}
\begin{bmatrix}
\sum_{n=-2M}^{2M}n^{2} & \mathrm{Re}\left(\sum_{n=-2M}^{2M}n^{2}\bar{g}_{n} e^{-j2\pi n\left(\tau_{1}-\tau_{2}\right)}\right) \\
\mathrm{Re}\left(\sum_{n=-2M}^{2M}n^{2}\bar{g}_{n}e^{-j2\pi n\left(\tau_{1}-\tau_{2}\right)}\right) & \sum_{n=-2M}^{2M}n^{2}
\end{bmatrix}.
\end{equation*}
For each SNR, we randomly generate $200$ noise realizations and compute the average squared estimate error $\left(\hat{\tau}_{i}-\tau_{i}\right)^{2}$, where $\hat{\tau}_i$ is the dual solution of \eqref{algorithm_noisy_model}, $i=1, 2$. Fig.~\ref{fig:compare_mse_crb} shows the average squared estimate error in comparison with the CRB with respect to SNR when $M=10$ in (a) and $M=16$ in (b). The performance of parameter estimation shows a similar ``thresholding effect'' \cite{tufts1991threshold} as for conventional spectrum estimation algorithms, where the average squared estimate error approaches the CRB as soon as SNR is large enough. Moreover, as we increase $M$, the threshold SNR becomes smaller. Characterizing the exact threshold SNR for AtomicDemix is an interesting future research topic.
\begin{figure}[h]
\begin{center}
\begin{tabular}{cc} 
\hspace{-0.4in}\includegraphics[height=2.5in,width=0.54\textwidth]{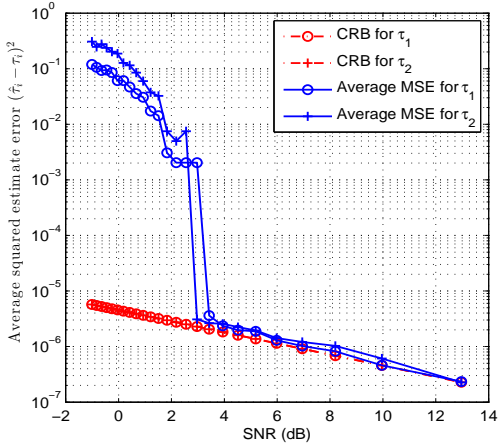} & \hspace{-0.4in}\includegraphics[height=2.5in,width=0.54\textwidth]{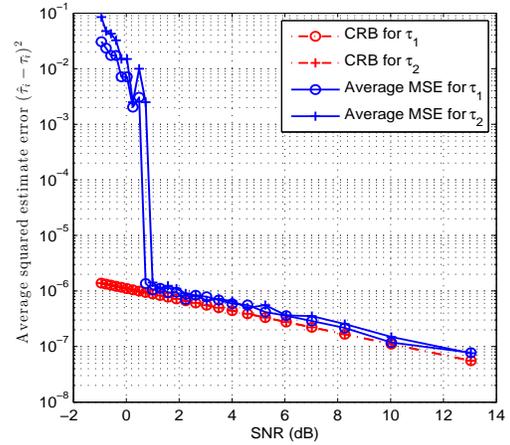} \\
\hspace{-0.4in} (a) $M=10$ & \hspace{-0.4in} (b) $M=16$
\end{tabular}
\end{center}
\caption{The comparisons between the average squared estimate error of point source localization and the corresponding CRB with respect to SNR, when (a) $M=10$, (b) $M=16$.}\label{fig:compare_mse_crb}
\end{figure}

\section{Proof of Theorem~\ref{theorem_main}}\label{sec::proof_theorem_noise_free}

In this section, we proceed to prove Theorem \ref{theorem_main}. We first provide the optimality conditions using dual polynomials to certify the optimality of the solution of \eqref{convex_demixing_rewritten}. Illuminated by \cite{candes2014towards,tang2014CSoffgrid}, where the dual polynomial is constructed using the squared Fej\'er's kernel, we propose a construction of dual polynomials which are composed of a deterministic term and a random perturbation term induced by the interference between modalities. Finally, we show that the constructed dual polynomials satisfy the optimality conditions with high probability when the sample complexity $M$ is large enough.

\subsection{Optimality Conditions using Dual Polynomials}
We first certify the optimality of the primal problem \eqref{convex_demixing_rewritten} using the following proposition whose proof is in Appendix~\ref{proof_dual_certification}.
\begin{prop}\label{dual_certificate}
 $(\boldsymbol{x}_{1}^{\star}, \boldsymbol{x}_{2}^{\star})$ is the unique optimizer of \eqref{convex_demixing_rewritten} if there exists a vector $\boldsymbol{p} = \left[ p_{-2M}, \dots, p_{0}, \dots, p_{2M} \right]^{T}$ such that the dual polynomials $P(\tau)$ and $Q(\tau)$ constructed from it, represented as
\begin{equation}\label{dual_polynomial}
P\left(\tau\right)=\sum_{n=-2M}^{2M} p_{n}e^{j2\pi n\tau} , \quad  Q\left(\tau\right)=\sum_{n=-2M}^{2M}p_{n}\bar{g}_{n}e^{j2\pi n\tau}
\end{equation}
satisfy
\begin{equation} \label{conditions}
\begin{cases}
 P\left(\tau_{1k}\right) = \mathrm{sign}\left(a_{1k}\right), & \forall \tau_{1k} \in \Upsilon_{1} \\
  \left\vert P\left(\tau\right)\right\vert < 1, &\forall \tau\notin \Upsilon_{1} \\
  Q\left(\tau_{2k}\right) = \mathrm{sign}\left(a_{2k}\right), & \forall \tau_{2k} \in \Upsilon_{2} \\
  \left\vert Q\left(\tau\right)\right\vert < 1, &\forall \tau\notin\Upsilon_{2} \\
\end{cases},
\end{equation}
where the sign should be understood as the complex sign.
\end{prop}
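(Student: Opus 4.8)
The plan is to run the standard Lagrangian-duality / dual-certificate argument for atomic norm minimization, adapted to the coupled observation constraint $\by=\bx_1+\bg\odot\bx_2$ of \eqref{convex_demixing_rewritten}.

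\emph{Weak duality.} First I would record that, for any primal-feasible $(\bx_1,\bx_2)$ and any $\bp$ with $\|\bp\|_{\cA}^{\star}\le 1$ and $\|\bar\bg\odot\bp\|_{\cA}^{\star}\le 1$, one has $\langle\bp,\by\rangle_{\mathbb R}=\langle\bp,\bx_1\rangle_{\mathbb R}+\langle\bp,\bg\odot\bx_2\rangle_{\mathbb R}=\langle\bp,\bx_1\rangle_{\mathbb R}+\langle\bar\bg\odot\bp,\bx_2\rangle_{\mathbb R}\le\|\bx_1\|_{\cA}+\|\bx_2\|_{\cA}$, where the middle identity is immediate from the definition of the inner product and the last step is the dual-norm inequality; this is weak duality between \eqref{convex_demixing_rewritten} and its dual \eqref{convex_demixing_dual}.

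\emph{Optimality from the certificate.} Given a $\bp$ obeying \eqref{conditions}, the two boundedness conditions say precisely that $\|\bp\|_{\cA}^{\star}=\sup_{\tau}|P(\tau)|=1$ and $\|\bar\bg\odot\bp\|_{\cA}^{\star}=\sup_{\tau}|Q(\tau)|=1$, so $\bp$ is dual feasible; the interpolation conditions give $\langle\bp,\bx_1^{\star}\rangle_{\mathbb R}=\mathrm{Re}\big(\sum_k\overline{a_{1k}}\,P(\tau_{1k})\big)=\sum_k|a_{1k}|$ and, via $Q$, $\langle\bar\bg\odot\bp,\bx_2^{\star}\rangle_{\mathbb R}=\sum_k|a_{2k}|$. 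Since $\sum_k|a_{ik}|\ge\|\bx_i^{\star}\|_{\cA}$ by definition of the atomic norm while the dual-norm bound gives the reverse inequality $\langle\,\cdot\,,\bx_i^{\star}\rangle_{\mathbb R}\le\|\bx_i^{\star}\|_{\cA}$, all three quantities coincide for $i=1,2$; substituting into the weak-duality chain yields $\langle\bp,\by\rangle_{\mathbb R}=\|\bx_1^{\star}\|_{\cA}+\|\bx_2^{\star}\|_{\cA}$, so $(\bx_1^{\star},\bx_2^{\star})$ is a primal optimizer and $\bp$ is dual optimal.

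\emph{Uniqueness.} Let $(\tilde\bx_1,\tilde\bx_2)$ be any optimizer; re-running the chain with the same $\bp$ forces equality throughout, hence $\langle\bp,\tilde\bx_1\rangle_{\mathbb R}=\|\tilde\bx_1\|_{\cA}$ and $\langle\bar\bg\odot\bp,\tilde\bx_2\rangle_{\mathbb R}=\|\tilde\bx_2\|_{\cA}$. Choosing a decomposition $\tilde\bx_1=\sum_l\tilde a_l\bc(\tilde\tau_l)$ attaining the atomic norm (attainment follows from compactness of the atom set by a Carath\'eodory argument), the identity $\sum_l|\tilde a_l|=\mathrm{Re}\big(\sum_l\overline{\tilde a_l}\,P(\tilde\tau_l)\big)\le\sum_l|\tilde a_l|\,|P(\tilde\tau_l)|\le\sum_l|\tilde a_l|$ together with $|P(\tau)|<1$ off $\Upsilon_1$ forces every active node $\tilde\tau_l$ to lie in $\Upsilon_1$ with $P(\tilde\tau_l)=\mathrm{sign}(\tilde a_l)$, so $\tilde\bx_1$ is supported on $\Upsilon_1$ and $\mathrm{sign}(\tilde a_l)=\mathrm{sign}(a_{1k})$ whenever $\tilde\tau_l=\tau_{1k}$; the same holds for $\tilde\bx_2$ on $\Upsilon_2$ via $Q$. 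Then $\tilde\bx_1-\bx_1^{\star}\in\mathrm{span}\{\bc(\tau_{1k})\}_k$, $\tilde\bx_2-\bx_2^{\star}\in\mathrm{span}\{\bc(\tau_{2k})\}_k$, and $(\tilde\bx_1-\bx_1^{\star})+\bg\odot(\tilde\bx_2-\bx_2^{\star})=\boldsymbol 0$ by the constraint; linear independence of the combined set $\{\bc(\tau_{1k})\}_{k=1}^{K_1}\cup\{\bg\odot\bc(\tau_{2k})\}_{k=1}^{K_2}$ in $\mathbb C^{4M+1}$ — a Vandermonde-type fact valid when $K_1+K_2\le 4M+1$ for all $\bg$ outside a measure-zero set, in particular almost surely under the randomness of $\bg$ assumed in Theorems~\ref{theorem_main} and~\ref{theorem_noisy} — then yields $\tilde\bx_1=\bx_1^{\star}$ and $\tilde\bx_2=\bx_2^{\star}$.

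\emph{Main obstacle.} The dual-norm bookkeeping above is routine; the step that needs genuine care is the uniqueness argument — specifically the attainment of the atomic-norm infimum, the passage from the equality case to the support containment $\mathrm{supp}(\tilde\bx_i)\subseteq\Upsilon_i$, and making precise what guarantees the cross-modality linear independence that finally closes the proof.
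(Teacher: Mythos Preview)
Your argument matches the paper's: dual feasibility from $\sup_\tau|P(\tau)|,\sup_\tau|Q(\tau)|\le 1$, then $\langle\bp,\by\rangle_{\mathbb R}=\|\bx_1^\star\|_{\cA}+\|\bx_2^\star\|_{\cA}$ from the interpolation conditions, then uniqueness by forcing any competing optimizer's atomic support into $\Upsilon_1,\Upsilon_2$ via the strict off-support inequalities (the paper does the equivalent contrapositive, exhibiting $\langle\bp,\by\rangle_{\mathbb R}<\|\hat\bx_1\|_{\cA}+\|\hat\bx_2\|_{\cA}$ whenever some $\hat\tau_{ik}\notin\Upsilon_i$). The one place you go further than the paper is the final same-support step: the paper disposes of it with the word ``straightforwardly'', whereas you correctly identify it as requiring linear independence of $\{\bc(\tau_{1k})\}_k\cup\{\bg\odot\bc(\tau_{2k})\}_k$ and note that this needs a genericity hypothesis on $\bg$ not actually stated in Proposition~\ref{dual_certificate} --- a valid observation about the paper's exposition rather than a gap in your argument.
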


\subsection{Constructing the Dual Certificate}\label{dual_construction_original}
Proposition~\ref{dual_certificate} suggests that if we can find a vector $\boldsymbol{p}$ to construct two dual polynomials $P(\tau)$ and $Q(\tau)$ in \eqref{dual_polynomial} that satisfy \eqref{conditions}, AtomicDemix is guaranteed to recover the ground truth. Our construction is inspired by \cite{candes2014towards,tang2014CSoffgrid},  based on use of the squared Fej\'er's kernel. However, since the two dual polynomials are coupled together, the construction is more involved.

Define the squared Fej\'er's kernel \cite{candes2014towards} as
\begin{equation}\label{func_K}
K (\tau )=\frac{1}{M}\sum_{n=-2M}^{2M}s_{n}e^{j2\pi n\tau},
\end{equation}
where $s_{n}=\frac{1}{M}\sum_{i=\max\left\{n-M,-M\right\}}^{\min\left\{n+M,M\right\}}\left(1-\left\vert\frac{i}{M}\right\vert\right)\left(1-\left\vert\frac{n}{M}-\frac{i}{M}\right\vert\right)$ with $\left\vert s_{n} \right\vert \le 1$. The value of $K\left(\tau\right)$ is nonnegative, attaining the peak at $\tau=0$ and decaying to zero rapidly with the increase of $|\tau|$.

We define two functions $K_{g}\left(\tau\right)$ and $K_{\bar{g}}\left(\tau\right)$ respectively as 
\begin{equation}\label{func_Kg}
K_{g}\left(\tau\right)=\frac{1}{M}\sum_{n=-2M}^{2M}s_{n}{g}_{n}e^{j2\pi n\tau}, \quad \mathrm{and}\quad
K_{\bar{g}}\left(\tau\right)=\frac{1}{M}\sum_{n=-2M}^{2M}s_{n} \bar{g}_{n} e^{j2\pi n\tau}.
\end{equation}

We then construct two polynomials $P\left(\tau\right)$ and $Q\left(\tau\right)$ as
\begin{equation}\label{func_Pf}
P\left(\tau\right)= \sum_{k=1}^{K_{1}}\alpha_{1k}K\left(\tau-\tau_{1k}\right)+\sum_{k=1}^{K_{1}}\beta_{1k}K'\left(\tau-\tau_{1k}\right)+\sum_{k=1}^{K_{2}}\alpha_{2k}K_{g}\left(\tau-\tau_{2k}\right)+\sum_{k=1}^{K_{2}}\beta_{2k}K_{g}'\left(\tau-\tau_{2k}\right),
\end{equation} 
and
\begin{equation}\label{func_Qf}
\begin{split}
Q\left(\tau\right)=\sum_{k=1}^{K_{1}}\alpha_{1k}K_{\bar{g}}\left(\tau-\tau_{1k}\right)+\sum_{k=1}^{K_{1}}\beta_{1k}K_{\bar{g}}'\left(\tau-\tau_{1k}\right)+\sum_{k=1}^{K_{2}}\alpha_{2k}K\left(\tau-\tau_{2k}\right)+\sum_{k=1}^{K_{2}}\beta_{2k}K'\left(\tau-\tau_{2k}\right),
\end{split}
\end{equation}
where $\tau_{1k}\in\Upsilon_{1}$ and $\tau_{2k}\in\Upsilon_{2}$. It is straightforward to validate that there exists a corresponding vector $\boldsymbol{p}$ such that \eqref{func_Pf} and \eqref{func_Qf} can be equivalently written in the form of \eqref{dual_polynomial}. Set the coefficients $\boldsymbol{\alpha}_{i}=\left[\alpha_{i1},\dots,\alpha_{iK_{i}}\right]^{T}$, $\boldsymbol{\beta}_{i}=\left[\beta_{i1},\dots,\beta_{iK_{i}}\right]^{T}$, for $i=1,2$ by solving the following equations:
\begin{equation}\label{equ_solve_coeff}
\begin{cases}
P\left(\tau_{1k}\right)  =  \mathrm{sign}\left(a_{1k}\right), &\quad \tau_{1k}\in\Upsilon_{1}, \\
P'\left(\tau_{1k}\right)  =  0, &\quad \tau_{1k}\in\Upsilon_{1}, \\
Q\left(\tau_{2k}\right)  =  \mathrm{sign}\left(a_{2k}\right), &\quad \tau_{2k}\in\Upsilon_{2}, \\
Q'\left(\tau_{2k}\right)  =  0, &\quad \tau_{2k}\in\Upsilon_{2}.
\end{cases}
\end{equation}
The above setting, if exists, immediately satisfies the first and third conditions in \eqref{conditions}. The rest of the proof is then to, under the condition of Theorem~\ref{theorem_main}, guarantee that a solution of \eqref{equ_solve_coeff} exists with high probability, and moreover, when existing, the solution satisfies the second and forth conditions in \eqref{conditions} with high probability, therefore completing the proof.

\begin{example}
Before proceeding, we demonstrate the above dual polynomial construction by an example. Set $M=32$. Let $K_{1}=4$ and $K_{2}=6$. We randomly generate the source locations $\Upsilon_{1}$ and $\Upsilon_{2}$ each satisfying the separation $\Delta \ge 1/M$. The amplitudes of the constructed $P\left(\tau\right)$ and $Q\left(\tau\right)$ are shown in Fig.~\ref{fig:Pf}, which indeed satisfy all the conditions in \eqref{conditions}.
\begin{figure}[h]
\centering
\begin{tabular}{cc}
\includegraphics[width=0.4\textwidth]{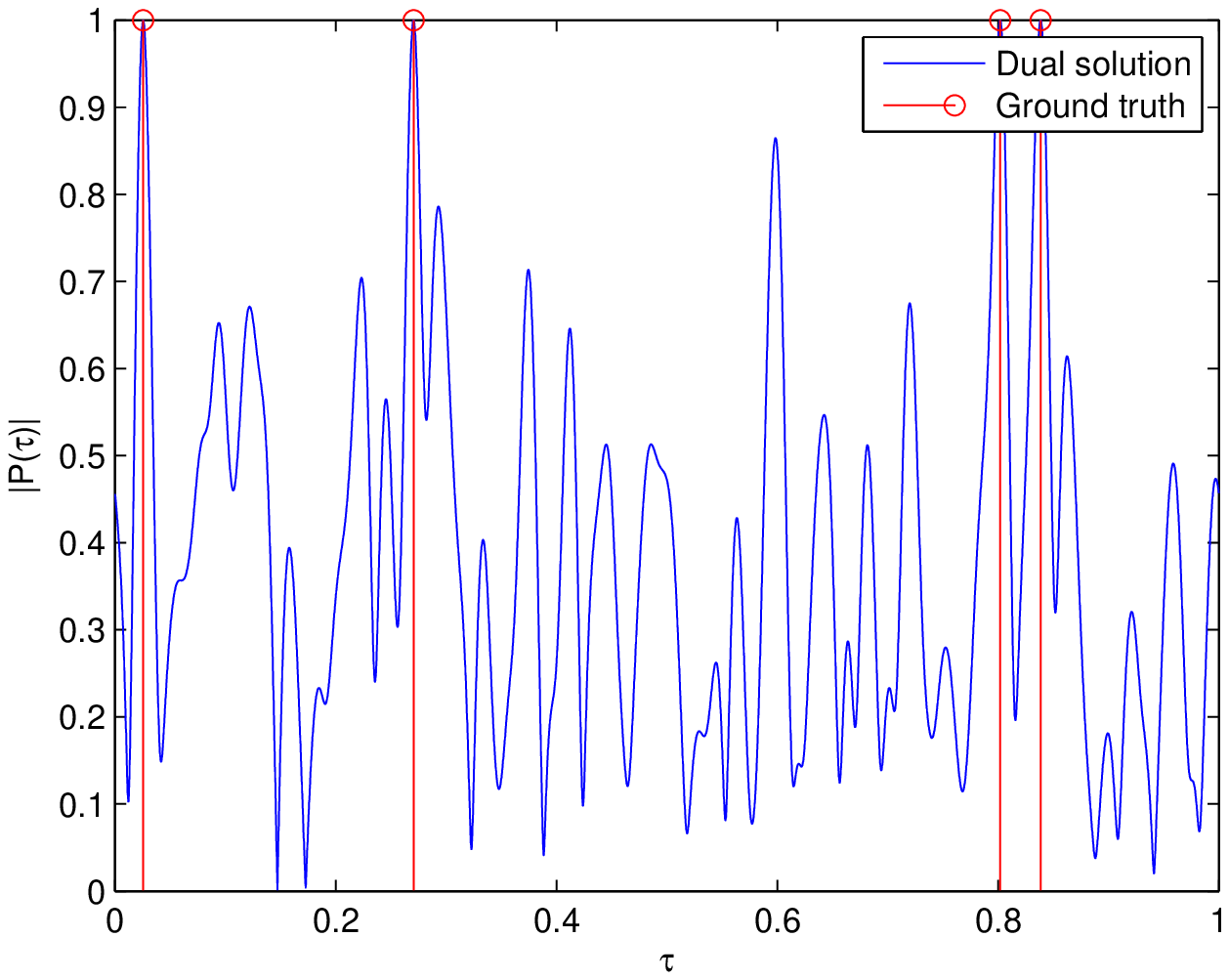} & \includegraphics[width=0.4\textwidth]{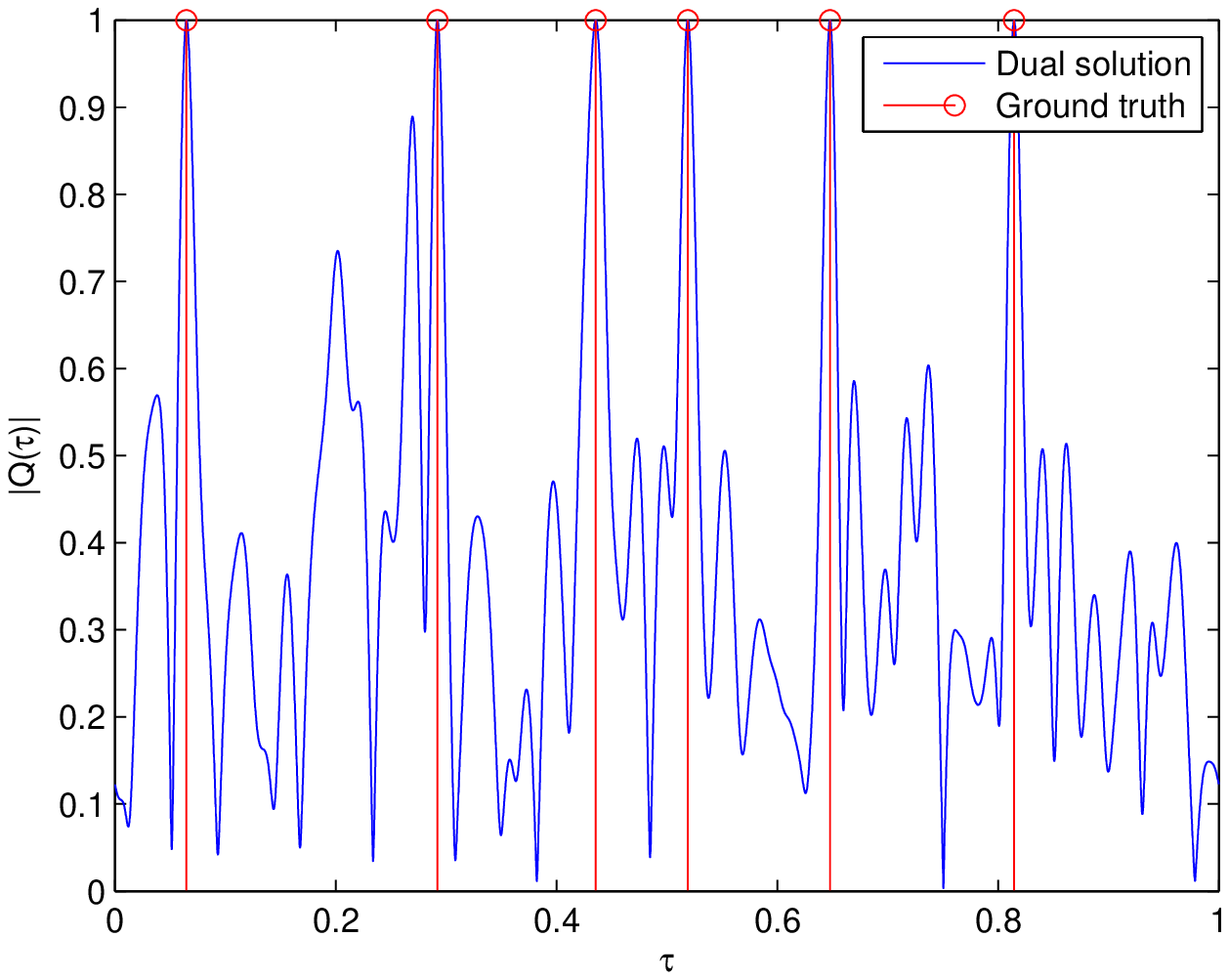} \\
(a) $| P(\tau)|$ & (b) $|Q(\tau)|$
\end{tabular}
\caption{The absolute values of the constructed dual polynomials $|P\left(\tau\right)|$ and $|Q(\tau)|$ following \eqref{equ_solve_coeff} with respect to $\tau\in[0,1)$.}
\label{fig:Pf}
\end{figure}
\end{example}

\subsection{Invertibility of \eqref{equ_solve_coeff}}
We first show that the solution of \eqref{equ_solve_coeff} exists with high probability in this subsection. Let 
\begin{equation*}
\boldsymbol{u}_{i}=\left[\mathrm{sign}\left(a_{i1}\right),\ldots,\mathrm{sign}\left(a_{iK_{i}}\right)\right]^{T},
\end{equation*}
for $i=1,2$. Rewrite \eqref{equ_solve_coeff} into a matrix form as
\begin{equation}\label{coefficient_det}
\begin{bmatrix}
\boldsymbol{W}_{10} & \frac{1}{\sqrt{\left\vert K''\left(0\right)\right\vert}}\boldsymbol{W}_{11} & \boldsymbol{W}_{g0} & \frac{1}{\sqrt{\left\vert K''\left(0\right)\right\vert}}\boldsymbol{W}_{g1}\\
-\frac{1}{\sqrt{\left\vert K''\left(0\right)\right\vert}}\boldsymbol{W}_{11} & -\frac{1}{\left\vert K''\left(0\right)\right\vert}\boldsymbol{W}_{12} & -\frac{1}{\sqrt{\left\vert K''\left(0\right)\right\vert}}\boldsymbol{W}_{g1} & -\frac{1}{\left\vert K''\left(0\right)\right\vert}\boldsymbol{W}_{g2}\\
\boldsymbol{W}_{\bar{g}0} & \frac{1}{\sqrt{\left\vert K''\left(0\right)\right\vert}}\boldsymbol{W}_{\bar{g}1} & \boldsymbol{W}_{20} & \frac{1}{\sqrt{\left\vert K''\left(0\right)\right\vert}}\boldsymbol{W}_{21}\\
-\frac{1}{\sqrt{\left\vert K''\left(0\right)\right\vert}}\boldsymbol{W}_{\bar{g}1} & -\frac{1}{\left\vert K''\left(0\right)\right\vert}\boldsymbol{W}_{\bar{g}2} & -\frac{1}{\sqrt{\left\vert K''\left(0\right)\right\vert}}\boldsymbol{W}_{21} & -\frac{1}{\left\vert K''\left(0\right)\right\vert}\boldsymbol{W}_{22}\\
\end{bmatrix}
\begin{bmatrix}
\boldsymbol{\alpha}_{1}\\
\sqrt{\left\vert K''\left(0\right)\right\vert}\boldsymbol{\beta}_{1}\\
\boldsymbol{\alpha}_{2}\\
\sqrt{\left\vert K''\left(0\right)\right\vert}\boldsymbol{\beta}_{2}\\
\end{bmatrix}
=
\begin{bmatrix}
\boldsymbol{u}_{1}\\
\boldsymbol{0}\\
\boldsymbol{u}_{2}\\
\boldsymbol{0}
\end{bmatrix},
\end{equation}
where $K''(0)$ is a scaler, defined as
\begin{equation}\label{equ_K_twoderiv_zero}
K''\left(0\right) = - \frac{4}{3}\pi^{2}\left(M^{2}-1\right).
\end{equation}
The entries of $\boldsymbol{W}_{1i} \in \mathbb{C}^{K_1\times K_1}$, $\boldsymbol{W}_{gi} \in \mathbb{C}^{K_1\times K_2}$, $\boldsymbol{W}_{\bar{g}i} \in \mathbb{C}^{K_2\times K_1}$, and $\boldsymbol{W}_{2i} \in \mathbb{C}^{K_2\times K_2}$, $i=0,1,2$, are specified respectively as 
\begin{align*} 
\boldsymbol{W}_{1i}\left(l,k\right)&=K^{\left(i\right)}\left(\tau_{1l}-\tau_{1k}\right), \quad \boldsymbol{W}_{gi}\left(l,k\right)=K_{g}^{\left(i\right)}\left(\tau_{1l}-\tau_{2k}\right), \\
 \boldsymbol{W}_{\bar{g}i}\left(l,k\right)&=K_{\bar{g}}^{\left(i\right)}\left(\tau_{2l}-\tau_{1k}\right), \quad \boldsymbol{W}_{2i}\left(l,k\right)=K^{\left(i\right)}\left(\tau_{2l}-\tau_{2k}\right).
\end{align*}

For simplicity, we further introduce the following notations:
\begin{equation*}
\begin{split}
&\boldsymbol{W}_{1}=\begin{bmatrix}\boldsymbol{W}_{10} & \frac{1}{\sqrt{\left\vert K''\left(0\right)\right\vert}}\boldsymbol{W}_{11}\\ -\frac{1}{\sqrt{\left\vert K''\left(0\right)\right\vert}}\boldsymbol{W}_{11} & -\frac{1}{\left\vert K''\left(0\right)\right\vert}\boldsymbol{W}_{12}\end{bmatrix}, \quad \boldsymbol{W}_{g}=\begin{bmatrix}\boldsymbol{W}_{g0} & \frac{1}{\sqrt{\left\vert K''\left(0\right)\right\vert}}\boldsymbol{W}_{g1}\\ -\frac{1}{\sqrt{\left\vert K''\left(0\right)\right\vert}}\boldsymbol{W}_{g1} & -\frac{1}{\left\vert K''\left(0\right)\right\vert}\boldsymbol{W}_{g2}\end{bmatrix},\\
&\boldsymbol{W}_{\bar{g}}=\begin{bmatrix}\boldsymbol{W}_{\bar{g}0} & \frac{1}{\sqrt{\left\vert K''\left(0\right)\right\vert}}\boldsymbol{W}_{\bar{g}1} \\ -\frac{1}{\sqrt{\left\vert K''\left(0\right)\right\vert}}\boldsymbol{W}_{\bar{g}1} & -\frac{1}{\left\vert K''\left(0\right)\right\vert}\boldsymbol{W}_{\bar{g}2}\end{bmatrix}, \quad  \boldsymbol{W}_{2}=\begin{bmatrix}\boldsymbol{W}_{20} & \frac{1}{\sqrt{\left\vert K''\left(0\right)\right\vert}}\boldsymbol{W}_{21}\\
-\frac{1}{\sqrt{\left\vert K''\left(0\right)\right\vert}}\boldsymbol{W}_{21} & -\frac{1}{\left\vert K''\left(0\right)\right\vert}\boldsymbol{W}_{22} \end{bmatrix},
\end{split}
\end{equation*}
and $\boldsymbol{W} = \begin{bmatrix} 
\bW_1 & \bW_g \\
\bW_{\bar{g}} & \bW_2
\end{bmatrix}$. Moreover, we have $\boldsymbol{W}_{g}=\boldsymbol{W}_{\bar{g}}^H$. The diagonal blocks $\bW_i$ of $\boldsymbol{W}$ are deterministic and well-conditioned if the separation $\Delta$ is not so small. This is formalized in the following proposition.
%\cite[Gershgorin circle theorem]{Horn1990Matrix}
\begin{prop}\cite[Proposition 4.1]{tang2014CSoffgrid}
Suppose $\Delta \ge1/M$, then both $\boldsymbol{W}_{1}$ and $\boldsymbol{W}_{2}$ are invertible and satisfy the following
\begin{align}
\left\Vert\boldsymbol{I}-\boldsymbol{W}_{i}\right\Vert & \le 0.3623,\label{i_min_W}\\
\left\Vert\boldsymbol{W}_{i}\right\Vert &\le 1.3623,\label{single_W}\\
\left\Vert\boldsymbol{W}_{i}^{-1}\right\Vert &\le 1.568\label{inv_W},
\end{align}
for $i=1,2$, where $\left\Vert\cdot\right\Vert$ represents the matrix operator norm.
\end{prop}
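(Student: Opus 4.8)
The plan is to recognize that $\boldsymbol{W}_1$ (respectively $\boldsymbol{W}_2$) is exactly the system matrix arising in the single-modality dual certificate of \cite{candes2014towards} and \cite{tang2014CSoffgrid}, specialized to the point set $\Upsilon_1$ (respectively $\Upsilon_2$): its four blocks are the squared Fej\'er kernel $K$ and its first two derivatives sampled at the pairwise differences $\tau_{1l}-\tau_{1k}$, with the standard normalization by $\sqrt{|K''(0)|}$ and $|K''(0)|$. The interference between modalities enters only the off-diagonal blocks $\boldsymbol{W}_g$, $\boldsymbol{W}_{\bar g}$ of $\boldsymbol{W}$ and not $\boldsymbol{W}_1$, $\boldsymbol{W}_2$ at all, so \eqref{i_min_W}--\eqref{inv_W} are precisely \cite[Proposition 4.1]{tang2014CSoffgrid}; it suffices to recall why they hold under $\Delta \ge 1/M$.

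First I would record that the diagonal of $\boldsymbol{W}_i$ is the identity. Since $K(0)=1$, $K'(0)=0$ (by evenness of $K$), and $K''(0)<0$ from \eqref{equ_K_twoderiv_zero} so that $-K''(0)/|K''(0)| = 1$, the four diagonal blocks contribute $1$, $0$, $0$, and $1$ along the main diagonal of the $2K_i\times 2K_i$ matrix. Hence $\boldsymbol{I}-\boldsymbol{W}_i$ has zero diagonal. Moreover $\boldsymbol{W}_i$ is real symmetric: $\boldsymbol{W}_{i0}$ and $\boldsymbol{W}_{i2}$ are symmetric because $K$ and $K''$ are even, while $\boldsymbol{W}_{i1}$ is antisymmetric because $K'$ is odd, and one checks the displayed block form is invariant under transposition. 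Consequently $\|\boldsymbol{I}-\boldsymbol{W}_i\|$ equals its spectral radius and is bounded by its maximum absolute row sum.

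Next I would invoke the quantitative tail estimates on $|K^{(\ell)}(t)|$, $\ell=0,1,2$, valid for $1/M \le |t|\le 1/2$, that follow from the explicit product form of the squared Fej\'er kernel (the kernel bounds of \cite{candes2014towards,tang2014CSoffgrid}). Because the points of $\Upsilon_i$ are separated by at least $\Delta \ge 1/M$ on the circle, for a fixed row index the remaining points sit at wrapped distances $\ge\Delta,\ \ge 2\Delta,\dots$, so each off-diagonal block's row sum is dominated by a rapidly decaying series of the form $\sum_{j\ge 1}|K^{(\ell)}(j\Delta)|$ (after the appropriate normalization by $\sqrt{|K''(0)|}$ or $|K''(0)|$), which is controlled by a small absolute constant once $\Delta\ge 1/M$. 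Adding the contributions of the three relevant blocks in a row of $\boldsymbol{I}-\boldsymbol{W}_i$ yields the numerical bound $\|\boldsymbol{I}-\boldsymbol{W}_i\|\le 0.3623$, i.e. \eqref{i_min_W}. Finally \eqref{single_W} and \eqref{inv_W} follow from a Neumann series argument: $\|\boldsymbol{I}-\boldsymbol{W}_i\|<1$ makes $\boldsymbol{W}_i$ invertible with $\|\boldsymbol{W}_i\|\le 1+0.3623$ and $\|\boldsymbol{W}_i^{-1}\|\le(1-0.3623)^{-1}$, which are the displayed constants.

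The only delicate point is the bookkeeping behind the constant $0.3623$ -- tracking the kernel tail sums sharply enough through the $2\times2$ block structure -- but this is exactly the computation carried out in \cite[Section IV]{tang2014CSoffgrid} following \cite{candes2014towards}, and no new idea is needed since $\boldsymbol{W}_1$ and $\boldsymbol{W}_2$ are genuinely the same deterministic objects studied there.
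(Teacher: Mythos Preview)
Your proposal is correct and matches the paper's treatment: the paper does not prove this proposition at all but simply cites it verbatim from \cite[Proposition~4.1]{tang2014CSoffgrid}, and your sketch accurately recapitulates the argument there (identity diagonal, symmetry of $\boldsymbol{W}_i$, row-sum control via the squared Fej\'er kernel tail bounds under $\Delta\ge 1/M$, then Neumann series). Nothing further is needed.
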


The off-diagonal block $\boldsymbol{W}_g$ is a random matrix with respect to $\boldsymbol{g}$, which can be written as
\begin{equation}\label{decomposition_Wg}
\boldsymbol{W}_{g}=\frac{1}{M}\sum_{n=-2M}^{2M}s_{n}g_{n}\boldsymbol{e}_{1}\left(n\right)\boldsymbol{e}_{2}^H\left(n\right)=\sum_{n=-2M}^{2M}\boldsymbol{E}_{n},
\end{equation}
where
\begin{equation}\label{def_e1_e2}
\boldsymbol{e}_{1}\left(n\right)=\begin{bmatrix}e^{j2\pi n\tau_{11}}\\ e^{j2\pi n\tau_{12}}\\ \vdots \\ e^{j2\pi n\tau_{1K_{1}}}\\ -\frac{j2\pi n}{\sqrt{\left\vert K''\left(0\right)\right\vert}}e^{j2\pi n\tau_{11}}\\ -\frac{j2\pi n}{\sqrt{\left\vert K''\left(0\right)\right\vert}}e^{j2\pi n\tau_{12}} \\ \vdots \\ -\frac{j2\pi n}{\sqrt{\left\vert K''\left(0\right)\right\vert}}e^{j2\pi n\tau_{1K_{1}}} \end{bmatrix} \in\mathbb{C}^{2K_{1}}, \quad \boldsymbol{e}_{2}\left(n\right)=\begin{bmatrix}e^{j2\pi n\tau_{21}}\\ e^{j2\pi n\tau_{22}}\\ \vdots \\ e^{j2\pi n\tau_{2K_{2}}}\\ -\frac{j2\pi n}{\sqrt{\left\vert K''\left(0\right)\right\vert}}e^{j2\pi n\tau_{21}}\\ -\frac{j2\pi n}{\sqrt{\left\vert K''\left(0\right)\right\vert}}e^{j2\pi n\tau_{22}} \\ \vdots \\ -\frac{j2\pi n}{\sqrt{\left\vert K''\left(0\right)\right\vert}}e^{j2\pi n\tau_{2K_{2}}} \end{bmatrix} \in\mathbb{C}^{2K_{2}},
\end{equation}
and
\begin{equation}
\boldsymbol{E}_{n}=\frac{1}{M}s_{n}g_{n}\boldsymbol{e}_{1}\left(n\right)\boldsymbol{e}_{2}^H\left(n\right)
\end{equation}
is a zero-mean random matrix with $\mathbb{E}\left[\boldsymbol{E}_{n}\right]=\frac{1}{M}s_{n}\mathbb{E}\left[g_{n}\right]\boldsymbol{e}_{1}\left(n\right)\boldsymbol{e}_{2}^H\left(n\right)=\boldsymbol{0}$ since $\mathbb{E}\left[g_{n}\right] =\mathbb{E}\left[e^{j2\pi\phi_{n}}\right]=0$. We have $\bW_g$ is a sum of independent zero-mean random matrices with $\mathbb{E}\left[\bW_g\right] =0$. The following proposition establishes the spectral norm of $\boldsymbol{W}_{g}$ is bounded with high probability, whose proof is given in Appendix~\ref{proof_invertibility}.
\begin{prop}\label{prop:invertibility}
Assume $M\geq 4$. Let $\delta\in(0,0.6376)$ and $\eta\in(0,1)$, then $\mathbb{P}\left\{\|\bW_g\|\geq \delta\right\}\leq \eta$ provided that 
\begin{equation}\label{sample_complexity}
M\ge \frac{46}{\delta^{2}}K_{\max}\log\left(\frac{2\left(K_{1}+K_{2}\right)}{\eta}\right).
\end{equation}
\end{prop}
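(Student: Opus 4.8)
The plan is to control $\|\bW_g\|$ via the matrix Bernstein inequality applied to the decomposition $\bW_g = \sum_{n=-2M}^{2M} \bE_n$ established in \eqref{decomposition_Wg}, where each $\bE_n = \frac{1}{M}s_n g_n \be_1(n)\be_2^H(n)$ is an independent zero-mean random matrix. Since $\bW_g$ is rectangular ($2K_1 \times 2K_2$), I would use the rectangular (dilation) version of matrix Bernstein. This requires two ingredients: a uniform almost-sure bound on $\|\bE_n\|$, and a bound on the matrix variance statistic $\max\{\|\sum_n \mathbb{E}[\bE_n\bE_n^H]\|, \|\sum_n \mathbb{E}[\bE_n^H\bE_n]\|\}$.

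First I would bound $\|\bE_n\| = \frac{1}{M}|s_n|\,|g_n|\,\|\be_1(n)\|_2\|\be_2(n)\|_2$. Using $|s_n|\le 1$, $|g_n|=1$, and the explicit form \eqref{def_e1_e2}, we have $\|\be_1(n)\|_2^2 = K_1(1 + \frac{4\pi^2 n^2}{|K''(0)|})$; since $|n|\le 2M$ and $|K''(0)| = \frac{4}{3}\pi^2(M^2-1)$, the ratio $\frac{4\pi^2 n^2}{|K''(0)|} \le \frac{3(2M)^2}{M^2-1}$, which for $M\ge 4$ is bounded by a modest constant (roughly $12$–$13$), so $\|\be_1(n)\|_2 \le c\sqrt{K_1}$ and similarly $\|\be_2(n)\|_2 \le c\sqrt{K_2}$ for an explicit $c$. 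This gives $\|\bE_n\| \le \frac{c^2}{M}\sqrt{K_1 K_2} \le \frac{c^2}{M}K_{\max} =: L$. For the variance, $\bE_n\bE_n^H = \frac{1}{M^2}|s_n|^2\|\be_2(n)\|_2^2\,\be_1(n)\be_1(n)^H$, so $\sum_n \mathbb{E}[\bE_n\bE_n^H] = \frac{1}{M^2}\sum_n |s_n|^2\|\be_2(n)\|_2^2\,\be_1(n)\be_1(n)^H$; bounding $\|\be_2(n)\|_2^2 \le c^2 K_2$ and $|s_n|^2\le|s_n|\le1$, this is $\preceq \frac{c^2 K_2}{M^2}\sum_n \be_1(n)\be_1(n)^H$. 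Now $\sum_n \be_1(n)\be_1(n)^H$ is (up to the structure of $\be_1$) a Gram-type matrix whose norm one recognizes as $M\|\bW_1\|$ or a comparable deterministic quantity — more carefully, $\frac{1}{M}\sum_n s_n \be_1(n)\be_1(n)^H$ has the block structure appearing in $\bW_1$, but here we have $\sum_n \be_1(n)\be_1(n)^H$ without the $s_n$ weights. A direct estimate: the $(l,k)$ entry of the upper-left block is $\sum_n e^{j2\pi n(\tau_{1l}-\tau_{1k})}$, a Dirichlet-kernel sum bounded by $4M+1$ on the diagonal and controlled off-diagonal by the separation $\Delta\ge1/M$; summing, $\|\sum_n \be_1(n)\be_1(n)^H\| \le C' M K_1$ for an absolute constant $C'$ (using Gershgorin plus the Dirichlet kernel decay, analogous to the argument behind Proposition 4.1 of \cite{tang2014CSoffgrid}). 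Hence the variance is $\le \frac{c^2 C' K_1 K_2}{M} \le \frac{c^2 C' K_{\max}^2}{M} =: \nu$, and symmetrically for $\sum_n\mathbb{E}[\bE_n^H\bE_n]$.

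With $L \le \frac{c^2 K_{\max}}{M}$ and $\nu \le \frac{c^2 C' K_{\max}^2}{M}$, matrix Bernstein gives, for $0<\delta<1$,
\begin{equation*}
\mathbb{P}\{\|\bW_g\|\ge\delta\} \le 2(K_1+K_2)\exp\!\left(-\frac{\delta^2/2}{\nu + L\delta/3}\right).
\end{equation*}
Since $\delta<0.6376<1$, we have $L\delta/3 \le L/3 \le \nu$ once $C'\ge \tfrac{1}{3}$ (harmless, as we may enlarge $C'$), so the denominator is $\le 2\nu \le \frac{2c^2C' K_{\max}^2}{M}$; actually it is cleaner to bound $\nu + L\delta/3 \le \frac{C'' K_{\max}^2}{M}$ directly for a suitable $C''$ — wait, I should be careful that $\nu$ already carries a $K_{\max}^2$, which would force $M \gtrsim K_{\max}^2$ rather than $M\gtrsim K_{\max}$ as claimed in \eqref{sample_complexity}. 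This is the main obstacle: obtaining the sharper linear-in-$K_{\max}$ dependence requires noticing that the relevant variance proxy is not $\|\sum_n\mathbb{E}[\bE_n\bE_n^H]\|$ with crude bounds but rather exploits that $\frac{1}{M}\sum_n s_n\|\be_2(n)\|_2^2 \cdot(\text{block of }\bW_1)$ inherits the $O(M K)$ Gram bound with the correct normalization, so that after dividing by $M^2$ one genuinely gets $\nu = O(K_{\max}/M)$, not $O(K_{\max}^2/M)$ — the point being that $\sum_n\be_1(n)\be_1(n)^H$ contributes one factor $M$ and one factor $K_1$, but the $\|\be_2(n)\|_2^2$ weight is $O(K_2)$ only in $\ell_2$, whereas entrywise the cross terms average out; I would redo this estimate as $\sum_n |s_n|^2 \|\be_2(n)\|_2^2 \be_1(n)\be_1(n)^H$ and bound its norm by $\max_n\|\be_2(n)\|_2^2 \cdot \|\sum_n|s_n|^2\be_1(n)\be_1(n)^H\| \le c^2K_{\max}\cdot C'M$ — no, that still gives $K_{\max}$ times $M$, hence $\nu = O(K_{\max}/M)$ after the $1/M^2$, which is exactly what is needed. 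So the resolution is that $\|\sum_n |s_n|^2\be_1(n)\be_1(n)^H\|$ is $O(M)$ (not $O(MK_1)$) because the weighted Dirichlet/Fejér sums give diagonal entries $\frac{1}{M}\sum_n|s_n|^2 \cdot (1+\cdots) = O(M)$ and the separation kills off-diagonals — this is precisely the content of the deterministic bound \eqref{single_W}. Plugging $\nu \le \frac{C'' K_{\max}}{M}$ into Bernstein and requiring the exponent to be at least $\log(2(K_1+K_2)/\eta)$ yields $M \ge \frac{C''' }{\delta^2}K_{\max}\log(2(K_1+K_2)/\eta)$; tracking the explicit constants ($c^2$ from the $\|\be_i\|$ bounds at $M\ge4$, $C''$ from the weighted-Gram estimate) and absorbing the $\delta$-dependence of the lower-order term gives the stated constant $46$. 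The detailed constant-chasing (the $46$) is routine once the structure above is in place; the one genuinely delicate point, which I would present carefully, is the linear-in-$K_{\max}$ variance bound via \eqref{single_W}.
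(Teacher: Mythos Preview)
Your proposal is correct and follows the same strategy as the paper: apply matrix Bernstein to $\bW_g=\sum_n\bE_n$, with the key variance step being to peel off $\|\be_2(n)\|_2^2\le 14K_2$ and one factor $|s_n|\le1$, after which the remaining sum is \emph{exactly} $\frac{1}{M}\sum_n s_n\,\be_1(n)\be_1(n)^H=\bW_1$, so \eqref{single_W} gives the $O(1)$ norm bound and hence $\sigma^2\le 20K_{\max}/M$. Your Dirichlet/Gershgorin detour is unnecessary once you spot this identity; the paper writes it directly in Appendix~\ref{proof_invertibility}, and the constant $46$ comes from $20+\tfrac{14\delta}{3}\le 23$ for $\delta<0.6376$.
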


Denote the event $\mathcal{E}_{\delta}=\{\|\bW_g \| \leq \delta \}$, which holds with probability at least $1-\eta$ if \eqref{sample_complexity} holds, following \prettyref{prop:invertibility}. Assume $\mathcal{E}_{\delta}$ holds for some $0<\delta<0.6376$ and $\Delta\geq 1/M$, then  
\begin{align*}
\left\Vert\boldsymbol{I}-\boldsymbol{W} \right\Vert&\leq \left\Vert\boldsymbol{I}-\begin{bmatrix}\boldsymbol{W}_{1}&\boldsymbol{0}\\ \boldsymbol{0}& \boldsymbol{W}_{2}\end{bmatrix}\right\Vert+\left\Vert\begin{bmatrix}\boldsymbol{0}&\boldsymbol{W}_{g}\\ \boldsymbol{W}_{\bar{g}}& \boldsymbol{0}\end{bmatrix}\right\Vert\\
&\leq \max_{i=1,2} \| \bI -\bW_i \| + \|\bW_g\| \\
&\le 0.3623 + \delta < 1,
\end{align*}
which yields that $\bW$ is invertible under $\mathcal{E}_{\delta}$. Equivalently, under $\mathcal{E}_{\delta}$ the solution to \eqref{equ_solve_coeff} exists. Write $\boldsymbol{W}^{-1}$ as 
\begin{equation*}
\boldsymbol{W}^{-1}=\begin{bmatrix}\boldsymbol{L}_{1} & \boldsymbol{R}_{1} & \boldsymbol{L}_{g} & \boldsymbol{R}_{g}\\ \boldsymbol{L}_{\bar{g}} & \boldsymbol{R}_{\bar{g}} & \boldsymbol{L}_{2} & \boldsymbol{R}_{2}\end{bmatrix}, 
\end{equation*}
where $\boldsymbol{L}_i, \boldsymbol{R}_i  \in \mathbb{C}^{2K_i \times K_i}$ for $i=1,2$, $\boldsymbol{L}_g, \boldsymbol{R}_g \in \mathbb{C}^{2K_1\times K_2}$ and $\boldsymbol{L}_{\bar{g}}, \boldsymbol{R}_{\bar{g}} \in \mathbb{C}^{2K_2\times K_1}$. We can then invert \eqref{coefficient_det} and obtain
\begin{equation}\label{dual_polynomial_coefficients}
\begin{bmatrix}  
\boldsymbol{\alpha}_{1}\\
\sqrt{\left\vert K''\left(0\right)\right\vert}\boldsymbol{\beta}_{1}\\
\boldsymbol{\alpha}_{2}\\
\sqrt{\left\vert K''\left(0\right)\right\vert}\boldsymbol{\beta}_{2}\\
\end{bmatrix}
=\boldsymbol{W}^{-1}
\begin{bmatrix}
\boldsymbol{u}_{1}\\
\boldsymbol{0}\\
\boldsymbol{u}_{2}\\
\boldsymbol{0}
\end{bmatrix},
\end{equation}
which gives 
\begin{equation*}
\begin{bmatrix}
\boldsymbol{\alpha}_{1}\\
\sqrt{\left\vert K''\left(0\right)\right\vert}\boldsymbol{\beta}_{1}\\
\end{bmatrix}=\boldsymbol{L}_{1}\boldsymbol{u}_{1}+\boldsymbol{L}_{g}\boldsymbol{u}_{2}
\quad
\mathrm{and} 
\quad 
\begin{bmatrix}
\boldsymbol{\alpha}_{2}\\
\sqrt{\left\vert K''\left(0\right)\right\vert}\boldsymbol{\beta}_{2}\\
\end{bmatrix}=\boldsymbol{L}_{\bar{g}}\boldsymbol{u}_{1}+\boldsymbol{L}_{2}\boldsymbol{u}_{2}.
\end{equation*} 

\subsection{Bounding the Dual Polynomials}
The rest of the proof is then given \eqref{dual_polynomial_coefficients}, we need to verify that 
$\left\vert P\left(\tau\right)\right\vert < 1$, $\forall \tau\notin \Upsilon_{1}$ and similarly, $\left\vert Q\left(\tau\right)\right\vert < 1, \forall \tau\notin \Upsilon_{2}$. Since the expressions for $P(\tau)$ and $Q(\tau)$ are very similar, it is sufficient to only establish the above for $P(\tau)$.

Recall the form of $P\left(\tau\right)$ in \eqref{func_Pf}, the $l$th derivative of $P(\tau)$ can be represented as 
\begin{equation}
P^{\left(l\right)}\left(\tau\right)=\sum_{k=1}^{K_{1}}\alpha_{1k}K^{\left(l\right)}\left(\tau-\tau_{1k}\right)+\sum_{k=1}^{K_{1}}\beta_{1k}K^{\left(l+1\right)}\left(\tau-\tau_{1k}\right)+\sum_{k=1}^{K_{2}}\alpha_{2k}K_{g}^{\left(l\right)}\left(\tau-\tau_{2k}\right)+\sum_{k=1}^{K_{2}}\beta_{2k}K_{g}^{\left(l+1\right)}\left(\tau-\tau_{2k}\right),
\end{equation} 
which can be rewritten as
\begin{align}
\frac{1}{\sqrt{\left\vert K''\left(0\right)\right\vert}^{l}}P^{\left(l\right)}\left(\tau\right)&
=\sum_{k=1}^{K_{1}}\alpha_{1k}\frac{1}{\sqrt{\left\vert K''\left(0\right)\right\vert}^{l}}K^{\left(l\right)}\left(\tau-\tau_{1k}\right)+\sum_{k=1}^{K_{1}}\sqrt{\left\vert K''\left(0\right)\right\vert}\beta_{1k}\frac{1}{\sqrt{\left\vert K''\left(0\right)\right\vert}^{l+1}}K^{\left(l+1\right)}\left(\tau-\tau_{1k}\right) \nonumber \\
& +\sum_{k=1}^{K_{2}}\alpha_{2k}\frac{1}{\sqrt{\left\vert K''\left(0\right)\right\vert}^{l}}K_{g}^{\left(l\right)}\left(\tau-\tau_{2k}\right)+\sum_{k=1}^{K_{2}}\sqrt{\left\vert K''\left(0\right)\right\vert}\beta_{2k}\frac{1}{\sqrt{\left\vert K''\left(0\right)\right\vert}^{l+1}}K_{g}^{\left(l+1\right)}\left(\tau-\tau_{2k}\right) \nonumber \\
&=\boldsymbol{v}_{1l}^{H}\left(\tau\right)\begin{bmatrix}\boldsymbol{\alpha}_{1}\\ \sqrt{\left\vert K''\left(0\right)\right\vert}\boldsymbol{\beta}_{1}\end{bmatrix} + \boldsymbol{v}_{2l}^{H}\left(\tau\right)\begin{bmatrix}\boldsymbol{\alpha}_{2}\\ \sqrt{\left\vert K''\left(0\right)\right\vert}\boldsymbol{\beta}_{2}\end{bmatrix},
\end{align} 
where
\begin{equation*}
\bar{\boldsymbol{v}}_{1l}\left(\tau\right)=\frac{1}{\sqrt{\left\vert K''\left(0\right)\right\vert}^{l}}\begin{bmatrix}K^{\left(l\right)}\left(\tau-\tau_{11}\right) \\ K^{\left(l\right)}\left(\tau-\tau_{12}\right) \\ \vdots\\K^{\left(l\right)}\left(\tau-\tau_{1K_{1}}\right) \\ \frac{1}{\sqrt{\left\vert K''\left(0\right)\right\vert}}K^{\left(l+1\right)}\left(\tau-\tau_{11}\right) \\\frac{1}{\sqrt{\left\vert K''\left(0\right)\right\vert}}K^{\left(l+1\right)}\left(\tau-\tau_{12}\right) \\\vdots\\\frac{1}{\sqrt{\left\vert K''\left(0\right)\right\vert}}K^{\left(l+1\right)}\left(\tau-\tau_{1K_{1}}\right) \end{bmatrix},
\bar{\boldsymbol{v}}_{2l}\left(\tau\right)=\frac{1}{\sqrt{\left\vert K''\left(0\right)\right\vert}^{l}}\begin{bmatrix}K_{g}^{\left(l\right)}\left(\tau-\tau_{21}\right) \\ K_{g}^{\left(l\right)}\left(\tau-\tau_{22}\right) \\ \vdots\\K_{g}^{\left(l\right)}\left(\tau-\tau_{2K_{2}}\right) \\ \frac{1}{\sqrt{\left\vert K''\left(0\right)\right\vert}}K_{g}^{\left(l+1\right)}\left(\tau-\tau_{21}\right)  \\\frac{1}{\sqrt{\left\vert K''\left(0\right)\right\vert}}K_{g}^{\left(l+1\right)}\left(\tau-\tau_{22}\right) \\\vdots\\\frac{1}{\sqrt{\left\vert K''\left(0\right)\right\vert}}K_{g}^{\left(l+1\right)}\left(\tau-\tau_{2K_{2}}\right) \end{bmatrix},
\end{equation*}
and $K''\left(0\right)$ is the scaler defined in \eqref{equ_K_twoderiv_zero}. Using the forms of $K(\tau)$ and $K_g(\tau)$, we can rewrite the above as 
\begin{equation*}
\begin{split}
&\boldsymbol{v}_{1l}\left(\tau\right)=\frac{1}{M}\sum_{n=-2M}^{2M}s_{n} \left(\frac{-j2\pi n}{\sqrt{\left\vert K''\left(0\right)\right\vert}}\right)^{l} e^{-j2\pi n\tau}\boldsymbol{e}_{1}\left(n\right),\\
&\boldsymbol{v}_{2l}\left(\tau\right)=\frac{1}{M}\sum_{n=-2M}^{2M}s_{n}g_{n} \left(\frac{-j2\pi n}{\sqrt{\left\vert K''\left(0\right)\right\vert}}\right)^{l} e^{-j2\pi n\tau}\boldsymbol{e}_{2}\left(n\right),
\end{split}
\end{equation*}
where $\boldsymbol{e}_1(n)$ and $\boldsymbol{e}_2(n)$ are defined in \eqref{def_e1_e2}. Then $\frac{1}{\sqrt{\left\vert K''\left(0\right)\right\vert}^{l}}P^{\left(l\right)}\left(\tau\right)$ can be rewritten as 
\begin{align}
\frac{1}{\sqrt{\left\vert K''\left(0\right)\right\vert}^{l}}P^{\left(l\right)}\left(\tau\right)&=\boldsymbol{v}_{1l}^{H}\left(\tau\right)\left(\boldsymbol{L}_{1}\boldsymbol{u}_{1}+\boldsymbol{L}_{g}\boldsymbol{u}_{2}\right)+\boldsymbol{v}_{2l}^{H}\left(\tau\right)\left(\boldsymbol{L}_{\bar{g}}\boldsymbol{u}_{1}+\boldsymbol{L}_{2}\boldsymbol{u}_{2}\right) \label{plug1}\\
&=\langle\boldsymbol{u}_{1},\boldsymbol{L}_{1}^{H}\boldsymbol{v}_{1l}\left(\tau\right)\rangle + \langle\boldsymbol{u}_{2},\boldsymbol{L}_{g}^{H}\boldsymbol{v}_{1l}\left(\tau\right)\rangle + \langle\boldsymbol{u}_{1},\boldsymbol{L}_{\bar{g}}^{H}\boldsymbol{v}_{2l}\left(\tau\right)\rangle + \langle\boldsymbol{u}_{2},\boldsymbol{L}_{2}^{H}\boldsymbol{v}_{2l}\left(\tau\right)\rangle ,\label{plug2}
\end{align}
where \eqref{plug1} follows from \eqref{dual_polynomial_coefficients}. Let
\begin{equation*}
\boldsymbol{W}_{\mu} = \mathbb{E}\left[\bW\right] =\begin{bmatrix}\mathbb{E}\left[\boldsymbol{W}_{1}\right]&\mathbb{E}\left[\boldsymbol{W}_{g}\right]\\ \mathbb{E}\left[\boldsymbol{W}_{\bar{g}}\right]& \mathbb{E}\left[\boldsymbol{W}_{2}\right]\end{bmatrix}=\begin{bmatrix}\boldsymbol{W}_{1}&\boldsymbol{0}\\ \boldsymbol{0}& \boldsymbol{W}_{2}\end{bmatrix},
\end{equation*}
and 
\begin{equation*}
\boldsymbol{W}_{\mu}^{-1}=\begin{bmatrix}\boldsymbol{W}_{1}^{-1}&\boldsymbol{0}\\ \boldsymbol{0}& \boldsymbol{W}_{2}^{-1}\end{bmatrix}=\begin{bmatrix}\boldsymbol{L}_{\mu 1} & \boldsymbol{R}_{\mu 1} & \boldsymbol{0} & \boldsymbol{0}\\ \boldsymbol{0} & \boldsymbol{0} & \boldsymbol{L}_{\mu 2} & \boldsymbol{R}_{\mu 2}\end{bmatrix},
\end{equation*}
where $\boldsymbol{L}_{\mu i} \in \mathbb{C}^{2K_i\times K_i}$ and $\boldsymbol{R}_{\mu i} \in \mathbb{C}^{2K_i\times K_i}$, $i=1,2$. We can then further rewrite \eqref{plug2} as
\begin{align}
\frac{1}{\sqrt{\left\vert K''\left(0\right)\right\vert}^{l}}P^{\left(l\right)}\left(\tau\right)&=\langle\boldsymbol{u}_{1},\boldsymbol{L}_{\mu 1}^{H}\boldsymbol{v}_{1l}\left(\tau\right)\rangle  + \langle\boldsymbol{u}_{1},\left(\boldsymbol{L}_{1}-\boldsymbol{L}_{\mu 1}\right)^{H}\boldsymbol{v}_{1l}\left(\tau\right)\rangle + \langle\boldsymbol{u}_{2},\boldsymbol{L}_{g}^{H}\boldsymbol{v}_{1l}\left(\tau\right)\rangle \nonumber\\
&\quad + \langle\boldsymbol{u}_{1},\boldsymbol{L}_{\bar{g}}^{H}\boldsymbol{v}_{2l}\left(\tau\right)\rangle + \langle\boldsymbol{u}_{2},\boldsymbol{L}_{2}^{H}\boldsymbol{v}_{2l}\left(\tau\right)\rangle.\label{derivative_P}
\end{align}
Denote 
\begin{equation*}
\frac{1}{\sqrt{\left\vert K''\left(0\right)\right\vert}^{l}}P_{\mu}^{\left(l\right)}\left(\tau\right) =  \langle\boldsymbol{u}_{1},\boldsymbol{L}_{\mu 1}^{H}\boldsymbol{v}_{1l}\left(\tau\right)\rangle.
\end{equation*}
Our proof proceeds in the following steps:
\begin{itemize}
\item {\textbf {Step 1:}} show that $\frac{1}{\sqrt{\left\vert K''\left(0\right)\right\vert}^{l}}P^{\left(l\right)}\left(\tau\right)$ is bounded around $\frac{1}{\sqrt{\left\vert K''\left(0\right)\right\vert}^{l}}P_{\mu}^{\left(l\right)}\left(\tau\right)$ for a set of grid points $\Upsilon_{\mathrm{grid}}$;
\item {\textbf {Step 2:}} show that $\frac{1}{\sqrt{\left\vert K''\left(0\right)\right\vert}^{l}}P^{\left(l\right)}\left(\tau\right)$ is uniformly bounded around $\frac{1}{\sqrt{\left\vert K''\left(0\right)\right\vert}^{l}}P_{\mu}^{\left(l\right)}\left(\tau\right)$ for all $\tau\in[0,1)$;
\item {\textbf {Step 3:}} finally, show that $\left\vert P\left(\tau\right)\right\vert < 1$, $\forall \tau\notin \Upsilon_{1}$.
\end{itemize}

\subsubsection{Proof of Step 1} 
Here the goal is to bound the last four residual terms in \eqref{derivative_P} with high probability on a set of uniform grid points $\tau \in \Upsilon_{\mathrm{grid}}$ from $[0,1)$ whose size will be specified later. We first record the following useful lemma whose proof is given in Appendix~\ref{proof_bound_inver}.
\begin{lemma}\label{bound_inver}
Under the event $\mathcal{E}_\delta$ for some $\delta\in(0,1/4]$, we have
\begin{align*}
&\left\Vert\boldsymbol{L}_{i}\right\Vert\le 2\left\Vert\boldsymbol{W}_{\mu}^{-1}\right\Vert, \quad\mathrm{for}\quad i=1,2,\\
&\left\Vert\boldsymbol{L}_{i}-\boldsymbol{L}_{\mu i}\right\Vert\le 2\left\Vert\boldsymbol{W}_{\mu}^{-1}\right\Vert^{2}\delta, \quad \mathrm{for}\quad i=1,2,\\
&\left\Vert\boldsymbol{L}_{g}\right\Vert\le 2\left\Vert\boldsymbol{W}_{\mu}^{-1}\right\Vert^{2}\delta \le 0.8\left\Vert\boldsymbol{W}_{\mu}^{-1}\right\Vert,\\
&\left\Vert\boldsymbol{L}_{\bar{g}}\right\Vert\le 2\left\Vert\boldsymbol{W}_{\mu}^{-1}\right\Vert^{2}\delta \le 0.8\left\Vert\boldsymbol{W}_{\mu}^{-1}\right\Vert.
\end{align*}
\end{lemma}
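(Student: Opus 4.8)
The plan is to regard $\boldsymbol{W}$ as a perturbation of its block-diagonal expectation $\boldsymbol{W}_{\mu}$, with the perturbation size controlled by the event $\mathcal{E}_\delta$, and to read off all four estimates from a single use of the resolvent identity after keeping track of which block of $\boldsymbol{W}^{-1}$ each of $\boldsymbol{L}_i$, $\boldsymbol{L}_g$, $\boldsymbol{L}_{\bar g}$, $\boldsymbol{L}_{\mu i}$ occupies.

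First I would record the inputs. Under the standing assumption $\Delta\ge 1/M$, \cite[Proposition 4.1]{tang2014CSoffgrid} gives $\|\boldsymbol{W}_i^{-1}\|\le 1.568$ for $i=1,2$, hence $\|\boldsymbol{W}_{\mu}^{-1}\|=\max_{i}\|\boldsymbol{W}_i^{-1}\|\le 1.568$. Write $\boldsymbol{W}=\boldsymbol{W}_{\mu}+(\boldsymbol{W}-\boldsymbol{W}_{\mu})$, where
\[
\boldsymbol{W}-\boldsymbol{W}_{\mu}=\begin{bmatrix}\boldsymbol{0}&\boldsymbol{W}_{g}\\ \boldsymbol{W}_{\bar{g}}&\boldsymbol{0}\end{bmatrix};
\]
since $\boldsymbol{W}_{\bar g}=\boldsymbol{W}_g^H$, this Hermitian anti-diagonal block matrix has $\|\boldsymbol{W}-\boldsymbol{W}_{\mu}\|=\|\boldsymbol{W}_g\|\le\delta$ on $\mathcal{E}_\delta$. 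As $\|\boldsymbol{W}_{\mu}^{-1}(\boldsymbol{W}-\boldsymbol{W}_{\mu})\|\le 1.568\,\delta\le 1.568/4<1$, the matrix $\boldsymbol{I}+\boldsymbol{W}_{\mu}^{-1}(\boldsymbol{W}-\boldsymbol{W}_{\mu})$ is invertible, $\boldsymbol{W}^{-1}=(\boldsymbol{I}+\boldsymbol{W}_{\mu}^{-1}(\boldsymbol{W}-\boldsymbol{W}_{\mu}))^{-1}\boldsymbol{W}_{\mu}^{-1}$, and the Neumann series bound gives
\[
\|\boldsymbol{W}^{-1}\|\le\frac{\|\boldsymbol{W}_{\mu}^{-1}\|}{1-1.568/4}\le 2\|\boldsymbol{W}_{\mu}^{-1}\|.
\]
Since each $\boldsymbol{L}_i$ ($i=1,2$) is a submatrix of $\boldsymbol{W}^{-1}$ and passing to a submatrix does not increase the spectral norm, this already yields $\|\boldsymbol{L}_i\|\le 2\|\boldsymbol{W}_{\mu}^{-1}\|$.

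For the other three bounds I would invoke the resolvent identity $\boldsymbol{W}^{-1}-\boldsymbol{W}_{\mu}^{-1}=-\boldsymbol{W}_{\mu}^{-1}(\boldsymbol{W}-\boldsymbol{W}_{\mu})\boldsymbol{W}^{-1}$. Partition $\boldsymbol{W}^{-1}$ into $2\times2$ blocks conformally with $\boldsymbol{W}$, so that its $(1,1),(1,2),(2,1)$ blocks are $[\boldsymbol{L}_1\ \boldsymbol{R}_1]$, $[\boldsymbol{L}_g\ \boldsymbol{R}_g]$, $[\boldsymbol{L}_{\bar g}\ \boldsymbol{R}_{\bar g}]$, while the $(1,1)$ and $(2,2)$ blocks of $\boldsymbol{W}_{\mu}^{-1}$ are $[\boldsymbol{L}_{\mu1}\ \boldsymbol{R}_{\mu1}]=\boldsymbol{W}_1^{-1}$ and $\boldsymbol{W}_2^{-1}$, its off-diagonal blocks being zero. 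Since $\boldsymbol{W}_{\mu}^{-1}(\boldsymbol{W}-\boldsymbol{W}_{\mu})=\begin{bmatrix}\boldsymbol{0}&\boldsymbol{W}_1^{-1}\boldsymbol{W}_g\\ \boldsymbol{W}_2^{-1}\boldsymbol{W}_{\bar g}&\boldsymbol{0}\end{bmatrix}$ is again block anti-diagonal, multiplying it by $\boldsymbol{W}^{-1}$ shows that the $(1,2)$ block of $\boldsymbol{W}^{-1}$ equals $-\boldsymbol{W}_1^{-1}\boldsymbol{W}_g$ times the $(2,2)$ block of $\boldsymbol{W}^{-1}$, the $(2,1)$ block equals $-\boldsymbol{W}_2^{-1}\boldsymbol{W}_{\bar g}$ times the $(1,1)$ block of $\boldsymbol{W}^{-1}$, and the $(1,1)$ block of $\boldsymbol{W}^{-1}-\boldsymbol{W}_{\mu}^{-1}$ equals $-\boldsymbol{W}_1^{-1}\boldsymbol{W}_g$ times the $(2,1)$ block of $\boldsymbol{W}^{-1}$ (and symmetrically for the $(2,2)$ block). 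Bounding each such product by $\|\boldsymbol{W}_{\mu}^{-1}\|\cdot\|\boldsymbol{W}_g\|\cdot\|\boldsymbol{W}^{-1}\|\le\|\boldsymbol{W}_{\mu}^{-1}\|\cdot\delta\cdot 2\|\boldsymbol{W}_{\mu}^{-1}\|=2\|\boldsymbol{W}_{\mu}^{-1}\|^2\delta$ (using $\|\boldsymbol{W}_{\bar g}\|=\|\boldsymbol{W}_g\|\le\delta$), and recalling that $\boldsymbol{L}_g$, $\boldsymbol{L}_{\bar g}$ and $\boldsymbol{L}_i-\boldsymbol{L}_{\mu i}$ are submatrices of the blocks just described, I obtain $\|\boldsymbol{L}_g\|,\|\boldsymbol{L}_{\bar g}\|,\|\boldsymbol{L}_i-\boldsymbol{L}_{\mu i}\|\le 2\|\boldsymbol{W}_{\mu}^{-1}\|^2\delta$; finally $2\|\boldsymbol{W}_{\mu}^{-1}\|^2\delta=(2\|\boldsymbol{W}_{\mu}^{-1}\|\delta)\,\|\boldsymbol{W}_{\mu}^{-1}\|\le(2\cdot1.568/4)\|\boldsymbol{W}_{\mu}^{-1}\|\le0.8\|\boldsymbol{W}_{\mu}^{-1}\|$, which is the claimed refinement for $\boldsymbol{L}_g$ and $\boldsymbol{L}_{\bar g}$.

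I do not expect a genuine obstacle here: once $\mathcal{E}_\delta$ is assumed the claim is a purely deterministic linear-algebra fact, and the only things requiring care are the block bookkeeping (the $2\times4$ display of $\boldsymbol{W}^{-1}$ versus the $2\times2$ partition conformal with $\boldsymbol{W}$) and checking that the numerical constant $1.568/4$ leaves enough slack below $1$ to absorb the factor $2$ in the Neumann estimate, which it comfortably does.
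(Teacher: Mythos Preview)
Your proposal is correct and follows essentially the same approach as the paper: both treat $\boldsymbol{W}$ as a perturbation of $\boldsymbol{W}_\mu$, use $\|\boldsymbol{W}-\boldsymbol{W}_\mu\|=\|\boldsymbol{W}_g\|\le\delta$ together with $\|\boldsymbol{W}_\mu^{-1}\|\le1.568$ to get $\|\boldsymbol{W}^{-1}\|\le2\|\boldsymbol{W}_\mu^{-1}\|$ and $\|\boldsymbol{W}^{-1}-\boldsymbol{W}_\mu^{-1}\|\le2\|\boldsymbol{W}_\mu^{-1}\|^2\delta$, and then extract the claimed bounds on the submatrices $\boldsymbol{L}_i,\boldsymbol{L}_g,\boldsymbol{L}_{\bar g}$. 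The only cosmetic differences are that the paper cites the perturbation inequalities as a black-box fact and bounds $\|\boldsymbol{L}_i\|$ via $\|\boldsymbol{L}_i-\boldsymbol{L}_{\mu i}\|+\|\boldsymbol{L}_{\mu i}\|$, whereas you spell out the Neumann/resolvent estimates and bound $\|\boldsymbol{L}_i\|$ directly as a submatrix of $\boldsymbol{W}^{-1}$.
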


When the signs of the coefficients $a_{ik}$'s are arbitrary, the last four terms in \eqref{derivative_P} can be bounded by 
\begin{align*}
\left|\langle\boldsymbol{u}_{1},\left(\boldsymbol{L}_{1}-\boldsymbol{L}_{\mu 1}\right)^{H}\boldsymbol{v}_{1l}\left(\tau\right)\rangle\right| &\le \left\Vert \boldsymbol{u}_{1}\right\Vert_{2}\left\Vert \left(\boldsymbol{L}_{1}-\boldsymbol{L}_{\mu 1}\right)^{H}\boldsymbol{v}_{1l}\left(\tau\right) \right\Vert_{2}\le C_{1}\sqrt{K_{1}}\delta,\\
\left|\langle\boldsymbol{u}_{2},\boldsymbol{L}_{g}^{H}\boldsymbol{v}_{1l}\left(\tau\right)\rangle\right| & \le \left\Vert \boldsymbol{u}_{2}\right\Vert_{2}\left\Vert \boldsymbol{L}_{g}^{H}\boldsymbol{v}_{1l}\left(\tau\right) \right\Vert_{2}\le C_{2}\sqrt{K_{2}}\delta, \\
|\langle\boldsymbol{u}_{1},\boldsymbol{L}_{\bar{g}}^{H}\boldsymbol{v}_{2l}\left(\tau\right)\rangle| &\le \left\Vert \boldsymbol{u}_{1} \right\Vert_{2}\left\Vert \boldsymbol{L}_{\bar{g}}^{H}\boldsymbol{v}_{2l}\left(\tau\right) \right\Vert_{2},\\
 |\langle\boldsymbol{u}_{2},\boldsymbol{L}_{2}^{H}\boldsymbol{v}_{2l}\left(\tau\right)\rangle| &\le \left\Vert \boldsymbol{u}_{2} \right\Vert_{2}\left\Vert \boldsymbol{L}_{2}^{H}\boldsymbol{v}_{2l}\left(\tau\right) \right\Vert_{2},
 \end{align*}
where the last steps of the first two inequalities follow from Lemma~\ref{bound_inver}, and $\left\Vert\boldsymbol{v}_{1l}\left(\tau\right)\right\Vert_{2}\le C$ for some numerical constant $C$ \cite[Lemma 4.9]{tang2014CSoffgrid}. By setting $\delta$ properly and we can obtain the bound on $M$ using Proposition~\ref{prop:invertibility}, Lemma 4.6 and Lemma 4.7 in \cite{tang2014CSoffgrid}. When the signs of the coefficients $a_{ik}$'s are random, we can provide a tighter bound by applying the Hoeffding's inequality, which follows similarly as the proof of \cite[Lemma 4.8 and 4.9]{tang2014CSoffgrid}. We have the following proposition.
\begin{prop}\label{bound_L1v1}
Suppose $\Delta\ge 1/M$. There exists a numerical constant $C$ such that 
\begin{equation*}
M\ge C\max\left\{\log^{2}\left(\frac{\left\vert \Upsilon_{\mathrm{grid}}\right\vert}{\eta}\right), \frac{K_{\max}}{\epsilon^{2}}\log\left(\frac{\left\vert \Upsilon_{\mathrm{grid}}\right\vert}{\eta}\right),  \frac{K_{\max}^{2}}{\epsilon^{2}}\log{\left(\frac{K_{1}+K_{2}}{\eta}\right)} \right\},
\end{equation*}
or additionally, if the signs of the coefficients $a_{ik}$'s are i.i.d. generated from a symmetric distribution on the complex unit circle, there exists a numerical constant $C$ such that 
\begin{equation*}
M\ge C \max\left\{  \frac{1}{\epsilon^{2}}\log^2\left(\frac{\left\vert\Upsilon_{\mathrm{grid}}\right\vert}{\eta}\right),  \frac{1}{\epsilon^{2}}K_{\max}\log{\left(\frac{K_{1}+K_{2}}{\eta}\right)}\log{\left(\frac{\left\vert\Upsilon_{\mathrm{grid}}\right\vert}{\eta}\right)}  \right\},
\end{equation*}
where $\left\vert\Upsilon_{\mathrm{grid}}\right\vert$ is the grid size, then we have
\begin{align*}
 \sup_{\tau_{d}\in\Upsilon_{\mathrm{grid}}}\left\vert\langle\boldsymbol{u}_{1},\left(\boldsymbol{L}_{1}-\boldsymbol{L}_{\mu 1}\right)^{H}\boldsymbol{v}_{1l}\left(\tau_{d}\right)\rangle\right\vert & < \epsilon, \ l=0,1,2,3; \\
 \sup_{\tau_{d}\in\Upsilon_{\mathrm{grid}}}\left\vert\langle\boldsymbol{u}_{2},\boldsymbol{L}_{g}^{H}\boldsymbol{v}_{1l}\left(\tau_{d}\right)\rangle\right\vert &< \epsilon, \ l=0,1,2,3; \\
 \sup_{\tau_{d}\in\Upsilon_{\mathrm{grid}}}\left\vert\langle\boldsymbol{u}_{1},\boldsymbol{L}_{\bar{g}}^{H}\boldsymbol{v}_{2l}\left(\tau_{d}\right)\rangle\right\vert &< \epsilon, \ l=0,1,2,3;\\
 \sup_{\tau_{d}\in\Upsilon_{\mathrm{grid}}}\left\vert\langle\boldsymbol{u}_{2},\boldsymbol{L}_{2}^{H}\boldsymbol{v}_{2l}\left(\tau_d\right)\rangle\right\vert & < \epsilon, \ l=0,1,2,3, 
\end{align*}
hold with probability at least $ 1-8\eta$. 
\end{prop}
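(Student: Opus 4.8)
The plan is to fix a grid point $\tau_d\in\Upsilon_{\mathrm{grid}}$ and a derivative order $l\in\{0,1,2,3\}$, to bound each of the four residual inner products appearing in \eqref{derivative_P} with high probability, and then to close with a union bound over the $4|\Upsilon_{\mathrm{grid}}|$ resulting events; this is why the failure probability carries $\log(|\Upsilon_{\mathrm{grid}}|/\eta)$ rather than $\log((K_1+K_2)/\eta)$. I would run the whole argument on the event $\mathcal{E}_\delta=\{\|\boldsymbol{W}_g\|\le\delta\}$, which by \prettyref{prop:invertibility} holds with probability at least $1-\eta$ once $M\ge\frac{46}{\delta^2}K_{\max}\log\frac{2(K_1+K_2)}{\eta}$; on $\mathcal{E}_\delta$ the matrix $\boldsymbol{W}$ is invertible and Lemma~\ref{bound_inver} supplies $\|\boldsymbol{L}_1-\boldsymbol{L}_{\mu 1}\|,\|\boldsymbol{L}_g\|,\|\boldsymbol{L}_{\bar g}\|\le 2\|\boldsymbol{W}_\mu^{-1}\|^2\delta$ and $\|\boldsymbol{L}_2\|\le 2\|\boldsymbol{W}_\mu^{-1}\|$, with $\|\boldsymbol{W}_\mu^{-1}\|\le 1.568$ by \eqref{inv_W}. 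I would keep $\delta$ free and fix it only at the very end so that each of the four bounds falls below $\epsilon$; substituting that choice back into \prettyref{prop:invertibility} reproduces the stated sample complexities.

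For the two terms built on the deterministic test vector $\boldsymbol{v}_{1l}(\tau)$, I would use $\|\boldsymbol{v}_{1l}(\tau)\|_2\le C$ uniformly in $\tau$ \cite[Lemma~4.9]{tang2014CSoffgrid}, so that on $\mathcal{E}_\delta$ both $\|(\boldsymbol{L}_1-\boldsymbol{L}_{\mu 1})^H\boldsymbol{v}_{1l}(\tau_d)\|_2$ and $\|\boldsymbol{L}_g^H\boldsymbol{v}_{1l}(\tau_d)\|_2$ are at most $C\delta$, simultaneously over all $\tau_d$ and $l$. In the arbitrary-sign case a Cauchy--Schwarz step then yields a bound $C\sqrt{K_{\max}}\,\delta$, so that the choice $\delta\asymp\epsilon/\sqrt{K_{\max}}$ together with \prettyref{prop:invertibility} forces the $K_{\max}^2\epsilon^{-2}\log((K_1+K_2)/\eta)$ requirement. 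In the random-sign case I would instead condition on $\boldsymbol{g}$ (freezing $\boldsymbol{L}_1,\boldsymbol{L}_g$) and apply Hoeffding's inequality over the independent unit-modulus entries of $\boldsymbol{u}_1,\boldsymbol{u}_2$, obtaining a bound $C\delta\sqrt{\log(|\Upsilon_{\mathrm{grid}}|/\eta)}$ after the union bound; then $\delta\asymp\epsilon/\sqrt{\log(|\Upsilon_{\mathrm{grid}}|/\eta)}$ gives the $K_{\max}\epsilon^{-2}\log((K_1+K_2)/\eta)\log(|\Upsilon_{\mathrm{grid}}|/\eta)$ requirement.

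For the two terms built on $\boldsymbol{v}_{2l}(\tau)=\tfrac1M\sum_n s_n g_n\bigl(\tfrac{-j2\pi n}{\sqrt{|K''(0)|}}\bigr)^l e^{-j2\pi n\tau}\boldsymbol{e}_2(n)$, I would first observe that $\boldsymbol{v}_{2l}(\tau)$ is itself a zero-mean sum of independent terms in $\boldsymbol{g}$, so a scalar Bernstein bound applied coordinatewise, with a union over the $2K_2$ coordinates, the grid, and $l$, yields $\|\boldsymbol{v}_{2l}(\tau_d)\|_2\le C\sqrt{K_{\max}\log(|\Upsilon_{\mathrm{grid}}|(K_1+K_2)/\eta)/M}$ with the required probability. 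The $\boldsymbol{L}_{\bar g}$ term is then harmless, being a product of the two small factors $\|\boldsymbol{L}_{\bar g}\|\le C\delta$ and $\|\boldsymbol{v}_{2l}\|_2$. For the $\boldsymbol{L}_2$ term I would split $\boldsymbol{L}_2=\boldsymbol{L}_{\mu 2}+(\boldsymbol{L}_2-\boldsymbol{L}_{\mu 2})$: the fluctuation part is negligible for the same reason ($\|\boldsymbol{L}_2-\boldsymbol{L}_{\mu 2}\|\le C\delta$), while the mean part equals $(\boldsymbol{L}_{\mu 2}\boldsymbol{u}_2)^H\boldsymbol{v}_{2l}(\tau_d)$ with $\boldsymbol{L}_{\mu 2}$ deterministic, hence once more a zero-mean sum of independent scalars in $\boldsymbol{g}$; Bernstein (arbitrary signs), or conditioning on $\boldsymbol{g}$ and applying Hoeffding over $\boldsymbol{u}_2$ (random signs), bounds it by $C\sqrt{K_{\max}\log(\cdot)/M}$, respectively $C\sqrt{K_{\max}}\,\log(\cdot)/\sqrt{M}$. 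These account for the remaining sample-complexity terms; the $\epsilon^{-2}\log^2(|\Upsilon_{\mathrm{grid}}|/\eta)$ floor comes from requiring $M$ to dominate the square of the deviation order in these Bernstein steps, exactly as in the single-modality analysis of \cite{tang2014CSoffgrid}.

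The step I expect to be the main obstacle is the last one. The random $\boldsymbol{g}$ appears simultaneously in the inverse blocks $\boldsymbol{L}_{\bar g},\boldsymbol{L}_2$, through the off-diagonal coupling of $\boldsymbol{W}$, and in the test vectors $\boldsymbol{v}_{2l}(\tau)$, so these cannot be treated as independent; this is precisely where the two-modality problem departs from the single-modality analysis of \cite{tang2014CSoffgrid}. The decomposition $\boldsymbol{L}_2=\boldsymbol{L}_{\mu 2}+(\boldsymbol{L}_2-\boldsymbol{L}_{\mu 2})$ is what breaks the deadlock, isolating a clean sum of independent scalars in $\boldsymbol{g}$ (the $\boldsymbol{L}_{\mu 2}$ part) from a fluctuation that is controlled purely in operator norm on $\mathcal{E}_\delta$. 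Arranging the probability bookkeeping so that $\mathcal{E}_\delta$, the coordinatewise concentration of $\boldsymbol{v}_{2l}$, and the union over $4|\Upsilon_{\mathrm{grid}}|$ events all hold jointly with total failure probability at most $8\eta$, and, in the random-sign case, handling the terms that are bilinear in $(\boldsymbol{g},\boldsymbol{u})$ by a two-stage conditional concentration, is the delicate part.
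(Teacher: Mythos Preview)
Your plan is essentially the paper's: work on $\mathcal{E}_\delta$, use Lemma~\ref{bound_inver} to control the inverse blocks, handle the $\boldsymbol{v}_{1l}$ terms via $\|\boldsymbol{v}_{1l}\|_2\le C$ plus Cauchy--Schwarz or Hoeffding, and for the $\boldsymbol{v}_{2l}$ terms split $\boldsymbol{L}_2=\boldsymbol{L}_{\mu 2}+(\boldsymbol{L}_2-\boldsymbol{L}_{\mu 2})$ and exploit that $(\boldsymbol{L}_{\mu 2}\boldsymbol{u}_2)^H\boldsymbol{v}_{2l}$ is a zero-mean sum of independent scalars in $\boldsymbol{g}$. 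The paper does not write this out either; it points to Lemmas~4.6--4.9 of \cite{tang2014CSoffgrid}, which encode exactly the concentration steps you describe.

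There is one quantitative slip in the random-sign branch that keeps you from landing on the stated complexity. For the $\boldsymbol{L}_{\mu 2}$ part of the fourth term you condition on $\boldsymbol{g}$ and Hoeffding over $\boldsymbol{u}_2$, bounding by $\|\boldsymbol{L}_{\mu 2}^H\boldsymbol{v}_{2l}\|_2\sqrt{\log(\cdot)}$; you then implicitly estimate $\|\boldsymbol{L}_{\mu 2}^H\boldsymbol{v}_{2l}\|_2$ through $\|\boldsymbol{L}_{\mu 2}\|\,\|\boldsymbol{v}_{2l}\|_2\le C\sqrt{K_{\max}\log(\cdot)/M}$, arriving at $C\sqrt{K_{\max}}\log(\cdot)/\sqrt{M}$ and hence $M\gtrsim K_{\max}\epsilon^{-2}\log^2(|\Upsilon_{\mathrm{grid}}|/\eta)$. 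That extra $K_{\max}$ is not absorbed by either of the two terms in the proposition's random-sign bound. The paper avoids this by invoking the analogue of \cite[Lemma~4.7]{tang2014CSoffgrid}: apply vector Bernstein directly to
\[
\boldsymbol{L}_{\mu 2}^H\boldsymbol{v}_{2l}(\tau)=\frac{1}{M}\sum_{n}s_n g_n\Bigl(\tfrac{-j2\pi n}{\sqrt{|K''(0)|}}\Bigr)^{l}e^{-j2\pi n\tau}\,\boldsymbol{L}_{\mu 2}^H\boldsymbol{e}_2(n),
\]
whose variance is $\tfrac{C}{M}\|\boldsymbol{L}_{\mu 2}^H\boldsymbol{W}_2\boldsymbol{L}_{\mu 2}\|=O(1/M)$, \emph{not} $O(K_{\max}/M)$; this yields $\|\boldsymbol{L}_{\mu 2}^H\boldsymbol{v}_{2l}\|_2\le C\sqrt{\log(\cdot)/M}$ and after Hoeffding gives the clean $\epsilon^{-2}\log^2(|\Upsilon_{\mathrm{grid}}|/\eta)$ term. (A cheaper fix: for this particular term do not invoke the random signs at all---the arbitrary-sign Bernstein-in-$\boldsymbol{g}$ bound $C\sqrt{K_{\max}\log(|\Upsilon_{\mathrm{grid}}|/\eta)/M}$ is already dominated by the proposition's second random-sign term.) With either correction your argument matches the paper's.
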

Denote the event
\begin{equation*}
\mathcal{E}_{1}=\left\{\sup_{\tau_{d}\in\Upsilon_{\mathrm{grid}}}\left\vert\frac{1}{\sqrt{\left\vert K''\left(0\right)\right\vert}^{l}}P^{\left(l\right)}\left(\tau_{d}\right)-\frac{1}{\sqrt{\left\vert K''\left(0\right)\right\vert}^{l}}P_{\mu}^{\left(l\right)}\left(\tau_{d}\right)\right\vert\le\frac{\epsilon}{3}, \ l=0,1,2,3\right\},
\end{equation*}
for some $\epsilon>0$. Then by rescaling the constants, it is straightforward that $\mathcal{E}_1$ holds with probability at least $1-\eta$ as soon as the conditions in Proposition~\ref{bound_L1v1} are met.

\subsubsection{Proof of Step 2}
We have shown that the differences between $\frac{1}{\sqrt{\left\vert K''\left(0\right)\right\vert}^{l}}P^{\left(l\right)}\left(\tau\right)$ and $\frac{1}{\sqrt{\left\vert K''\left(0\right)\right\vert}^{l}}P_{\mu}^{\left(l\right)}\left(\tau\right)$ are bounded on a finite grid. In this step we extend this statement to the continuous domain $\tau\in[0,1)$ by assigning the size of $\Upsilon_{\mathrm{grid}}$ properly. This is given in the following proposition whose proof is given in Appendix~\ref{proof_bound_continuous}.
\begin{prop}\label{bound_continuous}
Suppose $\Delta\ge 1/M$. There exists a numerical constant $C$ such that 
\begin{equation*}
M\ge C\max{\left\{\log^{2}{\left(\frac{M\left( {K_{1}}+ {K_{2}}\right)}{\epsilon\eta}\right)},  \frac{1}{\epsilon^{2}}K_{\max}\log{\left(\frac{M\left( {K_{1}}+ {K_{2}}\right)}{\epsilon\eta}\right)},  \frac{1}{\epsilon^{2}}K_{\max}^{2}\log{\left(\frac{K_{1}+K_{2}}{\eta}\right)} \right\}},
\end{equation*}
or additionally, if the signs of the coefficients $a_{ik}$'s are i.i.d. generated from a symmetric distribution on the complex unit circle, there exists a numerical constant $C$ such that 
\begin{equation*}
M\ge C\max{\left\{ \frac{1}{\epsilon^{2}}\log^{2}{\left(\frac{M\left( {K_{1}}+ {K_{2}}\right)}{\epsilon\eta}\right)}, \frac{1}{\epsilon^{2}}K_{\max}\log{\left(\frac{K_{1}+K_{2}}{\eta}\right)}\log{\left(\frac{M\left( {K_{1}}+ {K_{2}}\right)}{\epsilon\eta}\right)}\right\}},
\end{equation*}
then we have
\begin{equation*}
\mathbb{P}\left\{\left\vert\frac{1}{\sqrt{\left\vert K''\left(0\right)\right\vert}^{l}}P^{\left(l\right)}\left(\tau\right) - \frac{1}{\sqrt{\left\vert K''\left(0\right)\right\vert}^{l}}P_{\mu}^{\left(l\right)}\left(\tau\right)\right\vert\le\epsilon,\ \forall\tau\in[0,1),\ l=0,1,2,3\right\}\ge 1-\eta.
\end{equation*}
\end{prop}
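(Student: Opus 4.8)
The plan is to run the standard grid-refinement (Bernstein-type) argument, converting the finite-grid control supplied by the event $\mathcal{E}_1$ of Proposition~\ref{bound_L1v1} into a uniform bound over $\tau\in[0,1)$. Write $f_l(\tau)=\frac{1}{\sqrt{|K''(0)|}^{l}}\bigl(P^{(l)}(\tau)-P_\mu^{(l)}(\tau)\bigr)$, which by \eqref{derivative_P} equals the sum of the four residual inner products $\langle\boldsymbol{u}_1,(\boldsymbol{L}_1-\boldsymbol{L}_{\mu 1})^H\boldsymbol{v}_{1l}(\tau)\rangle+\langle\boldsymbol{u}_2,\boldsymbol{L}_g^H\boldsymbol{v}_{1l}(\tau)\rangle+\langle\boldsymbol{u}_1,\boldsymbol{L}_{\bar g}^H\boldsymbol{v}_{2l}(\tau)\rangle+\langle\boldsymbol{u}_2,\boldsymbol{L}_2^H\boldsymbol{v}_{2l}(\tau)\rangle$. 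On $\mathcal{E}_1$ we have $|f_l(\tau_d)|\le\epsilon/3$ for every grid point $\tau_d\in\Upsilon_{\mathrm{grid}}$ and $l=0,1,2,3$, so the only remaining task is to choose $|\Upsilon_{\mathrm{grid}}|$ large enough that $f_l$ cannot vary by more than $2\epsilon/3$ between adjacent grid points; the union bound cost of this finer grid enters only logarithmically through Proposition~\ref{bound_L1v1}.

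The first substantive step is a uniform, purely deterministic (given the event $\mathcal{E}_\delta$) bound on $|f_l'(\tau)|$. Differentiating $f_l$ replaces each $\boldsymbol{v}_{il}(\tau)$ by $\boldsymbol{v}_{il}'(\tau)=\sqrt{|K''(0)|}\,\boldsymbol{v}_{i,l+1}(\tau)$, and the crude triangle-inequality estimates on $K^{(l)}$, $K_g^{(l)}$ together with the explicit form of $\boldsymbol{e}_i(n)$ (extending \cite[Lemma 4.9]{tang2014CSoffgrid} by one derivative order, which is routine) give $\|\boldsymbol{v}_{i,l+1}(\tau)\|_2\le C\sqrt{K_i}$ for $l+1\le 4$, hence $\|\boldsymbol{v}_{il}'(\tau)\|_2\le C\sqrt{|K''(0)|K_i}\le C'M\sqrt{K_{\max}}$ by \eqref{equ_K_twoderiv_zero}. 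Combining this with $\|\boldsymbol{u}_i\|_2=\sqrt{K_i}\le\sqrt{K_{\max}}$, with the operator-norm bounds of Lemma~\ref{bound_inver} on $\mathcal{E}_\delta$ (taking $\delta\le 1/4$, so $\|\boldsymbol{L}_1\|,\|\boldsymbol{L}_2\|,\|\boldsymbol{L}_g\|,\|\boldsymbol{L}_{\bar g}\|$ are all $O(1)$), and applying Cauchy--Schwarz term by term, one obtains $|f_l'(\tau)|\le C''MK_{\max}$ for all $\tau\in[0,1)$ and $l=0,1,2,3$ on $\mathcal{E}_\delta$. Choosing a uniform grid with $|\Upsilon_{\mathrm{grid}}|\ge 3C''MK_{\max}/\epsilon$, so that neighbouring grid points are within $\epsilon/(3C''MK_{\max})$, then gives for any $\tau$ and its nearest grid point $\tau_d$: $|f_l(\tau)|\le|f_l(\tau_d)|+|\tau-\tau_d|\sup_\tau|f_l'(\tau)|\le\epsilon/3+\epsilon/3\le\epsilon$.

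It remains to feed $|\Upsilon_{\mathrm{grid}}|=O(MK_{\max}/\epsilon)$ back into Proposition~\ref{bound_L1v1}. Since $K_{\max}\le K_1+K_2$, we have $\log|\Upsilon_{\mathrm{grid}}|=O\!\bigl(\log(M(K_1+K_2)/\epsilon)\bigr)$, so replacing $\log(|\Upsilon_{\mathrm{grid}}|/\eta)$ accordingly turns the sample-complexity conditions of Proposition~\ref{bound_L1v1} — in its general-sign form, and separately in its tighter symmetric-sign form — into exactly the two conditions stated here, the grid-independent term $\tfrac{1}{\epsilon^2}K_{\max}^2\log((K_1+K_2)/\eta)$ carrying over verbatim; on the intersection of $\mathcal{E}_\delta$ (which has probability at least $1-\eta$ under \eqref{sample_complexity}, by Proposition~\ref{prop:invertibility}) with $\mathcal{E}_1$, after absorbing the factor of $8$ and rescaling constants, the claimed bound holds with probability at least $1-\eta$ for all $\tau\in[0,1)$ and $l=0,1,2,3$, and the random-sign case is identical except for which version of Proposition~\ref{bound_L1v1} is invoked. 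I expect the main obstacle to be bookkeeping rather than any deep estimate: one must track the interplay of $\delta$, $\epsilon$, $K_{\max}$ and $M$ through the chain Proposition~\ref{prop:invertibility} $\to$ Lemma~\ref{bound_inver} $\to$ grid derivative bound $\to$ grid size so that the three-way maximum in the stated sample complexity (and the two separate forms) comes out correctly; the only genuinely new ingredient, the one-order-higher norm bound on $\boldsymbol{v}_{i,l+1}(\tau)$, presents no real difficulty since only the crude deterministic version is needed here.
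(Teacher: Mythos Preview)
Your proposal is correct and follows essentially the same grid-refinement strategy as the paper's proof in Appendix~\ref{proof_bound_continuous}: control on $\Upsilon_{\mathrm{grid}}$ from Proposition~\ref{bound_L1v1}, a Lipschitz bound to pass to the continuum, and then feeding $|\Upsilon_{\mathrm{grid}}|=O(M(K_1+K_2)/\epsilon)$ back into the sample-complexity condition. The only technical difference is how the Lipschitz constant is obtained: you differentiate $f_l$ directly (requiring the crude norm bound on $\boldsymbol{v}_{i,l+1}$ up to $l+1=4$), whereas the paper bounds $\sup_\tau\bigl|\frac{1}{\sqrt{|K''(0)|}^l}P^{(l)}(\tau)\bigr|\le C(K_1+K_2)$ and $\sup_\tau\bigl|\frac{1}{\sqrt{|K''(0)|}^l}P_\mu^{(l)}(\tau)\bigr|$ separately and then invokes Bernstein's polynomial inequality (Lemma~\ref{bernsteins_poly_inequality}) on each, avoiding the extra derivative order; both routes yield Lipschitz constant $O(M(K_1+K_2))=O(MK_{\max})$ and hence the same grid size.
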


\subsubsection{Proof of Step 3}
This step follows essentially the same procedure as those in \cite[Lemma 4.13 and 4.14]{tang2014CSoffgrid}, where we divide $[0,1)$ into
\begin{equation}\label{def_upsilon} 
\Upsilon^i_{\mathrm{near}}=\cup_{k=1}^{K_{i}} \Upsilon^{i,k}_{\mathrm{near}}=\cup_{k=1}^{K_{i}}\left[\tau_{ik}-\tau_{s},\tau_{ik}+\tau_{s}\right], \quad \mathrm{and}\quad \Upsilon_{\mathrm{far}}^i=[0,1)\backslash\Upsilon^i_{\mathrm{near}},
\end{equation}
for $i=1,2$, where $\tau_{s}=8.245\times 10^{-2}/M$. Then conditioned on the event in Proposition~\ref{bound_continuous} one can bound $|P(\tau)|<1$ in $\Upsilon^1_{\mathrm{near}} \backslash \Upsilon_1$ and $\Upsilon_{\mathrm{far}}^1$ respectively following straightforward calculus. We shall omit the details and refer interested readers to \cite[Lemma 4.13 and 4.14]{tang2014CSoffgrid}. We have the following proposition.

\begin{prop}\label{bound_value_far}
Suppose $\Delta\ge 1/M$. There exists a numerical constant $C$ such that 
\begin{equation*}
M\ge C\max{\left\{\log^{2}{\left(\frac{M\left( {K_{1}}+ {K_{2}}\right)}{\eta}\right)}, K_{\max}\log{\left(\frac{M\left( {K_{1}}+ {K_{2}}\right)}{\eta}\right)}, K_{\max}^{2}\log{\left(\frac{K_{1}+K_{2}}{\eta}\right)} \right\}},
\end{equation*}
or additionally, if the signs of the coefficients $a_{ik}$'s are i.i.d. generated from a symmetric distribution on the complex unit circle, there exists a numerical constant $C$ such that 
\begin{equation*}
M\ge C\max{\left\{\log^{2}{\left(\frac{M\left( {K_{1}}+ {K_{2}}\right) }{\eta}\right)},K_{\max}\log{\left(\frac{K_{1}+K_{2}}{\eta}\right)}\log{\left(\frac{M\left( {K_{1}}+ {K_{2}}\right) }{\eta}\right)}\right\}},
\end{equation*}
then we have 
\begin{align*}
&\left\vert P\left(\tau\right)\right\vert  \le1- C_{p}M^2\left(\tau-\tau_{1k}\right)^{2} < 1, \quad \tau\in\Upsilon_{\mathrm{near}}^{1,k} \backslash \left\{ \tau_{1k} \right\},  \quad k=1,\dots,K_{1},\\
 & \left\vert P\left(\tau\right)-\mathrm{sign}\left(a_{1k}\right)\right\vert  \le C_{p}'M^{2}\left(\tau-\tau_{1k}\right)^{2}, \quad  \tau\in\Upsilon_{\mathrm{near}}^{1,k},  \quad k=1,\dots,K_{1}, \\
& \left\vert P\left(\tau\right)\right\vert  \le 1-C_{p}'' <1, \quad \tau\in\Upsilon_{\mathrm{far}}^1,
\end{align*}
with probability at least $1-\eta$, where $C_{p}$, $C_{p}^{\prime}$ and $C_{p}^{\prime\prime}$ are some positive numerical constants.
\end{prop}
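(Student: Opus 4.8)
The plan is to derive the stated bounds for $P(\tau)$ from the well-understood single-modality dual polynomial by treating $P(\tau)$ as a uniformly small $C^3$-perturbation of it. The first observation is that the ``mean'' polynomial $P_\mu(\tau)=\langle\boldsymbol{u}_1,\boldsymbol{L}_{\mu 1}^H\boldsymbol{v}_{10}(\tau)\rangle$ is exactly the interpolation-based dual certificate of Cand\`es--Fernandez-Granda \cite{candes2014towards} and Tang et al. \cite{tang2014CSoffgrid} for the single point-source signal $\boldsymbol{x}_1^\star$ with node set $\Upsilon_1$ and sign vector $\boldsymbol{u}_1$: the coefficients $\boldsymbol{\alpha}_1,\boldsymbol{\beta}_1$ in $P_\mu(\tau)=\sum_k\alpha_{1k}K(\tau-\tau_{1k})+\sum_k\beta_{1k}K'(\tau-\tau_{1k})$ are precisely those solving $P_\mu(\tau_{1k})=\mathrm{sign}(a_{1k})$, $P_\mu'(\tau_{1k})=0$. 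Hence, under $\Delta\ge1/M$, I can import from \cite[Lemma 4.13 and 4.14]{tang2014CSoffgrid}: $P_\mu$ and $|K''(0)|^{-l/2}P_\mu^{(l)}$, $l=1,2,3$, are uniformly bounded by absolute constants; $|P_\mu(\tau)|\le1-C''$ on $\Upsilon_{\mathrm{far}}^1$; and there is an absolute constant $C_a>0$ with $\mathrm{Re}\big(\overline{\mathrm{sign}(a_{1k})}\,P_\mu''(\tau_{1k})/|K''(0)|\big)\le-C_a$, with a matching negative-curvature statement holding throughout the short interval $\Upsilon_{\mathrm{near}}^{1,k}$ because $|K''(0)|^{-3/2}P_\mu'''$ is bounded.

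Next I would fix $\epsilon$ to be a sufficiently small \emph{absolute} constant, chosen below $C''$ and $C_a$, and invoke Proposition~\ref{bound_continuous}: with the stated sample complexity, $\big|\,|K''(0)|^{-l/2}P^{(l)}(\tau)-|K''(0)|^{-l/2}P_\mu^{(l)}(\tau)\,\big|\le\epsilon$ for all $\tau\in[0,1)$ and $l=0,1,2,3$, with probability at least $1-\eta$. Since $|K''(0)|=\tfrac43\pi^2(M^2-1)\asymp M^2$, this reads $|P^{(l)}(\tau)-P_\mu^{(l)}(\tau)|\lesssim\epsilon M^l$. With $\epsilon$ now an absolute constant, the $\epsilon$-dependent factors in the sample-complexity bound of Proposition~\ref{bound_continuous} are absorbed into the constant $C$, which is exactly how the two displays in the statement of Proposition~\ref{bound_value_far} arise.

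It then remains to carry out three pointwise estimates, all elementary calculus. On $\Upsilon_{\mathrm{far}}^1$: $|P(\tau)|\le|P_\mu(\tau)|+\epsilon\le1-C''+\epsilon<1$. On each $\Upsilon_{\mathrm{near}}^{1,k}$, work with $g(\tau):=|P(\tau)|^2$; then $g(\tau_{1k})=1$ and $g'(\tau_{1k})=2\,\mathrm{Re}\big(\overline{\mathrm{sign}(a_{1k})}P'(\tau_{1k})\big)=0$ (using $P(\tau_{1k})=\mathrm{sign}(a_{1k})$, $P'(\tau_{1k})=0$, which hold by the interpolation equations \eqref{equ_solve_coeff} of the actual construction, not merely the mean one), while $g''(\tau)=2|P'(\tau)|^2+2\,\mathrm{Re}\big(\overline{P(\tau)}P''(\tau)\big)$; combining the closeness of $P,P',P''$ to their $\mu$-counterparts with the negative-curvature bound on $P_\mu''$ gives $g''(\tau)\le-C M^2$ uniformly on $\Upsilon_{\mathrm{near}}^{1,k}$ once $\epsilon$ is small, so Taylor's theorem yields $g(\tau)\le1-\tfrac12 C M^2(\tau-\tau_{1k})^2$, hence $|P(\tau)|\le1-C_pM^2(\tau-\tau_{1k})^2<1$ for $\tau\ne\tau_{1k}$. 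Finally a second-order Taylor expansion of $P$ around $\tau_{1k}$ with $|P''|\lesssim M^2$ gives $|P(\tau)-\mathrm{sign}(a_{1k})|\le C_p' M^2(\tau-\tau_{1k})^2$ on $\Upsilon_{\mathrm{near}}^{1,k}$. The identical argument applies to $Q(\tau)$ on $\Upsilon_2$, since $Q$ has the same structure.

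The hard part is the near-region estimate: I must ensure the $\epsilon$-perturbations of $P,P',P''$ do not destroy the strict negativity of $g''$, both at $\tau_{1k}$ and — after accounting for the $O(M^3)$-scale variation of $g'''$ over the interval of length $2\tau_s=2\cdot8.245\times10^{-2}/M$ — at every point of $\Upsilon_{\mathrm{near}}^{1,k}$. This is precisely where the quantitative constants of \cite[Lemma 4.13 and 4.14]{tang2014CSoffgrid} and the specific choice $\tau_s=8.245\times10^{-2}/M$ enter: our perturbation only inflates each of the relevant bounds additively by an $\epsilon$-multiple of the corresponding $M^l$ scale, so taking $\epsilon$ a small enough absolute constant preserves every inequality used in the single-modality argument, and the remaining manipulations are verbatim those of \cite{tang2014CSoffgrid}. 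Everything else — the far region, the bound on $|P-\mathrm{sign}(a_{1k})|$, and the treatment of $Q$ — is routine.
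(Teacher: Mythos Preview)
Your proposal is correct and follows essentially the same approach as the paper: condition on the uniform closeness event of Proposition~\ref{bound_continuous} with $\epsilon$ a small absolute constant, then carry out the near/far calculus of \cite[Lemma 4.13 and 4.14]{tang2014CSoffgrid} on $|P(\tau)|^2$ with $P_\mu$ playing the role of the single-modality certificate. The paper omits these details entirely and simply refers to \cite{tang2014CSoffgrid}; your write-up supplies precisely the argument the paper defers.
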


\subsection{Finishing the Proof}
The proof of Theorem \ref{theorem_main} is now complete since we have established that $P(\tau)$ and $Q(\tau)$ constructed in \eqref{func_Pf} and \eqref{func_Qf} are indeed valid dual certificates under the condition of Theorem~\ref{theorem_main}.

\section{Proof of Theorem~\ref{theorem_noisy}}\label{sec::proof_theorem_noisy}

We first provide a proposition on optimality conditions of \eqref{algorithm_noisy_model}, which is proved in Appendix~\ref{proof_optimal_cond_noise}.
\begin{prop}\label{optimal_cond_noise}
$\left\{\hat{\bx}_{1},\hat{\bx}_{2}\right\}$ is the minimizer of \eqref{algorithm_noisy_model} if and only if the following holds:
\begin{align*}
\left\Vert\by-\left(\hat{\bx}_{1}+\bg\odot\hat{\bx}_{2}\right)\right\Vert_{\mathcal{A}}^{\star} & \le \lambda_{w}, \\
\left\Vert\bar{\bg}\odot\left(\by-\left(\hat{\bx}_{1}+\bg\odot\hat{\bx}_{2}\right)\right)\right\Vert_{\mathcal{A}}^{\star} & \le \lambda_{w},\\
\langle\by-\left(\hat{\bx}_{1}+\bg\odot\hat{\bx}_{2}\right),\hat{\bx}_{1}+\bg\odot\hat{\bx}_{2}\rangle_{\mathbb{R}} &=\lambda_{w}\left\Vert\hat{\bx}_{1}\right\Vert_{\mathcal{A}}+\lambda_{w}\left\Vert\hat{\bx}_{2}\right\Vert_{\mathcal{A}}.
\end{align*}
\end{prop}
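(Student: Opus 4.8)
The plan is to obtain the three displayed relations as the first-order (subgradient) optimality conditions of the convex program \eqref{algorithm_noisy_model}. Write the objective as $f(\bx_1,\bx_2)=h(\bx_1,\bx_2)+\lambda_{w}\left(\|\bx_1\|_{\mathcal{A}}+\|\bx_2\|_{\mathcal{A}}\right)$, where $h(\bx_1,\bx_2)=\tfrac12\|\by-\bx_1-\bg\odot\bx_2\|_2^2$ is smooth and convex while the atomic-norm terms are convex but nonsmooth and separable. Since $f$ is convex, $\{\hat\bx_1,\hat\bx_2\}$ is a minimizer if and only if $\boldsymbol{0}\in\partial f(\hat\bx_1,\hat\bx_2)$, and because $h$ is differentiable and the nonsmooth part is separable this is equivalent to the pair of inclusions $-\nabla_{\bx_1}h(\hat\bx_1,\hat\bx_2)\in\lambda_{w}\,\partial\|\hat\bx_1\|_{\mathcal{A}}$ and $-\nabla_{\bx_2}h(\hat\bx_1,\hat\bx_2)\in\lambda_{w}\,\partial\|\hat\bx_2\|_{\mathcal{A}}$.

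First I would compute the gradients of $h$, treating $\mathbb{C}^{4M+1}$ as a real inner-product space under $\langle\cdot,\cdot\rangle_{\mathbb{R}}=\mathrm{Re}(\cdot^{H}\cdot)$. With the residual $\hat\be:=\by-(\hat\bx_1+\bg\odot\hat\bx_2)$, a short computation gives $\nabla_{\bx_1}h=-\hat\be$ and $\nabla_{\bx_2}h=-\bar\bg\odot\hat\be$, where the conjugate $\bar\bg$ appears via the identity $\langle\hat\be,\bg\odot\bd\rangle_{\mathbb{R}}=\langle\bar\bg\odot\hat\be,\bd\rangle_{\mathbb{R}}$. Hence the two inclusions become $\hat\be\in\lambda_{w}\,\partial\|\hat\bx_1\|_{\mathcal{A}}$ and $\bar\bg\odot\hat\be\in\lambda_{w}\,\partial\|\hat\bx_2\|_{\mathcal{A}}$. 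I would then invoke the standard characterization of the subdifferential of a norm: $\boldsymbol{q}\in\lambda_{w}\,\partial\|\boldsymbol{z}\|_{\mathcal{A}}$ if and only if $\|\boldsymbol{q}\|_{\mathcal{A}}^{\star}\le\lambda_{w}$ and $\langle\boldsymbol{q},\boldsymbol{z}\rangle_{\mathbb{R}}=\lambda_{w}\|\boldsymbol{z}\|_{\mathcal{A}}$, the latter being the equality case of the generalized H\"older inequality $\langle\boldsymbol{q},\boldsymbol{z}\rangle_{\mathbb{R}}\le\|\boldsymbol{q}\|_{\mathcal{A}}^{\star}\|\boldsymbol{z}\|_{\mathcal{A}}$. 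Applied to the two inclusions this produces $\|\hat\be\|_{\mathcal{A}}^{\star}\le\lambda_{w}$ and $\|\bar\bg\odot\hat\be\|_{\mathcal{A}}^{\star}\le\lambda_{w}$, which are exactly the first two displayed inequalities, together with the two scalar equalities $\langle\hat\be,\hat\bx_1\rangle_{\mathbb{R}}=\lambda_{w}\|\hat\bx_1\|_{\mathcal{A}}$ and $\langle\bar\bg\odot\hat\be,\hat\bx_2\rangle_{\mathbb{R}}=\lambda_{w}\|\hat\bx_2\|_{\mathcal{A}}$.

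The last step is to reconcile these two equalities with the single third condition. Using bilinearity and the conjugation identity $\langle\hat\be,\bg\odot\hat\bx_2\rangle_{\mathbb{R}}=\langle\bar\bg\odot\hat\be,\hat\bx_2\rangle_{\mathbb{R}}$, summing the two equalities yields $\langle\hat\be,\hat\bx_1+\bg\odot\hat\bx_2\rangle_{\mathbb{R}}=\lambda_{w}\|\hat\bx_1\|_{\mathcal{A}}+\lambda_{w}\|\hat\bx_2\|_{\mathcal{A}}$, which is the third condition. For the converse, assuming the two dual-norm bounds, H\"older gives $\langle\hat\be,\hat\bx_1\rangle_{\mathbb{R}}\le\lambda_{w}\|\hat\bx_1\|_{\mathcal{A}}$ and $\langle\bar\bg\odot\hat\be,\hat\bx_2\rangle_{\mathbb{R}}\le\lambda_{w}\|\hat\bx_2\|_{\mathcal{A}}$; if the third condition holds, the sum of the two left-hand sides equals the sum of the two right-hand sides, which forces both inequalities to be tight. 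This recovers the pair of equalities, hence the pair of subgradient inclusions, hence optimality, completing the ``if and only if.''

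I expect no genuine obstacle here; the only delicate point is the complex-variable bookkeeping --- checking that, under the convention $\langle\bp,\bx\rangle_{\mathbb{R}}=\mathrm{Re}(\bx^{H}\bp)$, the gradient of $h$ is represented by $-\hat\be$ and $-\bar\bg\odot\hat\be$, and that it is the conjugate $\bar\bg$ (not $\bg$) that crosses the Hadamard product inside the inner product. The remaining ingredients --- convexity, differentiability of $h$, separability of the nonsmooth part, and the subdifferential-of-a-norm characterization --- are standard, so the proof is short.
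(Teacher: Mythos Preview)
Your proposal is correct and complete. The argument via $\boldsymbol{0}\in\partial f$ together with the standard characterization of the subdifferential of a norm is exactly the right tool, and your treatment of the converse (forcing both H\"older inequalities to be tight from equality of their sum) is the clean way to recover the ``if'' direction.

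The paper takes a slightly different, more first-principles route: rather than invoking the subdifferential of $\|\cdot\|_{\mathcal A}$ as a black box, it starts from the variational inequality $f\bigl((1-\alpha)\hat\bx+\alpha\tilde\bx\bigr)\ge f(\hat\bx)$, expands, uses convexity of the atomic norm, and sends $\alpha\to 0$ to obtain the linearized inequality $\lambda_w(\|\tilde\bx_1\|_{\mathcal A}+\|\tilde\bx_2\|_{\mathcal A}-\|\hat\bx_1\|_{\mathcal A}-\|\hat\bx_2\|_{\mathcal A})\ge\langle\hat\be,(\tilde\bx_1-\hat\bx_1)+\bg\odot(\tilde\bx_2-\hat\bx_2)\rangle_{\mathbb R}$ for all $\tilde\bx_1,\tilde\bx_2$; it then shows this inequality is equivalent to optimality, and extracts the three conditions by taking the infimum over $\tilde\bx_1,\tilde\bx_2$ separately (using that $\inf_{\tilde\bx}\{\lambda_w\|\tilde\bx\|_{\mathcal A}-\langle\bq,\tilde\bx\rangle_{\mathbb R}\}$ equals $0$ or $-\infty$ according to whether $\|\bq\|_{\mathcal A}^\star\le\lambda_w$). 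This is essentially re-deriving the subdifferential characterization by hand. Your approach is shorter and more transparent because it packages that derivation into the known fact $\bq\in\lambda_w\partial\|\bz\|_{\mathcal A}\Leftrightarrow\|\bq\|_{\mathcal A}^\star\le\lambda_w$ and $\langle\bq,\bz\rangle_{\mathbb R}=\lambda_w\|\bz\|_{\mathcal A}$; the paper's approach has the minor advantage of being self-contained. Your careful handling of the complex bookkeeping (the $\bar\bg$ crossing the Hadamard product) matches exactly what the paper needs.
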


Let $\be_{1}=\hat{\bx}_{1}-\bx_{1}^{\star}$ and $\be_{2}=\hat{\bx}_{2}-\bx_{2}^{\star}$. Moreover, let $\nu_{1}$ and $\nu_{2}$ be the corresponding representing measures \cite{candes2014towards,tang2013near} of $\be_{1}$ and $\be_{2}$, respectively, which are given as
$$ \be_{1}=\int_{0}^{1}\bc\left(\tau\right)\nu_{1}\left(d\tau\right), \quad \be_{2}=\int_{0}^{1}\bc\left(\tau\right)\nu_{2}\left(d\tau\right).$$ 
Therefore, we have $\|\be_i\|_{\cA}=\|\nu_i\|_{TV}$, $i=1,2$, where $\|\cdot\|_{TV}$ is the total variation norm of the representing measure. Define
\begin{align*}
 I_{i,0}^{k}&=\left\vert\int_{\Upsilon_{\mathrm{near}}^{i,k}}\nu_{i}\left(d\tau\right)\right\vert, \\ 
I_{i,1}^{k} &= \left(4M+1\right) \left\vert\int_{\Upsilon_{\mathrm{near}}^{i,k}} \left(\tau-\tau_{ik}\right) \nu_{i}\left(d\tau\right)\right\vert,\\
I_{i,2}^{k} &=\frac{\left(4M+1\right)^2}{2}\int_{\Upsilon_{\mathrm{near}}^{i,k}}\left(\tau-\tau_{ik}\right)^{2}\left\vert \nu_{i}\right\vert\left(d\tau\right), 
\end{align*}
and $I_{i,j}=\sum_{k=1}^{K_{i}}I_{i,j}^{k}$ for $j=0,1,2$ and $i=1,2$, where $\Upsilon^{i,k}_{\mathrm{near}}$ and $\Upsilon_{\mathrm{far}}^i$ are defined in \eqref{def_upsilon}. We have the following proposition whose proof can be found in Appendix~\ref{proof_bound_error}.
\begin{prop}\label{bound_error}
Assume the noise is bounded as $\|\bw\|_2^2\leq \sigma_w^2$. Set $\lambda_{w} = C_{w}\sigma_{w}\sqrt{4M+1}$, for some constant $C_{w} > 1$ large enough, then we have
\begin{align}
 \| \be_1\|_2+ \| \be_2\|_2 &  \leq  \sqrt{4M+1} \sum_{i=1}^{2}  \left(    \left\| P_{\Upsilon^i_{\mathrm{far}}}(\nu_i)\right\|_{TV}   + \sum_{j=0}^2 I_{i,j} \right), \label{estimation_error} \\
 \| \be_1 + \bg\odot \be_2 \|_2 & \leq  \sqrt{2\lambda_{w}  \sum_{i=1}^{2}\left(    \left\| P_{\Upsilon^{i}_{\mathrm{far}}}(\nu_i)\right\|_{TV}   + \sum_{j=0}^2 I_{i,j} \right)}. 
\end{align}
\end{prop}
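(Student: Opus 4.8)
The plan is to follow the standard robust-recovery argument pioneered by Cand\`es and Fernandez-Granda and adapted in Tang et al., but carried out with the two coupled signals. The starting point is the optimality characterization in Proposition~\ref{optimal_cond_noise}: since $\{\hat\bx_1,\hat\bx_2\}$ minimizes \eqref{algorithm_noisy_model}, the residual $\bR := \by - (\hat\bx_1 + \bg\odot\hat\bx_2)$ satisfies $\|\bR\|_\cA^\star \le \lambda_w$ and $\|\bar\bg\odot\bR\|_\cA^\star \le \lambda_w$. Writing $\bR = \bw - (\be_1 + \bg\odot\be_2)$ and using $\|\bw\|_2 \le \sigma_w$ together with $\lambda_w = C_w\sigma_w\sqrt{4M+1}$ with $C_w$ large, one first extracts a bound on $\|\be_1 + \bg\odot\be_2\|_2$: take inner product of the third (complementary-slackness) relation with appropriate quantities, or more directly expand $\|\bR\|_2^2$ and use that the dual certificate $P(\tau)$ (resp.\ $Q(\tau)$) from Theorem~\ref{theorem_main} certifies $\bx_1^\star$ (resp.\ $\bx_2^\star$) so that $\langle \bR, \bc(\tau)\rangle$ and $\langle \bar\bg\odot\bR,\bc(\tau)\rangle$ are controlled where the true spikes sit. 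This yields the denoising-type bound $\|\be_1+\bg\odot\be_2\|_2^2 \lesssim \lambda_w(\cdots)$, which after the triangle inequality with the $I_{i,j}$ and far-region pieces gives the second displayed inequality.

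For the first inequality, I would split each error measure $\nu_i$ into its restriction to the near sets $\Upsilon^{i,k}_{\mathrm{near}}$ and to the far set $\Upsilon^i_{\mathrm{far}}$. On the far set, $\|P_{\Upsilon^i_{\mathrm{far}}}(\nu_i)\|_{TV}$ appears directly. On each near interval, a second-order Taylor expansion of the dual polynomial around $\tau_{ik}$ — using $P(\tau_{ik}) = \mathrm{sign}(a_{1k})$, $P'(\tau_{ik}) = 0$, and $|P''| \lesssim M^2$ from the quadratic behavior established in Proposition~\ref{bound_value_far} — lets me bound the contribution of $\nu_i$ restricted to that interval in terms of exactly $I_{i,0}^k$, $I_{i,1}^k$, $I_{i,2}^k$ (the zeroth, first, and second weighted moments). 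Summing over $k$ and $i$ produces the $\sum_{j=0}^2 I_{i,j}$ terms. The key algebraic step is to relate $\|\be_1\|_2 + \|\be_2\|_2$ to these moment quantities: since $\be_i = \int \bc(\tau)\nu_i(d\tau)$ and $\|\bc(\tau)\|_2 = \sqrt{4M+1}$, one has $\|\be_i\|_2 \le \sqrt{4M+1}\,\|\nu_i\|_{TV}$, and then $\|\nu_i\|_{TV}$ is bounded by its far part plus $\sum_k I_{i,0}^k$ — but to close the loop one must show $\|\nu_i\|_{TV}$ over the near sets is itself controlled, which is where the interplay between the primal objective decrease and the dual certificate enters (comparing $\|\hat\bx_i\|_\cA$ with $\|\bx_i^\star\|_\cA$ via the certificate).

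The main obstacle I anticipate is the coupling: unlike the single-signal case, the residual constraint mixes $\be_1$ and $\be_2$ through $\bg$, and the two dual certificates $P,Q$ must be used simultaneously. Concretely, bounding $\|\be_1\|_2 + \|\be_2\|_2$ separately (not just the combination $\|\be_1+\bg\odot\be_2\|_2$) requires exploiting that $P$ certifies $\bx_1^\star$ while $Q$ — built from the \emph{same} dual vector $\bp$ twisted by $\bar\bg$ — simultaneously certifies $\bx_2^\star$, and that the event $\mathcal{E}_\delta$ keeps the off-diagonal interference $\bW_g$ small so the two certificates do not destructively interfere. Managing these cross terms, and tracking the constant $C_w$ so that the near-region quadratic slack dominates the noise contribution, is the delicate part; the rest is the now-standard moment bookkeeping, and I would relegate the routine Taylor-remainder estimates to Appendix~\ref{proof_bound_error} as the paper does.
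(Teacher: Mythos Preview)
Your proposal has a genuine gap in the first inequality and conflates this proposition with later ones where the dual certificates are actually used.

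For the estimation bound, your route is to write $\|\be_i\|_2 \le \sqrt{4M+1}\,\|\nu_i\|_{TV}$ and then bound $\|\nu_i\|_{TV}$ by its far part plus $\sum_k I_{i,0}^k$. But $I_{i,0}^k = \bigl|\int_{\Upsilon^{i,k}_{\mathrm{near}}}\nu_i(d\tau)\bigr|$ is the absolute value of the integral, not the total-variation mass $\int_{\Upsilon^{i,k}_{\mathrm{near}}}|\nu_i|(d\tau)$; the triangle inequality goes the wrong way, and there is no reason the near-region $\|\nu_i\|_{TV}$ is controlled by the moments $I_{i,j}$ alone. You recognize this (``to close the loop one must\ldots'') but the fix you suggest---comparing $\|\hat\bx_i\|_\cA$ with $\|\bx_i^\star\|_\cA$---would introduce additional terms and does not produce the clean inequality stated. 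The paper avoids this entirely by a self-referential trick you are missing: write
\[
\|\be_i\|_2^2 \;=\; \Bigl\langle \be_i,\int_0^1 \bc(\tau)\,\nu_i(d\tau)\Bigr\rangle \;=\; \int_0^1 \langle \be_i,\bc(\tau)\rangle\,\nu_i(d\tau),
\]
recognize $\tau\mapsto\langle \be_i,\bc(\tau)\rangle$ as a trigonometric polynomial, and invoke Lemma~\ref{bound_trig} (which packages exactly the near/far Taylor decomposition you describe) to get a bound involving $\|\be_i\|_\cA^\star\le\sqrt{4M+1}\,\|\be_i\|_2$. One factor of $\|\be_i\|_2$ cancels and you are done---no control of $\|\nu_i\|_{TV}$ is needed.

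Second, the dual certificates $P,Q$ from Theorem~\ref{theorem_main}, the event $\mathcal{E}_\delta$, and the ``cross-term interference'' you worry about play no role in this proposition. Proposition~\ref{bound_error} is a purely deterministic consequence of Lemma~\ref{bound_trig} and the first two optimality relations in Proposition~\ref{optimal_cond_noise} (used only to get $\|\be_1+\bg\odot\be_2\|_\cA^\star\le 2\lambda_w$ and similarly for the $\bar\bg$-twist). The certificates enter later, in Propositions~\ref{bound_zero_first_moment} and~\ref{bound_left_together}, to bound the $I_{i,j}$ and the far-region TV terms themselves. Keeping these stages separate makes the argument much shorter and removes the ``main obstacle'' you anticipate.
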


Hence the rest is to provide an upper bound on the term $ \sum_{i=1}^{2}\left(    \left\| P_{\Upsilon^{i}_{\mathrm{far}}}(\nu_i)\right\|_{TV}   + \sum_{j=0}^2 I_{i,j} \right)$. We have the following proposition to control the sum value of zeroth moment terms $\sum_{i=1}^2 I_{i,0}$ and the sum value of first moment terms $\sum_{i=1}^2 I_{i,1}$, whose proof is given in Appendix~\ref{sec::proof_bound_zero_first_moment}.
\begin{prop}\label{bound_zero_first_moment}
Under the conditions in Theorem~\ref{theorem_noisy}, there exist some numerical constants $C_{0}$ and $C_{1}$, such that
\begin{align*}
\sum_{i=1}^2 I_{i,0}\le C_{0}\left(  \lambda_{w} \sqrt{ \frac{K_{\max}^3 \log M}{M}}  + \sum_{i=1}^{2}I_{i,2} + \sum_{i=1}^2 \left\| P_{\Upsilon^{i}_{\mathrm{far}}}(\nu_i)\right\|_{TV} \right),\\
\sum_{i=1}^2 I_{i,1}\le C_{1}\left(  \lambda_{w} \sqrt{ \frac{K_{\max}^3 \log M}{M}} +\sum_{i=1}^2 I_{i,2} +  \sum_{i=1}^2  \left\| P_{\Upsilon^{i}_{\mathrm{far}}}(\nu_i)\right\|_{TV} \right),
\end{align*}
for $i=1,2$, with high probability given in Theorem~\ref{theorem_noisy}.
\end{prop}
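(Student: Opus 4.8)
The plan is to build, for each modality $i\in\{1,2\}$ and each order $r\in\{0,1\}$, an auxiliary trigonometric polynomial pair that interpolates the $r$th moment information of $\nu_{i}$ at the support $\Upsilon_{i}$, and to test it against the error measures through the optimality conditions of Proposition~\ref{optimal_cond_noise}. Write $\boldsymbol{h}=\by-(\hat{\bx}_{1}+\bg\odot\hat{\bx}_{2})=\bw-(\be_{1}+\bg\odot\be_{2})$; Proposition~\ref{optimal_cond_noise} gives $\left\Vert\boldsymbol{h}\right\Vert_{\cA}^{\star}\le\lambda_{w}$ and $\left\Vert\bar{\bg}\odot\boldsymbol{h}\right\Vert_{\cA}^{\star}\le\lambda_{w}$, hence $\left\Vert\boldsymbol{h}\right\Vert_{2}\le\lambda_{w}$ by Parseval. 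For any coefficient vector $\bp$, letting $P,Q$ be the polynomials it generates as in \eqref{dual_polynomial}, one has the identity $\langle\bp,\be_{1}\rangle+\langle\bar{\bg}\odot\bp,\be_{2}\rangle=\langle\bp,\be_{1}+\bg\odot\be_{2}\rangle=\langle\bp,\bw\rangle-\langle\bp,\boldsymbol{h}\rangle$, whose left side is, up to conjugation, $\int P\,d\nu_{1}+\int Q\,d\nu_{2}$ and whose right side has modulus at most $\left\Vert\bp\right\Vert_{2}(\sigma_{w}+\lambda_{w})$.

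For the zeroth moment and modality $i$, take $v_{ik}$ to be the complex sign of $\int_{\Upsilon_{\mathrm{near}}^{i,k}}\nu_{i}(d\tau)$ and build, exactly as in \eqref{func_Pf}--\eqref{equ_solve_coeff} but with the modality-$i$ sign vector replaced by $(v_{i1},\dots,v_{iK_{i}})^{T}$ and the other modality's data set to $\boldsymbol{0}$, a pair $(P^{(i)}_{0},Q^{(i)}_{0})$ from a common vector $\bp^{(i)}_{0}$. The invertibility analysis of $\bW$ together with Steps~1--3 of Section~\ref{sec::proof_theorem_noise_free} are deterministic in the interpolation data once the event $\mathcal{E}_{\delta}$ and the grid/continuity events hold, and therefore apply verbatim with $(v_{ik})$ in place of $\bu_{i}$ (using the arbitrary-sign branch, which is exactly what forces the sample complexity \eqref{sample_deterministic_sign}); so under \eqref{sample_deterministic_sign} the construction succeeds with the probability claimed in Theorem~\ref{theorem_noisy} and yields: $P^{(i)}_{0}(\tau_{ik})=v_{ik}$, $P^{(i)\prime}_{0}(\tau_{ik})=0$, $|P^{(i)}_{0}|\le1$ globally with the quadratic near-region decay and the uniform $1-c$ far-region gap of Proposition~\ref{bound_value_far}, while the companion $Q^{(i)}_{0}$ satisfies $Q^{(i)}_{0}(\tau_{i'k})=0$, $Q^{(i)\prime}_{0}(\tau_{i'k})=0$ on the other modality $i'$ and is uniformly bounded by a small quantity. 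For the first moment one repeats the construction with the modality-$i$ derivative data changed to $P^{(i)\prime}_{1}(\tau_{ik})=(4M+1)v_{ik}'$, where $v_{ik}'$ is the complex sign of $\int_{\Upsilon_{\mathrm{near}}^{i,k}}(\tau-\tau_{ik})\nu_{i}(d\tau)$, and $P^{(i)}_{1}(\tau_{ik})=0$.

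Plugging $\bp^{(i)}_{0}$ into the pairing identity and taking real parts, split each integral over $\Upsilon_{\mathrm{near}}^{i}=\cup_{k}\Upsilon_{\mathrm{near}}^{i,k}$ and $\Upsilon_{\mathrm{far}}^{i}$. A second-order Taylor expansion of $P^{(i)}_{0}$ around each $\tau_{ik}$, using $P^{(i)}_{0}(\tau_{ik})=v_{ik}$, $P^{(i)\prime}_{0}(\tau_{ik})=0$ and $\left\Vert P^{(i)\prime\prime}_{0}\right\Vert_{\infty}\le CM^{2}$, turns the near part into $\sum_{k}\big|\int_{\Upsilon_{\mathrm{near}}^{i,k}}\nu_{i}(d\tau)\big|=I_{i,0}$ plus a remainder bounded by $C(M^{2}/(4M+1)^{2})\sum_{k}(4M+1)^{2}\int(\tau-\tau_{ik})^{2}|\nu_{i}|(d\tau)\le C'I_{i,2}$; the far part is at most $\left\Vert P_{\Upsilon_{\mathrm{far}}^{i}}(\nu_{i})\right\Vert_{TV}$ since $|P^{(i)}_{0}|\le1$; and the companion term $\int Q^{(i)}_{0}\,d\nu_{i'}$, by the same near/far split and the second-order vanishing of $Q^{(i)}_{0}$ on $\Upsilon_{i'}$, is a small multiple of $I_{i',2}+\left\Vert P_{\Upsilon_{\mathrm{far}}^{i'}}(\nu_{i'})\right\Vert_{TV}$. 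Adding the inequalities for $i=1,2$ gives $\sum_{i}I_{i,0}\le C_{0}\big(\left\Vert\bp^{(1)}_{0}\right\Vert_{2}(\sigma_{w}+\lambda_{w})+\sum_{i}I_{i,2}+\sum_{i}\left\Vert P_{\Upsilon_{\mathrm{far}}^{i}}(\nu_{i})\right\Vert_{TV}\big)$, and the analogous argument with $(P^{(i)}_{1},Q^{(i)}_{1})$, whose first-order Taylor coefficient reproduces $\sum_{k}(4M+1)\big|\int(\tau-\tau_{ik})\nu_{i}(d\tau)\big|=I_{i,1}$, gives the bound for $\sum_{i}I_{i,1}$. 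Estimating $\left\Vert\bp^{(i)}_{r}\right\Vert_{2}$ through $\left\Vert\bW^{-1}\right\Vert$, the $\ell_{2}$ norms of the shifted Fej\'er kernels and their derivatives, the $\sqrt{K_{\max}}$ factor from the interpolation vectors, and a $\sqrt{\log M}$ factor from the union bound over the grid of size $\sim M(K_{1}+K_{2})$, controls the noise term by $C\lambda_{w}\sqrt{K_{\max}^{3}\log M/M}$, and the extra failure probability $C_{3}(M^{3}\log M)^{-1/2}$ absorbs the concentration events needed for the $\bg$-dependent parts of the certificates.

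The main obstacle is the construction in the second paragraph: one must simultaneously enforce, for both modalities and for the zeroth- and first-order data, that a degree-$2M$ trigonometric polynomial interpolates prescribed unit-modulus moment signs with prescribed derivatives on the support of one modality, is globally bounded by $1$ with quadratic decay near that support and a uniform gap far from it, vanishes to second order on the support of the other modality, and has a coefficient vector small enough in $\ell_{2}$ to make the noise pairing of the claimed order. The interpolation-and-boundedness properties follow by re-running the dual-certificate machinery of Section~\ref{sec::proof_theorem_noise_free} with the moment-sign vectors in place of $\bu_{i}$ --- legitimate precisely because those estimates are deterministic in the sign vector once $\mathcal{E}_{\delta}$ holds --- but the quantitative $\ell_{2}$ control of $\bp^{(i)}_{r}$ requires fresh bookkeeping of operator and vector norms, and one must verify that the whole argument is uniform over all admissible sign patterns, since $v_{ik}$ and $v_{ik}'$ depend on $\nu_{i}$ and hence on the possibly adversarial noise.
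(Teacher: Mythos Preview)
Your strategy is correct and shares the same skeleton as the paper's proof --- build interpolating polynomials from the Fej\'er-kernel certificate machinery for the moment signs $v_{ik}$ (and $v_{ik}'$), split near/far, Taylor-expand, and control the pairing with $\be_1+\bg\odot\be_2$ --- but the execution differs from the paper in two notable ways.

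First, the paper does \emph{not} build separate pairs per modality with the other modality's data zeroed out. It constructs a \emph{single} pair $(P,Q)$ interpolating the polar signs of \emph{both} modalities simultaneously (exactly the construction of Proposition~\ref{bound_value_far} with $\mathrm{sign}(\tilde a_{ik})=e^{-j\rho_{ik}}$), so that $I_{1,0}+I_{2,0}$ is recovered from the single quantity $\int P\,d\nu_1+\int Q\,d\nu_2=\langle\bp,\be_1+\bg\odot\be_2\rangle$ directly. Your two-pair route works too, but is less economical.

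Second, and more substantively, the paper does \emph{not} bound $|\langle\bp,\be_1+\bg\odot\be_2\rangle|$ via Cauchy--Schwarz with $\|\bp\|_2$. Instead it passes to function space by Parseval and uses H\"older: $|\langle P,E\rangle|\le\|P\|_1\,\|\be_1+\bg\odot\be_2\|_{\cA}^{\star}$, with $\|\be_1+\bg\odot\be_2\|_{\cA}^{\star}\le2\lambda_w$ from \eqref{noise_atomic_bound}. The $L^1$ norm $\|P\|_1$ is then estimated term-by-term (Lemma~\ref{bound_combined_poly}) using $\int|K|\lesssim1/M$, $\int|K'|\lesssim1$, and --- crucially --- the concentration bounds $|K_g(\tau)|\lesssim\sqrt{\log M/M}$, $|K_g'(\tau)|\lesssim\sqrt{M\log M}$ of Lemma~\ref{bound_kg_kgderi}. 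It is \emph{this} lemma that produces both the $\sqrt{\log M}$ factor and the extra failure probability $C_3(M^3\log M)^{-1/2}$ in Theorem~\ref{theorem_noisy}, not a union bound over the grid as you suggest. Your $\|\bp\|_2$ route, carried out carefully, actually yields $\|\bp\|_2\lesssim K_{\max}/\sqrt{M}$ deterministically on $\mathcal E_\delta$ (no $\sqrt{\log M}$, no extra concentration), so your approach is in principle tighter and avoids Lemma~\ref{bound_kg_kgderi} altogether; but then your last sentence, which attributes the $\sqrt{\log M}$ and the extra failure probability to your own argument, is misplaced. For the first moment the paper uses the analogous single pair $(P_1,Q_1)$ of Lemma~\ref{constructed_poly_p1q1} with \emph{unit}-modulus derivative data and multiplies by $4M+1$ at the end; your scaling $P_1^{(i)\prime}(\tau_{ik})=(4M+1)v_{ik}'$ is an equivalent normalization, but you should check that the global and second-derivative bounds you invoke for $P_1^{(i)}$ survive that rescaling.
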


What remains is to bound $\sum_{i=1}^2   \left\| P_{\Upsilon^{i}_{\mathrm{far}}}(\nu_i)\right\|_{TV}  + \sum_{i=1}^2 I_{i,2} $, which is given in the following proposition proved in Appendix~\ref{proof_bound_left_together}.
\begin{prop}\label{bound_left_together}
Under the conditions in Theorem~\ref{theorem_noisy}, there exists a numerical constant $C$, such that 
\begin{equation*}
\sum_{i=1}^2   \left\| P_{\Upsilon^{i}_{\mathrm{far}}}(\nu_i)\right\|_{TV}  + \sum_{i=1}^2 I_{i,2} \leq  C\lambda_{w} \sqrt{ \frac{K_{\max}^3 \log M}{M}}
\end{equation*} 
holds with high probability given in Theorem~\ref{theorem_noisy}.
\end{prop}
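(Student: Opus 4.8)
The plan is to run the standard dual-certificate stability argument: combine the optimality conditions of Proposition~\ref{optimal_cond_noise} with the dual polynomials $P(\tau),Q(\tau)$ constructed in Section~\ref{sec::proof_theorem_noise_free} (which are valid with the probability claimed because Theorem~\ref{theorem_noisy} postulates the sample complexity \eqref{sample_deterministic_sign}), and then close the chain of inequalities by feeding back the already-proved estimates of Propositions~\ref{bound_error} and~\ref{bound_zero_first_moment}. Write $\bz=\by-(\hat{\bx}_1+\bg\odot\hat{\bx}_2)=\bw-(\be_1+\bg\odot\be_2)$. Comparing the objective of~\eqref{algorithm_noisy_model} at $(\hat{\bx}_1,\hat{\bx}_2)$ and at $(\bx_1^\star,\bx_2^\star)$, and expanding $\tfrac12\|\bz\|_2^2=\tfrac12\|\bw\|_2^2-\langle\bw,\be_1+\bg\odot\be_2\rangle_{\mathbb{R}}+\tfrac12\|\be_1+\bg\odot\be_2\|_2^2$, yields the fundamental inequality
\begin{equation*}
\tfrac12\|\be_1+\bg\odot\be_2\|_2^2+\lambda_w\big(\|\hat{\bx}_1\|_{\cA}-\|\bx_1^\star\|_{\cA}+\|\hat{\bx}_2\|_{\cA}-\|\bx_2^\star\|_{\cA}\big)\le\langle\bw,\be_1+\bg\odot\be_2\rangle_{\mathbb{R}}.
\end{equation*}

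Next I would lower bound the atomic-norm gaps using the certificate. Since $\langle\bp,\bc(\tau)\rangle=P(\tau)$, $\langle\bar\bg\odot\bp,\bc(\tau)\rangle=Q(\tau)$, $|P|\le1$, $|Q|\le1$, $P(\tau_{1k})=\mathrm{sign}(a_{1k})$, $Q(\tau_{2k})=\mathrm{sign}(a_{2k})$, and the canonical decomposition of $\bx_i^\star$ attains its atomic norm under $\Delta\ge 1/M$, one gets $\|\hat{\bx}_1\|_{\cA}-\|\bx_1^\star\|_{\cA}\ge\mathrm{Re}\int\bar P\,d\nu_1$ and $\|\hat{\bx}_2\|_{\cA}-\|\bx_2^\star\|_{\cA}\ge\mathrm{Re}\int\bar Q\,d\nu_2$. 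Because $|g_n|=1$ and $\|\bw\|_2\le\sigma_w$, Cauchy--Schwarz gives $\|\bw\|_{\cA}^{\star}\le\sqrt{4M+1}\,\sigma_w\le\lambda_w$ and $\|\bar\bg\odot\bw\|_{\cA}^{\star}\le\lambda_w$ deterministically, hence $\langle\bw,\be_1+\bg\odot\be_2\rangle_{\mathbb{R}}=\langle\bw,\be_1\rangle_{\mathbb{R}}+\langle\bar\bg\odot\bw,\be_2\rangle_{\mathbb{R}}\le\lambda_w(\|\nu_1\|_{TV}+\|\nu_2\|_{TV})$; the weaker bound $\langle\bw,\be_1+\bg\odot\be_2\rangle_{\mathbb{R}}\le\sigma_w\|\be_1+\bg\odot\be_2\|_2$ together with $\tfrac12t^2-\sigma_w t\ge-\tfrac12\sigma_w^2$ is kept in reserve to dispose of the squared-norm term later.

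Then I would localize the certificate integrals. Split $\mathrm{Re}\int\bar P\,d\nu_1=\mathrm{Re}\int_{\Upsilon^1_{\mathrm{far}}}\bar P\,d\nu_1+\sum_k\mathrm{Re}\int_{\Upsilon^{1,k}_{\mathrm{near}}}\bar P\,d\nu_1$, and on each near piece substitute $\bar P(\tau)=\overline{\mathrm{sign}(a_{1k})}+\big(\bar P(\tau)-\overline{\mathrm{sign}(a_{1k})}\big)$. Using the pointwise/derivative estimates from the proof of Theorem~\ref{theorem_main} — $|P(\tau)|\le1-C_pM^2(\tau-\tau_{1k})^2$ near $\tau_{1k}$, $|P(\tau)|\le1-C_p''$ on $\Upsilon^1_{\mathrm{far}}$, $|P(\tau)-\mathrm{sign}(a_{1k})|\le C_p'M^2(\tau-\tau_{1k})^2$ near $\tau_{1k}$, and their analogues for $Q$ — and the elementary bound $1-\mathrm{Re}\big(\bar P(\tau)\,\tfrac{d\nu_1}{d|\nu_1|}\big)\ge 1-|P(\tau)|$, one shows the ``defect'' satisfies
\begin{equation*}
\|\nu_1\|_{TV}-\mathrm{Re}\int\bar P\,d\nu_1\ \ge\ c_1\Big(\big\|P_{\Upsilon^1_{\mathrm{far}}}(\nu_1)\big\|_{TV}+I_{1,2}\Big),
\end{equation*}
while isolating the leading term on the near pieces controls the remaining misalignment by $I_{1,0}$ and $I_{1,2}$, so the same defect is at most $C\big(\|P_{\Upsilon^1_{\mathrm{far}}}(\nu_1)\|_{TV}+I_{1,0}+I_{1,2}\big)$; the analogues hold for $Q,\nu_2$. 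Inserting these bounds into the fundamental inequality and using $\langle\bw,\be_1+\bg\odot\be_2\rangle_{\mathbb{R}}\le\lambda_w\sum_i\|\nu_i\|_{TV}$ so that the $\|\nu_i\|_{TV}$ terms cancel leaves an inequality of the shape $\tfrac12\|\be_1+\bg\odot\be_2\|_2^2+\lambda_w c_1\sum_i\big(\|P_{\Upsilon^i_{\mathrm{far}}}(\nu_i)\|_{TV}+I_{i,2}\big)\le C\lambda_w\sum_i I_{i,0}$. Finally, invoke Proposition~\ref{bound_zero_first_moment} to bound $\sum_i I_{i,0}\le C_0\big(\lambda_w\sqrt{K_{\max}^3\log M/M}+\sum_i I_{i,2}+\sum_i\|P_{\Upsilon^i_{\mathrm{far}}}(\nu_i)\|_{TV}\big)$; taking $C_w$ in $\lambda_w=C_w\sigma_w\sqrt{4M+1}$ large enough makes $CC_0$ times that last bracket strictly smaller than $\lambda_w c_1$, so it is absorbed on the left and one obtains $\sum_i\|P_{\Upsilon^i_{\mathrm{far}}}(\nu_i)\|_{TV}+\sum_i I_{i,2}\le C\lambda_w\sqrt{K_{\max}^3\log M/M}$, with the probability reported in Theorem~\ref{theorem_noisy} inherited from Propositions~\ref{bound_value_far} and~\ref{bound_zero_first_moment}.

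The main obstacle is the localization step: one must extract a genuinely one-sided inequality from the certificate without circularity, since $\|\nu_i\|_{TV}$, the moments $I_{i,j}$, the interaction term $\langle\bw,\be_1+\bg\odot\be_2\rangle_{\mathbb{R}}$ and $\|\be_1+\bg\odot\be_2\|_2$ are all entangled. In particular, the total-variation mass of each error measure in the immediate neighbourhood of a true spike has to be controlled purely through the quadratic growth of $1-|P|$ (respectively $1-|Q|$) and the interpolation property at the spike locations, so that the $\|\nu_i\|_{TV}$ contributions cancel cleanly — rather than merely being bounded — between the two sides of the fundamental inequality; securing this cancellation is exactly where the random-perturbation control on $P,Q$ and their first three derivatives established in the proof of Theorem~\ref{theorem_main} is needed.
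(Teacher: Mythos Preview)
There is a genuine gap in the localization/cancellation step. After your fundamental inequality and the certificate bound $D_i\ge\mathrm{Re}\int\bar P_i\,d\nu_i$ (with $P_1=P$, $P_2=Q$ interpolating the \emph{original} signs $\mathrm{sign}(a_{ik})$), the ``cancellation of $\|\nu_i\|_{TV}$'' leaves only
\[
\tfrac12\|\be_1+\bg\odot\be_2\|_2^2\ \le\ \lambda_w\sum_i\big(\|\nu_i\|_{TV}-\mathrm{Re}\!\int\bar P_i\,d\nu_i\big),
\]
so you need an \emph{upper} bound on the defect. Your claimed bound $\|\nu_1\|_{TV}-\mathrm{Re}\int\bar P\,d\nu_1\le C(\|P_{\Upsilon^1_{\mathrm{far}}}(\nu_1)\|_{TV}+I_{1,0}+I_{1,2})$ is false in general: on $\Upsilon^{1,k}_{\mathrm{near}}$ the integrand $1-\mathrm{Re}(\bar P\,d\nu_1/d|\nu_1|)$ can be as large as $2$ at (and near) $\tau_{1k}$ whenever the phase of $\nu_1$ there differs from $\mathrm{sign}(a_{1k})$, and that contribution is $\|P_{\Upsilon^{1,k}_{\mathrm{near}}}(\nu_1)\|_{TV}$, which is \emph{not} dominated by $I_{1,0}^k$ or $I_{1,2}^k$. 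Even granting your inequality, the absorption step fails: after dividing by $\lambda_w$ the constants $c_1$ and $CC_0$ are absolute and independent of $C_w$, so ``taking $C_w$ large'' cannot make one dominate the other.

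The paper proceeds differently in two essential respects. First, instead of the certificate bound $D_i\ge\mathrm{Re}\int\bar P_i\,d\nu_i$, it uses the support-based triangle inequality $\|\hat u_i\|_{TV}\ge\|u_i^\star\|_{TV}-\|P_{\Upsilon_i}(\nu_i)\|_{TV}+\|P_{\Upsilon_i^c}(\nu_i)\|_{TV}$, and bounds the noise term via Lemma~\ref{bound_trig} so that a factor $1/C_w$ appears on the right; this is exactly what later allows absorption. Second, to control $\sum_i\|P_{\Upsilon_i}(\nu_i)\|_{TV}$ it constructs a \emph{new} certificate pair interpolating the conjugate signs of $P_{\Upsilon_i}(\nu_i)$ (not $\mathrm{sign}(a_{ik})$), so that the integral equals this TV norm exactly, and then applies Lemma~\ref{bound_combined_poly}: $\big|\int P\,d\nu_1+\int Q\,d\nu_2\big|\le\|P\|_1\,\|\be_1+\bg\odot\be_2\|_{\cA}^\star\le C\sqrt{K_{\max}^3\log M/M}\cdot 2\lambda_w$. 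It is this $\|P\|_1$-estimate (not anything in your outline) that produces the $\sqrt{K_{\max}^3\log M/M}$ factor. The remaining complement integral $\int_{\Upsilon_i^c}P\,d\nu_i$ is then bounded above by $\|P_{\Upsilon_i^c}(\nu_i)\|_{TV}-C_aI_{i,2}-C_b\|P_{\Upsilon^i_{\mathrm{far}}}(\nu_i)\|_{TV}$ using Proposition~\ref{bound_value_far}, yielding a \emph{lower} bound on $\sum_i D_i$ in which $I_{i,2}$ and the far-TV appear with the correct sign; combined with the $1/C_w$ upper bound and Proposition~\ref{bound_zero_first_moment}, taking $C_w$ large closes the argument.
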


Combining Propositions~\ref{bound_error}, \ref{bound_zero_first_moment} and \ref{bound_left_together}, there exists some constant $C$ such that
\begin{equation*}
\frac{1}{\sqrt{4M+1}}\left(\| \be_1\|_2 + \| \be_2\|_2\right)\le C\frac{K_{\max}^{\frac{3}{2}}\sqrt{\log{M}}\lambda_{w}}{\sqrt{M}} \le C_{1}\sigma_{w}\sqrt{K_{\max}^{3}\log{M}},
\end{equation*}
and
\begin{equation*} 
\frac{1}{\sqrt{4M+1}}\| \be_1 + \bg\odot \be_2 \|_2 \le  \frac{1}{\sqrt{4M+1}} \sqrt{2\lambda_{w}C\frac{\sqrt{K_{\max}^{3}\log{M}}\lambda_{w}}{\sqrt{M}} } \leq C_{2} \sigma_w \left(\frac{K_{\max}^3\log{M}}{M}\right)^{1/4} .
\end{equation*}

\section{Conclusions}\label{sec::conclusion}
 
We propose a convex optimization method based on atomic norm minimization to super-resolve two point source models from the measurements of their superposition, where each point source signal is
convolved with a different low-pass point spread function. It
is demonstrated, with high probability, that the point source
locations of each modality can be simultaneously determined
perfectly in the noise-free setting, from a near-optimal number of measurements when each point source signal satisfies a mild separation condition, and the point spread functions are randomly generated in the frequency domain. The proposed algorithm is also robust in the presence of bounded noise.

Our algorithmic framework and the proof methodology can be extended straightforwardly to handle more than two modalities when all of the modalities obey the conditions set forth in the current paper. There are a few possible future research directions. In applications such as multi-user detection, only a small number of users are active out of all the possible users. It will then be of great interest to
simultaneously identify a small set of active users as well as identify their corresponding point source signals. In addition, it will also be of interest to develop performance
guarantees of the proposed algorithm under milder conditions of
the point spread functions, for example when they are deterministic
but weakly correlated.

\section*{Acknowledgements}
This work is supported in part by the ONR Young Investigator Program Award N00014-15-1-2387, NSF Award CCF-1527456, and the ORAU Ralph E. Powe Junior Faculty Enhancement Award.

\bibliography{bibToeplitz}
\bibliographystyle{IEEEtran}

\appendix

\section{Useful Lemmas}\label{appendix_useful_lemmas}

\begin{lemma}\cite[noncommutative Bernstein's inequality]{tropp2012user} \label{bernstein}
Let $\{\boldsymbol{E}_{n}\}$ be a finite sequence of independent, random matrices with dimensions $d_{1}\times d_{2}$. Suppose that each random matrix satisfies 
\begin{equation*}
\mathbb{E}\left[\boldsymbol{E}_{n}\right]=\boldsymbol{0},\quad \mathrm{and} \quad \left\Vert\boldsymbol{E}_{n}\right\Vert\le R \ \ \mathrm{almost \ surely}.
\end{equation*}
Define 
\begin{equation*}
\sigma^{2}=\max\left\{\left\Vert\sum_{n}\mathbb{E}\left[\boldsymbol{E}_{n}\boldsymbol{E}_{n}^H\right]\right\Vert, \left\Vert\sum_{n}\mathbb{E}\left[\boldsymbol{E}_{n}^H\boldsymbol{E}_{n}\right]\right\Vert \right\}.
\end{equation*}
Then for any $ t\ge 0$,
\begin{equation*}
\mathbb{P}\left\{\left\Vert\sum_{n}\boldsymbol{E}_{n}\right\Vert\ge t \right\}\le\left(d_{1}+d_{2}\right)\cdot \exp\left(\frac{-t^{2}/2}{\sigma^{2}+Rt/3}\right).
\end{equation*}
\end{lemma}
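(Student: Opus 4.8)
The plan is to prove this via the matrix Laplace transform method combined with Lieb's concavity theorem (the standard route), reducing the rectangular case to the Hermitian case by dilation. First I would symmetrize: introduce the Hermitian dilation
\[
\mathcal{D}(\boldsymbol{E}) = \begin{bmatrix} \boldsymbol{0} & \boldsymbol{E} \\ \boldsymbol{E}^{H} & \boldsymbol{0} \end{bmatrix} \in \mathbb{C}^{(d_{1}+d_{2})\times(d_{1}+d_{2})},
\]
which is real-linear, obeys $\left\Vert\mathcal{D}(\boldsymbol{E})\right\Vert=\left\Vert\boldsymbol{E}\right\Vert$ and $\mathcal{D}(\boldsymbol{E})^{2}=\mathrm{diag}(\boldsymbol{E}\boldsymbol{E}^{H},\,\boldsymbol{E}^{H}\boldsymbol{E})$, and satisfies $\left\Vert\sum_{n}\boldsymbol{E}_{n}\right\Vert=\lambda_{\max}\!\big(\sum_{n}\mathcal{D}(\boldsymbol{E}_{n})\big)$. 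Setting $\boldsymbol{X}_{n}=\mathcal{D}(\boldsymbol{E}_{n})$, the matrices $\boldsymbol{X}_{n}$ are independent, Hermitian, zero-mean, with $\left\Vert\boldsymbol{X}_{n}\right\Vert\le R$ almost surely, and $\big\Vert\sum_{n}\mathbb{E}[\boldsymbol{X}_{n}^{2}]\big\Vert=\max\{\Vert\sum_{n}\mathbb{E}[\boldsymbol{E}_{n}\boldsymbol{E}_{n}^{H}]\Vert,\Vert\sum_{n}\mathbb{E}[\boldsymbol{E}_{n}^{H}\boldsymbol{E}_{n}]\Vert\}=\sigma^{2}$. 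Hence it suffices to bound $\mathbb{P}\{\lambda_{\max}(\sum_{n}\boldsymbol{X}_{n})\ge t\}$.

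Next, for any $\theta>0$, Markov's inequality applied to $e^{\theta\lambda_{\max}(\cdot)}$ together with $e^{\theta\lambda_{\max}(\boldsymbol{Y})}\le\trace\, e^{\theta\boldsymbol{Y}}$ yields the matrix Laplace transform bound
\[
\mathbb{P}\Big\{\lambda_{\max}\big(\textstyle\sum_{n}\boldsymbol{X}_{n}\big)\ge t\Big\}\le e^{-\theta t}\,\mathbb{E}\,\trace\exp\!\Big(\theta\textstyle\sum_{n}\boldsymbol{X}_{n}\Big).
\]
The crux, and the part I expect to be the main obstacle, is decoupling the summands inside the trace exponential: invoking Lieb's concavity theorem (concavity of $\boldsymbol{A}\mapsto\trace\exp(\boldsymbol{H}+\log\boldsymbol{A})$) and Jensen's inequality in the standard induction over $n$, I would obtain the subadditivity of the matrix cumulant generating functions, $\mathbb{E}\,\trace\exp(\theta\sum_{n}\boldsymbol{X}_{n})\le\trace\exp\!\big(\sum_{n}\log\mathbb{E}\,e^{\theta\boldsymbol{X}_{n}}\big)$. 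This is the only deep ingredient; everything else is calculus and operator-monotonicity bookkeeping.

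Finally I would bound each factor and optimize over $\theta$. For zero-mean Hermitian $\boldsymbol{X}$ with $\left\Vert\boldsymbol{X}\right\Vert\le R$ and $0<\theta<3/R$, a standard scalar estimate applied through the spectral theorem, together with $\mathbb{E}\boldsymbol{X}=\boldsymbol{0}$ and $\boldsymbol{I}+\boldsymbol{A}\preceq e^{\boldsymbol{A}}$, gives $\mathbb{E}\,e^{\theta\boldsymbol{X}}\preceq\exp\!\big(g(\theta)\,\mathbb{E}\boldsymbol{X}^{2}\big)$ with $g(\theta)=\frac{\theta^{2}/2}{1-\theta R/3}$; taking the operator-monotone logarithm and summing gives $\sum_{n}\log\mathbb{E}\,e^{\theta\boldsymbol{X}_{n}}\preceq g(\theta)\sum_{n}\mathbb{E}\boldsymbol{X}_{n}^{2}$. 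Combining this with the monotonicity of $\boldsymbol{A}\mapsto\lambda_{\max}(\boldsymbol{A})$ and $\trace\, e^{\boldsymbol{A}}\le (d_{1}+d_{2})\,e^{\lambda_{\max}(\boldsymbol{A})}$ yields
\[
\mathbb{P}\Big\{\lambda_{\max}\big(\textstyle\sum_{n}\boldsymbol{X}_{n}\big)\ge t\Big\}\le (d_{1}+d_{2})\exp\!\Big(-\theta t+\frac{\theta^{2}\sigma^{2}/2}{1-\theta R/3}\Big).
\]
The admissible choice $\theta=t/(\sigma^{2}+Rt/3)\in(0,3/R)$ makes the exponent exactly $-\dfrac{t^{2}/2}{\sigma^{2}+Rt/3}$, which is the claimed bound; undoing the dilation returns the statement for $\left\Vert\sum_{n}\boldsymbol{E}_{n}\right\Vert$.
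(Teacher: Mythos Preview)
Your proof is correct and follows the standard matrix-Laplace-transform route due to Tropp: Hermitian dilation to reduce to the self-adjoint case, Lieb's concavity (via Jensen) to obtain subadditivity of the matrix cumulant generating functions, the scalar Bernstein moment bound lifted through the spectral calculus, and finally the explicit optimization $\theta=t/(\sigma^{2}+Rt/3)$. All steps check out, including the verification that this $\theta$ makes the exponent exactly $-\tfrac{t^{2}/2}{\sigma^{2}+Rt/3}$.

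Note, however, that the paper does not supply its own proof of this lemma: it is quoted verbatim as a known result from \cite{tropp2012user} and used as a black box (specifically in the proof of Proposition~\ref{prop:invertibility}). So there is no ``paper's proof'' to compare against; what you have written is precisely the argument one would find in the cited reference.
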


\begin{lemma}\cite[Bernstein's polynomial inequality]{schaeffer1941inequalities}\label{bernsteins_poly_inequality}
Suppose $F\left(z\right)$ is a polynomial of degree $N$ with complex coefficients, then there exists 
$$\sup_{\left\vert z\right\vert\le 1}\left\vert F'\left(z\right)\right\vert\le N\cdot\sup_{\left\vert z\right\vert\le 1}\left\vert F\left(z\right)\right\vert.$$
\end{lemma}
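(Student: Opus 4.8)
Since $F'$ is itself a polynomial, hence entire, the maximum modulus principle gives $\sup_{|z|\le 1}|F'(z)| = \max_{|z|=1}|F'(z)|$, and likewise $\sup_{|z|\le 1}|F(z)| = \max_{|z|=1}|F(z)| =: \|F\|_\infty$. So the plan is simply to bound $|F'(z)|$ for $z$ on the unit circle by $N\|F\|_\infty$.

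First I would pass to the circle and reduce to a trigonometric polynomial. Writing $z = e^{j\theta}$ and $f(\theta) := F(e^{j\theta}) = \sum_{n=0}^{N} a_n e^{jn\theta}$, one has $f'(\theta) = j e^{j\theta} F'(e^{j\theta})$, hence $|f'(\theta)| = |F'(e^{j\theta})|$ and $\max_\theta |f(\theta)| = \|F\|_\infty$. Since $f$ is a trigonometric polynomial of degree at most $N$ (its spectrum lies in $\{0,1,\dots,N\}\subseteq\{-N,\dots,N\}$), the classical Bernstein inequality for trigonometric polynomials gives $\max_\theta |f'(\theta)| \le N \max_\theta |f(\theta)|$, and combining the two displays yields $|F'(e^{j\theta})| \le N\|F\|_\infty$ for every $\theta$, which is the claim.

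What remains, and where the actual work lies, is the trigonometric Bernstein inequality. I would establish it via M.~Riesz's interpolation formula: for any trigonometric polynomial $h$ of degree at most $N$ one writes $h'(\theta)$ as a fixed finite combination $\sum_{k=1}^{2N}\gamma_k\, h\!\left(\theta+\tfrac{(2k-1)\pi}{2N}\right)$ whose coefficients satisfy $\sum_k|\gamma_k| = N$, so that taking absolute values immediately gives $|h'(\theta)|\le N\max_\phi|h(\phi)|$. The values $\gamma_k$ and the identity $\sum_k|\gamma_k| = N$ come from an explicit trigonometric-sum (partial-fraction) computation. Alternatively, one can argue directly on the disk from the polynomial identity $NF(z) - zF'(z) = z^{N-1}\,\overline{R'(z)}$, valid on $|z|=1$ for the conjugate-reciprocal polynomial $R(z) := z^N\overline{F(1/\bar z)}$, together with the fact that $|R(z)| = |F(z)|$ there.

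The only genuinely nontrivial ingredient is this trigonometric/reciprocal-polynomial step; everything else (the two applications of the maximum modulus principle, the change of variable, and the chain rule) is routine. Since the statement is entirely classical, in what follows it is simply invoked with the citation \cite{schaeffer1941inequalities} rather than reproved.
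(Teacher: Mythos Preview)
Your sketch is correct and standard, but note that the paper does not actually prove this lemma: it is stated in the appendix of auxiliary results with only the citation \cite{schaeffer1941inequalities} and no argument. Your own conclusion---that the statement is classical and should simply be invoked via the citation rather than reproved---therefore matches the paper's treatment exactly.
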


\begin{lemma}\cite[Hoeffding's inequality]{hoeffding1963probability}\label{hoeffding_inequality}
Let the components of $\bu\in\mathbb{C}^{N}$ be sampled i.i.d. from a symmetric distribution on the complex unit circle, $\bw\in\mathbb{C}^{N}$, and $t$ be a positive real number. Then 
\begin{equation*}
\mathbb{P}\left\{\left\vert\langle\bu,\bw\rangle\right\vert\ge t\right\} \le 4 e^{-\frac{t^{2}}{4\left\Vert\bw\right\Vert_{2}^{2}}}.
\end{equation*}
\end{lemma}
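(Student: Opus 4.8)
The plan is to reduce this complex-valued concentration bound to the classical real-valued Hoeffding inequality for sums of independent bounded random variables. Under the inner-product convention $\langle\bu,\bw\rangle=\bw^{H}\bu$, write $\langle\bu,\bw\rangle=\sum_{n=1}^{N}\overline{w}_{n}u_{n}$, a sum of $N$ independent complex random variables. Because each $u_{n}$ has modulus one we have $|\overline{w}_{n}u_{n}|\le|w_{n}|$, and because the law of $u_{n}$ is symmetric (so $u_{n}$ and $-u_{n}$ are equal in distribution) each $\overline{w}_{n}u_{n}$ is a symmetric complex random variable, whence $\mathbb{E}[\overline{w}_{n}u_{n}]=0$ and likewise $\mathbb{E}[\mathrm{Re}(\overline{w}_{n}u_{n})]=\mathbb{E}[\mathrm{Im}(\overline{w}_{n}u_{n})]=0$.

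First I would split the modulus into real and imaginary parts: on the event $\{|\langle\bu,\bw\rangle|\ge t\}$ at least one of $|\mathrm{Re}\langle\bu,\bw\rangle|\ge t/\sqrt{2}$ or $|\mathrm{Im}\langle\bu,\bw\rangle|\ge t/\sqrt{2}$ must hold, so by a union bound it suffices to control the tail of $\sum_{n}\mathrm{Re}(\overline{w}_{n}u_{n})$ at level $t/\sqrt{2}$ (and, identically, its imaginary analogue), each contributing a factor $2$ to the final constant.

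Next, the summands $\mathrm{Re}(\overline{w}_{n}u_{n})$ are independent, mean zero, and confined to the interval $[-|w_{n}|,|w_{n}|]$ of length $2|w_{n}|$. The two-sided Hoeffding inequality for bounded independent variables then gives $\mathbb{P}\{|\sum_{n}\mathrm{Re}(\overline{w}_{n}u_{n})|\ge s\}\le 2\exp(-2s^{2}/\sum_{n}(2|w_{n}|)^{2})=2\exp(-s^{2}/(2\left\Vert\bw\right\Vert_{2}^{2}))$; taking $s=t/\sqrt{2}$ yields $2\exp(-t^{2}/(4\left\Vert\bw\right\Vert_{2}^{2}))$ for the real part, and the same for the imaginary part, so adding the two bounds produces exactly $4\exp(-t^{2}/(4\left\Vert\bw\right\Vert_{2}^{2}))$.

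There is no genuine obstacle in this argument; the only points requiring care are (i) invoking symmetry of the law of $u_{n}$ to ensure the real and imaginary parts are actually centered, since without it Hoeffding would only control fluctuations about a possibly nonzero mean, and (ii) tracking the constants through the $\sqrt{2}$-splitting so that the prefactor $4$ and the exponent $t^{2}/(4\left\Vert\bw\right\Vert_{2}^{2})$ emerge precisely as stated. An alternative that sidesteps the real/imaginary decomposition is to bound, uniformly over phases $\phi$, the moment generating function of $\mathrm{Re}(e^{-j\phi}\langle\bu,\bw\rangle)$ via Hoeffding's lemma applied term by term and then optimize the Chernoff parameter; this reproduces the same tail up to constants, but the union-bound route above is the shortest.
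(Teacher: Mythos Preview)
Your argument is correct and is the standard derivation of the complex-valued Hoeffding bound from the real one. The paper itself does not supply a proof of this lemma at all; it is simply quoted as a known result from \cite{hoeffding1963probability} in the appendix of auxiliary lemmas, so there is no ``paper's own proof'' to compare against. Your reduction via the real/imaginary split, union bound, and the classical bounded-difference Hoeffding inequality is exactly the routine justification one would write out, and the constants are tracked correctly.
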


\section{Proof of Dual Problem \eqref{convex_demixing_dual}}\label{proof_dual}
The Lagrangian function of \eqref{convex_demixing_rewritten} is given as
\begin{equation*}
L\left(\boldsymbol{x}_{1},\boldsymbol{x}_{2},\boldsymbol{p}\right)=\left\Vert\boldsymbol{x}_{1}\right\Vert_{\mathcal{A}} + \left\Vert\boldsymbol{x}_{2}\right\Vert_{\mathcal{A}}+\langle \boldsymbol{p},\boldsymbol{y}-\boldsymbol{x}_{1}-\boldsymbol{g}\odot\boldsymbol{x}_{2}\rangle_{\mathbb{R}},
\end{equation*}
whose infimum over $\bx_1$ and $\bx_2$ can be found as
\begin{align}
D\left(\boldsymbol{p}\right)&=\inf_{\boldsymbol{x}_{1},\boldsymbol{x}_{2}}L\left(\boldsymbol{x}_{1},\boldsymbol{x}_{2},\boldsymbol{p}\right) \nonumber \\
&=\inf_{\boldsymbol{x}_{1},\boldsymbol{x}_{2}}\{\left\Vert\boldsymbol{x}_{1}\right\Vert_{\mathcal{A}} -\langle\boldsymbol{p},\boldsymbol{x}_{1}\rangle_{\mathbb{R}}+ \left\Vert\boldsymbol{x}_{2}\right\Vert_{\mathcal{A}}-\langle\boldsymbol{p},\boldsymbol{g}\odot\boldsymbol{x}_{2}\rangle_{\mathbb{R}}+\langle\boldsymbol{p},\boldsymbol{y}\rangle_{\mathbb{R}}\} \nonumber \\
&=\inf_{\boldsymbol{x}_{1},\boldsymbol{x}_{2}}\{\left\Vert\boldsymbol{x}_{1}\right\Vert_{\mathcal{A}} -\langle\boldsymbol{p},\boldsymbol{x}_{1}\rangle_{\mathbb{R}}+ \left\Vert\boldsymbol{x}_{2}\right\Vert_{\mathcal{A}}-\langle\bar{\boldsymbol{g}}\odot\boldsymbol{p},\boldsymbol{x}_{2}\rangle_{\mathbb{R}}+\langle\boldsymbol{p},\boldsymbol{y}\rangle_{\mathbb{R}}\} \nonumber \\
& = \inf_{\boldsymbol{x}_{1} }\{\left\Vert\boldsymbol{x}_{1}\right\Vert_{\mathcal{A}} -\langle\boldsymbol{p},\boldsymbol{x}_{1}\rangle_{\mathbb{R}} \} + \inf_{\boldsymbol{x}_{2}} \{ \left\Vert\boldsymbol{x}_{2}\right\Vert_{\mathcal{A}}-\langle\bar{\boldsymbol{g}}\odot\boldsymbol{p},\boldsymbol{x}_{2}\rangle_{\mathbb{R}} \} +\langle\boldsymbol{p},\boldsymbol{y}\rangle_{\mathbb{R}}. \label{dual_opt}
\end{align}
Plugging into \eqref{dual_opt} the facts that 
$$\inf_{\boldsymbol{x}_{i}}\{\left\Vert\boldsymbol{x}_{i}\right\Vert_{\mathcal{A}} -\langle\boldsymbol{p},\boldsymbol{x}_{i}\rangle_{\mathbb{R}}\}= \left\{\begin{array}{cc} 
0 , &  \left\Vert\boldsymbol{p}\right\Vert_{\mathcal{A}}^{\star}\leq1 \\
-\infty, & \mathrm{otherwise}
\end{array}\right. ,$$
for $i=1,2$, we can have the dual problem of \eqref{convex_demixing_rewritten} as given in \eqref{convex_demixing_dual}.

\section{Proof of $\Upsilon_{1}\subseteq\hat{\Upsilon}_{1}$ and $\Upsilon_{2}\subseteq\hat{\Upsilon}_{2}$}\label{proof_fre_recovery}
If $\Upsilon_{1}\backslash\hat{\Upsilon}_{1}\neq\emptyset$ or $\Upsilon_{2}\backslash\hat{\Upsilon}_{2}\neq\emptyset$, there exists $ \vert\hat{P}\left(\tau\right) \vert<1$ for $\tau\in\Upsilon_{1}\backslash\hat{\Upsilon}_{1}$ or $ \vert\hat{Q}\left(\tau\right) \vert<1$ for $\tau\in\Upsilon_{2}\backslash\hat{\Upsilon}_{2}$. Then we have
\begin{align*}
\langle\hat{\bp},\by\rangle_{\mathbb{R}}&=\langle\hat{\bp},\bx_{1}^{\star}\rangle_{\mathbb{R}} + \langle\hat{\bp},\bg\odot\bx_{2}^{\star}\rangle_{\mathbb{R}}\\
&=\langle\hat{\bp},\sum_{k=1}^{K_{1}}a_{1k}\bc\left(\tau_{1k}\right)\rangle_{\mathbb{R}} + \langle\bar{\bg}\odot\hat{\bp},\sum_{k=1}^{K_{2}}a_{2k}\bc\left(\tau_{2k}\right)\rangle_{\mathbb{R}}\\
&=\sum_{\tau_{1k}\in\Upsilon_{1}\cap\hat{\Upsilon}_{1}}\mathrm{Re}\left(\bar{a}_{1k}\hat{P}\left(\tau_{1k}\right)\right) + \sum_{\tau_{1k}\in\Upsilon_{1}\backslash\hat{\Upsilon}_{1}}\mathrm{Re}\left(\bar{a}_{1k}\hat{P}\left(\tau_{1k}\right)\right) \\
&\quad+ \sum_{\tau_{2k}\in\Upsilon_{2}\cap\hat{\Upsilon}_{2}}\mathrm{Re}\left(\bar{a}_{2k}\hat{Q}\left(\tau_{2k}\right)\right) + \sum_{\tau_{2k}\in\Upsilon_{2}\backslash\hat{\Upsilon}_{2}}\mathrm{Re}\left(\bar{a}_{2k}\hat{Q}\left(\tau_{2k}\right)\right)\\
&<\sum_{\tau_{1k}\in\Upsilon_{1}\cap\hat{\Upsilon}_{1}}\left\vert a_{1k}\right\vert + \sum_{\tau_{1k}\in\Upsilon_{1}\backslash\hat{\Upsilon}_{1}}\left\vert a_{1k}\right\vert + \sum_{\tau_{2k}\in\Upsilon_{2}\cap\hat{\Upsilon}_{2}}\left\vert a_{2k}\right\vert + \sum_{\tau_{2k}\in\Upsilon_{2}\backslash\hat{\Upsilon}_{2}}\left\vert a_{2k}\right\vert\\
&=\left\Vert\bx_{1}^{\star}\right\Vert_{\mathcal{A}} + \left\Vert\bx_{2}^{\star}\right\Vert_{\mathcal{A}},
\end{align*}
where the strict inequality violates strong duality. Therefore, $\Upsilon_{1}\subseteq\hat{\Upsilon}_{1}$ and $\Upsilon_{2}\subseteq\hat{\Upsilon}_{2}$.

\section{Proof of Proposition~\ref{dual_certificate}}\label{proof_dual_certification}
\begin{proof}
Since
\begin{align*}
&\left\Vert\boldsymbol{p}\right\Vert_{\mathcal{A}}^{\star}=\sup_{\tau\in[0,1)}\left\vert\sum_{n=-2M}^{2M}p_{n}e^{j2\pi n\tau}\right\vert=\sup_{\tau\in[0,1)}\left\vert P\left(\tau\right)\right\vert\le 1,\\
&\left\Vert\bar{\boldsymbol{g}}\odot\boldsymbol{p}\right\Vert_{\mathcal{A}}^{\star}=\sup_{\tau\in[0,1)}\left\vert\sum_{n=-2M}^{2M}\bar{g}_{n}p_{n}e^{j2\pi n\tau}\right\vert=\sup_{\tau\in[0,1)}\left\vert Q\left(\tau\right)\right\vert\le 1,
\end{align*}
the vector $\boldsymbol{p}$ satisfying \eqref{conditions} is dual feasible. First,
\begin{align*}
\langle\boldsymbol{p},\boldsymbol{y} \rangle_{\mathbb{R}}&=\langle\boldsymbol{p},\boldsymbol{x}_{1}^{\star}+\boldsymbol{g}\odot\boldsymbol{x}_{2}^{\star}\rangle_{\mathbb{R}}\\
&=\langle\boldsymbol{p},\boldsymbol{x}_{1}^{\star}\rangle_{\mathbb{R}}+\langle\bar{\boldsymbol{g}}\odot\boldsymbol{p},\boldsymbol{x}_{2}^{\star}\rangle_{\mathbb{R}}\\
&=\sum_{k=1}^{K_{1}}\mathrm{Re}\left(\bar{a}_{1k}\sum_{n=-2M}^{2M}p_{n}e^{j2\pi n\tau_{1k}}\right)+\sum_{k=1}^{K_{2}}\mathrm{Re}\left(\bar{a}_{2k}\sum_{n=-2M}^{2M}\bar{g}_{n}p_{n}e^{j2\pi n\tau_{2k}}\right)\\
&=\sum_{k=1}^{K_{1}}\mathrm{Re}\left(\bar{a}_{1k}\mathrm{sign}\left(a_{1k}\right)\right)+\sum_{k=1}^{K_{2}}\mathrm{Re}\left(\bar{a}_{2k}\mathrm{sign}\left(a_{2k}\right)\right)\\
&=\sum_{k=1}^{K_{1}}\left\vert a_{1k}\right\vert+\sum_{k=1}^{K_{2}}\left\vert a_{2k}\right\vert \ge \left\Vert\boldsymbol{x}_{1}^{\star}\right\Vert_{\mathcal{A}} + \left\Vert\boldsymbol{x}_{2}^{\star}\right\Vert_{\mathcal{A}}.
\end{align*}
Also, we have
\begin{equation*}
\langle\boldsymbol{p},\boldsymbol{y} \rangle_{\mathbb{R}}=\langle\boldsymbol{p},\boldsymbol{x}_{1}^{\star}\rangle_{\mathbb{R}}+\langle\bar{\boldsymbol{g}}\odot\boldsymbol{p},\boldsymbol{x}_{2}^{\star}\rangle_{\mathbb{R}}\le \left\Vert \boldsymbol{p}\right\Vert_{\mathcal{A}}^{\star}\left\Vert\boldsymbol{x}_{1}^{\star}\right\Vert_{\mathcal{A}}+\left\Vert \bar{\boldsymbol{g}}\odot\boldsymbol{p}\right\Vert_{\mathcal{A}}^{\star}\left\Vert\boldsymbol{x}_{2}^{\star}\right\Vert_{\mathcal{A}}\leq \left\Vert\boldsymbol{x}_{1}^{\star}\right\Vert_{\mathcal{A}} + \left\Vert\boldsymbol{x}_{2}^{\star}\right\Vert_{\mathcal{A}},
\end{equation*}
which gives $\langle\boldsymbol{p},\boldsymbol{y}\rangle_{\mathbb{R}}=\left\Vert\boldsymbol{x}_{1}^{\star}\right\Vert_{\mathcal{A}} + \left\Vert\boldsymbol{x}_{2}^{\star}\right\Vert_{\mathcal{A}}$. This implies that $\boldsymbol{p}$ is a dual optimal solution of \eqref{convex_demixing_dual}, and that $\boldsymbol{x}_{1}^{\star}$ and $\boldsymbol{x}_{2}^{\star}$ are the primal optimal solutions of \eqref{convex_demixing_rewritten}. 

Now validate the uniqueness of $\boldsymbol{x}_{1}^{\star}$ and $\boldsymbol{x}_{2}^{\star}$. Suppose there is a different optimal solution of \eqref{convex_demixing_rewritten}, which can be written as $\hat{\bx}_{i} = \sum_{k=1}^{\hat{K}_i} \hat{a}_{ik} \bc(\hat{\tau}_{ik})$, where $\hat{\Upsilon}_i =\{\hat{\tau}_{ik}| k=1,\ldots, \hat{K}_i\}$, and $\left\Vert\hat{\boldsymbol{x}}_{i}\right\Vert_{\mathcal{A}}=\sum_{k=1}^{\hat{K}_{i}}\left\vert \hat{a}_{ik}\right\vert$ for $i=1,2$, and it satisfies $\by = \hat{\bx}_1 + \bg \odot \hat{\bx}_2$. If $\hat{\Upsilon}_i =  {\Upsilon}_i$ for $i=1,2$, we have $\hat{\bx}_1=\bx_1^\star$ and $\hat{\bx}_2=\bx_2^\star$ straightforwardly. We then consider the case when at least $\hat{\Upsilon}_i \neq  {\Upsilon}_i$ for some $i$. We have
\begin{align*}
\langle\boldsymbol{p}, \boldsymbol{y}\rangle_{\mathbb{R}}&=\langle\boldsymbol{p},\hat{\boldsymbol{x}}_{1}+\boldsymbol{g}\odot\hat{\boldsymbol{x}}_{2}\rangle_{\mathbb{R}}\\
&=\langle\boldsymbol{p},\hat{\boldsymbol{x}}_{1}\rangle_{\mathbb{R}}+\langle\bar{\boldsymbol{g}}\odot\boldsymbol{p},\hat{\boldsymbol{x}}_{2}\rangle_{\mathbb{R}}\\
&=\sum_{\hat{\tau}_{1k}\in\hat{\Upsilon}_{1}\cap\Upsilon_1}\mathrm{Re}\left( \bar{\hat{a}}_{1k} \sum_{n=-2M}^{2M}p_{n}e^{j2\pi n\hat{\tau}_{1k}}\right)+\sum_{\hat{\tau}_{1k}\in\hat{\Upsilon}_{1}\backslash\Upsilon_1}\mathrm{Re}\left(\bar{\hat{a}}_{1k} \sum_{n=-2M}^{2M}p_{n}e^{j2\pi n\hat{\tau}_{1k}}\right)\\
&\quad +\sum_{\hat{\tau}_{2k}\in\hat{\Upsilon}_{2}\cap\Upsilon_2}\mathrm{Re}\left(\bar{\hat{a}}_{2k} \sum_{n=-2M}^{2M}\bar{g}_{n}p_{n}e^{j2\pi n\hat{\tau}_{2k}}\right)+\sum_{\hat{\tau}_{2k}\in\hat{\Upsilon}_{2}\backslash\Upsilon_2}\mathrm{Re}\left(\bar{\hat{a}}_{2k} \sum_{n=-2M}^{2M}\bar{g}_{n}p_{n}e^{j2\pi n\hat{\tau}_{2k}}\right)\\
&\le \sum_{\hat{\tau}_{1k}\in\hat{\Upsilon}_{1}\cap\Upsilon_1}\left\vert \hat{a}_{1k}\right\vert + \sum_{\hat{\tau}_{1k}\in\hat{\Upsilon}_{1}\backslash\Upsilon_1}\mathrm{Re}\left(\bar{\hat{a}}_{1k}P(\hat{\tau}_{1k})\right) + \sum_{\hat{\tau}_{2k}\in\hat{\Upsilon}_{2}\cap\Upsilon_2}\left\vert \hat{a}_{2k}\right\vert + \sum_{\hat{\tau}_{2k}\in\hat{\Upsilon}_{2}\backslash\Upsilon_2}\mathrm{Re}\left(\bar{\hat{a}}_{2k} Q(\hat{\tau}_{2k})\right)\\
&< \sum_{\hat{\tau}_{1k}\in\hat{\Upsilon}_{1}\cap\Upsilon_1}\left\vert \hat{a}_{1k}\right\vert + \sum_{\hat{\tau}_{1k}\in\hat{\Upsilon}_{1}\backslash\Upsilon_1}\left\vert \hat{a}_{1k}\right\vert + \sum_{\hat{\tau}_{2k}\in\hat{\Upsilon}_{2}\cap\Upsilon_2}\left\vert \hat{a}_{2k}\right\vert + \sum_{\hat{\tau}_{2k}\in\hat{\Upsilon}_{2}\backslash\Upsilon_2}\left\vert \hat{a}_{2k}\right\vert\\
&=\left\Vert\hat{\boldsymbol{x}}_{1}\right\Vert_{\mathcal{A}}+\left\Vert\hat{\boldsymbol{x}}_{2}\right\Vert_{\mathcal{A}},
\end{align*}
which violates strong duality. Thus $(\boldsymbol{x}_{1}^{\star}, \boldsymbol{x}_{2}^{\star})$ is the unique primal optimal solution of \eqref{convex_demixing_rewritten}.
\end{proof}

\section{Proof of Proposition~\ref{prop:invertibility}} \label{proof_invertibility}

\begin{proof}
To apply Lemma~\ref{bernstein} to \eqref{decomposition_Wg}, we first bound $\left\Vert\boldsymbol{E}_{n}\right\Vert$ as
\begin{align*}
\left\Vert\boldsymbol{E}_{n}\right\Vert & = \left\Vert \frac{1}{M}s_{n}g_{n}\boldsymbol{e}_{1}\left(n\right)\boldsymbol{e}_{2}^{H}\left(n\right)\right\Vert\\
&=\frac{1}{M}\left\vert s_{n}\right\vert\sqrt{K_{1}+\frac{K_{1}}{\left\vert K''\left(0\right)\right\vert}\left(2\pi n\right)^{2}}\sqrt{K_{2}+\frac{K_{2}}{\left\vert K''\left(0\right)\right\vert}\left(2\pi n\right)^{2}}\\
&\le \frac{1}{M}  \left(\max_{\left\vert n\right\vert\le2M} \left\vert s_n\right\vert \right)  \sqrt{K_{1}K_{2}}\left(1+\max_{\left\vert n\right\vert\le2M}\frac{\left(2\pi n\right)^{2}}{\left\vert K''\left(0\right)\right\vert}\right)\\
&\le 14\frac{\sqrt{K_{1}K_{2}}}{M}: =  R, \quad \mathrm{for}\ M\ge 4,
\end{align*}
where $ \max_{\left\vert n\right\vert\le2M} \left\vert s_n\right\vert \leq 1$, and $\left(1+\max_{\left\vert n\right\vert\le2M}\frac{\left(2\pi n\right)^{2}}{\left\vert K''\left(0\right)\right\vert}\right)=1+\frac{12M^{2}}{M^{2}-1}\le 14$, for $M\ge 4$ \cite{tang2014CSoffgrid}. 

Furthermore,
\begin{equation*}
\begin{split}
\left\Vert\sum_{n=-2M}^{2M}\mathbb{E}\left[\boldsymbol{E}_{n}\boldsymbol{E}_{n}^{H}\right]\right\Vert&=\left\Vert\sum_{n=-2M}^{2M}\mathbb{E}\left[\frac{1}{M}s_{n}g_{n}\boldsymbol{e}_{1}\left(n\right)\boldsymbol{e}_{2}^{H}\left(n\right)\cdot\frac{1}{M}s_{n} \bar{g}_{n} \boldsymbol{e}_{2}\left(n\right)\boldsymbol{e}_{1}^{H}\left(n\right)\right]\right\Vert\\
&=\left\Vert\sum_{n=-2M}^{2M}\frac{1}{M^{2}}s_{n}^{2}K_{2}\left(1+\frac{\left(2\pi n\right)^{2}}{\left\vert K''\left(0\right)\right\vert}\right)\boldsymbol{e}_{1}\left(n\right)\boldsymbol{e}_{1}^{H}\left(n\right)\right\Vert\\
&\le\frac{1}{M}K_{2}\left(1+\max_{\left\vert n\right\vert\le2M}\frac{\left(2\pi n\right)^{2}}{\left\vert K''\left(0\right)\right\vert}\right)   \left( \max_{\left\vert n\right\vert\le2M} \left\vert s_n\right\vert\right)   \left\Vert\sum_{n=-2M}^{2M}\frac{1}{M}s_{n}\boldsymbol{e}_{1}\left(n\right)\boldsymbol{e}_{1}^{H}\left(n\right)\right\Vert\\
&\le \frac{14}{M}K_{2}\left\Vert\boldsymbol{W}_{1}\right\Vert\le 20\frac{K_{2}}{M},
\end{split}
\end{equation*}
where the last inequality follows from \eqref{single_W}. Similarly we can obtain $\left\Vert\sum_{n=-2M}^{2M}\mathbb{E}\left[\boldsymbol{E}_{n}^{H}\boldsymbol{E}_{n}\right]\right\Vert \le 20\frac{K_{1}}{M}$. Hence,
\begin{equation*}
\sigma^{2}=\max\left\{\left\Vert\sum_{n}\mathbb{E}\left[\boldsymbol{E}_{n}\boldsymbol{E}_{n}^{H}\right]\right\Vert, \left\Vert\sum_{n}\mathbb{E}\left[\boldsymbol{E}_{n}^{H}\boldsymbol{E}_{n}\right]\right\Vert\right\} =\frac{20}{M}K_{\max}.
\end{equation*}

Apply Lemma~\ref{bernstein}, for $0<\delta<0.6376$, then we have
\begin{equation}
\begin{split}
\mathbb{P}\{\left\Vert\boldsymbol{W}_{g}\right\Vert\ge \delta\}&\le 2\left(K_{1}+K_{2}\right)\exp\left(\frac{-\delta^{2}/2}{\frac{20}{M}K_{\max} +  \frac{14\delta}{3M}\sqrt{K_{1}K_{2}}}\right)\\
&\le 2\left(K_{1}+K_{2}\right)\exp\left(-\frac{\delta^{2}M}{46K_{\max}}\right) \leq \eta,
\end{split}
\end{equation}
if  $M\ge \frac{46}{\delta^{2}}K_{\max}\log\left(\frac{2\left(K_{1}+K_{2}\right)}{\eta}\right)$.
\end{proof}

\section{Proof of Lemma~\ref{bound_inver}}\label{proof_bound_inver}
\begin{proof}
For both invertible $\boldsymbol{A}$ and $\boldsymbol{B}$ that satisfy $\left\Vert\boldsymbol{A}-\boldsymbol{B}\right\Vert\left\Vert\boldsymbol{B}^{-1}\right\Vert\le 1/2$, it has \cite{tang2014CSoffgrid}
\begin{equation*}
\left\Vert\boldsymbol{A}^{-1}\right\Vert\le 2\left\Vert\boldsymbol{B}^{-1}\right\Vert,\quad \mathrm{and}\quad \left\Vert\boldsymbol{A}^{-1}-\boldsymbol{B}^{-1}\right\Vert\le2\left\Vert\boldsymbol{B}^{-1}\right\Vert^{2}\left\Vert\boldsymbol{A}-\boldsymbol{B}\right\Vert.
\end{equation*}
Applying the above to $\boldsymbol{A} = \boldsymbol{W}$ and $\boldsymbol{B} = \boldsymbol{W}_{\mu}$, from \eqref{inv_W}, we have $\left\Vert\boldsymbol{W}_{\mu}^{-1}\right\Vert\le 1.568$. Under the event $\mathcal{E}_\delta$, $\left\Vert\boldsymbol{W}-\boldsymbol{W}_{\mu}\right\Vert= \left\Vert\boldsymbol{W}_{g}\right\Vert\le\delta$. Therefore as soon as $\delta\le \frac{1}{4}\le\frac{1}{2\left\Vert\boldsymbol{W}_{\mu}^{-1}\right\Vert}$\footnote{This choice of $\delta$ is not unique but good enough for our purpose.}, we have
\begin{align*}
\left\Vert\boldsymbol{W}^{-1}\right\Vert & \le 2\left\Vert\boldsymbol{W}_{\mu}^{-1}\right\Vert, \\
\left\Vert\boldsymbol{W}^{-1}-\boldsymbol{W}_{\mu}^{-1}\right\Vert & \le 2\left\Vert\boldsymbol{W}_{\mu}^{-1}\right\Vert^{2}\left\Vert\boldsymbol{W}-\boldsymbol{W}_{\mu}\right\Vert\le 2\left\Vert\boldsymbol{W}_{\mu}^{-1}\right\Vert^{2}\delta.
\end{align*}
Finally, because the operator norm of a matrix dominates that of its submatrices, we have
\begin{align*}
&\left\Vert\boldsymbol{L}_{i}-\boldsymbol{L}_{\mu i}\right\Vert\le 2\left\Vert\boldsymbol{W}_{\mu}^{-1}\right\Vert^{2}\delta, \quad \mathrm{for}\quad i=1,2,\\
&\left\Vert\boldsymbol{L}_{g}\right\Vert\le 2\left\Vert\boldsymbol{W}_{\mu}^{-1}\right\Vert^{2}\delta, \\ &\left\Vert\boldsymbol{L}_{\bar{g}}\right\Vert\le 2\left\Vert\boldsymbol{W}_{\mu}^{-1}\right\Vert^{2}\delta, \end{align*}
and $\left\Vert\boldsymbol{L}_{i}\right\Vert\le \left\Vert\boldsymbol{L}_{i} - \boldsymbol{L}_{\mu i}\right\Vert +\left\Vert\boldsymbol{L}_{\mu i}\right\Vert  \le2\left\Vert\boldsymbol{W}_{\mu}^{-1}\right\Vert^{2}\delta + \left\Vert\boldsymbol{W}_{\mu}^{-1}\right\Vert\leq   2\left\Vert\boldsymbol{W}_{\mu}^{-1}\right\Vert$ for $ i=1,2$ where we have used $\left\Vert\boldsymbol{W}_{\mu}^{-1}\right\Vert\leq 1.568$ and $\delta\leq 1/4$.
\end{proof}

\section{Proof of Proposition~\ref{bound_continuous}}\label{proof_bound_continuous}

\begin{proof}
Conditioned on the event $\mathcal{E}_{\delta}$ with $\delta\in(0,1/4]$, we have
\begin{align}
&\quad \left\vert\frac{1}{\sqrt{\left\vert K''\left(0\right)\right\vert}^{l}}P^{\left(l\right)}\left(\tau\right)\right\vert \nonumber \\
&\le \left\vert\langle\boldsymbol{u}_{1},\boldsymbol{L}_{1}^{H}\boldsymbol{v}_{1l}\left(\tau\right)\rangle\right\vert + \left\vert\langle\boldsymbol{u}_{2},\boldsymbol{L}_{g}^{H}\boldsymbol{v}_{1l}\left(\tau\right)\rangle\right\vert + \left\vert\langle\boldsymbol{u}_{1},\boldsymbol{L}_{\bar{g}}^{H}\boldsymbol{v}_{2l}\left(\tau\right)\rangle\right\vert + \left\vert\langle\boldsymbol{u}_{2},\boldsymbol{L}_{2}^{H}\boldsymbol{v}_{2l}\left(\tau\right)\rangle\right\vert \nonumber\\
&\le\left\Vert\bu_{1}\right\Vert_{2}\left\Vert\boldsymbol{L}_{1} \right\Vert\left\Vert\bv_{1l}\left(\tau\right)\right\Vert_{2} + \left\Vert\bu_{2}\right\Vert_{2}\left\Vert\boldsymbol{L}_{g}\right\Vert\left\Vert\bv_{1l}\left(\tau\right)\right\Vert_{2} + \left\Vert\bu_{1}\right\Vert_{2}\left\Vert\boldsymbol{L}_{\bar{g}}\right\Vert\left\Vert\bv_{2l}\left(\tau\right)\right\Vert_{2} + \left\Vert\bu_{2}\right\Vert_{2}\left\Vert\boldsymbol{L}_{2}\right\Vert\left\Vert\bv_{2l}\left(\tau\right)\right\Vert_{2} \nonumber\\
&\le \sqrt{K_{1}}\cdot 2\left\Vert\boldsymbol{W}_{\mu}^{-1}\right\Vert\cdot \left(4M+1\right)\frac{1}{M}4^{l+1}\sqrt{K_{1}} + \sqrt{K_{2}}\cdot 0.8\left\Vert\boldsymbol{W}_{\mu}^{-1}\right\Vert\cdot \left(4M+1\right)\frac{1}{M}4^{l+1}\sqrt{K_{1}}\nonumber\\
&\quad + \sqrt{K_{1}}\cdot 0.8\left\Vert\boldsymbol{W}_{\mu}^{-1}\right\Vert\cdot \left(4M+1\right)\frac{1}{M}4^{l+1}\sqrt{K_{2}} + \sqrt{K_{2}}\cdot 2\left\Vert\boldsymbol{W}_{\mu}^{-1}\right\Vert\cdot \left(4M+1\right)\frac{1}{M}4^{l+1}\sqrt{K_{2}} \label{boundeq}\\
&\le C \left( {K_{1}}+ {K_{2}}\right) ,\nonumber
\end{align}
for some universal constant $C$. In \eqref{boundeq}, we applied Lemma~\ref{bound_inver}, $\left\Vert\bu_{i}\right\Vert_{2}=\sqrt{K_{i}}$, for $i=1,2$, and
\begin{equation}\label{bound_v1}
\begin{split}
\left\Vert\bv_{1l}\left(\tau\right)\right\Vert_{2}&=\left\Vert\frac{1}{M}\sum_{n=-2M}^{2M}s_{n} \left(\frac{-j2\pi n}{\sqrt{\left\vert K''\left(0\right)\right\vert}}\right)^{l} e^{-j2\pi n\tau}\boldsymbol{e}_{1}\left(n\right)\right\Vert_{2}\\
&\le\frac{1}{M}\left(4M+1\right)  \left(  \max_{\left\vert n\right\vert\le2M} \left\vert s_n\right\vert \right)  \left(\max_{\left\vert n\right\vert\le 2M}\left\vert\frac{j2\pi n}{\sqrt{\left\vert K''\left(0\right)\right\vert}}\right\vert^{l}\right) \left(\max_{\left\vert n\right\vert\le 2M} \left\Vert\boldsymbol{e}_{1}\left(n\right)\right\Vert_{2}\right) \\
&\le \frac{1}{M}\left(4M+1\right)4^{l}\sqrt{K_{1}}\max_{\left\vert n\right\vert\le 2M}\sqrt{1+\frac{\left(2\pi n\right)^{2}}{\left\vert K''\left(0\right)\right\vert}}\\
&\le \frac{1}{M}\left(4M+1\right)4^{l}\sqrt{14K_{1}},
\end{split}
\end{equation}
and similarly 
\begin{equation*}
\left\Vert\bv_{2l}\left(\tau\right)\right\Vert_{2}\le \frac{1}{M}\left(4M+1\right)4^{l}\sqrt{14K_{2}}.
\end{equation*}
In \eqref{bound_v1} we have used $ \max_{\left\vert n\right\vert\le2M} \left\vert s_n\right\vert \leq 1$, $\max_{\left\vert n\right\vert\le 2M} \left\vert\frac{j2\pi n}{\sqrt{\left\vert K''\left(0\right)\right\vert}}\right\vert  \le 4$, for $M\ge 2$ and $\left(1+\max_{\left\vert n\right\vert\le2M}\frac{\left(2\pi n\right)^{2}}{\left\vert K''\left(0\right)\right\vert}\right)\le 14$, for $M\ge 4$. 

Using Lemma~\ref{bernsteins_poly_inequality}, we have
\begin{align*}
\left\vert\frac{1}{\sqrt{\left\vert K''\left(0\right)\right\vert}^{l}}P^{\left(l\right)}\left(\tau_{a}\right) - \frac{1}{\sqrt{\left\vert K''\left(0\right)\right\vert}^{l}}P^{\left(l\right)}\left(\tau_{b}\right)\right\vert
&\le \left\vert e^{j2\pi\tau_{a}}-e^{j2\pi\tau_{b}}\right\vert\sup_{\tau\in\left[0,1\right]}\left\vert\frac{\partial \frac{1}{\sqrt{\left\vert K''\left(0\right)\right\vert}^{l}}P^{\left(l\right)}\left(e^{j2\pi\tau}\right)}{\partial e^{j2\pi\tau}}\right\vert\\
&\le \left\vert e^{j2\pi\tau_{a}}-e^{j2\pi\tau_{b}}\right\vert\cdot 2M\sup_{\tau\in\left[0,1\right]}\left\vert\frac{1}{\sqrt{\left\vert K''\left(0\right)\right\vert}^{l}}P^{\left(l\right)}\left(\tau\right)\right\vert\\
&\le 4\pi\left\vert\tau_{a}-\tau_{b}\right\vert\cdot 2M\cdot C \left( {K_{1}}+ {K_{2}}\right).
\end{align*}
Note that similar bounds also hold for $P_{\mu}^{(l)}(\tau)$. Conditioned on the event $\mathcal{E}_{\delta}\cap\mathcal{E}_{1}$ with $\delta\in(0,1/4]$, we have 
\begin{align*}
&\quad\left\vert\frac{1}{\sqrt{\left\vert K''\left(0\right)\right\vert}^{l}}P^{\left(l\right)}\left(\tau\right) - \frac{1}{\sqrt{\left\vert K''\left(0\right)\right\vert}^{l}}P_{\mu}^{\left(l\right)}\left(\tau\right)\right\vert\\
&\le \left\vert\frac{1}{\sqrt{\left\vert K''\left(0\right)\right\vert}^{l}}P^{\left(l\right)}\left(\tau\right) - \frac{1}{\sqrt{\left\vert K''\left(0\right)\right\vert}^{l}}P^{\left(l\right)}\left(\tau_{d}\right)\right\vert+\left\vert\frac{1}{\sqrt{\left\vert K''\left(0\right)\right\vert}^{l}}P^{\left(l\right)}\left(\tau_{d}\right) - \frac{1}{\sqrt{\left\vert K''\left(0\right)\right\vert}^{l}}P_{\mu}^{\left(l\right)}\left(\tau_{d}\right)\right\vert\\
&\quad+\left\vert\frac{1}{\sqrt{\left\vert K''\left(0\right)\right\vert}^{l}}P_{\mu}^{\left(l\right)}\left(\tau_{d}\right) - \frac{1}{\sqrt{\left\vert K''\left(0\right)\right\vert}^{l}}P_{\mu}^{\left(l\right)}\left(\tau\right)\right\vert\\
&\le 4\pi\left\vert\tau-\tau_{d}\right\vert\cdot 2M\cdot C \left( {K_{1}}+ {K_{2}}\right)  + \frac{\epsilon}{3} + 4\pi\left\vert\tau_{d}-\tau\right\vert\cdot 2M\cdot C \left( {K_{1}}+ {K_{2}}\right),
\end{align*}
for any $\tau\in\left[0,1\right]$, where $\tau_{d}\in \Upsilon_{\mathrm{grid}}$. By setting the grid size $\left\vert\Upsilon_{\mathrm{grid}}\right\vert=\left\lceil \frac{24\pi CM\left( {K_{1}}+ {K_{2}}\right)}{\epsilon} \right\rceil$, we have $|\tau_d -\tau|\leq\frac{\epsilon}{24\pi CM (K_1 + K_2)}$, which yields
$$ \left\vert\frac{1}{\sqrt{\left\vert K''\left(0\right)\right\vert}^{l}}P^{\left(l\right)}\left(\tau\right) - \frac{1}{\sqrt{\left\vert K''\left(0\right)\right\vert}^{l}}P_{\mu}^{\left(l\right)}\left(\tau\right)\right\vert \leq  {\epsilon} . $$
By plugging the grid size and modifying the condition on $M$, the proof is complete.
\end{proof}

\section{Proof of Proposition~\ref{optimal_cond_noise}}\label{proof_optimal_cond_noise}
\begin{proof}
Denote $f\left(\bx_{1},\bx_{2}\right)= \frac{1}{2}\left\Vert\by-\bx_{1}-\bg\odot\bx_{2}\right\Vert_{2}^{2}+\lambda_{w}\left(\left\Vert\bx_{1}\right\Vert_{\mathcal{A}}+\left\Vert\bx_{2}\right\Vert_{\mathcal{A}}\right)$ as the objective function of \eqref{algorithm_noisy_model}. Since $\left\{\hat{\bx}_{1},\hat{\bx}_{2}\right\}$ is the minimizer of \eqref{algorithm_noisy_model}, for all $\alpha_{w}\in\left(0,1\right]$ and all $\left\{\tilde{\bx}_{1},\tilde{\bx}_{2}\right\}$, we have
\begin{equation*}
f\left(\alpha_{w}\tilde{\bx}_{1}+\left(1-\alpha_{w}\right)\hat{\bx}_{1},\alpha_{w}\tilde{\bx}_{2}+\left(1-\alpha_{w}\right)\hat{\bx}_{2}\right)\ge f\left(\hat{\bx}_{1},\hat{\bx}_{2}\right).
\end{equation*}
This is equivalent to the following
\begin{equation*}
\begin{split}
&\alpha_{w}^{-1}\lambda_{w}\left(\left\Vert \hat{\bx}_{1} + \alpha_{w}(\tilde{\bx}_{1}- \hat{\bx}_{1}  )\right\Vert_{\mathcal{A}}-\left\Vert\hat{\bx}_{1}\right\Vert_{\mathcal{A}}\right)+\alpha_{w}^{-1}\lambda_{w}\left(\left\Vert\hat{\bx}_{2}+\alpha_{w}\left(\tilde{\bx}_{2}-\hat{\bx}_{2}\right) \right\Vert_{\mathcal{A}}-\left\Vert\hat{\bx}_{2}\right\Vert_{\mathcal{A}}\right)\\
&\ge \langle\by-\left(\hat{\bx}_{1}+\bg\odot\hat{\bx}_{2}\right),\left(\tilde{\bx}_{1}-\hat{\bx}_{1}\right)+\bg\odot\left(\tilde{\bx}_{2}-\hat{\bx}_{2}\right)\rangle_{\mathbb{R}}-\frac{1}{2}\alpha_{w}\left\Vert\hat{\bx}_{1}-\tilde{\bx}_{1} +\bg\odot (\hat{\bx}_{2}- \tilde{\bx}_{2}) \right\Vert_{2}^{2}.
\end{split}
\end{equation*}

As the atomic norm $\left\Vert\cdot\right\Vert_{\mathcal{A}}$ is convex, the following inequalities hold:
\begin{equation*}
\begin{split}
&\left\Vert\tilde{\bx}_{1}\right\Vert_{\mathcal{A}}-\left\Vert\hat{\bx}_{1}\right\Vert_{\mathcal{A}}\ge\alpha_{w}^{-1}\left(\left\Vert\hat{\bx}_{1}+\alpha_{w}\left(\tilde{\bx}_{1}-\hat{\bx}_{1}\right)\right\Vert_{\mathcal{A}}-\left\Vert\hat{\bx}_{1}\right\Vert_{\mathcal{A}}\right), \\
&\left\Vert\tilde{\bx}_{2}\right\Vert_{\mathcal{A}}-\left\Vert\hat{\bx}_{2}\right\Vert_{\mathcal{A}}\ge\alpha_{w}^{-1}\left(\left\Vert\hat{\bx}_{2}+\alpha_{w}\left(\tilde{\bx}_{2}-\hat{\bx}_{2}\right)\right\Vert_{\mathcal{A}}-\left\Vert\hat{\bx}_{2}\right\Vert_{\mathcal{A}}\right), \\
\end{split}
\end{equation*}
which can be plugged into the previous inequality to obtain
\begin{equation*}
\begin{split}
&\lambda_{w}\left(\left\Vert\tilde{\bx}_{1}\right\Vert_{\mathcal{A}}+\left\Vert\tilde{\bx}_{2}\right\Vert_{\mathcal{A}}-\left\Vert\hat{\bx}_{1}\right\Vert_{\mathcal{A}}-\left\Vert\hat{\bx}_{2}\right\Vert_{\mathcal{A}}\right)\\
&\ge \langle\by-\left(\hat{\bx}_{1}+\bg\odot\hat{\bx}_{2}\right),\left(\tilde{\bx}_{1}-\hat{\bx}_{1}\right)+\bg\odot\left(\tilde{\bx}_{2}-\hat{\bx}_{2}\right)\rangle_{\mathbb{R}}-\frac{1}{2}\alpha_{w}\left\Vert\hat{\bx}_{1}-\tilde{\bx}_{1}+\bg\odot\left(\hat{\bx}_{2}-\tilde{\bx}_{2}\right)\right\Vert_{2}^{2}.
\end{split}
\end{equation*}
\reviseA{
Set $\alpha_{w}\to 0$, we can obtain that $\left\{\hat{\bx}_{1},\hat{\bx}_{2}\right\}$ is the minimizer of \eqref{algorithm_noisy_model} only if for all $\left\{\tilde{\bx}_{1},\tilde{\bx}_{2}\right\}$, there exists
\begin{equation}\label{equ_iff_optimalcond_noisy}
\lambda_{w}\left(\left\Vert\tilde{\bx}_{1}\right\Vert_{\mathcal{A}}+\left\Vert\tilde{\bx}_{2}\right\Vert_{\mathcal{A}}-\left\Vert\hat{\bx}_{1}\right\Vert_{\mathcal{A}}-\left\Vert\hat{\bx}_{2}\right\Vert_{\mathcal{A}}\right)\ge \langle\by-\left(\hat{\bx}_{1}+\bg\odot\hat{\bx}_{2}\right),\left(\tilde{\bx}_{1}-\hat{\bx}_{1}\right)+\bg\odot\left(\tilde{\bx}_{2}-\hat{\bx}_{2}\right)\rangle_{\mathbb{R}}.
\end{equation}

On the other hand, if \eqref{equ_iff_optimalcond_noisy} holds for all $\left\{\tilde{\bx}_{1},\tilde{\bx}_{2}\right\}$, we have
\begin{align*}
f\left(\tilde{\bx}_{1}, \tilde{\bx}_{2}\right)
& = \frac{1}{2}\left\Vert\by-\tilde{\bx}_{1}-\bg\odot\tilde{\bx}_{2}\right\Vert_{2}^{2}+\lambda_{w}\left(\left\Vert\tilde{\bx}_{1}\right\Vert_{\mathcal{A}}+\left\Vert\tilde{\bx}_{2}\right\Vert_{\mathcal{A}}\right)\\
& = \frac{1}{2} \left\Vert \by - \hat{\bx}_{1} - \bg\odot\hat{\bx}_{2}  +  \hat{\bx}_{1} + \bg\odot\hat{\bx}_{2} - \tilde{\bx}_{1} - \bg\odot \tilde{\bx}_{2} \right\Vert_{2}^{2}  +  \lambda_{w} \left(\left\Vert\hat{\bx}_{1}\right\Vert_{\mathcal{A}}+\left\Vert\hat{\bx}_{2}\right\Vert_{\mathcal{A}}\right) \\
& \quad + \lambda_{w}\left(\left\Vert\tilde{\bx}_{1}\right\Vert_{\mathcal{A}}+\left\Vert\tilde{\bx}_{2}\right\Vert_{\mathcal{A}} -  \left\Vert\hat{\bx}_{1}\right\Vert_{\mathcal{A}} - \left\Vert\hat{\bx}_{2}\right\Vert_{\mathcal{A}} \right)\\
& = \frac{1}{2}\left\Vert\by-\hat{\bx}_{1}-\bg\odot\hat{\bx}_{2}\right\Vert_{2}^{2} +  \lambda_{w} \left(\left\Vert\hat{\bx}_{1}\right\Vert_{\mathcal{A}}+\left\Vert\hat{\bx}_{2}\right\Vert_{\mathcal{A}}\right) + \frac{1}{2} \left\Vert   \hat{\bx}_{1} + \bg\odot\hat{\bx}_{2} - \tilde{\bx}_{1} - \bg\odot \tilde{\bx}_{2} \right\Vert_{2}^{2}\\
&\quad + \langle \by-\hat{\bx}_{1}-\bg\odot\hat{\bx}_{2},   \hat{\bx}_{1} + \bg\odot\hat{\bx}_{2} - \tilde{\bx}_{1} - \bg\odot \tilde{\bx}_{2}  \rangle_{\mathbb{R}}  + \lambda_{w}\left(\left\Vert\tilde{\bx}_{1}\right\Vert_{\mathcal{A}}+\left\Vert\tilde{\bx}_{2}\right\Vert_{\mathcal{A}} -  \left\Vert\hat{\bx}_{1}\right\Vert_{\mathcal{A}} - \left\Vert\hat{\bx}_{2}\right\Vert_{\mathcal{A}} \right)\\
& \ge f\left(\hat{\bx}_{1}, \hat{\bx}_{2}\right) + \frac{1}{2} \left\Vert   \hat{\bx}_{1} + \bg\odot\hat{\bx}_{2} - \tilde{\bx}_{1} - \bg\odot \tilde{\bx}_{2} \right\Vert_{2}^{2}\\
& \ge f\left(\hat{\bx}_{1}, \hat{\bx}_{2}\right).
\end{align*}
Therefore, \eqref{equ_iff_optimalcond_noisy} holds if and only if $\left\{\hat{\bx}_{1},\hat{\bx}_{2}\right\}$ is the minimizer of \eqref{algorithm_noisy_model}.

Furthermore, we can rewrite \eqref{equ_iff_optimalcond_noisy} by moving all the terms containing $\left\{\tilde{\bx}_{1},\tilde{\bx}_{2}\right\}$ onto one side as
\begin{align}
& \lambda_{w}\left(\left\Vert\hat{\bx}_{1}\right\Vert_{\mathcal{A}}+\left\Vert\hat{\bx}_{2}\right\Vert_{\mathcal{A}}\right) -  \langle\by-\left(\hat{\bx}_{1}+\bg\odot\hat{\bx}_{2}\right),\hat{\bx}_{1}+\bg\odot\hat{\bx}_{2}\rangle_{\mathbb{R}} \nonumber\\
& \le \lambda_{w} \left\Vert\tilde{\bx}_{1}\right\Vert_{\mathcal{A}} -   \langle\by-\left(\hat{\bx}_{1}+\bg\odot\hat{\bx}_{2}\right),\tilde{\bx}_{1}\rangle_{\mathbb{R}} 
+ \lambda_{w}\left\Vert\tilde{\bx}_{2}\right\Vert_{\mathcal{A}} -   \langle\by-\left(\hat{\bx}_{1}+\bg\odot\hat{\bx}_{2}\right),  \bg\odot\tilde{\bx}_{2}\rangle_{\mathbb{R}}. \label{equ_iff_optimalcond_noisy_rewrite}
\end{align}
Since \eqref{equ_iff_optimalcond_noisy_rewrite} holds for all $\left\{\tilde{\bx}_{1},\tilde{\bx}_{2}\right\}$, \eqref{equ_iff_optimalcond_noisy_rewrite} still holds if taking infimum on the right-hand side with respect to $\left\{\tilde{\bx}_{1},\tilde{\bx}_{2}\right\}$. That is 
\begin{align*}
&\lambda_{w}\left(\left\Vert\hat{\bx}_{1}\right\Vert_{\mathcal{A}}+\left\Vert\hat{\bx}_{2}\right\Vert_{\mathcal{A}}\right)-\langle\by-\left(\hat{\bx}_{1}+\bg\odot\hat{\bx}_{2}\right),\hat{\bx}_{1}+\bg\odot\hat{\bx}_{2}\rangle_{\mathbb{R}}\\
&\le \inf_{\tilde{\bx}_{1}}\left\{ \lambda_{w}\left\Vert\tilde{\bx}_{1}\right\Vert_{\mathcal{A}}-\langle\by-\left(\hat{\bx}_{1}+\bg\odot\hat{\bx}_{2}\right),\tilde{\bx}_{1}\rangle_{\mathbb{R}} \right\} + \inf_{\tilde{\bx}_{2}}\left\{ \lambda_{w}\left\Vert\tilde{\bx}_{2}\right\Vert_{\mathcal{A}}-\langle\by-\left(\hat{\bx}_{1}+\bg\odot\hat{\bx}_{2}\right),\bg\odot\tilde{\bx}_{2}\rangle_{\mathbb{R}}\right\}.
\end{align*}
}
Plugging in the facts that 
\begin{equation*}
\inf_{\tilde{\boldsymbol{x}}_{i}}\{\left\Vert\tilde{\boldsymbol{x}}_{i}\right\Vert_{\mathcal{A}} -\langle\by-\left(\hat{\bx}_{1}+\bg\odot\hat{\bx}_{2}\right),\tilde{\bx}_{i}\rangle_{\mathbb{R}}\}= \left\{\begin{array}{cc} 
0 , &  \left\Vert\by-\left(\hat{\bx}_{1}+\bg\odot\hat{\bx}_{2}\right)\right\Vert_{\mathcal{A}}^{\star}\leq1 \\
-\infty, & \mathrm{otherwise}
\end{array}\right. ,\mathrm{for}\ i=1,2,
\end{equation*}
we have
\begin{equation*}
\langle\by-\left(\hat{\bx}_{1}+\bg\odot\hat{\bx}_{2}\right),\hat{\bx}_{1}+\bg\odot\hat{\bx}_{2}\rangle_{\mathbb{R}}\ge\lambda_{w}\left\Vert\hat{\bx}_{1}\right\Vert_{\mathcal{A}}+\lambda_{w}\left\Vert\hat{\bx}_{2}\right\Vert_{\mathcal{A}},
\end{equation*}
as well as
\begin{equation*}
\left\Vert\by-\left(\hat{\bx}_{1}+\bg\odot\hat{\bx}_{2}\right)\right\Vert_{\mathcal{A}}^{\star}\le \lambda_{w},\ \mathrm{and}\  \left\Vert\bar{\bg}\odot\left(\by-\left(\hat{\bx}_{1}+\bg\odot\hat{\bx}_{2}\right)\right)\right\Vert_{\mathcal{A}}^{\star}\le \lambda_{w}.
\end{equation*}
\end{proof}

\section{Proof of Proposition~\ref{bound_error}} \label{proof_bound_error}
\begin{proof}

We first record a useful lemma from \cite{tang2013near}.
\begin{lemma}\cite[Lemma 1]{tang2013near}\label{bound_trig}
For any $2m$th-order trigonometric polynomial $X(\tau) =\langle \bx, \boldsymbol{c}(\tau)\rangle$, we have  
$$ \int_0^1 X(\tau)\nu_i(d\tau) \leq \| \bx\|_{\mathcal{A}}^{\star} \left(    \left\| P_{\Upsilon^{i}_{\mathrm{far}}}(\nu_i)\right\|_{TV}   + \sum_{j=0}^2 I_{i,j} \right), $$
for $i=1,2$, where $P_{A}\left(\nu_{i}\right)$ denote the projection of the measure $\nu_{i}$ on the support set $A$.
\end{lemma}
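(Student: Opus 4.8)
The plan is to combine the trivial pointwise bound $\sup_{\tau\in[0,1)}|X(\tau)|\le\|\bx\|_{\mathcal{A}}^{\star}$, which is immediate from the definition of the dual atomic norm, with a second-order Taylor expansion of $X$ localized at each source $\tau_{ik}$. First I would split the integral along the partition of $[0,1)$ into the near and far regions introduced in \eqref{def_upsilon}:
\[
\int_0^1 X(\tau)\,\nu_i(d\tau) = \int_{\Upsilon_{\mathrm{far}}^i} X(\tau)\,\nu_i(d\tau) + \sum_{k=1}^{K_i}\int_{\Upsilon_{\mathrm{near}}^{i,k}} X(\tau)\,\nu_i(d\tau).
\]
On $\Upsilon_{\mathrm{far}}^i$, bounding $|X|$ by $\|\bx\|_{\mathcal{A}}^{\star}$ gives $\bigl|\int_{\Upsilon_{\mathrm{far}}^i} X\,d\nu_i\bigr|\le\|\bx\|_{\mathcal{A}}^{\star}\,|\nu_i|(\Upsilon_{\mathrm{far}}^i)=\|\bx\|_{\mathcal{A}}^{\star}\,\|P_{\Upsilon_{\mathrm{far}}^i}(\nu_i)\|_{TV}$, which is precisely the first term on the right-hand side; the separation hypothesis $\Delta\ge 1/M$ is used only to guarantee that the intervals $\Upsilon_{\mathrm{near}}^{i,k}$ are pairwise disjoint, since their half-width $\tau_s=8.245\times 10^{-2}/M$ is smaller than $\Delta/2$.

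For each near interval I would expand $X$ about $\tau_{ik}$ in integral-remainder form,
\[
X(\tau)=X(\tau_{ik})+X'(\tau_{ik})(\tau-\tau_{ik})+\int_{\tau_{ik}}^{\tau}(\tau-s)\,X''(s)\,ds,
\]
which holds verbatim for complex-valued $X$ and therefore sidesteps any mean-value-theorem subtlety. Integrating against $\nu_i$ over $\Upsilon_{\mathrm{near}}^{i,k}$ and applying the triangle inequality produces exactly three contributions: a constant times $\bigl|\int_{\Upsilon_{\mathrm{near}}^{i,k}}\nu_i(d\tau)\bigr|$, a constant times $\bigl|\int_{\Upsilon_{\mathrm{near}}^{i,k}}(\tau-\tau_{ik})\,\nu_i(d\tau)\bigr|$, and a constant times $\int_{\Upsilon_{\mathrm{near}}^{i,k}}(\tau-\tau_{ik})^2\,|\nu_i|(d\tau)$. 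The constants are $|X(\tau_{ik})|$, $|X'(\tau_{ik})|$, and $\tfrac{1}{2}\sup_\tau|X''(\tau)|$; I would control the last two via Bernstein's inequality for trigonometric polynomials, noting that $X$ has degree $2M$ (its frequencies lie in $\{-2M,\dots,2M\}$), so $\sup_\tau|X^{(l)}(\tau)|\le (4\pi M)^l\sup_\tau|X(\tau)|\le (4\pi M)^l\|\bx\|_{\mathcal{A}}^{\star}$. Multiplying and dividing by the normalizing powers of $(4M+1)$ that appear in the definitions of $I_{i,1}^{k}$ and $I_{i,2}^{k}$ then matches these three bounds with $\|\bx\|_{\mathcal{A}}^{\star}I_{i,0}^{k}$, $\|\bx\|_{\mathcal{A}}^{\star}I_{i,1}^{k}$, $\|\bx\|_{\mathcal{A}}^{\star}I_{i,2}^{k}$, up to a fixed numerical factor (since $4\pi M/(4M+1)$ is bounded by an absolute constant).

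Summing over $k=1,\dots,K_i$ and adding the far-region bound yields $\bigl|\int_0^1 X\,d\nu_i\bigr|\le C\,\|\bx\|_{\mathcal{A}}^{\star}\bigl(\|P_{\Upsilon_{\mathrm{far}}^i}(\nu_i)\|_{TV}+\sum_{j=0}^2 I_{i,j}\bigr)$ with $C$ an absolute constant, which is the asserted inequality; this constant is harmless for the downstream estimates and, if desired, can be absorbed into the definitions of $I_{i,j}$. There is no probabilistic content here --- it is a purely deterministic interpolation estimate, and the argument mirrors \cite[Lemma 1]{tang2013near}. The only points requiring a little care are the integral form of Taylor's theorem for complex-valued $X$ and the bookkeeping of which terms carry $\nu_i$ versus the total-variation measure $|\nu_i|$ (the linear term keeps $\nu_i$, matching $I_{i,1}^{k}$, while the quadratic remainder must be estimated against $|\nu_i|$, matching $I_{i,2}^{k}$); neither is a genuine obstacle.
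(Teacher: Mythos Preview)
Your proposal is correct and follows exactly the standard argument; the paper itself does not prove this lemma but merely cites it from \cite[Lemma~1]{tang2013near}, and your sketch reproduces that argument faithfully. The only caveat, which you already flag, is that the Bernstein bounds produce a fixed numerical constant in front (coming from $4\pi M/(4M+1)$ and its square), so strictly speaking you obtain the inequality up to an absolute constant; this is indeed harmless for every downstream use in the paper.
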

Setting $X(\tau)=\langle \be_i , \bc(\tau)\rangle$ for $i=1,2$ in Lemma~\ref{bound_trig}, we obtain
\begin{align*}
\| \be_i\|_2^2  =\left\langle \be_i,  \int_{0}^{1}\bc\left(\tau\right)\nu_{i}\left(d\tau\right) \right\rangle & = \int_{0}^{1}\left\langle \be_i, \bc\left(\tau\right) \right\rangle \nu_{i}\left(d\tau\right)  \\
& \leq \| \be_i \|_{\mathcal{A}}^{\star} \left(    \left\| P_{\Upsilon^{i}_{\mathrm{far}}}(\nu_i)\right\|_{TV}   + \sum_{j=0}^2 I_{i,j} \right) \\
& \leq  \sqrt{4M+1} \|\be_i\|_2 \left(    \left\| P_{\Upsilon^{i}_{\mathrm{far}}}(\nu_i)\right\|_{TV}   + \sum_{j=0}^2 I_{i,j} \right),
\end{align*}
where we used the fact \reviseA{$\| \be_i \|_{\mathcal{A}}^{\star}  = \sup_{\tau\in[0,1)} \left\vert \langle \be_{i},  \boldsymbol{c}\left(\tau\right) \rangle \right\vert \le \left\Vert \be_{i} \right\Vert_{2}  \left\Vert  \boldsymbol{c}\left(\tau\right) \right\Vert_{2} =  \sqrt{4M+1}\|\be_i\|_2$ following the Cauchy-Schwarz inequality}. This yields the estimation error of $\bx_{i}^{\star}$ in \eqref{estimation_error}. For the denoising error, first notice that,
\begin{align}
\left\Vert \be_{1}+\bg\odot\be_{2}  \right\Vert_{\cA}^{\star} 
&\le \left\Vert\by-\bx_{1}^{\star}-\bg\odot\bx_{2}^{\star}\right\Vert_{\mathcal{A}}^{\star} + \left\Vert\by-\hat{\bx}_{1}-\bg\odot\hat{\bx}_{2}\right\Vert_{\mathcal{A}}^{\star} \nonumber \\
&\le\left\Vert\boldsymbol{w}\right\Vert_{\mathcal{A}}^{\star}+\lambda_{w} \label{from_optimality}\\
&\le  2\lambda_{w}, \label{noise_atomic_bound}
\end{align}
where \eqref{from_optimality} follows from Proposition~\ref{optimal_cond_noise}, and \eqref{noise_atomic_bound} follows from \reviseA{$\left\Vert \bw \right\Vert_{\mathcal{A}}^{\star} = \sup_{\tau\in[0,1)} \left\vert \langle \bw,  \boldsymbol{c}\left(\tau\right) \rangle \right\vert \le \left\Vert \bw \right\Vert_{2}  \left\Vert  \boldsymbol{c}\left(\tau\right) \right\Vert_{2} \le \sigma_{w}\sqrt{4M+1}  = \lambda_{w}/C_{w}  \le \lambda_{w}$. Similarly, we have $\left\Vert \bar{\bg}\odot\bw \right\Vert_{\mathcal{A}}^{\star} \le \lambda_{w}/C_{w}  $ and consequently $\left\Vert\bar{\bg}\odot\be_{1}+\be_{2}\right\Vert_{\mathcal{A}}^{\star}\leq 2\lambda_{w}$.} Therefore, we have
\begin{align*}
 \| \be_1 + \bg\odot \be_2 \|_2^2 & = \langle\be_{1}+\bg\odot\be_{2},\be_{1}\rangle + \langle\bar{\bg}\odot\be_{1}+\be_{2},\be_{2}\rangle\\
&=\langle\be_{1}+\bg\odot\be_{2},\int_{0}^{1}\bc\left(\tau\right)\nu_1\left(d\tau\right)\rangle + \langle\bar{\bg}\odot\be_{1}+\be_{2},\int_{0}^{1}\bc\left(\tau\right)\nu_2 \left(d\tau\right)\rangle\\
&=\int_{0}^{1}\langle\be_{1}+\bg\odot\be_{2},\bc\left(\tau\right)\rangle \nu_{1}\left(d\tau\right) + \int_{0}^{1}\langle\bar{\bg}\odot\be_{1}+\be_{2},\bc\left(\tau\right)\rangle \nu_{2}\left(d\tau\right)\\
&\le\left\Vert\be_{1}+\bg\odot\be_{2}\right\Vert_{\mathcal{A}}^{\star}\left(    \left\| P_{\Upsilon^{1}_{\mathrm{far}}}(\nu_1)\right\|_{TV}   + \sum_{j=0}^2 I_{1,j} \right)  + \left\Vert\bar{\bg}\odot\be_{1}+\be_{2}\right\Vert_{\mathcal{A}}^{\star} \left(    \left\| P_{\Upsilon^{2}_{\mathrm{far}}}(\nu_2)\right\|_{TV}   + \sum_{j=0}^2 I_{2,j} \right)\\
& \leq 2\lambda_{w}  \sum_{i=1}^{2}\left(    \left\| P_{\Upsilon^{i}_{\mathrm{far}}}(\nu_i)\right\|_{TV}   + \sum_{j=0}^2 I_{i,j} \right),
\end{align*}
where we used \eqref{noise_atomic_bound} in the last inequality.
\end{proof}

\section{Proof of Proposition~\ref{bound_zero_first_moment}}\label{sec::proof_bound_zero_first_moment}

We first construct a pair of trigonometric polynomials $P_1\left(\tau\right)$ and $Q_{1}\left(\tau\right)$ with the following properties whose proof can be found in Appendix~\ref{sec::proof_constructed_poly_p1q1}.
\begin{lemma}\label{constructed_poly_p1q1}
Assume that $g_{n}=e^{j2\pi\phi_n}$'s are i.i.d. randomly generated from a uniform distribution on the complex unit circle with $\phi_n\sim\mathcal{U}[0,1]$. Provided that the separation $\Delta \ge 1/M$, there exists a numerical constant $C$ such that as soon as
\begin{equation*}
M\ge C \max\left\{\log^{2}{\left(\frac{M\left( {K_{1}}+ {K_{2}}\right) }{\eta}\right)}, K_{\max}\log{\left(\frac{M\left(K_{1}+K_{2}\right)}{\eta}\right)}, K_{\max}^{2}\log{\left(\frac{K_{1}+K_{2}}{\eta}\right)} \right\},
\end{equation*}
we can construct $P_{1}\left(\tau\right)=\sum_{n=-2M}^{2M}p_{1n}e^{j2\pi n\tau}$ and $Q_{1}\left(\tau\right)=\sum_{n=-2M}^{2M}p_{1n}\bar{g}_{n}e^{j2\pi n\tau}$ that satisfy
\begin{align*}
\left\vert P_1\left(\tau\right)-\mathrm{sign}\left(a_{1k}\right)\left(\tau-\tau_{1k}\right)\right\vert & \leq C_{p}M\left(\tau-\tau_{1k}\right)^{2},  \quad \tau\in\Upsilon_{\mathrm{near}}^{1,k}, \quad k=1,\dots,K_{1}, \\
 \left\vert P_{1}\left(\tau\right)\right\vert & \le \frac{C_{p}^{\prime}}{M}, \quad \tau\in\Upsilon_{\mathrm{far}}^{1}, \\
\left\vert Q_{1}\left(\tau\right)-\mathrm{sign}\left(a_{2k}\right)\left(\tau-\tau_{2k}\right)\right\vert & \leq C_{q}M\left(\tau-\tau_{2k}\right)^{2},  \quad \tau\in\Upsilon_{\mathrm{near}}^{2,k}, \quad k=1,\dots,K_{2}, \\
 \left\vert Q_{1}\left(\tau\right)\right\vert &\le \frac{C_{q}^{\prime}}{M},  \quad \tau\in\Upsilon_{\mathrm{far}}^2,
\end{align*}
with probability at least $1-\eta$, where $C_{p}$, $C_{p}^{\prime}$, $C_{q}$ and $C_{q}^{\prime}$ are numerical constants.
\end{lemma}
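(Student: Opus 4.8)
The plan is to construct $P_1(\tau)$ and $Q_1(\tau)$ by mimicking the dual certificate construction in Section~\ref{dual_construction_original}, but now using the squared Fej\'er kernel to interpolate the \emph{values} $\mathrm{sign}(a_{ik})(\tau-\tau_{ik})$ rather than the constant values $\mathrm{sign}(a_{ik})$ — this is the standard device (following \cite{candes2014towards,tang2013near}) for producing a polynomial whose behavior controls the first moments $I_{i,1}^k$ and the far-region total variation. Concretely, I would set
\begin{align*}
P_1(\tau) &= \sum_{k=1}^{K_1}\alpha_{1k}K(\tau-\tau_{1k}) + \sum_{k=1}^{K_1}\beta_{1k}K'(\tau-\tau_{1k}) + \sum_{k=1}^{K_2}\alpha_{2k}K_g(\tau-\tau_{2k}) + \sum_{k=1}^{K_2}\beta_{2k}K_g'(\tau-\tau_{2k}),\\
Q_1(\tau) &= \sum_{k=1}^{K_1}\alpha_{1k}K_{\bar g}(\tau-\tau_{1k}) + \sum_{k=1}^{K_1}\beta_{1k}K_{\bar g}'(\tau-\tau_{1k}) + \sum_{k=1}^{K_2}\alpha_{2k}K(\tau-\tau_{2k}) + \sum_{k=1}^{K_2}\beta_{2k}K'(\tau-\tau_{2k}),
\end{align*}
and choose the coefficient vectors $\boldsymbol\alpha_i,\boldsymbol\beta_i$ by solving the interpolation system obtained from imposing $P_1(\tau_{1k})=0$, $P_1'(\tau_{1k})=\mathrm{sign}(a_{1k})$, $Q_1(\tau_{2k})=0$, $Q_1'(\tau_{2k})=\mathrm{sign}(a_{2k})$. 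This leads to exactly the same block matrix $\boldsymbol W$ as in \eqref{coefficient_det}, only with a different right-hand side (the sign vectors now appear in the slots coupled to the first derivatives, scaled by $\sqrt{|K''(0)|}$). Thus the invertibility of $\boldsymbol W$ under $\mathcal{E}_\delta$, which holds whenever \eqref{sample_complexity} holds, carries over verbatim, and the stated sample complexity is precisely that needed for $\mathcal{E}_\delta$ together with the grid/Bernstein arguments of Propositions~\ref{prop:invertibility}--\ref{bound_value_far}.

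Next I would bound the coefficients: from $\boldsymbol W^{-1}$ and the new right-hand side, $\|\boldsymbol\alpha_i\|_2,\|\boldsymbol\beta_i\|_2$ are $O(\sqrt{K_{\max}}/M)$ because the right-hand side entries carry the factor $1/\sqrt{|K''(0)|}\asymp 1/M$. Then I would decompose $P_1(\tau)$ and its derivatives exactly as in \eqref{derivative_P}, splitting off the ``mean'' polynomial $P_{1,\mu}(\tau)$ built from $\boldsymbol L_{\mu 1}$ and treating the four residual terms with Lemma~\ref{bound_inver} plus the grid argument (Proposition~\ref{bound_L1v1}) and the Bernstein-polynomial extension to the continuum (Proposition~\ref{bound_continuous}). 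The mean polynomial $P_{1,\mu}(\tau)$ is, up to the $1/\sqrt{|K''(0)|}$ normalization, exactly the single-modality interpolating polynomial analyzed by Tang et al.\ \cite{tang2013near} (and in \cite{candes2014towards}), for which the local quadratic bound $|P_{1,\mu}(\tau)-\mathrm{sign}(a_{1k})(\tau-\tau_{1k})|\le C M(\tau-\tau_{1k})^2$ on $\Upsilon^{1,k}_{\mathrm{near}}$ and the $O(1/M)$ bound on $\Upsilon^1_{\mathrm{far}}$ are already established. Adding the uniform $O(\epsilon/M)$-level control of the residual terms (choosing $\epsilon$ a small numerical constant divided by $M$, absorbed into the $\log(M(K_1+K_2)/\eta)$ terms) then gives the four claimed bounds for $P_1$; the bounds for $Q_1$ follow by the identical computation with the roles of the two modalities swapped, since $\boldsymbol W$ and $\boldsymbol W_\mu$ are symmetric under that swap.

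The main obstacle — and the only genuinely new ingredient beyond \cite{tang2013near,candes2014towards} — is controlling the cross terms induced by the random interference $\boldsymbol g$: the off-diagonal blocks $\boldsymbol L_g,\boldsymbol L_{\bar g}$ of $\boldsymbol W^{-1}$ and the random vectors $\boldsymbol v_{2l}(\tau)$ (which carry the $g_n$ factors). These must be shown to contribute only $O(\epsilon/M)$ uniformly in $\tau$, which is why the extra multiplicative $K_{\max}$ and $K_{\max}^2$ terms appear in the sample complexity: Lemma~\ref{bound_inver} gives $\|\boldsymbol L_g\|,\|\boldsymbol L_{\bar g}\|\le 2\|\boldsymbol W_\mu^{-1}\|^2\delta$, and one needs $\delta$ small enough (scaled by $\epsilon$ and $1/\sqrt{K_{\max}}$) to beat the $\sqrt{K_{\max}}$ norm of the sign vectors, exactly as in Propositions~\ref{bound_L1v1} and~\ref{bound_continuous}. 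Once that is in place the argument is a routine adaptation, so I would present it by stating the construction, invoking the invertibility of $\boldsymbol W$ (already proven), quoting the single-modality bounds for $P_{1,\mu}$, and then citing Propositions~\ref{bound_L1v1}--\ref{bound_value_far} with $\epsilon$ rescaled to absorb the $1/M$ normalization, leaving the detailed calculus to the analogous lemmas in \cite{tang2013near}.
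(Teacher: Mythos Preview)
Your approach matches the paper's proof essentially exactly: the same ansatz, the same interpolation conditions $P_1(\tau_{1k})=0$, $P_1'(\tau_{1k})=\mathrm{sign}(a_{1k})$ (and analogously for $Q_1$), the same block matrix $\boldsymbol W$, the same mean/residual decomposition, and the same appeal to the single-modality bounds of \cite{candes2013super} for the mean polynomial $P_{1,\mu}$.

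One clarification is needed on your parenthetical ``choosing $\epsilon$ a small numerical constant divided by $M$'': do \emph{not} substitute $\epsilon\to c/M$ into Proposition~\ref{bound_continuous}, since the $1/\epsilon^2$ factors in the sample complexity would then become $M^2$ and the hypothesis could never be met. The $O(\epsilon/M)$ residual bound comes for free from the $1/\sqrt{|K''(0)|}\asymp 1/M$ factor on the right-hand side that you already correctly identified; the paper simply records this as a separate lemma (Lemma~\ref{bound_continuous_for_p1}) with the \emph{same} condition on $M$ as Proposition~\ref{bound_continuous} but conclusion $\le \epsilon/(4M+1)$. Minor notational point: because the sign vectors sit in the derivative slots of the right-hand side, the relevant blocks of $\boldsymbol W^{-1}$ are the $\boldsymbol R$-blocks ($\boldsymbol R_{\mu i},\boldsymbol R_i,\boldsymbol R_g,\boldsymbol R_{\bar g}$), not the $\boldsymbol L$-blocks.
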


Furthermore, we derive the following useful lemma in Appendix~\ref{sec::proof_bound_combined_poly}.

\begin{lemma}\label{bound_combined_poly}
For $P(\tau)$ and $Q(\tau)$ constructed in Proposition~\ref{bound_value_far}, and $P_1(\tau)$ and $Q_1(\tau)$ constructed in Lemma~\ref{constructed_poly_p1q1}, there exist numerical constant $C$ and $C_{1}$ such that 
\begin{align}
&\left\vert \int_{0}^{1}P\left(\tau\right)\nu_{1}\left(d\tau\right) + \int_{0}^{1}Q\left(\tau\right)\nu_{2}\left(d\tau\right)\right\vert\le C\lambda_{w} \sqrt{ \frac{K_{\max}^3 \log M}{M}}, \label{bound_poly_eq1}\\
&\left\vert \int_{0}^{1}P_{1}\left(\tau\right)\nu_{1}\left(d\tau\right) + \int_{0}^{1}Q_{1}\left(\tau\right)\nu_{2}\left(d\tau\right)\right\vert\le C_1\lambda_{w} \sqrt{\frac{K_{\max}^3\log{M}}{M^3}},\label{bound_poly_eq2}
\end{align}
with high probability given in Theorem~\ref{theorem_noisy}.
\end{lemma}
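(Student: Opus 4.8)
The plan is to establish Lemma~\ref{bound_combined_poly} by exploiting the structure of the dual certificates $P(\tau)$, $Q(\tau)$ from Proposition~\ref{bound_value_far} and the interpolating polynomials $P_1(\tau)$, $Q_1(\tau)$ from Lemma~\ref{constructed_poly_p1q1}, together with the representing-measure decomposition into near and far regions. For each claim, the idea is to split each integral $\int_0^1 P(\tau)\nu_1(d\tau)$ into a contribution from $\Upsilon^1_{\mathrm{far}}$ and a sum of contributions from the near-sets $\Upsilon^{1,k}_{\mathrm{near}}$, and similarly for $Q$ and $\nu_2$. On the far set, $|P(\tau)|\le 1$ trivially (or $\le 1-C_p''$), so that piece is bounded by $\|P_{\Upsilon^1_{\mathrm{far}}}(\nu_1)\|_{TV}$. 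On each near-set, I would Taylor-expand $P$ around $\tau_{1k}$: using $P(\tau_{1k})=\mathrm{sign}(a_{1k})$, $P'(\tau_{1k})=0$, and the quadratic bound $|P(\tau)-\mathrm{sign}(a_{1k})|\le C_p' M^2(\tau-\tau_{1k})^2$ from Proposition~\ref{bound_value_far}, the near-set integral contributes a main term $\mathrm{sign}(a_{1k})\int_{\Upsilon^{1,k}_{\mathrm{near}}}\nu_1(d\tau)$ plus a remainder controlled by $I_{1,2}^k$ (the second-moment term). Summing over $k$ and over both modalities, $|\int P\,d\nu_1+\int Q\,d\nu_2|$ is bounded by a linear combination of $\sum_i I_{i,0}$, $\sum_i I_{i,2}$, $\sum_i \|P_{\Upsilon^i_{\mathrm{far}}}(\nu_i)\|_{TV}$, and a term of the form $|\sum_{i,k}\mathrm{sign}(a_{ik})\int_{\Upsilon^{i,k}_{\mathrm{near}}}\nu_i(d\tau)|$.

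The crucial observation is that this last sign-weighted combination is not arbitrary: because $\be_1=\hat\bx_1-\bx_1^\star$ and $\be_2=\hat\bx_2-\bx_2^\star$ with the same observation model, and because $P$, $Q$ are built from the \emph{same} dual vector $\bp$ so that $Q(\tau)=\sum_n p_n\bar g_n e^{j2\pi n\tau}$, one has the identity $\int_0^1 P(\tau)\nu_1(d\tau)+\int_0^1 Q(\tau)\nu_2(d\tau)=\langle \bp,\be_1\rangle+\langle \bar\bg\odot\bp,\be_2\rangle=\langle \bp,\be_1+\bg\odot\be_2\rangle$. The right-hand side is then bounded using $\|\bp\|_{\mathcal A}^\star\le 1$ (dual feasibility) and the denoising bound $\|\be_1+\bg\odot\be_2\|_2\le\sqrt{2\lambda_w\sum_i(\|P_{\Upsilon^i_{\mathrm{far}}}(\nu_i)\|_{TV}+\sum_j I_{i,j})}$ from Proposition~\ref{bound_error}, via Cauchy--Schwarz $|\langle\bp,\bz\rangle|\le\|\bp\|_{\mathcal A}^\star\,\|\bz\|_2\le\sqrt{4M+1}\,\|\bz\|_2$ — wait, more carefully, $|\langle \bp,\bz\rangle|\le \|\bz\|_{\mathcal A}\|\bp\|_{\mathcal A}^\star$ is not directly applicable, so instead I would use $|\langle\bp,\bz\rangle|\le\|\bp\|_2\|\bz\|_2$ with the crude bound $\|\bp\|_2\le\sqrt{4M+1}\|\bp\|_{\mathcal A}^\star$. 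Combining this with the earlier split gives a self-referential inequality in the quantities $\sum_i I_{i,0},\sum_i I_{i,1},\sum_i I_{i,2},\sum_i\|P_{\Upsilon^i_{\mathrm{far}}}(\nu_i)\|_{TV}$; substituting the bounds on $\sum_i I_{i,0}$ and $\sum_i I_{i,1}$ from Proposition~\ref{bound_zero_first_moment} and solving yields \eqref{bound_poly_eq1}. The factor $\sqrt{K_{\max}^3\log M/M}$ emerges from the noise level $\lambda_w=C_w\sigma_w\sqrt{4M+1}$ together with the $K_{\max}$-dependence accumulated in the near-region estimates.

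For \eqref{bound_poly_eq2} the strategy is parallel but now uses the \emph{vanishing at $\tau_{ik}$} and \emph{linear growth} properties of $P_1,Q_1$: by the same dual-pair structure, $\int P_1\,d\nu_1+\int Q_1\,d\nu_2=\langle\bp_1,\be_1+\bg\odot\be_2\rangle$ with $\|\bp_1\|_{\mathcal A}^\star$ controlled (from $|P_1|\le C_p'/M$ on the far set and the quadratic bound near each $\tau_{ik}$, together with Bernstein's polynomial inequality, Lemma~\ref{bernsteins_poly_inequality}, to control $\|\bp_1\|_2$). Because $P_1$ is $O(1/M)$ in sup-norm, its coefficient vector has $\ell_2$-norm only $O(\sqrt{M}\cdot\frac1M)=O(1/\sqrt M)$ up to the relevant $K_{\max}$ factors, which buys the extra $1/M$ inside the square root, giving the $\sqrt{K_{\max}^3\log M/M^3}$ scaling. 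The main obstacle I anticipate is bookkeeping the $K_{\max}$ dependence precisely — in particular, controlling $\|\bp\|_2$ and $\|\bp_1\|_2$ (or, more cleanly, directly bounding the integrals against the $I_{i,j}$ and far-set quantities without passing through $\ell_2$-norms of coefficient vectors) and then carefully combining the resulting inequalities with Proposition~\ref{bound_zero_first_moment} so that the self-referential terms can be absorbed; the constants $C_w>1$ being "large enough" is what makes this absorption valid. I would organize the write-up so that the near/far split and the dual-pairing identity are stated first, then the $\|\bp\|_2$, $\|\bp_1\|_2$ estimates, then the final algebraic combination.
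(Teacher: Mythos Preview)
Your dual-pairing identity $\int_0^1 P(\tau)\nu_1(d\tau)+\int_0^1 Q(\tau)\nu_2(d\tau)=\langle\bp,\be_1+\bg\odot\be_2\rangle$ is exactly what the paper uses, and is the heart of the argument. However, your plan for bounding this inner product has two genuine gaps.

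First, your approach is circular. You propose to invoke Proposition~\ref{bound_zero_first_moment} (the bounds on $\sum_i I_{i,0}$ and $\sum_i I_{i,1}$) in order to close a self-referential inequality. But Proposition~\ref{bound_zero_first_moment} is itself proved \emph{using} Lemma~\ref{bound_combined_poly}; the lemma is a strict prerequisite, so it must be established independently of any of the $I_{i,j}$ or far-set TV quantities. Likewise, bounding $\|\be_1+\bg\odot\be_2\|_2$ via Proposition~\ref{bound_error} reintroduces exactly the quantities you are trying to control.

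Second, the $\ell_2$--$\ell_2$ pairing $|\langle\bp,\bz\rangle|\le\|\bp\|_2\|\bz\|_2$ is far too lossy here: $\|\bp\|_2$ can be of order $\sqrt{4M+1}$, not $\sqrt{K_{\max}^3\log M/M}$, so this route cannot produce the stated bound. The paper instead passes to the polynomial side via Parseval and applies H\"older in the $L^1$--$L^\infty$ pairing:
\[
|\langle\bp,\be_1+\bg\odot\be_2\rangle|\le \|P(\tau)\|_{L^1[0,1]}\,\|\be_1+\bg\odot\be_2\|_{\cA}^\star.
\]
The second factor is bounded by $2\lambda_w$ \emph{directly} from the optimality conditions (Proposition~\ref{optimal_cond_noise} and the noise bound $\|\bw\|_{\cA}^\star\le\lambda_w/C_w$), with no reference to the $I_{i,j}$'s. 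The first factor $\|P\|_{L^1}$ is computed from the explicit kernel expansion \eqref{func_Pf}: one bounds $\|\boldsymbol\alpha_i\|_\infty$, $\|\boldsymbol\beta_i\|_\infty$ (via Lemma~\ref{bound_inver} and the deterministic bounds of \cite{candes2014towards}) and combines them with $\int_0^1|K|\,d\tau$, $\int_0^1|K'|\,d\tau$, $\int_0^1|K_g|\,d\tau$, $\int_0^1|K_g'|\,d\tau$, the last two coming from Lemma~\ref{bound_kg_kgderi}. This is where the $\sqrt{K_{\max}^3\log M/M}$ factor actually originates. For \eqref{bound_poly_eq2} the same template applies with $\bp_1$; the extra $1/M$ comes from the coefficient bounds $\|\boldsymbol\theta_i\|_\infty=O(\sqrt{K_{\max}}/M)$, $\|\boldsymbol\psi_i\|_\infty=O(\sqrt{K_{\max}}/M^2)$, not from a sup-norm estimate on $P_1$ combined with Bernstein's inequality.
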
 

\begin{proof}
Consider the polar form 
\begin{equation*}
\left\vert\int_{\Upsilon_{\mathrm{near}}^{i,k}}\nu_{i}\left(d\tau\right)\right\vert = e^{-j\rho_{ik}} \int_{\Upsilon_{\mathrm{near}}^{i,k}}\nu_{i}\left(d\tau\right), \quad i=1,2,
\end{equation*}
then we can construct a pair of dual polynomials $P\left(\tau\right)$ and $Q\left(\tau\right)$ that interpolate a pair of point sources with $\mathrm{sign}\left(\tilde{a}_{ik}\right)=e^{-j\rho_{ik}}$, as in Proposition~\ref{bound_value_far}. Therefore, we have
\begin{equation*}
\begin{split}
I_{1,0}&=\sum_{k=1}^{K_{1}}\left\vert\int_{\Upsilon_{\mathrm{near}}^{1,k}}\nu_{1}\left(d\tau\right)\right\vert\\
&=\sum_{k=1}^{K_{1}}\int_{\Upsilon_{\mathrm{near}}^{1,k}} P\left(\tau\right) \nu_{1}\left(d\tau\right) + \sum_{k=1}^{K_{1}}\int_{\Upsilon_{\mathrm{near}}^{1,k}} \left(e^{-j\rho_{1k}} - P\left(\tau\right)\right) \nu_{1}\left(d\tau\right)\\
&=\int_{0}^{1}P\left(\tau\right)\nu_{1}\left(d\tau\right) - \int_{\Upsilon_{\mathrm{far}}^{1}}P\left(\tau\right)\nu_{1}\left(d\tau\right)+ \sum_{k=1}^{K_{1}}\int_{\Upsilon_{\mathrm{near}}^{1,k}} \left(e^{-j\rho_{1k}} - P\left(\tau\right)\right) \nu_{1}\left(d\tau\right).
\end{split}
\end{equation*}
Similarly,  
\begin{equation*}
I_{2,0}=\int_{0}^{1}Q\left(\tau\right)\nu_{2}\left(d\tau\right) - \int_{\Upsilon_{\mathrm{far}}^{2}}Q\left(\tau\right)\nu_{2}\left(d\tau\right) + \sum_{k=1}^{K_{2}}\int_{\Upsilon_{\mathrm{near}}^{2,k}} \left(e^{-j\rho_{2k}} - Q\left(\tau\right)\right) \nu_{2}\left(d\tau\right).
\end{equation*}
Now consider their sum, then we have
\begin{align}
\sum_{i=1}^{2}I_{i,0}&\le\left\vert \int_{0}^{1}P\left(\tau\right)\nu_{1}\left(d\tau\right) + \int_{0}^{1}Q\left(\tau\right)\nu_{2}\left(d\tau\right)\right\vert + \sum_{i=1}^2 \left\| P_{\Upsilon^{i}_{\mathrm{far}}}(\nu_i)\right\|_{TV} \nonumber\\
&\quad +  \sum_{k=1}^{K_{1}}\int_{\Upsilon_{\mathrm{near}}^{1,k}} C_p M^2 \left(\tau-\tau_{1k}\right)^{2} \left\vert\nu_{1}\right\vert\left(d\tau\right) +  \sum_{k=1}^{K_{2}}\int_{\Upsilon_{\mathrm{near}}^{2,k}} C_q M^2 \left(\tau-\tau_{2k}\right)^{2} \left\vert\nu_{2}\right\vert\left(d\tau\right)\label{eq_constructed_pq_bound_by_I2}\\
&\le\left\vert \int_{0}^{1}P\left(\tau\right)\nu_{1}\left(d\tau\right) + \int_{0}^{1}Q\left(\tau\right)\nu_{2}\left(d\tau\right)\right\vert + \sum_{i=1}^2 \left\| P_{\Upsilon^{i}_{\mathrm{far}}}(\nu_i)\right\|_{TV} + C_{2}\sum_{i=1}^{2}I_{i,2}\label{eq_defIi2}\\
&\le C\lambda_{w} \sqrt{ \frac{K_{\max}^3 \log M}{M}} + \sum_{i=1}^2 \left\| P_{\Upsilon^{i}_{\mathrm{far}}}(\nu_i)\right\|_{TV} + C_{2}\sum_{i=1}^{2}I_{i,2}\label{eq_bound_constructed_pq},
\end{align} 
where \eqref{eq_constructed_pq_bound_by_I2} follows from the triangle inequality and the properties of the dual polynomials in Proposition~\ref{bound_value_far}, \eqref{eq_defIi2} follows from the definition of $I_{i,2}$, and \eqref{eq_bound_constructed_pq} follows from Lemma~\ref{bound_combined_poly}.

Then, we consider bounding $\sum_{i=1}^{2}I_{i,1}$ in a similar way. Again, consider the polar form
\begin{equation*}
\left\vert\int_{\Upsilon_{\mathrm{near}}^{i,k}} \left(\tau-\tau_{ik}\right) \nu_{i}\left(d\tau\right)\right\vert = e^{-j\rho_{ik}}\int_{\Upsilon_{\mathrm{near}}^{i,k}} \left(\tau-\tau_{ik}\right) \nu_{i}\left(d\tau\right), \quad i=1,2,
\end{equation*}
then we can construct a pair of polynomials $P_{1}\left(\tau\right)$ and $Q_{1}\left(\tau\right)$ in the form of Lemma~\ref{constructed_poly_p1q1} by letting $\mathrm{sign}\left(\tilde{a}_{ik}\right)=e^{-j\rho_{ik}}$. Then we have
\begin{equation*}
\begin{split}
\frac{I_{1,1}}{4M+1}&=\sum_{k=1}^{K_{1}}\left\vert\int_{\Upsilon_{\mathrm{near}}^{1,k}} \left(\tau-\tau_{1k}\right) \nu_{1}\left(d\tau\right)\right\vert\\
&= \sum_{k=1}^{K_{1}}\int_{\Upsilon_{\mathrm{near}}^{1,k}} \left(e^{-j\rho_{1k}}\left(\tau-\tau_{1k}\right) - P_{1}\left(\tau\right)\right)\nu_{1}\left(d\tau\right) +  \sum_{k=1}^{K_{1}}\int_{\Upsilon_{\mathrm{near}}^{1,k}} P_{1}\left(\tau\right)\nu_{1}\left(d\tau\right)\\
&= \sum_{k=1}^{K_{1}}\int_{\Upsilon_{\mathrm{near}}^{1,k}} \left(e^{-j\rho_{1k}}\left(\tau-\tau_{1k}\right) - P_{1}\left(\tau\right)\right)\nu_{1}\left(d\tau\right)  + \int_{0}^{1} P_{1}\left(\tau\right)\nu_{1}\left(d\tau\right) - \int_{\Upsilon_{\mathrm{far}}^{1}} P_{1}\left(\tau\right)\nu_{1}\left(d\tau\right),
\end{split}
\end{equation*}
and
\begin{equation*}
\frac{I_{2,1}}{4M+1}= \sum_{k=1}^{K_{2}}\int_{\Upsilon_{\mathrm{near}}^{2,k}} \left(e^{-j\rho_{2k}}\left(\tau-\tau_{2k}\right) - Q_{1}\left(\tau\right)\right)\nu_{2}\left(d\tau\right)  + \int_{0}^{1} Q_{1}\left(\tau\right)\nu_{2}\left(d\tau\right) -  \int_{\Upsilon_{\mathrm{far}}^{2}} Q_{1}\left(\tau\right)\nu_{2}\left(d\tau\right).
\end{equation*}
Taking their sum, we have
\begin{align}
 \sum_{i=1}^{2}I_{i,1} &\leq (4M+1) \Big(\left\vert \int_{0}^{1}P_{1}\left(\tau\right)\nu_{1}\left(d\tau\right) + \int_{0}^{1}Q_{1}\left(\tau\right)\nu_{2}\left(d\tau\right)\right\vert +  \sum_{k=1}^{K_{1}}\int_{\Upsilon_{\mathrm{near}}^{1,k}} \left\vert e^{-j\rho_{1k}}\left(\tau-\tau_{1k}\right) - P_{1}\left(\tau\right)\right\vert\left\vert\nu_{1}\right\vert\left(d\tau\right)  \nonumber\\
&\quad +  \sum_{k=1}^{K_{2}}\int_{\Upsilon_{\mathrm{near}}^{2,k}} \left\vert e^{-j\rho_{2k}}\left(\tau-\tau_{2k}\right) - Q_{1}\left(\tau\right)\right\vert\left\vert\nu_{2}\right\vert\left(d\tau\right)   \Big) + C_3 \sum_{i=1}^2 \left\| P_{\Upsilon^{i}_{\mathrm{far}}}(\nu_i)\right\|_{TV} \nonumber\\
&\le C\lambda_{w} \sqrt{ \frac{K_{\max}^3 \log M}{M}}  + C_2\sum_{i=1}^{2}I_{i,2} + C_3 \sum_{i=1}^2 \left\| P_{\Upsilon^{i}_{\mathrm{far}}}(\nu_i)\right\|_{TV},\nonumber
\end{align}
where the first inequality follows from the triangle inequality and Lemma~\ref{constructed_poly_p1q1}, and the last inequality follows from Lemma~\ref{bound_combined_poly}, the definition of $I_{i,2}$ and Lemma~\ref{constructed_poly_p1q1}.
\end{proof}

\section{Proof of Proposition~\ref{bound_left_together}}\label{proof_bound_left_together}
\begin{proof}
Let $\hat{u}_{i}$ and $u_{i}^{\star}$ denote the representing measure of $\hat{\bx}_{i}$ and $\bx_{i}^{\star}$, then we have $\nu_i=\hat{u}_{i} - u_{i}^{\star} $, $i=1,2$. Since $\|u_{i}^{\star}\|_{TV}=\|\bx_i^{\star}\|_{\cA}$ and $\|\hat{u}_{i}\|_{TV}=\|\hat{\bx}_{i} \|_{\cA}$, from Proposition~\ref{optimal_cond_noise}, we have
\reviseA{
\begin{align}
\|\hat{u}_{1}\|_{TV} + \|\hat{u}_{2}\|_{TV} 
& = \|\hat{\bx}_{1} \|_{\cA} + \|\hat{\bx}_{2} \|_{\cA} \nonumber\\
& = \frac{1}{\lambda_{w}} \langle\by-\left(\hat{\bx}_{1}+\bg\odot\hat{\bx}_{2}\right),\be_{1}+\bg\odot\be_{2} \rangle_{\mathbb{R}}  \nonumber\\
&\quad + \frac{1}{\lambda_{w}} \langle\by-\left(\hat{\bx}_{1}+\bg\odot\hat{\bx}_{2}\right), \bx_{1}^{\star}  \rangle_{\mathbb{R}}  +  \frac{1}{\lambda_{w}} \langle\by-\left(\hat{\bx}_{1}+\bg\odot\hat{\bx}_{2}\right),\bg\odot\bx_{2}^{\star} \rangle_{\mathbb{R}}   \nonumber\\
&\le \frac{1}{\lambda_{w}} \langle\by-\left(\hat{\bx}_{1}+\bg\odot\hat{\bx}_{2}\right),\be_{1}+\bg\odot\be_{2}  \rangle_{\mathbb{R}} + \left\Vert\bx_{1}^{\star}\right\Vert_{\mathcal{A}} + \left\Vert\bx_{2}^{\star}\right\Vert_{\mathcal{A}} \nonumber\\
& = \frac{1}{\lambda_{w}} \langle \bx_1^{\star} + \boldsymbol{g} \odot \bx_2^{\star} + \boldsymbol{w} -\left(\hat{\bx}_{1}+\bg\odot\hat{\bx}_{2}\right),\be_{1}+\bg\odot\be_{2} \rangle_{\mathbb{R}} + \left\Vert\bx_{1}^{\star}\right\Vert_{\mathcal{A}} + \left\Vert\bx_{2}^{\star}\right\Vert_{\mathcal{A}} \nonumber\\
& = -\frac{1}{\lambda_{w}} \left\Vert   \be_{1} + \bg\odot\be_{2}  \right\Vert_{2}^{2} + \frac{1}{\lambda_{w}} \langle  \boldsymbol{w} ,\be_{1}+\bg\odot\be_{2} \rangle_{\mathbb{R}} + \left\Vert u_{1}^{\star}\right\Vert_{TV} + \left\Vert u_{2}^{\star}\right\Vert_{TV}  \nonumber\\
& \le \left\Vert u_{1}^{\star}\right\Vert_{TV} + \left\Vert u_{2}^{\star}\right\Vert_{TV} + \frac{1}{\lambda_{w}} \left\vert    \langle  \boldsymbol{w} ,\be_{1}+\bg\odot\be_{2} \rangle_{\mathbb{R}}  \right\vert. \label{inequ_represent_measure}
\end{align}
}
Then the last term in \eqref{inequ_represent_measure} can be bounded by 
\begin{align}
 \left\vert\langle\boldsymbol{w},\be_{1}+\bg\odot\be_{2}\rangle \right\vert 
&\le\left\vert\langle\boldsymbol{w},\be_{1}\rangle\right\vert + \left\vert\langle\bar{\bg}\odot\boldsymbol{w},\be_{2}\rangle \right\vert \nonumber \\
&=\left\vert\int_{0}^{1}\langle\boldsymbol{w},\bc\left(\tau\right)\rangle \nu_{1}\left(d\tau\right)\right\vert + \left\vert\int_{0}^{1}\langle\bar{\bg}\odot\boldsymbol{w},\bc\left(\tau\right)\rangle \nu_{2}\left(d\tau\right)\right\vert \nonumber \\
&\le \left\Vert \boldsymbol{w}  \right\Vert_{\cA}^{\star} \left(  \left\| P_{\Upsilon^{1}_{\mathrm{far}}}(\nu_1)\right\|_{TV} + \sum_{j=0}^2 I_{1,j}    \right) + \left\Vert \bar{\bg}\odot \boldsymbol{w}  \right\Vert_{\cA}^{\star} \left(  \left\| P_{\Upsilon^{2}_{\mathrm{far}}}(\nu_2)\right\|_{TV} +\sum_{j=0}^2 I_{2,j}      \right),\label{inequ_comb_error_noise} \\
& \leq  \frac{\lambda_{w}}{C_{w}} \sum_{i=1}^2 \left(  \left\| P_{\Upsilon^{i}_{\mathrm{far}}}(\nu_i)\right\|_{TV} + \sum_{j=0}^2 I_{i,j}    \right) , \label{inequ_comb_error_noisebound}
\end{align}
where \eqref{inequ_comb_error_noise} follows from Lemma~\ref{bound_trig}, and the last inequality \eqref{inequ_comb_error_noisebound} follows from $\left\Vert \boldsymbol{w}  \right\Vert_{\cA}^{\star} \leq   \lambda_{w}/C_{w}   $ and $\left\Vert \bar{\bg}\odot \boldsymbol{w}  \right\Vert_{\cA}^{\star} \leq  \lambda_{w}/C_{w}   $. Moreover, since
\begin{equation*}
\left\Vert\hat{u}_{i}\right\Vert_{TV} = \left\Vert u_{i}^{\star} + \nu_i \right\Vert_{TV} \geq  \left\Vert u_{i}^{\star}\right\Vert_{TV} - \left\Vert P_{\Upsilon_{i}}\left(\nu_{i}\right)\right\Vert_{TV} + \left\Vert P_{\Upsilon_{i}^{c}}\left(\nu_{i}\right)\right\Vert_{TV} , \quad i=1,2,
\end{equation*}
plugging this and \eqref{inequ_comb_error_noisebound} into \eqref{inequ_represent_measure}, we have
\begin{equation}\label{lowerbound} 
\sum_{i=1}^2\left\Vert P_{\Upsilon_{i}^{c}}\left(\nu_{i}\right)\right\Vert_{TV}  - \sum_{i=1}^2  \left\Vert P_{\Upsilon_{i}}\left(\nu_{i}\right)\right\Vert_{TV} \leq  \frac{1}{C_{w}} \sum_{i=1}^2 \left(  \left\| P_{\Upsilon^{i}_{\mathrm{far}}}(\nu_i)\right\|_{TV} + \sum_{j=0}^2 I_{i,j}    \right) .
\end{equation}

Set $P(\tau)$ and $Q(\tau)$ as a pair of polynomials that interpolate the conjugate sign of $P_{\Upsilon_1}(\nu_1)$ and $P_{\Upsilon_2}(\nu_2)$, respectively, whose existence is established in Proposition~\ref{bound_value_far}, then we have
\begin{align}
\sum_{i=1}^2  \left\Vert P_{\Upsilon_{i}}\left(\nu_{i}\right)\right\Vert_{TV} & =  \int_0^1 P(\tau) P_{\Upsilon_1}(\nu_1) (d\tau) + \int_0^1 Q(\tau) P_{\Upsilon_2}(\nu_2) (d\tau) \nonumber \\
& \leq \left| \int_0^1 P(\tau)  \nu_1 (d\tau)+\int_0^1 Q(\tau) \nu_2 (d\tau)  \right| + \left| \int_{\Upsilon_1^c} P(\tau)  \nu_1(d\tau) \right| + \left| \int_{\Upsilon_2^c} Q(\tau)  \nu_2(d\tau) \right|, \label{bound_pol}\\
& \le C\lambda_{w} \sqrt{ \frac{K_{\max}^3 \log M}{M}} + \left| \int_{\Upsilon_1^c} P(\tau)  \nu_1(d\tau) \right| + \left| \int_{\Upsilon_2^c} Q(\tau)  \nu_2(d\tau) \right| \nonumber,
\end{align}
where the first term in \eqref{bound_pol} can be bounded using Lemma~\ref{bound_combined_poly}. For the second term in \eqref{bound_pol}, \reviseA{according to the properties of $P(\tau)$ established in Proposition~\ref{bound_value_far}, we have
\begin{align}
\left| \int_{\Upsilon_1^c} P(\tau)  \nu_1(d\tau) \right|
& = \left\vert \int_{\Upsilon_{\mathrm{far}}^{1}}  P(\tau)  \nu_1(d\tau)  +  \sum_{k=1}^{K_{1}} \int_{\Upsilon_{\mathrm{near}}^{1,k} \backslash \left\{ \tau_{1k} \right\}} P(\tau)  \nu_1(d\tau) \right\vert \nonumber\\
& \le   \sum_{k=1}^{K_{1}} \left\vert \int_{\Upsilon_{\mathrm{near}}^{1,k} \backslash \left\{ \tau_{1k} \right\}} P(\tau)  \nu_1(d\tau) \right\vert  +  \left\vert \int_{\Upsilon_{\mathrm{far}}^{1}}  P(\tau)  \nu_1(d\tau)  \right\vert \nonumber\\
& \le \sum_{k=1}^{K_{1}}  \int_{\Upsilon_{\mathrm{near}}^{1,k} \backslash \left\{ \tau_{1k} \right\}} \left\vert P(\tau)  \right\vert  \left\vert \nu_1 \right\vert (d\tau)   +  \left(1 - C_{b}\right)   \int_{\Upsilon_{\mathrm{far}}^{1}} \left\vert \nu_1 \right\vert (d\tau) \nonumber\\
& \le \sum_{k=1}^{K_{1}}  \int_{\Upsilon_{\mathrm{near}}^{1,k} \backslash \left\{ \tau_{1k} \right\}} \left( 1 - C M^{2}\left(\tau - \tau_{1k}\right)^{2} \right) \left\vert \nu_1 \right\vert (d\tau)   +  \left(1 - C_{b}\right)  \int_{\Upsilon_{\mathrm{far}}^{1}} \left\vert \nu_1 \right\vert (d\tau) \nonumber\\
& = \sum_{k=1}^{K_{1}}  \int_{\Upsilon_{\mathrm{near}}^{1,k} \backslash \left\{ \tau_{1k} \right\}} \left\vert \nu_1 \right\vert (d\tau)  +   \int_{\Upsilon_{\mathrm{far}}^{1}} \left\vert \nu_1 \right\vert (d\tau) \nonumber\\
& \quad - C  \sum_{k=1}^{K_{1}}  \int_{\Upsilon_{\mathrm{near}}^{1,k} \backslash \left\{ \tau_{1k} \right\}}  M^{2}\left(\tau - \tau_{1k}\right)^{2}  \left\vert \nu_1 \right\vert (d\tau)  - C_{b}  \int_{\Upsilon_{\mathrm{far}}^{1}} \left\vert \nu_1 \right\vert (d\tau) \nonumber\\
& \le \left\Vert P_{\Upsilon_{1}^{c}} \left(\nu_{1}\right)\right\Vert_{TV}  -  C_{a}I_{1,2}  - C_{b} \left\| P_{\Upsilon_{\mathrm{far}}^1}( \nu_{1}) \right\|_{TV},
\end{align}
}
for some positive constants $C_a$ and $C_b$. A similar bound holds for the third term. Putting together, we have
\begin{equation}\label{inequ_tv_support}
\sum_{i=1}^2 \left\Vert P_{\Upsilon_{i}^{c}}\left(\nu_{i}\right)\right\Vert_{TV} -\sum_{i=1}^2 \left\Vert P_{\Upsilon_{i}}\left(\nu_{i}\right)\right\Vert_{TV}\geq \sum_{i=1}^2 \left( C_{a}I_{i,2} + C_{b} \left\| P_{\Upsilon_{\mathrm{far}}^i}( \nu_{i}) \right\|_{TV}  \right) - C\lambda_{w} \sqrt{ \frac{K_{\max}^3 \log M}{M}},
\end{equation}
which combined with \eqref{lowerbound} yields:
\begin{equation*}
 \frac{1}{C_{w}} \sum_{i=1}^2 \left(  \left\| P_{\Upsilon^{i}_{\mathrm{far}}}(\nu_i)\right\|_{TV} + \sum_{j=0}^2 I_{i,j}    \right) + C\lambda_{w} \sqrt{ \frac{K_{\max}^3 \log M}{M}} \geq \sum_{i=1}^2 \left( C_{a}I_{i,2} + C_{b} \left\| P_{\Upsilon_{\mathrm{far}}^i}( \nu_{i}) \right\|_{TV}  \right) .
 \end{equation*}
The proof is finished by reorganizing terms and plugging in Proposition~\ref{bound_zero_first_moment}, for a large enough constant $C_{w} > 1$.
\end{proof}

\section{Proof of Lemma~\ref{constructed_poly_p1q1}}\label{sec::proof_constructed_poly_p1q1}
Here we constructed the pair of polynomials $P_{1}\left(\tau\right)$ and $Q_{1}\left(\tau\right)$ using the same techniques as the ones in proof of Theorem~\ref{theorem_main}. Recall the definitions of $K\left(\tau\right)$, $K_{g}\left(\tau\right)$ and $K_{\bar{g}}\left(\tau\right)$ in \eqref{func_K} and \eqref{func_Kg}, and we construct two polynomials $P_{1}\left(\tau\right)$ and $Q_{1}\left(\tau\right)$ as
\begin{equation}\label{func_Pf_1}
P_{1}\left(\tau\right)=\sum_{k=1}^{K_{1}}\theta_{1k}K\left(\tau-\tau_{1k}\right)+\sum_{k=1}^{K_{1}}\psi_{1k}K'\left(\tau-\tau_{1k}\right)+\sum_{k=1}^{K_{2}}\theta_{2k}K_{g}\left(\tau-\tau_{2k}\right)+\sum_{k=1}^{K_{2}}\psi_{2k}K_{g}'\left(\tau-\tau_{2k}\right),
\end{equation} 
and
\begin{equation}\label{func_Qf_1}
Q_{1}\left(\tau\right)=\sum_{k=1}^{K_{1}}\theta_{1k}K_{\bar{g}}\left(\tau-\tau_{1k}\right)+\sum_{k=1}^{K_{1}}\psi_{1k}K_{\bar{g}}'\left(\tau-\tau_{1k}\right)+\sum_{k=1}^{K_{2}}\theta_{2k}K\left(\tau-\tau_{2k}\right)+\sum_{k=1}^{K_{2}}\psi_{2k}K'\left(\tau-\tau_{2k}\right),
\end{equation}
where $\tau_{1k}\in\Upsilon_{1}$ and $\tau_{2k}\in\Upsilon_{2}$. Set the coefficients $\boldsymbol{\theta}_{i}=\left[\theta_{i1},\dots,\theta_{iK_{i}}\right]^{T}$, $\boldsymbol{\psi}_{i}=\left[\psi_{i1},\dots,\psi_{iK_{i}}\right]^{T}$, for $i=1,2$ by solving the following set of equations
\begin{equation*}
\begin{cases}
P_{1}\left(\tau_{1k}\right)  =  0, &\quad \tau_{1k}\in\Upsilon_{1}, \\
P_{1}'\left(\tau_{1k}\right)  =  \mathrm{sign}\left(a_{1k}\right), &\quad \tau_{1k}\in\Upsilon_{1}, \\
Q_{1}\left(\tau_{2k}\right)  =  0, &\quad \tau_{2k}\in\Upsilon_{2}, \\
Q_{1}'\left(\tau_{2k}\right)  =  \mathrm{sign}\left(a_{2k}\right), &\quad \tau_{2k}\in\Upsilon_{2},
\end{cases}
\end{equation*}
which can be rewritten into a matrix form as
\begin{equation*}
\begin{bmatrix}
\boldsymbol{W}_{10} & \frac{1}{\sqrt{\left\vert K''\left(0\right)\right\vert}}\boldsymbol{W}_{11} & \boldsymbol{W}_{g0} & \frac{1}{\sqrt{\left\vert K''\left(0\right)\right\vert}}\boldsymbol{W}_{g1}\\
-\frac{1}{\sqrt{\left\vert K''\left(0\right)\right\vert}}\boldsymbol{W}_{11} & -\frac{1}{\left\vert K''\left(0\right)\right\vert}\boldsymbol{W}_{12} & -\frac{1}{\sqrt{\left\vert K''\left(0\right)\right\vert}}\boldsymbol{W}_{g1} & -\frac{1}{\left\vert K''\left(0\right)\right\vert}\boldsymbol{W}_{g2}\\
\boldsymbol{W}_{\bar{g}0} & \frac{1}{\sqrt{\left\vert K''\left(0\right)\right\vert}}\boldsymbol{W}_{\bar{g}1} & \boldsymbol{W}_{20} & \frac{1}{\sqrt{\left\vert K''\left(0\right)\right\vert}}\boldsymbol{W}_{21}\\
-\frac{1}{\sqrt{\left\vert K''\left(0\right)\right\vert}}\boldsymbol{W}_{\bar{g}1} & -\frac{1}{\left\vert K''\left(0\right)\right\vert}\boldsymbol{W}_{\bar{g}2} & -\frac{1}{\sqrt{\left\vert K''\left(0\right)\right\vert}}\boldsymbol{W}_{21} & -\frac{1}{\left\vert K''\left(0\right)\right\vert}\boldsymbol{W}_{22}\\
\end{bmatrix}
\begin{bmatrix}
\boldsymbol{\theta}_{1}\\
\sqrt{\left\vert K''\left(0\right)\right\vert}\boldsymbol{\psi}_{1}\\
\boldsymbol{\theta}_{2}\\
\sqrt{\left\vert K''\left(0\right)\right\vert}\boldsymbol{\psi}_{2}\\
\end{bmatrix}
=
\begin{bmatrix}
\boldsymbol{0}\\
-\frac{1}{\sqrt{\left\vert K''\left(0\right)\right\vert}}\boldsymbol{u}_{1}\\
\boldsymbol{0}\\
-\frac{1}{\sqrt{\left\vert K''\left(0\right)\right\vert}}\boldsymbol{u}_{2}
\end{bmatrix},
\end{equation*}
whose left-hand side matrix is the same as that in \eqref{coefficient_det}, called $\boldsymbol{W}$, where $K''\left(0\right)$ is the scaler defined in \eqref{equ_K_twoderiv_zero}. Therefore, following Proposition~\ref{prop:invertibility}, under the event $\mathcal{E}_{\delta}$, $\boldsymbol{W}$ is invertible, which gives 
\begin{equation*}
\begin{bmatrix}
\boldsymbol{\theta}_{1}\\
\sqrt{\left\vert K''\left(0\right)\right\vert}\boldsymbol{\psi}_{1}\\
\end{bmatrix}=-\frac{1}{\sqrt{\left\vert K''\left(0\right)\right\vert}}\left(\boldsymbol{R}_{1}\boldsymbol{u}_{1}+\boldsymbol{R}_{g}\boldsymbol{u}_{2}\right),
\quad
\mathrm{and} 
\quad 
\begin{bmatrix}
\boldsymbol{\theta}_{2}\\
\sqrt{\left\vert K''\left(0\right)\right\vert}\boldsymbol{\psi}_{2}\\
\end{bmatrix}=-\frac{1}{\sqrt{\left\vert K''\left(0\right)\right\vert}}\left(\boldsymbol{R}_{\bar{g}}\boldsymbol{u}_{1}+\boldsymbol{R}_{2}\boldsymbol{u}_{2}\right).
\end{equation*} 
\reviseA{And further we know 
\begin{equation*}
\frac{1}{\sqrt{\left\vert K''\left(0\right)\right\vert}^{l}}P_{1}^{\left(l\right)}\left(\tau\right) = - \frac{1}{\sqrt{\left\vert K''\left(0\right)\right\vert}} \boldsymbol{v}_{1l}^{H}\left(\tau\right)\left(\boldsymbol{R}_{1}\boldsymbol{u}_{1}+\boldsymbol{R}_{g}\boldsymbol{u}_{2}\right)  - \frac{1}{\sqrt{\left\vert K''\left(0\right)\right\vert}} \boldsymbol{v}_{2l}^{H}\left(\tau\right)\left(\boldsymbol{R}_{\bar{g}}\boldsymbol{u}_{1}+\boldsymbol{R}_{2}\boldsymbol{u}_{2}\right).
\end{equation*}
}

Under this choice, we will establish that $P_{1}(\tau)$ satisfies the properties in Lemma~\ref{constructed_poly_p1q1}, and $Q_{1}(\tau)$ will follow similarly. Denote 
\begin{equation*}
\frac{1}{\sqrt{\left\vert K''\left(0\right)\right\vert}^{l}}P_{\mu 1}^{\left(l\right)}\left(\tau\right) = -\frac{1}{\sqrt{\left\vert K''\left(0\right)\right\vert}}  \left\langle \boldsymbol{u}_{1},\boldsymbol{R}_{\mu 1}^{H}\boldsymbol{v}_{1l}\left(\tau\right)\right\rangle,
\end{equation*}
then it is straightforward to obtain the following proposition to bound the distance between $\frac{1}{\sqrt{\left\vert K''\left(0\right)\right\vert}^{l}}P_{1}^{\left(l\right)}\left(\tau\right)$ and $\frac{1}{\sqrt{\left\vert K''\left(0\right)\right\vert}^{l}}P_{\mu 1}^{\left(l\right)}\left(\tau\right)$, following essentially the same proof of Proposition~\ref{bound_continuous}.
\begin{lemma}\label{bound_continuous_for_p1}
Suppose $\Delta \ge 1/M$. There exists a numerical constant $C$ such that
\begin{equation*}
M\ge C\max{\left\{\log^{2}{\left(\frac{M\left(K_{1}+K_{2}\right)}{\eta}\right)}, \frac{1}{\epsilon^{2}}K_{\max}\log{\left(\frac{M\left(K_{1}+K_{2}\right)}{\eta}\right)},  \frac{1}{\epsilon^{2}}K_{\max}^{2}\log{\left(\frac{K_{1}+K_{2}}{\eta}\right)} \right\}},
\end{equation*}
then we have 
\begin{equation*}
\mathbb{P}\left\{\left\vert\frac{1}{\sqrt{\left\vert K''\left(0\right)\right\vert}^{l}}P_{1}^{\left(l\right)}\left(\tau\right) - \frac{1}{\sqrt{\left\vert K''\left(0\right)\right\vert}^{l}}P_{\mu 1}^{\left(l\right)}\left(\tau\right)\right\vert\le\frac{\epsilon}{4M+1},\ \forall\tau\in[0,1),\ l=0,1,2,3\right\}\ge 1-\eta.
\end{equation*}
\end{lemma}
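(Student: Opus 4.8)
The plan is to follow the proof of Proposition~\ref{bound_continuous} essentially verbatim, the only new ingredient being to carry along the extra prefactor $1/\sqrt{|K''(0)|}$ that appears in the definitions of $P_{1}^{(l)}$ and $P_{\mu 1}^{(l)}$. Starting from the expression for $\frac{1}{\sqrt{|K''(0)|}^{l}}P_{1}^{(l)}(\tau)$ displayed just above the statement, together with $\frac{1}{\sqrt{|K''(0)|}^{l}}P_{\mu 1}^{(l)}(\tau)=-\frac{1}{\sqrt{|K''(0)|}}\langle\boldsymbol{u}_{1},\boldsymbol{R}_{\mu 1}^{H}\boldsymbol{v}_{1l}(\tau)\rangle$, subtraction yields
\begin{align*}
\frac{1}{\sqrt{|K''(0)|}^{l}}\left(P_{1}^{(l)}(\tau)-P_{\mu 1}^{(l)}(\tau)\right) &= -\frac{1}{\sqrt{|K''(0)|}}\Big(\langle\boldsymbol{u}_{1},(\boldsymbol{R}_{1}-\boldsymbol{R}_{\mu 1})^{H}\boldsymbol{v}_{1l}(\tau)\rangle+\langle\boldsymbol{u}_{2},\boldsymbol{R}_{g}^{H}\boldsymbol{v}_{1l}(\tau)\rangle\\
&\qquad+\langle\boldsymbol{u}_{1},\boldsymbol{R}_{\bar{g}}^{H}\boldsymbol{v}_{2l}(\tau)\rangle+\langle\boldsymbol{u}_{2},\boldsymbol{R}_{2}^{H}\boldsymbol{v}_{2l}(\tau)\rangle\Big),
\end{align*}
which is precisely $-1/\sqrt{|K''(0)|}$ times the four residual terms of \eqref{derivative_P}, with the blocks $\boldsymbol{R}_{1},\boldsymbol{R}_{\mu 1},\boldsymbol{R}_{g},\boldsymbol{R}_{\bar{g}},\boldsymbol{R}_{2}$ of $\boldsymbol{W}^{-1}$ and $\boldsymbol{W}_{\mu}^{-1}$ in place of $\boldsymbol{L}_{1},\boldsymbol{L}_{\mu 1},\boldsymbol{L}_{g},\boldsymbol{L}_{\bar{g}},\boldsymbol{L}_{2}$.

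Next I would note that the $\boldsymbol{R}$-blocks obey exactly the same operator-norm estimates as their $\boldsymbol{L}$-counterparts: the proof of Lemma~\ref{bound_inver} bounds $\|\boldsymbol{W}^{-1}\|\le 2\|\boldsymbol{W}_{\mu}^{-1}\|$ and $\|\boldsymbol{W}^{-1}-\boldsymbol{W}_{\mu}^{-1}\|\le 2\|\boldsymbol{W}_{\mu}^{-1}\|^{2}\delta$ on the event $\mathcal{E}_{\delta}$ with $\delta\le 1/4$, and since $\boldsymbol{R}_{g},\boldsymbol{R}_{\bar{g}},\boldsymbol{R}_{1}-\boldsymbol{R}_{\mu 1}$ are submatrices of $\boldsymbol{W}^{-1}-\boldsymbol{W}_{\mu}^{-1}$ and $\boldsymbol{R}_{2}$ of $\boldsymbol{W}^{-1}$, the bounds $\|\boldsymbol{R}_{g}\|,\|\boldsymbol{R}_{\bar{g}}\|,\|\boldsymbol{R}_{i}-\boldsymbol{R}_{\mu i}\|\le 2\|\boldsymbol{W}_{\mu}^{-1}\|^{2}\delta$ and $\|\boldsymbol{R}_{i}\|\le 2\|\boldsymbol{W}_{\mu}^{-1}\|$ follow immediately. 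Because the random dependence of these blocks on $\boldsymbol{g}$ (through $\boldsymbol{W}_{g}$) and the structure of $\boldsymbol{v}_{1l},\boldsymbol{v}_{2l}$ are identical to the noise-free setting, the concentration argument of Proposition~\ref{bound_L1v1} applies with $\boldsymbol{R}$ in place of $\boldsymbol{L}$; hence, on a uniform grid $\Upsilon_{\mathrm{grid}}\subset[0,1)$, each of the four inner products above is $<c\,\epsilon$ for a prescribed small constant $c$ (say $c=1/6$) and $l=0,1,2,3$, with probability at least $1-\eta$, whenever $M$ satisfies the arbitrary-sign sample-complexity bound of Proposition~\ref{bound_L1v1} for grid size $|\Upsilon_{\mathrm{grid}}|$.

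Then I would bring in the prefactor. Using \eqref{equ_K_twoderiv_zero}, $\sqrt{|K''(0)|}=\tfrac{2\pi}{\sqrt{3}}\sqrt{M^{2}-1}\ge\pi M$ for $M\ge 2$, so $1/\sqrt{|K''(0)|}<2/(4M+1)$, and with $c$ chosen as above the difference is $<\tfrac{\epsilon}{3(4M+1)}$ on $\Upsilon_{\mathrm{grid}}$. To extend to all $\tau\in[0,1)$ I would apply Bernstein's polynomial inequality (Lemma~\ref{bernsteins_poly_inequality}) exactly as in the proof of Proposition~\ref{bound_continuous}: the crude bound $\sup_{\tau}\big|\tfrac{1}{\sqrt{|K''(0)|}^{l}}P_{1}^{(l)}(\tau)\big|\le C(K_{1}+K_{2})/M$ (the noise-free bound $C(K_{1}+K_{2})$ scaled by the prefactor), together with the analogous bound for $P_{\mu 1}^{(l)}$, gives interpolation error at most $8\pi C(K_{1}+K_{2})|\tau-\tau_{d}|$ between $\tau$ and the nearest grid point $\tau_{d}$; choosing $|\Upsilon_{\mathrm{grid}}|=\lceil 24\pi C(K_{1}+K_{2})(4M+1)/\epsilon\rceil=O(M(K_{1}+K_{2})/\epsilon)$ makes this $\le\tfrac{\epsilon}{3(4M+1)}$, and the triangle inequality (grid value plus the two interpolation errors) yields the claimed uniform bound $\tfrac{\epsilon}{4M+1}$. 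Plugging this grid size into the sample complexity of Proposition~\ref{bound_L1v1} and simplifying with $\log(1/\epsilon)\lesssim 1/\epsilon^{2}$, so the logarithmic factors collapse to $\log(M(K_{1}+K_{2})/\eta)$, gives the condition on $M$ in the statement.

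I do not anticipate any essential new difficulty; the one point requiring care is the bookkeeping in the last paragraph — verifying that $1/\sqrt{|K''(0)|}$ is genuinely of order $1/M$, so that it turns the $O(\epsilon)$ grid bound and the $O(M(K_{1}+K_{2}))$ Bernstein derivative factor into exactly the $O(\epsilon/M)$ uniform estimate, with all spurious constants absorbed into $\epsilon$ and into the grid size. Everything else is a transcription of the proofs of Proposition~\ref{bound_L1v1} and Proposition~\ref{bound_continuous} with the $\boldsymbol{L}$-blocks replaced by the $\boldsymbol{R}$-blocks.
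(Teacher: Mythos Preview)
Your proposal is correct and matches the paper's approach exactly: the paper does not give an explicit proof of this lemma but simply states that it ``follows essentially the same proof of Proposition~\ref{bound_continuous},'' and you have fleshed out precisely that argument, correctly identifying that the $\boldsymbol{R}$-blocks replace the $\boldsymbol{L}$-blocks and that the extra prefactor $1/\sqrt{|K''(0)|}\asymp 1/M$ converts the $O(\epsilon)$ bound into the claimed $O(\epsilon/(4M+1))$ bound while leaving the grid size (and hence sample complexity) of the same order. The only cosmetic point is that your constant $c=1/6$ and your handling of $\log(1/\epsilon)$ are slightly off in the arithmetic, but these are easily adjusted and the paper itself is loose about such constants.
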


When $\tau\in\Upsilon_{\mathrm{far}}^{1}$, since $\left\vert P_{\mu 1}\left(\tau\right)\right\vert\le \frac{C}{4M+1}$ for some numerical constant $C$ \cite[Lemma 2.7]{candes2013super}, under the event in Lemma~\ref{bound_continuous_for_p1}, we have
\begin{equation*}
\left\vert P_{1}\left(\tau\right)\right\vert\le \left\vert P_{\mu 1}\left(\tau\right)\right\vert + \frac{\epsilon}{M}\le\frac{C_1}{M}
\end{equation*}
for some numerical constant $C_1$. Next consider $\left\vert P_{1}\left(\tau\right)-\mathrm{sign}\left(a_{1k}\right)\left(\tau-\tau_{1k}\right)\right\vert$ when $\tau\in\Upsilon_{\mathrm{near}}^{1,k}$. Without loss of generality, assume $\tau_{1k}=0$.
Denote $Z\left(\tau\right)=\mathrm{sign}\left(a_{1k}\right)\tau- P_{1}\left(\tau\right)=Z_{R}\left(\tau\right)+jZ_{I}\left(\tau\right)$, where $Z_{R}\left(\tau\right)$ and $Z_{I}\left(\tau\right)$ are the real part and the imaginary part of $Z\left(\tau\right)$, respectively. Thus we have $Z_{R}\left(0\right)=0$, $Z_{R}^{\prime}\left(0\right)=0$,
$Z_{I}\left(0\right)=0$, and $Z_{I}^{\prime}\left(0\right)=0$. Similarly define $Z_{\mu}\left(\tau\right)=\mathrm{sign}\left(a_{1k}\right)\tau- P_{\mu 1}\left(\tau\right) = Z_{\mu R}\left(\tau\right) + j Z_{\mu R}\left(\tau\right)$, where $Z_{\mu R}\left(\tau\right)$ and $Z_{\mu I}\left(\tau\right)$ are the real part and the imaginary part of $Z_{\mu}\left(\tau\right)$, respectively. Since $\left\vert Z_{\mu R}''\left(\tau\right)\right\vert\le CM$ and $\left\vert Z_{\mu I}''\left(\tau\right)\right\vert\le CM$ for some constant $C$ from the proof of Lemma 6.1 in \cite{candes2013super}, combining with Lemma~\ref{bound_continuous_for_p1}, we can obtain $\left\vert Z_{R}''\left(\tau\right)\right\vert\le C_pM$ and $\left\vert Z_{I}''\left(\tau\right)\right\vert\le C_pM$ with numerical constant $C_{p}$. Then we have $\left\vert\mathrm{sign}\left(a_{1k}\right)\tau- P_{1}\left(\tau\right)\right\vert = \left\vert Z\left(\tau\right)\right\vert\le C_{p}M\tau^{2}$.

\section{Proof of Lemma~\ref{bound_combined_poly}}\label{sec::proof_bound_combined_poly}

We record the following lemma whose proof is given in Appendix~\ref{proof_bound_kg_kgderi}.

\begin{lemma}\label{bound_kg_kgderi}
Set $M\ge 4$. There exist numerical constants $C_{1}$, $C_{2}$ and $C_3$ such that we have 
$\left\vert K_{g}\left(\tau\right)\right\vert\le C_{1}\sqrt{\frac{\log{M}}{M}}$, and $\left\vert K'_{g}\left(\tau\right)\right\vert\le C_{2}\sqrt{M\log{M}}$ with probability at least $1- C_3(M^3\log M)^{-1/2}$.
\end{lemma}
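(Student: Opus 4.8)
The plan is to regard $K_{g}(\tau)$ as a random trigonometric polynomial of degree $2M$, to bound $\sup_{\tau\in[0,1)}|K_{g}(\tau)|$ by combining a pointwise tail bound with a union bound over a fine grid and a derivative estimate, and then to deduce the bound on $K_{g}'$ from the one on $K_{g}$ via Bernstein's polynomial inequality.

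First, pointwise control. Since $K_{g}(\tau)=\frac{1}{M}\sum_{|n|\le 2M}s_{n}g_{n}e^{j2\pi n\tau}$ is a weighted sum of the i.i.d.\ symmetric unit-modulus variables $g_{n}$ against deterministic weights of squared $\ell_{2}$-norm $\frac{1}{M^{2}}\sum_{|n|\le 2M}s_{n}^{2}\le\frac{4M+1}{M^{2}}$ (using $|s_{n}|\le 1$), the complex Hoeffding inequality (Lemma~\ref{hoeffding_inequality}) gives, for any fixed $\tau$ and $t>0$,
\[
\mathbb{P}\bigl\{|K_{g}(\tau)|\ge t\bigr\}\le 4\exp\!\left(-\frac{t^{2}M}{20}\right),
\]
where we used $M^{2}/\bigl(4(4M+1)\bigr)\ge M/20$ for $M\ge 1$. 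Taking $t=C_{1}\sqrt{\log M/M}$ makes this at most $4M^{-C_{1}^{2}/20}$.

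Next, the passage from a grid to the whole interval. Note $K_{g}(\tau)=e^{-j4\pi M\tau}F\bigl(e^{j2\pi\tau}\bigr)$, where $F(z)=\frac{1}{M}\sum_{m=0}^{4M}s_{m-2M}g_{m-2M}z^{m}$ is an ordinary polynomial of degree $4M$; by the maximum modulus principle $\sup_{|z|\le 1}|F(z)|=\sup_{\tau}|K_{g}(\tau)|$, and a direct differentiation gives $|K_{g}'(\tau)|\le 2\pi\bigl(2M|F(z)|+|F'(z)|\bigr)$ on $|z|=1$. Applying Lemma~\ref{bernsteins_poly_inequality} to $F$, namely $\sup_{|z|\le 1}|F'(z)|\le 4M\sup_{|z|\le 1}|F(z)|$, we obtain
\[
\sup_{\tau\in[0,1)}|K_{g}'(\tau)|\le 12\pi M\,\sup_{\tau\in[0,1)}|K_{g}(\tau)|.
\]
Let $\Upsilon_{\mathrm{grid}}$ be a uniform grid of $N=\lceil 12\pi M\rceil$ points in $[0,1)$, so every $\tau$ is within $1/(2N)$ of some $\tau_{d}\in\Upsilon_{\mathrm{grid}}$. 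Setting $S=\sup_{\tau}|K_{g}(\tau)|$ and using $|K_{g}(\tau)|\le|K_{g}(\tau_{d})|+|\tau-\tau_{d}|\sup_{\tau}|K_{g}'(\tau)|\le\sup_{\tau_{d}}|K_{g}(\tau_{d})|+\tfrac12 S$, we deduce $S\le 2\sup_{\tau_{d}\in\Upsilon_{\mathrm{grid}}}|K_{g}(\tau_{d})|$.

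Finally, I would union-bound the event $\{|K_{g}(\tau_{d})|\ge t\}$ over the $N$ grid points: off an event of probability at most $4NM^{-C_{1}^{2}/20}$ we get $\sup_{\tau}|K_{g}(\tau)|\le 2t=2C_{1}\sqrt{\log M/M}$, whence $\sup_{\tau}|K_{g}'(\tau)|\le 24\pi C_{1}M\sqrt{\log M/M}=24\pi C_{1}\sqrt{M\log M}$; this yields the two claimed inequalities with $C_{1}$ absorbed into $C_{1}$ and $C_{2}=24\pi C_{1}$. Choosing $C_{1}$ large enough (e.g.\ $C_{1}^{2}\ge 60$), one has $4NM^{-C_{1}^{2}/20}\le 4\lceil 12\pi M\rceil M^{-3}\le C_{3}(M^{3}\log M)^{-1/2}$ for all $M\ge 4$ and a suitable absolute constant $C_{3}$, which is the asserted probability bound. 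The only point requiring care — not a genuine obstacle — is this last calibration of the tail constant $C_{1}$ against the target failure probability through the grid size $N\asymp M$; everything else is a routine supremum estimate for a random trigonometric polynomial, in the same spirit as the random-polynomial bounds of \cite{tang2014CSoffgrid} invoked earlier.
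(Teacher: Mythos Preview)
Your proof is correct and follows the same overall Hoeffding-plus-grid-plus-Bernstein scheme as the paper, but you organize the steps differently. The paper treats $K_{g}$ and $K_{g}'$ by two \emph{parallel} arguments: for each it applies Hoeffding pointwise, uses a deterministic a~priori bound ($|K_{g}|\le 5$ via Cauchy--Schwarz, respectively $|K_{g}'|\le 11\pi M$) to get a Lipschitz constant, chooses a grid of size $\asymp M/\zeta$ (resp.\ $\asymp M^{2}/\zeta$), and extends to the continuum. You instead run the argument once for $K_{g}$, but replace the deterministic Lipschitz constant by the random one $12\pi M\,S$ coming from Bernstein, use a fixed grid of size $\lceil 12\pi M\rceil$, and close via the bootstrap inequality $S\le 2\sup_{\tau_{d}}|K_{g}(\tau_{d})|$; the bound on $K_{g}'$ then drops out for free from $\sup|K_{g}'|\le 12\pi M\,S$. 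Your route is a little more economical (one concentration step instead of two, and a smaller grid), while the paper's route makes all constants explicit and avoids the self-improving step. Both are standard and yield the same statement up to the absolute constants.
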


\begin{proof}
Since $P(\tau)=\langle \bp, \bc(\tau) \rangle$, and $Q(\tau)=\langle \bar{\bg} \odot \bp, \bc(\tau) \rangle$, we have
\reviseA{
\begin{align}
\left\vert \int_{0}^{1}P\left(\tau\right)\nu_{1}\left(d\tau\right) + \int_{0}^{1}Q\left(\tau\right)\nu_{2}\left(d\tau\right)\right\vert
&= \left\vert \int_{0}^{1}\langle \boldsymbol{p}, \boldsymbol{c}\left(\tau\right)\rangle\nu_{1}\left(d\tau\right) + \int_{0}^{1}\langle \bar{\boldsymbol{g}}\odot\boldsymbol{p}, \boldsymbol{c}\left(\tau\right)\rangle\nu_{2}\left(d\tau\right)\right\vert \nonumber \\
&= \left\vert \langle \boldsymbol{p}, \int_{0}^{1}\boldsymbol{c}\left(\tau\right)\nu_{1}\left(d\tau\right)\rangle + \langle\boldsymbol{p},\int_{0}^{1}  \boldsymbol{g}\odot\boldsymbol{c}\left(\tau\right)\nu_{2}\left(d\tau\right)\rangle\right\vert \nonumber\\
&=\left\vert\langle\boldsymbol{p},\be_{1}+\bg\odot\be_{2}\rangle\right\vert  \nonumber\\
& = \left\vert \langle P(\tau), E(\tau) \rangle \right\vert \nonumber\\
&\le\left\Vert P\left(\tau\right)\right\Vert_{1}\left\Vert \be_{1}+\bg\odot\be_{2}  \right\Vert_{\cA}^{\star},\label{bound_PQ}
\end{align}
where $E(\tau) = \langle \be_{1}+\bg\odot\be_{2}, \bc(\tau) \rangle $ and $\left\Vert P\left(\tau\right)\right\Vert_{1}=\int_{0}^{1}\left\vert P\left(\tau\right)\right\vert d\tau$. Here the penultimate step follows from Parseval's identity, and the last inequality follows from H\"older's inequality.} Therefore, we need to bound $\left\Vert P\left(\tau\right)\right\Vert_{1}$. Recall 
\begin{equation*}
\begin{bmatrix}
\boldsymbol{\alpha}_{1}\\
\sqrt{\left\vert K''\left(0\right)\right\vert}\boldsymbol{\beta}_{1}\\
\end{bmatrix}=\boldsymbol{L}_{1}\boldsymbol{u}_{1}+\boldsymbol{L}_{g}\boldsymbol{u}_{2},
\quad
\mathrm{and}
\quad
\begin{bmatrix}
\boldsymbol{\alpha}_{2}\\
\sqrt{\left\vert K''\left(0\right)\right\vert}\boldsymbol{\beta}_{2}\\
\end{bmatrix}=\boldsymbol{L}_{\bar{g}}\boldsymbol{u}_{1}+\boldsymbol{L}_{2}\boldsymbol{u}_{2}
\end{equation*}
in \eqref{dual_polynomial_coefficients}. Define 
\begin{equation*}
\begin{bmatrix}
\boldsymbol{\alpha}_{\mu 1}\\
\sqrt{\left\vert K''\left(0\right)\right\vert}\boldsymbol{\beta}_{\mu 1}\\
\end{bmatrix}=\boldsymbol{L}_{\mu 1}\boldsymbol{u}_{1},
\quad
\mathrm{and}
\quad
\begin{bmatrix}
\boldsymbol{\alpha}_{\mu 2}\\
\sqrt{\left\vert K''\left(0\right)\right\vert}\boldsymbol{\beta}_{\mu 2}\\
\end{bmatrix}=\boldsymbol{L}_{\mu 2}\boldsymbol{u}_{2}.
\end{equation*}
From \cite[Lemma 2.2]{candes2014towards}, we have $\left\Vert \boldsymbol{\alpha}_{\mu i}\right\Vert_{\infty}\le C_{\alpha}$ and $\left\Vert \boldsymbol{\beta}_{\mu i}\right\Vert_{\infty}\le\frac{C_{\beta}}{M}$ for some constants $C_{\alpha}$ and $C_{\beta}$, $i=1,2$. Under the event $\mathcal{E}_{\delta}$ for $0<\delta\leq 1/4$ in Lemma~\ref{bound_inver}, we have
\begin{align*}
\left\Vert \begin{bmatrix}
\boldsymbol{\alpha}_{1}\\
\sqrt{\left\vert K''\left(0\right)\right\vert}\boldsymbol{\beta}_{1}\\
\end{bmatrix} - \begin{bmatrix}
\boldsymbol{\alpha}_{\mu 1}\\
\sqrt{\left\vert K''\left(0\right)\right\vert}\boldsymbol{\beta}_{\mu 1}\\
\end{bmatrix}
\right\Vert_{\infty}
&\le\left\Vert (\boldsymbol{L}_{1} - \boldsymbol{L}_{\mu 1} )\boldsymbol{u}_{1}\right\Vert_{\infty} + \left\Vert\boldsymbol{L}_{g}\boldsymbol{u}_{2}\right\Vert_{\infty}\\
&\le\left\Vert (\boldsymbol{L}_{1} - \boldsymbol{L}_{\mu 1} )\boldsymbol{u}_{1} \right\Vert_{2} + \left\Vert\boldsymbol{L}_{g}\boldsymbol{u}_{2}\right\Vert_{2}\\
&\le\left\Vert \boldsymbol{L}_{1} - \boldsymbol{L}_{\mu 1} \right\Vert \left\Vert \boldsymbol{u}_{1}\right\Vert_{2} + \left\Vert \boldsymbol{L}_{g} \right\Vert \left\Vert \boldsymbol{u}_{2}\right\Vert_{2}\le C \delta \sqrt{K_{\max}}.
\end{align*}
Therefore, we have $\left\Vert \boldsymbol{\alpha}_{1}\right\Vert_{\infty}\le C_{\alpha}'\sqrt{K_{\max}}$ and $\left\Vert \boldsymbol{\beta}_{1}\right\Vert_{\infty}\le\frac{C_{\beta}'}{M}\sqrt{K_{\max}}$ for some constants $C_{\alpha}'$ and $C_{\beta}'$. Similar bounds hold for $\left\Vert \boldsymbol{\alpha}_{2}\right\Vert_{\infty}$ and $\left\Vert \boldsymbol{\beta}_{2}\right\Vert_{\infty}$ as well. Then $\left\Vert P\left(\tau\right)\right\Vert_{1}$ can be bounded as below:
\begin{align*}
& \left\Vert P\left(\tau\right)\right\Vert_{1}=\int_{0}^{1}\left\vert P\left(\tau\right)\right\vert d\tau\\
&\le K_{1}\left\Vert \boldsymbol{\alpha}_{1}\right\Vert_{\infty}\int_{0}^{1}\left\vert K\left(\tau\right)\right\vert d\tau
      + K_{1}\left\Vert \boldsymbol{\beta}_{1}\right\Vert_{\infty}\int_{0}^{1}\left\vert K'\left(\tau\right)\right\vert d\tau + K_{2}\left\Vert \boldsymbol{\alpha}_{2}\right\Vert_{\infty}\int_{0}^{1}\left\vert K_{g}\left(\tau\right)\right\vert d\tau + K_{2}\left\Vert \boldsymbol{\beta}_{2}\right\Vert_{\infty}\int_{0}^{1}\left\vert K_{g}'\left(\tau\right)\right\vert d\tau\\
&\le K_{1} C_{\alpha}'\sqrt{K_{\max}} \frac{C}{M} + K_{1}\frac{C_{\beta}'}{M}\sqrt{K_{\max}} C + K_{2}C_{\alpha}'\sqrt{K_{\max}}C_{1}\sqrt{\frac{\log{M}}{M}} + K_{2}\frac{C_{\beta}'}{M}\sqrt{K_{\max}}C_{2}\sqrt{M\log{M}}\\
&\le C_{p}\sqrt{\frac{K_{\max}^3\log M}{M}},
\end{align*}
where we used $\int_{0}^{1}\left\vert K\left(\tau\right)\right\vert d\tau \le\frac{C}{M}$, $\int_{0}^{1}\left\vert K'\left(\tau\right)\right\vert d\tau\le C$ from \cite[Lemma 4]{tang2013near}, and $ \left\vert K_{g}\left(\tau\right)\right\vert \le C_{1}\sqrt{\frac{\log{M}}{M}}$, $\left\vert K_{g}'\left(\tau\right)\right\vert\le C_{2}\sqrt{M\log{M}}$ from Lemma~\ref{bound_kg_kgderi}. Plugging this into \eqref{bound_PQ} and combining \eqref{noise_atomic_bound}, we have proved \eqref{bound_poly_eq1}.

Next, we can write similarly that
\begin{equation}\label{bound_P1Q1}
\left\vert \int_{0}^{1}P_{1}\left(\tau\right)\nu_{1}\left(d\tau\right) + \int_{0}^{1}Q_{1}\left(\tau\right)\nu_{2}\left(d\tau\right)\right\vert\le\left\Vert P_{1}\left(\tau\right)\right\Vert_{1}\left\Vert \be_{1}+\bg\odot\be_{2}  \right\Vert_{\cA}^{\star},
\end{equation}
then it suffices to bound $\left\Vert P_{1}\left(\tau\right)\right\Vert_{1}$. Recall that
\begin{equation*}
\begin{bmatrix}
\boldsymbol{\theta}_{1}\\
\sqrt{\left\vert K''\left(0\right)\right\vert}\boldsymbol{\psi}_{1}\\
\end{bmatrix}=-\frac{1}{\sqrt{\left\vert K''\left(0\right)\right\vert}}\left(\boldsymbol{R}_{1}\boldsymbol{u}_{1}+\boldsymbol{R}_{g}\boldsymbol{u}_{2}\right),
\quad
\mathrm{and}
\quad
\begin{bmatrix}
\boldsymbol{\theta}_{2}\\
\sqrt{\left\vert K''\left(0\right)\right\vert}\boldsymbol{\psi}_{2}\\
\end{bmatrix}=-\frac{1}{\sqrt{\left\vert K''\left(0\right)\right\vert}}\left(\boldsymbol{R}_{\bar{g}}\boldsymbol{u}_{1}+\boldsymbol{R}_{2}\boldsymbol{u}_{2}\right)
\end{equation*}
in Appendix~\ref{sec::proof_constructed_poly_p1q1}. Define
\begin{equation*}
\begin{bmatrix}
\boldsymbol{\theta}_{\mu 1}\\
\sqrt{\left\vert K''\left(0\right)\right\vert}\boldsymbol{\psi}_{\mu 1}\\
\end{bmatrix}=-\frac{1}{\sqrt{\left\vert K''\left(0\right)\right\vert}}\boldsymbol{R}_{\mu 1}\boldsymbol{u}_{1},
\quad
\mathrm{and}
\quad
\begin{bmatrix}
\boldsymbol{\theta}_{\mu 2}\\
\sqrt{\left\vert K''\left(0\right)\right\vert}\boldsymbol{\psi}_{\mu 2}\\
\end{bmatrix}=-\frac{1}{\sqrt{\left\vert K''\left(0\right)\right\vert}}\boldsymbol{R}_{\mu 2}\boldsymbol{u}_{2}.
\end{equation*}
From \cite[Lemma 2.7]{candes2013super}, we have $\left\Vert \boldsymbol{\theta}_{\mu i}\right\Vert_{\infty}\le C_{\theta}/M$ and $\left\Vert \boldsymbol{\psi}_{\mu i}\right\Vert_{\infty}\le C_{\psi}/M^2$ for some constants $C_{\theta}$ and $C_{\psi}$, $i=1,2$. Following similar arguments as above, we have
$\left\Vert \boldsymbol{\theta}_{i}\right\Vert_{\infty}\le C_{\theta}'\sqrt{K_{\max}}/M$ and $\left\Vert \boldsymbol{\psi}_{i}\right\Vert_{\infty}\le C_{\psi}'\sqrt{K_{\max}}/M^2$, $i=1,2$. Hence $\left\Vert P_{1}\left(\tau\right)\right\Vert_{1}$ can be bounded as
\begin{align*}
& \left\Vert P_{1}\left(\tau\right)\right\Vert_{1}=\int_{0}^{1}\left\vert P_{1}\left(\tau\right)\right\vert d\tau\\
&\le K_{1}\left\Vert \boldsymbol{\theta}_{1}\right\Vert_{\infty}\int_{0}^{1}\left\vert K\left(\tau\right)\right\vert d\tau
      + K_{1}\left\Vert \boldsymbol{\psi}_{1}\right\Vert_{\infty}\int_{0}^{1}\left\vert K'\left(\tau\right)\right\vert d\tau+ K_{2}\left\Vert \boldsymbol{\theta}_{2}\right\Vert_{\infty}\int_{0}^{1}\left\vert K_{g}\left(\tau\right)\right\vert d\tau + K_{2}\left\Vert \boldsymbol{\psi}_{2}\right\Vert_{\infty}\int_{0}^{1}\left\vert K_{g}'\left(\tau\right)\right\vert d\tau\\
&\le K_{1} \frac{C_{\theta}'}{M}\sqrt{K_{\max}} \frac{C}{M} + K_{1} \frac{C_{\psi}'}{M^{2}}\sqrt{K_{\max}} C + K_{2} \frac{C_{\theta}'}{M}\sqrt{K_{\max}} C_{1}\sqrt{\frac{\log{M}}{M}} + K_{2}\frac{C_{\psi}'}{M^{2}}\sqrt{K_{\max}} C_{2}\sqrt{M\log{M}}\\
&\le C_{p}' \sqrt{\frac{K_{\max}^3 \log{M}}{M^3}}.
\end{align*}
Plugging this into \eqref{bound_P1Q1} and combining \eqref{noise_atomic_bound}, we have proved \eqref{bound_poly_eq2}.
\end{proof}

\section{Proof of Lemma~\ref{bound_kg_kgderi}}\label{proof_bound_kg_kgderi}
\begin{proof}
Suppose $M\geq 4$. For a fixed $\tau\in[0,1)$, applying the Hoeffding's inequality in Lemma~\ref{hoeffding_inequality}, we have
\begin{equation*}
\mathbb{P}\left\{\left\vert K_{g}\left(\tau \right)\right\vert\ge \zeta\right\}
=\mathbb{P}\left\{ \left\vert\frac{1}{M} \sum_{n=-2M}^{2M}s_{n}g_{n}e^{j2\pi n\tau}\right\vert\ge \zeta\right\} \le 4e^{-\frac{M^{2}\zeta^{2}}{4\sum_{n=-2M}^{2M}s_{n}^{2}}} \le 4e^{-\frac{M^{2}\zeta^{2}}{4\left(4M+1\right)}} \le 4e^{-\frac{M\zeta^{2}}{17}},
\end{equation*}
where we used $|s_n|\leq 1$. Let $\Upsilon_{\mathrm{grid}}=\left\{\tau_{d}\in[0,1)\right\}$ be a uniform grid of $[0,1)$ whose size will be determined later.  As a result of the union bound, we have
$$\mathbb{P}\left\{\sup_{\tau_{d}\in\Upsilon_{\mathrm{grid}}} \left\vert K_{g}\left(\tau_{d}\right)\right\vert\le\zeta\right\}\ge 1-4\left\vert\Upsilon_{\mathrm{grid}}\right\vert e^{-\frac{M\zeta^{2}}{17}}.$$
For any $\tau_a,\tau_b\in [0,1)$, following Lemma~\ref{bernsteins_poly_inequality} we have
\begin{align*}
\left\vert K_{g}\left(\tau_{a}\right) - K_{g}\left(\tau_{b}\right) \right\vert &\le \left\vert e^{j2\pi\tau_{a}} - e^{j2\pi\tau_{b}}\right\vert \sup_{\tau}\left\vert\frac{\partial K_{g}\left(\tau\right)}{\partial e^{j2\pi\tau}}\right\vert\le 4\pi\left\vert\tau_{a}-\tau_{b}\right\vert 2M \sup_{\tau}\left\vert K_{g}\left(\tau\right)\right\vert \le 40\pi M \left\vert\tau_{a}-\tau_{b}\right\vert,
\end{align*}
where the last inequality follows from $\left\vert K_{g}\left(\tau\right)\right\vert\le \frac{1}{M}\sqrt{\sum_{n=-2M}^{2M}s_{n}^{2}}\sqrt{\sum_{n=-2M}^{2M}\left\vert g_{n}e^{j2\pi n\tau}\right\vert^{2}}\le \frac{4M+1}{M}\le5$. By choosing the grid size such that for any $\tau\in[0,1)$, there exists a point $\tau_{d}\in\Upsilon_{\mathrm{grid}}$ satisfying $40\pi M \left\vert\tau-\tau_{d}\right\vert\le\zeta$, which means we can set $\left\vert\Upsilon_{\mathrm{grid}}\right\vert=\lceil\frac{40\pi M}{\zeta}\rceil$. Consequently, for any $\tau\in[0,1)$, we have
\begin{equation*}
\left\vert  K_{g}\left(\tau\right)\right\vert\le\left\vert  K_{g}\left(\tau\right) -  K_{g}\left(\tau_{d}\right) \right\vert + \left\vert  K_{g}\left(\tau_{d}\right)\right\vert \le 40\pi M \left\vert\tau -\tau_{d}\right\vert + \zeta\le 2\zeta,
\end{equation*}
with probability at least $1-4\left\vert\Upsilon_{\mathrm{grid}}\right\vert e^{-\frac{M\zeta^{2}}{17}}$. Choose $\zeta = \sqrt{\frac{51\log M}{M}}$, then we have 
$$ \mathbb{P}\left\{ |K_g(\tau)|\leq 2\sqrt{\frac{51\log M}{M}} \right\} \ge 1- 71 (M^3\log M)^{-1/2}. $$

Next consider $\left\vert K'_{g}\left(\tau\right)\right\vert$. For a fixed $\tau\in[0,1)$, applying the Hoeffding's inequality in Lemma~\ref{hoeffding_inequality}, we have
\begin{align*}
\mathbb{P}\left\{\left\vert K'_{g}\left(\tau \right)\right\vert\ge \zeta\right\}
& =\mathbb{P}\left\{\left\vert \frac{1}{M}\sum_{n=-2M}^{2M}s_{n}g_{n}e^{j2\pi n\tau}\left(j2\pi n\right)\right\vert\ge \zeta\right\} \\
& \le 4e^{-\frac{M^{2}\zeta^{2}}{4\sum_{n=-2M}^{2M}s_{n}^{2}\left(2\pi n\right)^{2}}}
\le 4e^{-\frac{\zeta^{2}}{321\pi M}}.
\end{align*}
Set $\Upsilon_{\mathrm{grid}}=\left\{\tau_{d}\in[0,1)\right\}$ be a uniform grid of $[0,1)$ whose size will be determined later.  As a result of the union bound, we have
\begin{equation*}
\mathbb{P}\left(\sup_{\tau_{d}\in\Upsilon_{\mathrm{grid}}} \left\vert K'_{g}\left(\tau_{d}\right)\right\vert\leq \zeta\right)\ge 1- 4\left\vert\Upsilon_{\mathrm{grid}}\right\vert e^{-\frac{\zeta^{2}}{321\pi M}}.
\end{equation*}
For any $\tau_a,\tau_b\in [0,1)$, following Lemma~\ref{bernsteins_poly_inequality} we have
\begin{equation*}
\left\vert K'_{g}\left(\tau_{a}\right) - K'_{g}\left(\tau_{b}\right) \right\vert \le 4\pi\left\vert\tau_{a}-\tau_{b}\right\vert 2M \sup_{\tau}\left\vert K'_{g}\left(\tau\right)\right\vert \le 88\pi^{2} M^{2} \left\vert\tau_{a}-\tau_{b}\right\vert,
\end{equation*} 
where in the last inequality we use $\left\vert K'_{g}\left(\tau\right)\right\vert\le\frac{1}{M}\sqrt{\sum_{n=-2M}^{2M}s_{n}^{2}}\sqrt{\sum_{n=-2M}^{2M}\left\vert g_{n}e^{j2\pi n\tau}\left(j2\pi n\right)\right\vert^{2}}\le 11\pi M$. Hence, by choosing the grid size such that for any $\tau\in[0,1)$, there exists a point $\tau_{d}\in\Upsilon_{\mathrm{grid}}$ satisfying $88\pi^{2} M^{2} \left\vert\tau - \tau_{d}\right\vert \leq \zeta$, which gives $\left\vert\Upsilon_{\mathrm{grid}}\right\vert=\lceil \frac{88\pi^{2}M^{2}}{\zeta}\rceil$. Then for any $\tau\in[0,1)$, we have
\begin{equation*}
\left\vert  K'_{g}\left(\tau\right)\right\vert\le\left\vert  K'_{g}\left(\tau\right) -  K'_{g}\left(\tau_{d}\right) \right\vert + \left\vert  K'_{g}\left(\tau_{d}\right)\right\vert \le 88\pi^{2} M^{2} \left\vert\tau -\tau_{d}\right\vert + \zeta\le 2\zeta
\end{equation*}
with probability at least $1-4\left\vert\Upsilon_{\mathrm{grid}}\right\vert e^{-\frac{\zeta^{2}}{321\pi M}}$. Choosing $\zeta=\sqrt{963\pi M\log M}$ gives 
\begin{equation*}
\mathbb{P}\left( \left\vert K'_{g}\left(\tau\right)\right\vert\leq 2\sqrt{963\pi M\log M} \right)\ge 1- 64(M^3\log M)^{-1/2}.
\end{equation*}
\end{proof}

\end{document}